\theoremstyle{plain}
\newtheorem*{theorem*}{Theorem}
\newtheorem{theorem}{Theorem}[section] 
\newtheorem{lemma}[theorem]{Lemma}
\newtheorem{proposition}[theorem]{Proposition}
\newtheorem{corollary}[theorem]{Corollary}
\theoremstyle{definition}
\newtheorem{definition}[theorem]{Definition}
\newtheorem{remark}[theorem]{Remark}
\newtheorem{rhp}[theorem]{Riemann--Hilbert Problem}
\tikzset{middlearrow/.style={
			decoration={markings,
				mark= at position 0.6 with {\arrow{#1}} ,
			},
			postaction={decorate}
		}
	}
\tikzset{->-/.style={decoration={
				markings,
				mark=at position #1 with {\arrow{latex}}},postaction={decorate}}}
	\tikzset{-<-/.style={decoration={
				markings,
				mark=at position #1 with {\arrowreversed{latex}}},postaction={decorate}}}
\newcommand{\ds}{\displaystyle}
\numberwithin{equation}{section}
\DeclareMathOperator*{\Res}{Res}
\DeclareMathOperator{\supp}{supp}
\def\bigO{{\mathcal O}}
\newcommand{\lozr}{
	--++(1,1)--++(0,1)--++(-1,-1)
	--++(0,-1)
}
\newcommand{\lozd}{--++(1,1)--++(1,0)--++(-1,-1)--++(-1,0)
}
\newcommand{\lozu}{--++(1,0)--++(0,1)--++(-1,0)--++(0,-1)
}
\tikzset{
	master/.style={
		execute at end picture={
			\coordinate (lower right) at (current bounding box.south east);
			\coordinate (upper left) at (current bounding box.north west);
		}
	},
	slave/.style={
		execute at end picture={
			\pgfresetboundingbox
			\path (upper left) rectangle (lower right);
		}
	}
}
\tikzset{middlearrow/.style={
		decoration={markings,
			mark= at position 0.6 with {\arrow{#1}} ,
		},
		postaction={decorate}
	}
}
\renewcommand{\Im}{\mathop{\mathrm{Im}}}
\renewcommand{\Re}{\mathop{\mathrm{Re}}}
\newcommand{\C}{{\mathbb C}}
\title{\vspace{-1cm}A periodic hexagon tiling model and non-Hermitian orthogonal
polynomials}
\date{}
\author{
		C. Charlier 
				  \footnote{Department of Mathematics, 
				Royal Institute of Technology (KTH),
				Stockholm, Sweden.  Email: cchar@kth.se. Supported by the Swedish Research Council, Grant No. 2015-05430 and the European
				Research Council, Grant Agreement No. 682537.} 
		 \and M. Duits 	 \footnote{Department of Mathematics, 
		 	Royal Institute of Technology (KTH),
		 	Stockholm, Sweden.  Email: duits@kth.se.   Supported by the Swedish Research Council grant (VR) Grant no. 2016-05450 and the G\"oran Gustafsson Foundation.}
		 \and A.B.J. Kuijlaars 
		 		 \footnote{Department of Mathematics, Katholieke Universiteit Leuven, Belgium, Email: arno.kuijlaars@kuleuven.be. Supported by long term structural funding-Methusalem grant of the Flemish
		 		 	Government, and by FWO Flanders projects G.0864.16 and G.0910.20, and EOS 30889451.}	 
		 \and J. Lenells  \footnote{Department of Mathematics, 
		 	Royal Institute of Technology (KTH),
		 	Stockholm, Sweden.  Email: 	jlenells@kth.se. Supported by the European Research Council, Grant Agreement No. 682537, the Swedish Research Council, Grant No. 2015-05430, the G\"oran Gustafsson Foundation, and the Ruth and Nils-Erik Stenb\"ack Foundation.}
}
\begin{document}

	\maketitle
	\begin{abstract}
	
	 We study a one-parameter family of probability measures on lozenge tilings of large regular hexagons that interpolates between  the uniform measure on all possible tilings and  a particular fully frozen tiling.   The description of the asymptotic behavior  can be separated into two regimes: the low and the high temperature regime.  Our main results are  the computations of  the disordered regions in both regimes and the limiting densities of the different lozenges there.   For low temperatures, the disordered region consists of two disjoint ellipses. In the high temperature regime the two ellipses merge into a single simply connected region. At the transition from the low to the high temperature a tacnode appears. The key to our asymptotic study is a recent approach introduced by Duits and Kuijlaars providing a double integral representation for the correlation kernel. One of the factors in the integrand  is the Christoffel-Darboux kernel associated to polynomials that satisfy non-Hermitian orthogonality relations with respect to a complex-valued weight on a contour in the complex plane. We compute the asymptotic behavior of these orthogonal polynomials and the Christoffel-Darboux kernel by means of a Riemann-Hilbert analysis. After substituting the resulting asymptotic formulas into the double integral we prove our main results by  classical steepest descent arguments.

		\end{abstract}

\tableofcontents

\section{Introduction}

 We study random lozenge tilings of large regular hexagons.  We place the regular hexagon so that it has corners at $(0,0)$, $(0,N)$, $(N,2N)$, $(2N,2N)$, $(2N,N)$ and $(N,0)$ and consider tilings of the hexagon with the following three types of lozenges
	\begin{center}
		Type  $I$ \tikz[scale=.3] \draw (0,0)  \lozr; \quad  Type $II$  \tikz[scale=.3]  \draw (0,0) \lozu;\quad and  Type $III$   \tikz[scale=.3] { \draw (0,0) \lozd;},
	\end{center}
see also Figure \ref{fig:hexagon}. The vertices of the lozenges are on the integer lattice and the vertical and horizontal edges have unit length.  There are numerous ways of defining a probability measures on all possible tilings of the hexagon. In this paper, we will be interested in  the case in which the probability of  a tiling $\mathcal T $ is given by 
	$$
		\mathbb P(\mathcal T)= \frac{W(\mathcal T)}{\sum_{\widetilde{\mathcal T}} W(\widetilde{\mathcal T})},
	$$ 
	where $W$ is a weight function on all possible tilings defined by 
	$$
		W(\mathcal T)= \prod_{\tikz[scale=.2] \draw (0,0) \lozu; \  \in \mathcal T} w(\tikz[scale=.3] \draw (0,0) \lozu;)
	$$
	with 
	\begin{equation}\label{eq:definition_weight}
		w\Big(	\tikz[scale=.3,baseline=(current bounding box.center)] { \draw (0,0) \lozu; \filldraw circle(5pt);  \node[below] (i) at (0,-.2) {\tiny{$(i,j)$}};} 	\Big)=
		\begin{cases}
			\alpha, & i \text{ even,}\\
			1, &  i \text{ odd,} 
		\end{cases}
	\end{equation}
	for some fixed $\alpha \in (0,1].$ Note that if $\alpha=1$ all tilings occur with the same probability and the probability measure reduces to the uniform measure on all possible tilings. We exclude $\alpha=0$. In the limit $\alpha\downarrow 0$, there is only one possible tiling, 
	see e.g.\ 	Figure \ref{fig:hexagonExtr} below,
	 and there is no randomness. The main results in this paper concern the asymptotic behavior of the random tilings as the size of the hexagon grows large, i.e., as $N\to \infty$, and how this asymptotic behavior depends on the parameter~$\alpha$.

	\begin{figure}[t]
	\begin{center}
		\begin{tikzpicture}[scale=2.1]
		\draw[very thick] (0,0)--++(0,1)--++(1,1)--++(1,0)--++(0,-1)--++(-1,-1)--++(-1,0);
		\draw  (-.1,0.5)node[left] {N};
		\draw  (0.5,-.1)  node[below]{N};
		\draw (1.6,.5) node[right] {N};
		\end{tikzpicture}
		\includegraphics[scale=.4]{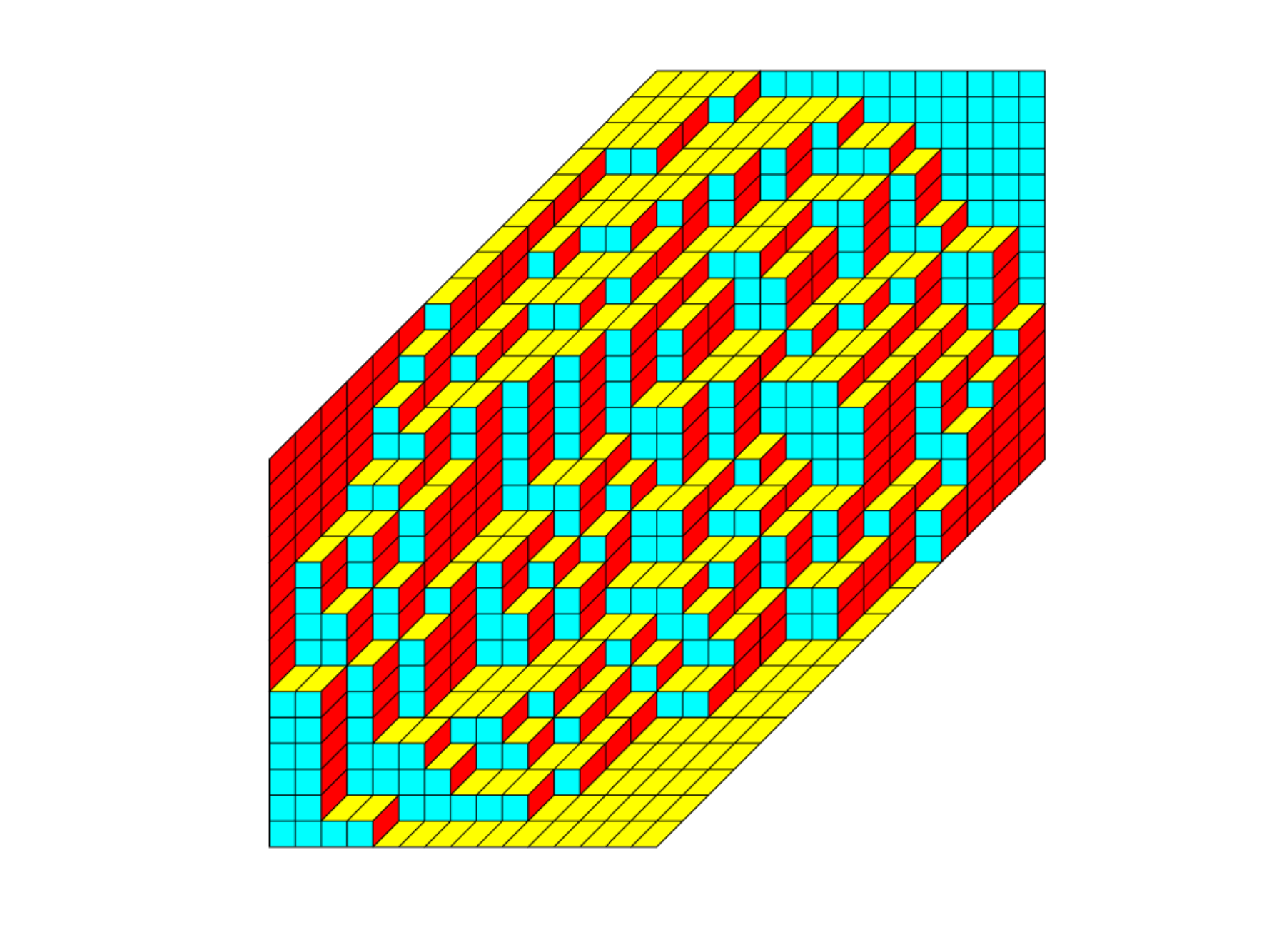}
		\caption{The hexagon (left) and an example of a tiling (right) of the hexagon by lozenges.}
		\label{fig:hexagon}
	\end{center}
\end{figure}

	Random tilings of  planar domains have been extensively studied in the past decades and we refer to \cite{BKMM,CEP,CLP,CKP,J05,K,KO,KOS} for important early references,  and  to  \cite{BG,Jdet,J17} for excellent introductions to the topic.  When the domains are large, the statistical properties of the tilings are expected to be described by universal  limiting processes.  In various special classes, and especially in case the random measure is a determinantal point process, tools have been developed to compute the asymptotic behavior and verify the appearance of these universal processes. For instance,   if the random measure is in the Schur class \cite{Ok1,OR1}, then we have a double integral representation for the correlation kernel  at our disposal to analyze the fine properties of the model.  Random lozenge tilings of the hexagon are however typically not in the Schur class and asymptotic studies are often more complicated. 
	
	Although not being in the Schur class, the large $N$ behavior  of  random lozenge tilings of the hexagon with the uniform measure (corresponding to $\alpha=1$ in our setup) has also been intensively studied by various authors.  Based on a representation in terms of Hahn polynomials as found in \cite{Jptrf} (see also \cite{Gorin}), the authors of \cite{BKMM} managed to perform a steepest descent analysis of the discrete Riemann--Hilbert (RH) problem for the Hahn polynomials and, consequently, describe the limiting disordered regions and the local universality laws. In \cite{Gorin}   the local universality was obtained using methods developed in  \cite{BO}. In a more general context,   uniform lozenge tilings of more complicated domains were studied by means of double integral formulas \cite{AvMJ,DM1,DM2,DM3, Petrov1,Petrov2}. 
	
	 An important part of the recent literature  on random tilings is concerned with proving the universality of the global fluctuations and the emergence of the Gaussian Free Field. For the uniform measure on all possible tilings of the hexagon there are now various techniques in the literature that prove this claim. In \cite{Petrov2}  the convergence of the global height fluctuations to the Gaussian Free Field was established using double integral formulas for the kernel. An alternative proof based on the recurrence coefficients of the Hahn polynomials was given in \cite{Duits2} extending the results on the fluctuations along vertical sections in \cite{BreuerD}. Discrete loop equations can also be used \cite{BGG} to compute the  fluctuations along vertical sections. In \cite{BuGo1,BuGo2}, another approach is introduced using the notion of a Schur generating function.  Each of these methods apply to their own general class of models and contain the uniform measure as a special case.

Measures  on  tilings of the (finite) hexagon  that are not uniform are known to be difficult to analyze asymptotically and much less results are known.  For instance, in  \cite{BGR} the authors introduced elliptic weights on the lozenge tilings, but a full asymptotic study of these models is still open. The situation $0<\alpha< 1$, which is the topic of this paper,  is a rather  gentle way to break the uniform measure. Still, the above mentioned techniques do not apply.  To study our model we will use a recently developed  new approach \cite{DK} for studying determinantal point processes that are defined via products of minors of (scalar or block) Toeplitz minors. Although the original motivation of \cite{DK} was to analyze the so-called 2-periodic Aztec diamond (see also \cite{BCJ,CJ}), the methods apply to a much wider range of  (tiling) models. The approach mainly consists of combining two important methods for asymptotic analysis: the classical steepest descent method for integrals and the Deift/Zhou steepest descent method for RH problems \cite{Deift,DZ}. This opens up new possibilities for analyzing models that were thus far out of reach and the model studied in this paper is one such example. 
	
	It is possible to take the limit of our model in which the vertical sides of the hexagon tend to infinity (see, for example, \cite{BD} for an explanation that starts from the same setting as in the present paper). In that limit, our model is the same as a 2-periodic weighting of plane partitions against a linearly shaped back wall, as studied in \cite{M1} (see also \cite{Ahn} for a generalization to the setting of Macdonald processes).  This model is then in the Schur class and thus double integral representations are available for asymptotic studies. It is important to  note that the case of a finite hexagon does not only lead to technical challenges, but also more 
	complicated phenomena occur. For instance, in our model 
	a tacnode appears for $\alpha = 1/9$.

	\begin{figure}[t]
	\begin{center}
		\includegraphics[scale=.5]{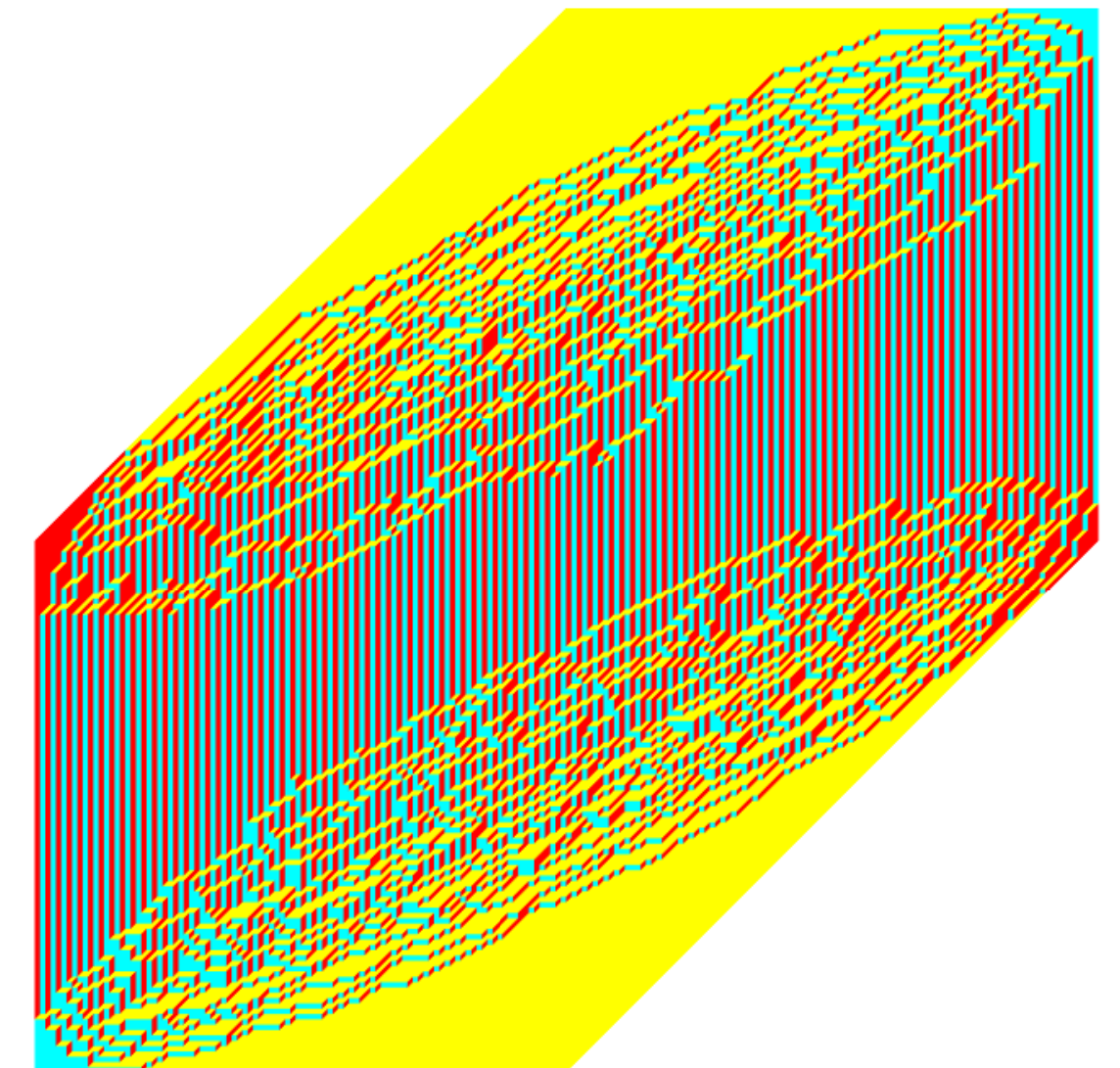}\quad 
		\includegraphics[scale=.5]{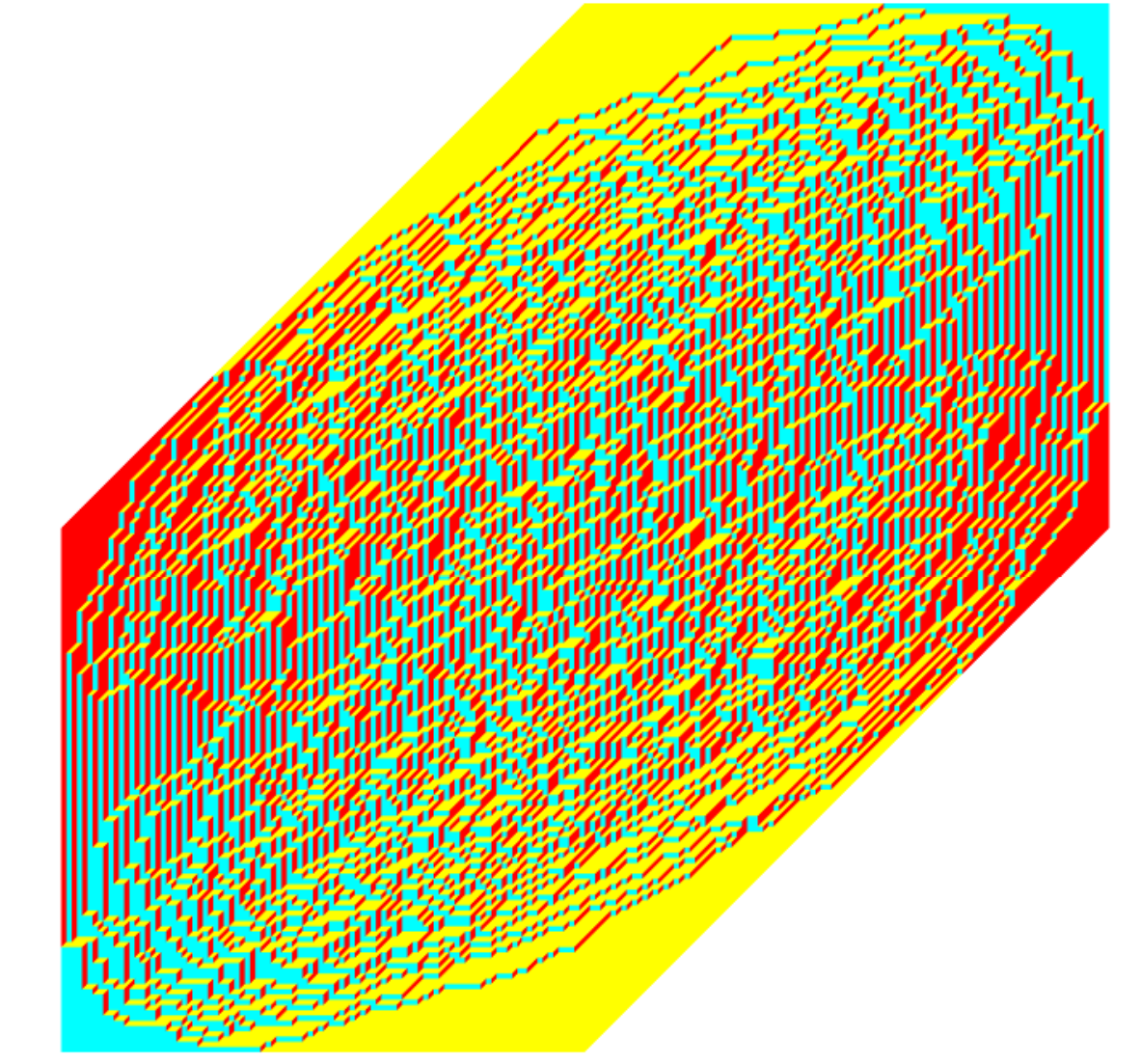}
		\caption{Two sample tilings corresponding to the low temperature (left) and high temperature (right) regimes, respectively.}		
		\label{fig:hexagonsample}
	\end{center}
\end{figure}

	In Figure \ref{fig:hexagonsample} we have plotted two sample tilings for large hexagons,  one with $0 < \alpha<\frac19$ and the other with $\frac19 < \alpha <1$. We see that for $0 < \alpha<\frac19$ there appear two clouds in which the tiling shows randomness, while it is frozen outside.  In the figure with $\frac19 < \alpha <1$, these two clouds seem to have merged.  To understand why this phenomenon is happening, it is useful to view $\alpha$ as a temperature parameter. Indeed, after defining the energy of a tiling as 
	$$
	\mathcal E(\mathcal  T) = \#\left \{ 	\tikz[scale=.3,baseline=(current bounding box.center)] { \draw (0,0) \lozu; \filldraw circle(5pt);  \node[below] (i) at (0,-.2) {\tiny{$(i,j)$}};}   \ \mid \ i \text{ even } \right \},
	$$
	we can write the weight of a tiling $\mathcal T$ as
	$W(\mathcal T)= e^{(\log \alpha) \mathcal E(\mathcal T)}$,
	and its probability as
	\[ \mathbb P(\mathcal  T) = \frac{1}{Z} e^{-\beta \mathcal E(\mathcal T)}, \qquad \beta = - \log \alpha \]
	which is a Gibbs measure with inverse temperature $\beta$.
	Thus,  $T= -\frac{1}{\log \alpha}$ may (and  will) be viewed as the temperature parameter. The low temperature limit $T \downarrow 0$ corresponds to $\alpha \downarrow 0$ and the high temperature limit $T \to \infty$ to $\alpha \uparrow 1$. 
	
	For low temperatures, the number $\mathcal E(\mathcal T)$ is expected to be small. In fact, for $T\downarrow 0$ the randomness disappears and the lozenge configurations freeze to the unique tiling with  $\mathcal E(\mathcal T)=0$. This is the tiling that is shown in the left half of Figure \ref{fig:hexagonExtr}. It can be thought of as a staircase shaped wall where the floor and the ceiling only have tiles of type III. As the temperature increases, randomness starts appearing near the interfaces where the wall meets the ceiling and the floor. For $T$ positive but small, we expect to observe two separate clouds that are far away from each other. When $T$ increases further, the clouds meet and form one cloud. Eventually, as $T\to \infty$,  the 
	model becomes the uniform measure on tilings and the cloud
	becomes the ellipse that is inscribed in the hexagon,
	as in the right part of Figure \ref{fig:hexagonExtr}.
	
	In other words, we expect that there is a critical point in the low to high temperature transition at which the topology of the disordered regime changes from being disconnected to being connected. As we will see, this transition indeed happens at $\alpha=\frac 19$. We will therefore speak of $0<\alpha<\frac 19$  as the \emph{low temperature regime} and of $\frac19 < \alpha \leq 1$ as the \emph{high temperature regime.}

	\begin{figure}[t]
	\begin{center}
		\includegraphics[scale=.5]{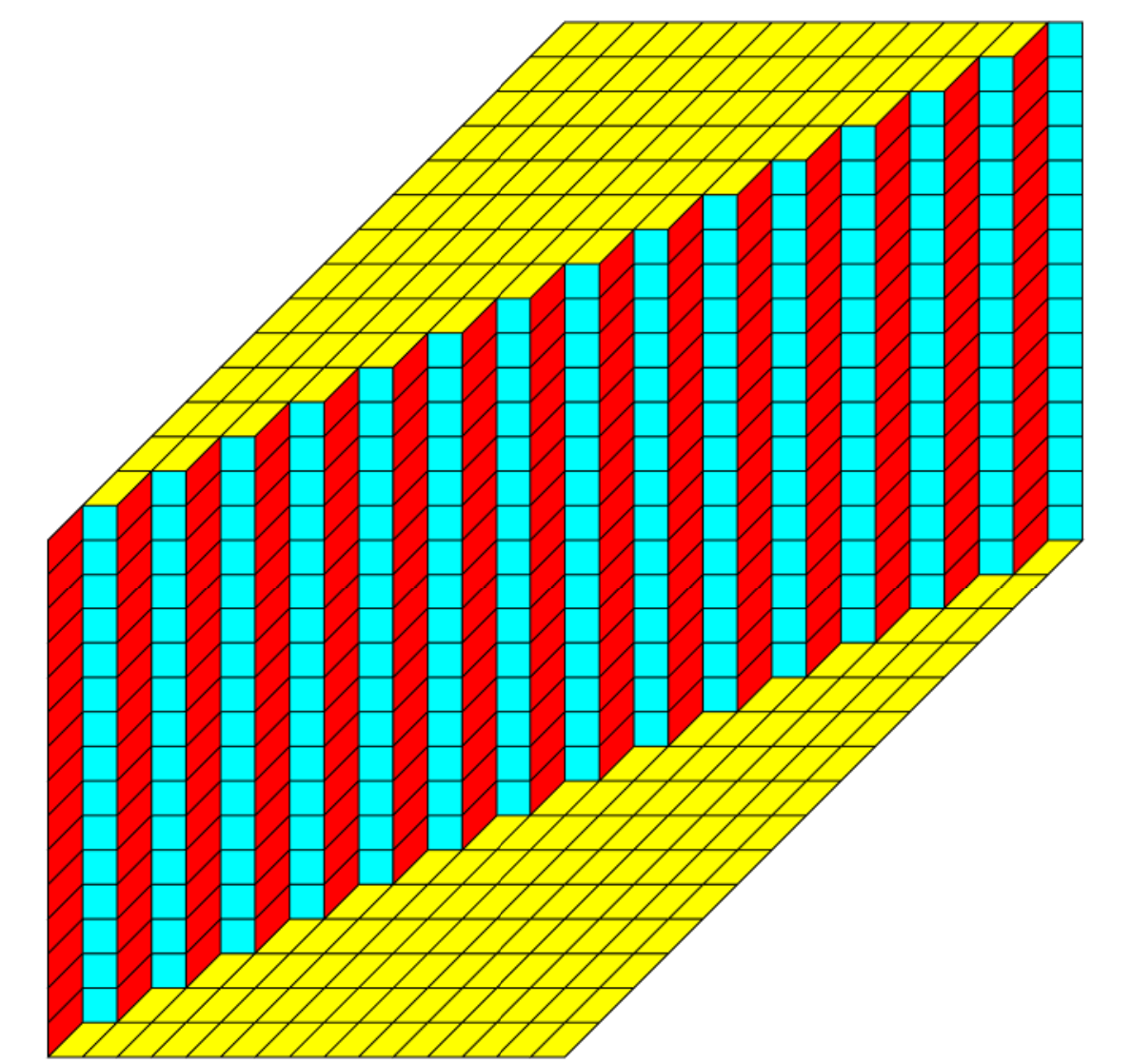} \quad  \includegraphics[scale=.5]{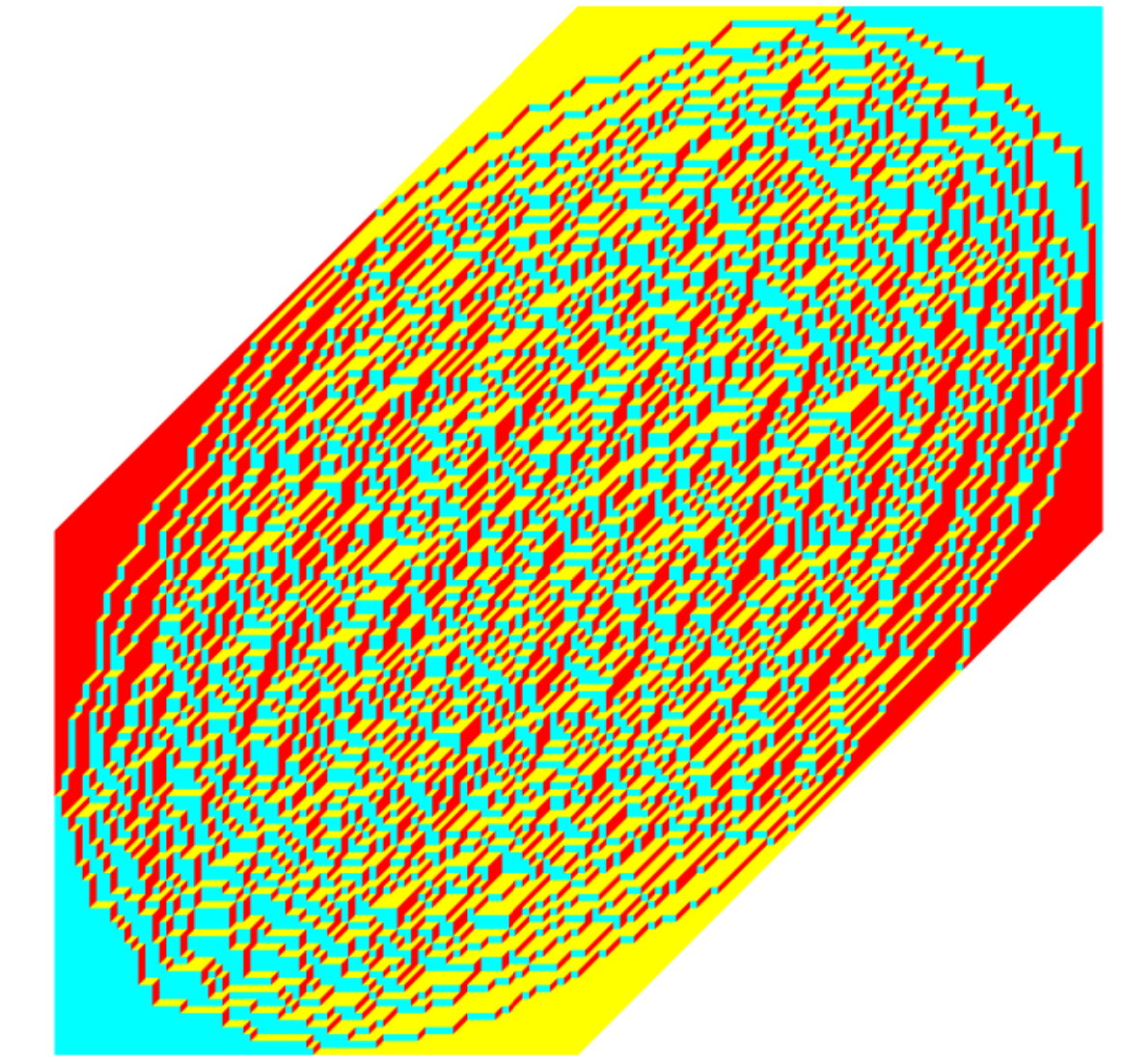}	
		\caption{The two extreme cases: $\alpha=1$ leading to the uniform measure (right) and $\alpha=0$ for which there is only one possible tiling (left).}		
		\label{fig:hexagonExtr}
	\end{center}
\end{figure}

	Our analysis follows a recent work \cite{DK}. The backbone of the approach in \cite{DK} is a connection to polynomials that satisfy an orthogonality relation
	(that could be matrix valued) on a contour 
	in the complex plane. In the present paper we will be dealing with scalar orthogonality on a closed contour $\gamma$
	going once around the origin with counterclockwise orientation.
	Let $p_n$ be the monic polynomial of degree $n$ such that 
	\begin{equation}
		\label{eq:orthogonality1}
		\frac{1}{2\pi i}
		\oint_\gamma p_n(z) z^j \frac{(z+1)^N (z+\alpha)^N}{z^{2N}}\ dz= 0, \qquad j=0,1,\ldots, n-1. 
	\end{equation}
	 It is important to note that \eqref{eq:orthogonality1} is an orthogonality condition with respect to a non-Hermitian bilinear form. It is therefore not evident that the polynomials $p_n$ are well-defined. We will prove that they are, provided that
	 $n \leq 2N$, see Proposition \ref{prop:prop51}.
	 The orthogonality \eqref{eq:orthogonality1} also changes with $N$, the size of
	 the hexagon.

	It turns out that the random tilings naturally define a determinantal point process with a  correlation kernel that can be expressed in terms of the polynomials $p_n$. For the exact statement, we need to introduce a well-known correspondence between tilings of the hexagon and non-intersecting paths.  For more background on determinantal point processes, random tilings and non-intersecting paths, we refer to \cite{Jdet}. 
	
	We draw lines on two of the three types of lozenges as follows:
	\begin{center}
		\tikz[scale=.5]{ \draw (0,0)  \lozr; \draw[very thick] (0,.5)--(1,1.5); \filldraw (0,0.5) circle(3pt);  \filldraw (1,1.5) circle(3pt); }  \quad \tikz[scale=.5] { \draw (0,0) \lozu; \draw[very thick] (0,.5)--(1,.5); \filldraw (0,0.5) circle(3pt);  \filldraw (1,0.5) circle(3pt); }\quad and \quad   \tikz[scale=.5] { \draw (0,0) \lozd;}.
	\end{center}
	The paths form a collection of non-intersecting paths $\pi_j: \{0,\ldots, 2N\} \to \mathbb Z+\tfrac{1}{2}$ with initial points $\pi_j(0)=j+\tfrac 12$ and endpoints $\pi_j(2N)=N+\tfrac 12+j$  for $j=0, \ldots,N-1$. It is well-known and easy to see that there is a one-to-one correspondence between tilings of the hexagon and non-intersecting up-right paths with these initial and end configurations. The probability measure on the tilings defined in \eqref{eq:definition_weight} induces a probability measure on such collections of non-intersecting paths. The 
	Lindstr\"om--Gessel--Viennot lemma \cite{GV,L} tells us that the probability measure is proportional to
	\begin{equation}  \label{eq:probdensity}
			\prod_{m=0}^{2N-1} \det \left[T_m\left(\pi_j(m)-\tfrac12,\pi_k(m+1)-\tfrac12\right)\right]_{j,k=1}^{N},
	\end{equation}
	where the $T_m$ are $\mathbb Z \times \mathbb Z$ matrices given by 
		\begin{equation} \label{eq:Tmeven}
		T_m(x,y)=
			\begin{cases} 
				\alpha,& \text{ if } y=x,\\
				1, & \text{ if }  y=x+1, \\
				0, & \text{ otherwise},
			\end{cases}
		\end{equation}
	if $m$ is even, and 
	\begin{equation} \label{eq:Tmodd}
		T_m(x,y)=
			\begin{cases} 
				1,& \text{ if } y=x \text{ or } y = x+1,\\
				0, & \text{ otherwise},
			\end{cases}
		\end{equation}
	if $m$  is odd. 
	The probability \eqref{eq:probdensity} is a determinantal point process 
	with a correlation kernel given by the Eynard--Metha formula \cite{EM}.
	
	In case the $\mathbb Z \times \mathbb Z$ matrices $T_m$ in \eqref{eq:probdensity}
	are (scalar or block) Toeplitz matrices, the paper \cite{DK} 
	gives a double contour integral formula for the correlation kernel, which involves
	the (scalar or block) symbols of the Toeplitz matrices as well as a
	reproducing kernel for (scalar or matrix-valued) orthogonal polynomials,
	see also \cite{BeD}.
	
	The matrices \eqref{eq:Tmeven} and \eqref{eq:Tmodd} are infinite Toeplitz
	matrices with only two non-zero diagonals. Their respective symbols are
	$z+\alpha$ and $z+1$. Both Toeplitz matrices appear $N$ times in the
	product \eqref{eq:probdensity} and this accounts for the orthogonality
	measure in \eqref{eq:orthogonality1}. Then the general formula
	in \cite{DK} reduces to the following in the special
	situation of this paper.
	  
	\begin{proposition} \label{proposition1.1}
	Let $\alpha \in (0,1]$ and let $k \geq 1$ be an integer. Then for integers
	$x_1, \ldots, x_k$, $y_1, \ldots, y_k$, with $(x_i,y_i) \neq (x_j,y_j)$
	if $i \neq j$,  we have
	\begin{equation} \label{prob point process}
		\mathbb P\left[ \begin{array}{l} \text{paths go through each of the points } \\ (x_1,y_1+\tfrac 12), \ldots,  (x_k,y_k+\tfrac 12)
		\end{array} \right]
			=\det\left[ K_N(x_i,y_i,x_j,y_j)\right]_{i,j=1}^k,
	\end{equation}
	where the kernel $K_N$ is given by
	\begin{multline}\label{eq:kernel}
		K_N(x_1,y_1,x_2,y_2)=-\frac{\chi_{x_1>x_2}}{2 \pi i} \oint_\gamma (z+1)^{\lfloor \frac{x_1}{2} \rfloor -\lfloor \frac{x_2}{2} \rfloor} (z+\alpha)^{\lfloor \frac{x_1+1}{2} \rfloor -\lfloor \frac{x _2+1}{2}\rfloor} \frac{dz}{z^{y_1-y_2+1}}  \\
	+	\frac{1}{(2\pi i)^2}  \oint_\gamma \oint_\gamma   R_N(w,z) \frac{(w+1)^N(w+\alpha)^N }{w^{2N}}  
		 \frac{(z+1)^{\lfloor \frac{x_1}{2}\rfloor}(z+\alpha)^{\lfloor \frac{x_1+1}{2}\rfloor }}{(w+1)^{\lfloor \frac{x_2}{2} \rfloor}(w+\alpha)^{\lfloor \frac{x_2+1}{2}\rfloor}} \frac{w^{y_2}}{z^{y_1+1}} dz dw,
	\end{multline}
	for $y_1,y_2 \in \mathbb Z$ and $x_1,x_2 \in \{1,\ldots,2N-1\}$. 
	Here $\lfloor x\rfloor$ denotes the largest integer $\leq x$ as usual, $\chi_{x_1 > x_2} = 1$ if $x_1>x_2$ and $0$ otherwise, $\gamma$ is a closed 
	contour that goes once around $0$ in counterclockwise direction,
	  and $R_N(w,z)$ is the $N$th  Christoffel-Darboux kernel 
	  for the orthogonal polynomials $p_n$ defined by
		\begin{align} \nonumber 
			R_N(w,z) & = \sum_{n=0}^{N-1} \frac{p_n(w)p_n(z)}{\kappa_n} \\
	  		& = \kappa_{N-1}^{-1}\frac{p_N(z) p_{N-1}(w)-p_N(w)p_{N-1}(z)}{z-w}\label{eq:CDkernel}
		\end{align}
	and 
	\begin{equation} \label{eq:kappan}
		\kappa_n = \frac 1 {2 \pi i} \oint_\gamma( p_n(z))^2 \frac{(z+1)^N(z+\alpha)^N}{z^{2N}} \ dz,
		\end{equation}
	is the squared `norm' of $p_n$. 
	\end{proposition}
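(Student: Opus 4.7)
The proposition is the specialization of the general double contour integral formula of Duits--Kuijlaars \cite{DK} to the concrete scalar Toeplitz data of the present model, so the plan is to carry out that specialization term by term. First I would read off the symbols of the transfer matrices: by \eqref{eq:Tmeven}--\eqref{eq:Tmodd}, $T_m$ is the infinite Toeplitz matrix with symbol $\sigma_m(z)=z+\alpha$ for even $m$ and $\sigma_m(z)=z+1$ for odd $m$. Among $m=0,1,\ldots,2N-1$ there are exactly $N$ indices of each parity, so the total symbol accumulated over the full time horizon is $(z+1)^N(z+\alpha)^N$; after dividing by $z^{2N}$ to account for the net vertical shift between the initial and final path configurations, this is precisely the weight appearing in \eqref{eq:orthogonality1}, which identifies the $p_n$ as the orthogonal polynomials that should feature in the formula.

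Next I would match the first, single-integral term in \eqref{eq:kernel} to the free-propagator part of the Eynard--Mehta kernel. The product $T_{x_2}T_{x_2+1}\cdots T_{x_1-1}$ is again a Toeplitz matrix, with symbol $\prod_{m=x_2}^{x_1-1}\sigma_m(z)$. A direct count shows that in $\{x_2,\ldots,x_1-1\}$ the number of odd (resp.\ even) indices equals $\lfloor x_1/2\rfloor-\lfloor x_2/2\rfloor$ (resp.\ $\lfloor(x_1+1)/2\rfloor-\lfloor(x_2+1)/2\rfloor$), hence this symbol equals $(z+1)^{\lfloor x_1/2\rfloor-\lfloor x_2/2\rfloor}(z+\alpha)^{\lfloor(x_1+1)/2\rfloor-\lfloor(x_2+1)/2\rfloor}$. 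Extracting its $(y_2,y_1)$-entry by Cauchy's formula on $\gamma$ reproduces the first line of \eqref{eq:kernel}, the $\chi_{x_1>x_2}$ and overall sign coming from causality in Eynard--Mehta.

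For the second, double-integral term I would invoke the scalar Toeplitz version of the main theorem of \cite{DK}. After conjugating away the boundary symbols, the Eynard--Mehta boundary correction can be written as
\[
\frac{1}{(2\pi i)^2}\oint_\gamma\oint_\gamma R_N(w,z)\,\frac{(w+1)^N(w+\alpha)^N}{w^{2N}}\,\frac{(z+1)^{\lfloor x_1/2\rfloor}(z+\alpha)^{\lfloor(x_1+1)/2\rfloor}}{(w+1)^{\lfloor x_2/2\rfloor}(w+\alpha)^{\lfloor(x_2+1)/2\rfloor}}\,\frac{w^{y_2}}{z^{y_1+1}}\,dz\,dw,
\]
where the symbol factors account for the accumulated transitions between times $0$, $x_1$, $x_2$, and $2N$, and $R_N(w,z)$ is the reproducing kernel of $\mathcal{P}_{N-1}$ under the bilinear form $\langle f,g\rangle=\frac{1}{2\pi i}\oint_\gamma f(z)g(z)\frac{(z+1)^N(z+\alpha)^N}{z^{2N}}\,dz$. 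Recognizing $R_N$ as the Christoffel--Darboux kernel of the $p_n$ in \eqref{eq:CDkernel} completes the identification. The main obstacle is precisely that this bilinear form is non-Hermitian and defined by a complex contour integral, so the Gram matrix $[\langle z^i,z^j\rangle]_{i,j=0}^{N-1}$ is not manifestly invertible and neither $p_0,\ldots,p_{N-1}$ nor $R_N$ are a priori well-defined; this point is settled separately by Proposition \ref{prop:prop51}, which I would invoke. Once that is in hand, the classical Christoffel--Darboux identity yields the second equality in \eqref{eq:CDkernel}, and the remainder is a routine bookkeeping matching of \cite{DK} with \eqref{eq:kernel}.
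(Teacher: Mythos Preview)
Your proposal is correct and follows essentially the same route as the paper: both treat the result as the scalar specialization of Theorem~4.7 in \cite{DK}, identify the partial symbol products $\prod_{m=x_2}^{x_1-1}\sigma_m(z)$ via the parity count to obtain the floor exponents, recognize $R_N$ as the reproducing kernel for the non-Hermitian bilinear form with weight $(z+1)^N(z+\alpha)^N/z^{2N}$, and invoke Proposition~\ref{prop:prop51} to guarantee the existence of the $p_n$ and hence of the Christoffel--Darboux representation \eqref{eq:CDkernel}. The only cosmetic difference is that the paper packages the symbol bookkeeping in the notation $A_{i,j}(z)$ of \cite{DK} and explicitly records the parameter dictionary $(N,M,L)=(N,N,2N)$, but the substance is identical.
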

\begin{proof}
	This is a special case of \cite[Theorem 4.7]{DK}, but for convenience of the reader
	we give more details on how to make the identification in the Appendix.
\end{proof}

The above proposition is the starting point of our analysis. Clearly, to analyze the limiting behavior of the probabilities \eqref{prob point process} it suffices to compute the asymptotic behavior of the kernel $K_{N}$ in \eqref{eq:kernel} as $N\to \infty$. To this end, we first compute the asymptotic behavior of the Christoffel-Daroux kernel $R_N$ corresponding to the orthogonal polynomials using Riemann-Hilbert techniques. After inserting the resulting asymptotics of $R_N$ into \eqref{eq:kernel}, we compute the asymptotic behavior of $K_{N}$ by a saddle point analysis. It should not come as a surprise to the experienced reader that there many possible fallpits and  one may view  the  fact that this approach can indeed be carried out  as the main result of our paper. With this approach one can, in principle, compute all fine asymptotic properties of the model.  In an effort to  limit the length of the paper, we restrict our main results to the description of the disordered region and the densities of the different types of lozenge there. We will though briefly comment on possible other limiting results  that are within reach.

\section{Statement of results} 
In this section we state our main results. The proofs are postponed to later sections. 
\subsection{Preliminaries}
Our main result concerns the limiting densities of the lozenges as the size of the hexagon goes to infinity. 
We introduce the scaled variables $(\xi,\eta)$ in the large $N$ limit by
\begin{equation} \label{eq:scaled_variables}
\begin{cases}
\frac{x}{N}\to  1+\xi,\\
\frac{y}{N}\to  1+ \eta,
\end{cases}
\end{equation}
where the point $(\xi,\eta)$ belongs to the hexagon
\begin{equation} \label{eq:Hhexagon}
	\mathcal H= \left\{(\xi,\eta) \mid  -1\leq \xi \leq 1, \ -1 \leq  \eta \leq 1,\ -1\leq  \eta-\xi \leq 1 \right\}.
\end{equation}

We will study the following probabilities
\begin{equation}  \label{eq:tileprobs}
	\mathbb P\left(\tikz[scale=.3,baseline=(current bounding box.center)] {\draw (0,-1) \lozr; \filldraw (0,-1) circle(5pt); \draw (0,-1) node[below] {$(x,y)$}} \right), 
	\quad
	\mathbb P\left(\tikz[scale=.3,baseline=(current bounding box.center)] {\draw (0,0) \lozu; \filldraw (0,0) circle(5pt); \draw (0,0) node[below] {$(x,y)$} }\right), 
	\quad \textrm{ and }  \quad 
	\mathbb P\left(\tikz[scale=.3,baseline=(current bounding box.center)] {\draw (0,0) \lozd; \filldraw (1,0) circle(5pt); \draw (1,0) node[below] {$(x,y)$}} \right).
\end{equation}
Here $(x,y)$ is the coordinate for the black dot. From simple geometric considerations, 
we note that these probabilities add up to $1$.  
Our main result, Theorem \ref{thm:main} below, gives the limits
of the probabilities \eqref{eq:tileprobs} under the scaling \eqref{eq:scaled_variables}
provided that $(\xi,\eta)$ belongs to the liquid region. The result is
stated in terms of a saddle point for
the double contour integral in \eqref{eq:kernel}. 
The saddle points turn out to be solutions of an algebraic equation
\begin{equation} \label{eq:saddlepointeq} 
	\left( \frac{\xi}{2} \left(\frac{1}{z+1} + \frac{1}{z+\alpha} \right)	
	- \frac{\eta}{z} \right)^2 = Q_{\alpha}(z) \end{equation}
with a rational function $Q_{\alpha}$ that we describe
next. 	The liquid region $\mathcal L_{\alpha}$ is 
characterized by the property that \eqref{eq:saddlepointeq} has 
a solution $z= s(\xi,\eta;\alpha)$ in the upper half plane.

\subsection{The rational function $Q_{\alpha}$} \label{sec:zeta}

The rational function $Q_{\alpha}$ will arise from 
the equilibrium problem associated with the varying
weight $\frac{(z+1)^N(z+\alpha)^N}{z^{2N}}$ that we will analyze
in  Section \ref{gfunctionsec} below. Here we state the
formulas that come out of this analysis and we refer to Section \ref{gfunctionsec}
for motivation why  indeed $Q_{\alpha}$ is relevant to our problem.
The definition of $Q_{\alpha}$ is different for the two
cases $\alpha \leq \frac{1}{9}$ and $\alpha  \geq \frac{1}{9}$
and this reflects the phase transition at $\alpha = \frac{1}{9}$.

\begin{definition} \label{def:Qa}
For each $0 \leq \alpha \leq 1$, we define two complex numbers $z_\pm(\alpha)$ 
and a rational function $Q_\alpha$ as follows:
	\begin{enumerate}
		\item[(a)] For $\frac{1}{9} \leq \alpha \leq 1$, we let
		\begin{equation} \label{eq:zpmhigh} 
		z_{\pm}(\alpha) =  
		-\frac{3 - 2 \sqrt{\alpha} + 3\alpha}{8}
		\pm \frac{3i \left(1+\sqrt{\alpha}\right)}{8} \sqrt{\left(1- \tfrac{\sqrt{\alpha}}{3}\right)
			\left(3 \sqrt{\alpha}-1\right)} \end{equation}
			and
		\begin{equation} \label{eq:Qalphahigh} 
		Q_{\alpha}(z) = \frac{\left(z+\sqrt{\alpha}\right)^2 (z-z_+(\alpha))(z-z_-(\alpha))}{z^2(z+1)^2(z+\alpha)^2}. \end{equation}
		
		\item[(b)] For $0 \leq \alpha \leq \frac{1}{9}$, we let
		\begin{equation} \label{eq:zpmlow} 
		z_{\pm}(\alpha) = - \frac{1+3\alpha}{4} \pm 
		\frac{1}{4} \sqrt{(1-\alpha)(1-9\alpha)}
		\end{equation}
		and
		\begin{equation} \label{eq:Qalphalow} 
		Q_{\alpha}(z) = \frac{(z-z_+(\alpha))^2 (z-z_-(\alpha))^2}{z^2(z+1)^2(z+\alpha)^2}.\end{equation}
	\end{enumerate}  
\end{definition}

Let us comment on how $Q_\alpha$ depends on $\alpha$ and the transition at $\alpha = \frac{1}{9}$. For $\frac{1}{9} \leq \alpha \leq 1$, 
it can be checked from \eqref{eq:zpmhigh} that 
$|z_{\pm}(\alpha)| = \sqrt{\alpha}$ and
	\begin{equation} \label{eq:zpmhigh2} 
		z_{\pm}(\alpha) = \sqrt{\alpha} e^{\pm i \theta_{\alpha}} 
	\end{equation}
for some angle $\theta_{\alpha}$ which increases from $\frac{2\pi}{3}$
to $\pi$ as $\alpha$ decreases from $1$ to $\frac{1}{9}$. 
For $0 \leq \alpha \leq \frac{1}{9}$, the numbers $z_{\pm}(\alpha)$ are real and satisfy
\[ -\frac{1}{2} < z_-(\alpha) < -\sqrt{\alpha} < z_+(\alpha) < -\alpha
\quad \text{for } 0 < \alpha < \frac{1}{9} \]
with $z_-(\alpha) z_+(\alpha) = \alpha$. 

For $\frac{1}{9} < \alpha < 1$,  the function $Q_{\alpha}$ in \eqref{eq:Qalphahigh} has one double zero and two simple zeros, whereas for $0 < \alpha < \frac{1}{9}$
it has two double zeros on the real line by \eqref{eq:Qalphalow}. For
$\alpha = \frac{1}{9}$ both \eqref{eq:zpmhigh} and \eqref{eq:zpmlow} yield $z_+(\alpha) = z_-(\alpha) = - \frac{1}{3}$, and both  \eqref{eq:Qalphahigh} 
and \eqref{eq:Qalphalow} yield
\begin{equation} \label{eq:Qalphacrit}
Q_{\alpha}(z) = 
\frac{(z+ \frac{1}{3})^4}{z^2(z+1)^2(z+\frac{1}{9})^2}  
\qquad \text{ for } \alpha = \frac{1}{9},
\end{equation}
which has a fourth order zero at $-\frac{1}{3}$.
For $\alpha = 1$, the formulas \eqref{eq:zpmhigh} and \eqref{eq:Qalphahigh} reduce to  
\begin{equation} \label{eq:Qalphais1} 
Q_{\alpha}(z) = \frac{z^2 + z+1}{z^2(z+1)^2}
	\qquad \text{ for } \alpha = 1, 
\end{equation}
and 
$z_{\pm}(1) = -\frac{1}{2} \pm \frac{\sqrt{3}}{2} i = e^{\pm \frac{2\pi i}{3}}$.
\medskip

The function $Q_\alpha$ plays an important role in the asymptotic study of the orthogonal polynomials. The $g$-function that is used 
in the normalization of the RH problem for the orthogonal polynomials will be constructed in terms of $Q_\alpha$ as
\begin{equation} \label{eq:gz}
	g(z) = \frac{1}{\pi i} \int_{\Sigma_0} 
	\log(z-s) Q_{\alpha}^{1/2}(s) ds \end{equation} 
with 
$\Sigma_0 = \{ \sqrt{\alpha} e^{it} \mid -\theta_{\alpha} \leq t
\leq \theta_{\alpha} \}$ and $\theta_{\alpha} = \arg z_+(\alpha) \in [\frac{2\pi}{3},\pi]$. See Definition~\ref{mu0gdef} below for the
precise definition of the branches of the logarithm and the 
square root in \eqref{eq:gz}.

  The following definition is central for the saddle point analysis of the double integral in \eqref{eq:kernel}.

\begin{definition} \label{def:Xi} For each $0< \alpha \leq 1$ and $ (\xi, \eta) \in \mathcal H$, 
	we define  $\Xi_{\alpha}(z) = \Xi_{\alpha}(z;\xi,\eta)$ as any solution of the equation	
	\begin{equation} \label{eq:algebraiccurve}
\left(\Xi_{\alpha}(z)- \frac \xi 2 \left(\frac{1}{z+1}+ \frac{1 }{z+\alpha}\right)+\frac{\eta}{z} \right)^2= Q_\alpha(z).
\end{equation}
\end{definition}

\begin{figure}[t]
	\begin{center}
		\begin{tikzpicture}[xscale=.7,yscale=1](15,10)(0,0)
		
		\draw[gray!40!white] (7,0) --++(6,0) --++(2,2) --++(-6,0) --++(-2,-2);
		
		\draw[gray!40!white] (7,3) --++(6,0) --++(2,2) --++(-6,0) --++(-2,-2);
		
		\filldraw (8.5,1)  circle (2pt);	 
		\filldraw (9.5,1)  circle (2pt);	 
		\filldraw (11,1)  circle (2pt);	 
		\filldraw   (14,1) circle (2pt);
		\draw   (8.5,1) node[below] {$-1$};
		\draw   (9.5,1) node[below] {$-\alpha$};
		\draw   (11,1) node[below] {$0$};
		\draw   (14,1) node[below] {$\infty$};
		
		\filldraw (8.5,4)  circle (2pt);	 
		\filldraw (9.5,4)  circle (2pt);	 
		\filldraw (11,4)  circle (2pt);	 
		\filldraw   (14,4) circle (2pt);
		\draw   (8.5,4) node[below] {$-1$};
		\draw   (9.5,4) node[below] {$-\alpha$};
		\draw   (11,4) node[below] {$0$};
		\draw   (14,4) node[below] {$\infty$};
		
		\end{tikzpicture}
		\begin{tikzpicture}[xscale=.7,yscale=.95](15,10)(0,0)
		
		\draw[gray!40!white] (7,0) --++(6,0) --++(2,2) --++(-6,0) --++(-2,-2);
		\draw (13,1.5) node {$ \mathcal R_{\alpha,-}$};
		\draw[very thick](9.5,.5) .. controls (8.5,1) and (8.5,1) .. (10.5,1.5);
		
		\draw[gray!40!white] (7,3) --++(6,0) --++(2,2) --++(-6,0) --++(-2,-2);
		\draw (13,4.5) node {$ \mathcal R_{\alpha,+}$};
		\draw[very thick](9.5,3.5) .. controls (8.5,4) and (8.5,4) ..(10.5,4.5);
		
		\draw[dashed,help lines] (9.5,.5)--(9.5,3.5);	
		\draw[dashed,help lines] (10.5,1.5)--(10.5,4.5);
		\draw (9.5,0.5) node[below] {$z_-(\alpha)$};		
		\draw (9.5,3.5) node[below] {$z_-(\alpha)$};		
		\draw (10.5,1.5) node[above] {$z_+(\alpha)$};		
		\draw (10.5,4.5) node[above] {$z_+(\alpha)$};		
		
		\filldraw (9.5,0.5) circle(1.5pt);	
		\filldraw  (9.5,3.5) circle(1.5pt);	
		\filldraw  (10.5,1.5) circle(1.5pt);		
		\filldraw (10.5,4.5) circle(1.5pt);	
		
		\filldraw (8.5,1)  circle (2pt);	 
		\filldraw (9.5,1)  circle (2pt);	 
		\filldraw (11,1)  circle (2pt);	 
		\draw   (8.5,1) node[below] {$-1$};
		\draw   (9.5,1) node[below] {$-\alpha$};
		\draw   (11,1) node[below] {$0$};
		\draw   (14,1) node[below] {$\infty$};
		\filldraw   (14,1) circle (2pt);
		
		\filldraw (8.5,4)  circle (2pt);	 
		\filldraw (9.5,4)  circle (2pt);	 
		\filldraw (11,4)  circle (2pt);	 
		\draw   (8.5,4) node[below] {$-1$};
		\draw   (9.5,4) node[below] {$-\alpha$};
		\draw   (11,4) node[below] {$0$};
		\draw   (14,4) node[below] {$\infty$};
		\filldraw   (14,4) circle (2pt);
		
		\end{tikzpicture}
		\caption{On the right, the two-sheeted Riemann surface for the high temperature
		case $\frac19 < \alpha \leq 1$ is displayed. The function $\Xi_{\alpha}$ is meromorphic
	on the Riemann surface with simple poles at the indicated points
$-1$, $-\alpha$, $0$ on both sheets 
	and a simple zero at both points at $\infty$. In the low temperature case $0< \alpha< \frac19$, the cuts from $z_+(\alpha)$ to $z_-(\alpha)$ disappear and  the surface decouples, resulting in the picture that is displayed at the left. }
		\label{fig:riemannsurfaces}
	\end{center}
\end{figure}
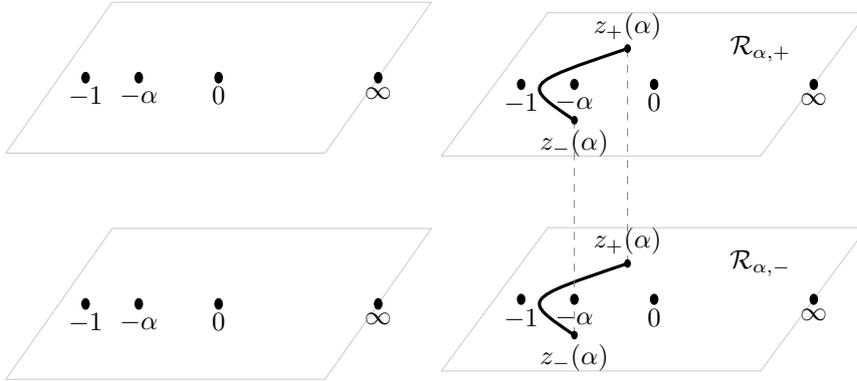
In the low temperature regime $0 < \alpha < \frac 19$, we see from
\eqref{eq:Qalphalow} that 
 $Q_{\alpha}$ is the square of a rational function. This means that \eqref{eq:algebraiccurve}  factorizes and $\Xi_{\alpha}$  decouples into two rational functions with poles at $-1,-\alpha,0$ and a zero at $\infty$. This in turn implies that we obtain two well-defined rational functions $\Xi_{\alpha,\pm}$ from \eqref{eq:algebraiccurve}: 
\begin{equation} \label{eq:defz1z2low} \begin{aligned}
\Xi_{\alpha,\pm}(z) & = \pm \left(Q_\alpha(z)\right)^{\frac12}+\frac{\xi}2\left(\frac{1 }{z+1}+\frac{1 }{z+\alpha}\right) -\frac{\eta}{z} \\
	& = \pm \frac{(z-z_+(\alpha))(z-z_-(\alpha))}{z(z+1)(z+\alpha)}
		+ \frac{\xi}2\left(\frac{1 }{z+1}+\frac{1 }{z+\alpha}\right) -\frac{\eta}{z}. 
\end{aligned}
\end{equation}

 $\Xi_{\alpha}$ then is a meromorphic function defined on the Riemann surface $\mathcal R_{\alpha}$ associated with the equation $w^2=(z-z_+(\alpha))(z-z_-(\alpha))$.   It has two sheets 
 $\mathcal R_{\alpha,\pm}$,
 that are connected by a cut from $z_+(\alpha)$ to $z_-(\alpha)$
 that we choose as
 \[ \mathcal C = \{(w,z) \in \mathcal R_{\alpha} \mid |z| = \sqrt{\alpha}, \,
 \theta_{\alpha} \leq |\arg z| \leq \pi \}, \]
 where we recall from \eqref{eq:zpmhigh2} that $\theta_{\alpha} = \arg z_+(\alpha) =  - \arg z_-(\alpha)$.
 We take $w = ((z-z_+)(z-z_-))^{1/2}$ with the branch of the
 square root that behaves like $z$ as $z \to \infty$
 on the first sheet $\mathcal R_{\alpha,+}$ and that behaves like $-z$
 as $z \to \infty$ on the second sheet.

Accordingly we have two branches of $\Xi_{\alpha}$,
 \begin{align} \label{eq:defz1z2high}
 \Xi_{\alpha,\pm}(z)&= \pm  Q_\alpha (z)^{1/2} + \frac\xi2 \left(\frac{1}{z+1}+\frac{1}{z+\alpha}\right)-\frac{\eta}{z},  \\
 	& = \frac{(z+\sqrt{\alpha})w}{z(z+1)(z+\alpha)} + \frac\xi2 \left(\frac{1}{z+1}+\frac{1}{z+\alpha}\right)-\frac{\eta}{z},
 	\quad (w, z) \in \mathcal R_{\alpha, \pm}, 
 \end{align}  
 see also Figure \ref{fig:riemannsurfaces}. The function $\Xi_{\alpha}$ is meromorphic
 on the Riemann surface with simple poles at 
 $-1$, $-\alpha$, $0$ on both sheets 
 and a simple zero at both points at $\infty$. The four remaining
 zeros will be the saddle points for the double contour
 integral.

\subsection{Saddle points and the liquid region}
We next describe the liquid region for general $0< \alpha\leq  1$. 
A reader acquainted with the asymptotic analysis of similar models for which the kernel can be represented in terms of double integral formulas, will recall that the liquid region  in such cases is defined in terms of the saddle points of a phase function occurring in the integrand (see for example \cite{BF,Duits1,Ok2,Petrov1}). In the present situation, the function $\Xi_{\alpha}$ from \eqref{eq:defz1z2low}, \eqref{eq:defz1z2high} plays the role of the derivative of the phase
function, which  now turns out to be multivalued. The saddle points are the zeros of $\Xi_{\alpha}$. As was the case in previous works, we are interested in the particular saddle with strictly positive imaginary part (if it exists). 

\begin{proposition}\label{prop:saddle}
	Let $0< \alpha \leq 1$ and $(\xi, \eta) \in \mathcal H $. Then there exists at most one solution  $z=s(\xi, \eta; \alpha)$ to $\Xi_{\alpha}(z;\xi,\eta)=0$ in $\mathbb C^+=\{z \in \mathbb C \mid \Im z>0\}$.
\end{proposition}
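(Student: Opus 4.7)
The plan is to recast $\Xi_{\alpha}(z;\xi,\eta)=0$ as a real polynomial equation of degree at most four, and then to exhibit at least two of its roots on the real axis; the remaining roots form at most one complex-conjugate pair and therefore contribute at most one point to $\mathbb{C}^{+}$.

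Clearing denominators in the defining identity of Definition~\ref{def:Xi} shows that $\Xi_{\alpha}(z;\xi,\eta)=0$ is equivalent to
\[
f(z):=P(z)^{2}-N(z)=0,
\]
where
\[
P(z)=(\xi-\eta)\,z^{2}+(1+\alpha)\bigl(\tfrac{\xi}{2}-\eta\bigr)z-\eta\alpha
\]
and $N(z):=z^{2}(z+1)^{2}(z+\alpha)^{2}Q_{\alpha}(z)$ is a monic real quartic, equal to $(z+\sqrt{\alpha})^{2}(z-z_{+})(z-z_{-})$ in the high-temperature regime and to $((z-z_{+})(z-z_{-}))^{2}$ in the low-temperature regime. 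Since $f$ has real coefficients, its non-real zeros occur in complex-conjugate pairs.

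A direct computation yields
\[
f(-1)=\tfrac{(1-\alpha)^{2}}{4}(\xi^{2}-1),\quad f(-\alpha)=\tfrac{\alpha^{2}(1-\alpha)^{2}}{4}(\xi^{2}-1),\quad f(0)=\alpha^{2}(\eta^{2}-1),
\]
each of which is $\le 0$ for $(\xi,\eta)\in\mathcal{H}$. Moreover $N$ always has a real double root $z^{*}\in(-1,-\alpha)$: one takes $z^{*}=-\sqrt{\alpha}$ when $\alpha\ge\tfrac{1}{9}$ and $z^{*}\in\{z_{+}(\alpha),z_{-}(\alpha)\}$ when $\alpha\le\tfrac{1}{9}$, using the bounds $-1<z_{-}<-\sqrt{\alpha}<z_{+}<-\alpha$ recalled in Section~\ref{sec:zeta}. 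At such a point $f(z^{*})=P(z^{*})^{2}\ge 0$, so the sign pattern $f(-1)\le 0,\ f(z^{*})\ge 0,\ f(-\alpha)\le 0$ together with the intermediate value theorem produces two real zeros of $f$ in $[-1,-\alpha]$ whenever $f(z^{*})>0$. In the degenerate situation $f(z^{*})=0$ one has $P(z^{*})=0$; since $N$ vanishes at $z^{*}$ to order at least two, so do $P^{2}$ and hence $f$, and this double zero counts as two real roots with multiplicity.

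Thus $f$ has at least two real zeros among its at most four, leaving at most one complex-conjugate pair of zeros and hence at most one zero in $\mathbb{C}^{+}$. Boundary loci of $\mathcal{H}$ where the leading coefficient $(\xi-\eta)^{2}-1$ of $f$ vanishes reduce the degree, but the same sign analysis disposes of the resulting cubic (which has a real root automatically) or quadratic. The one obstacle is the multiplicity bookkeeping in the coincidence $P(z^{*})=0$, and this is resolved painlessly by the observation that $N$ already contributes multiplicity two at $z^{*}$, so no real root is ever lost in the count.
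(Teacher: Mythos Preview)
Your proof is correct and, for the high-temperature regime $\frac{1}{9}<\alpha\le 1$, is essentially identical to the paper's: the polynomial $\Pi_{\alpha}$ of Lemma~\ref{Pilemmahigh} is exactly $-f$, and the sign evaluations at $-1$, $-\sqrt{\alpha}$, $-\alpha$ there coincide with your evaluations at $-1$, $z^{*}$, $-\alpha$. Your handling of the degenerate case $P(z^{*})=0$ via the double zero of $N$ is the counterpart of the paper's Corollary~\ref{cor:Pihigh} for $\eta=\xi/2$.

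The genuine difference is in the low-temperature regime $0<\alpha\le\frac{1}{9}$. The paper factors $f$ as the product of the two quadratics \eqref{eq:saddleequationlow} with discriminants $D_{\pm}$ from \eqref{eq:DplusDminus}, and then appeals to the geometric fact that the ellipses $\{D_{+}=0\}$ and $\{D_{-}=0\}$ are disjoint inside the hexagon, so that $D_{+}$ and $D_{-}$ cannot both be negative. You bypass this by noting that $N$ still has a real double zero in $(-1,-\alpha)$---now at $z_{\pm}(\alpha)$ rather than $-\sqrt{\alpha}$---and running the same intermediate-value argument. Your route is more uniform in $\alpha$ and avoids the (asserted but not fully argued) disjointness of the ellipses. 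The paper's factorization, however, does extra work: it directly identifies $\mathcal{L}_{\alpha}$ as the union of two explicit ellipses, which is exactly what is needed for Proposition~\ref{prop:lowtemp}. So your argument is cleaner for Proposition~\ref{prop:saddle} in isolation, while the paper's sets up the next step at the same time.
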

The proof of Proposition \ref{prop:saddle} will be given in Section~\ref{propositionproofsec}.
With this result at hand, we define the map $(\xi,\eta) \mapsto s(\xi,\eta;\alpha)$. 

	\begin{definition}\label{def:liquid} Let $0 < \alpha \leq 1$.
We define the {\bf liquid} region $\mathcal L_\alpha \subset \mathcal H$ 
by
\begin{equation} \label{eq:liquidLalpha}
	\mathcal L_\alpha= \left\{ (\xi,\eta) \in \mathcal{H} \mid 
	\exists z = s(\xi, \eta; \alpha) \in \mathbb{C}^+ :  \Xi_{\alpha}(z;\xi,\eta)=0 \right\}
	\end{equation}
and the map $s: \mathcal L_\alpha \to\mathbb{C}^+$ by $(\xi,\eta) \mapsto s(\xi,\eta; \alpha)$.
\end{definition}
	
\subsection{Main result}	
For a given $(\xi,\eta) \in \mathcal L_{\alpha}$ with $s = s(\xi,\eta;\alpha)$, 
let $T_1$ and $T_\alpha$ denote the triangles in $\mathbb{C}$ with vertex sets $\{-1,0, s\}$ and $\{-\alpha, 0,s\}$, respectively. As indicated in Figure \ref{fig:triangles}, the angles of $T_1$ and $T_\alpha$ are denoted by $\{\phi_1,\phi_2, \phi_3\}$ and $\{\psi_1,\psi_2, \psi_3\}$, respectively. Note that $\phi_3=\psi_3$ for any $\alpha$, but $\phi_j=\psi_j$ for $j = 1,2$ if and only if $\alpha=1$. The following is the main result of the paper.
	
	\begin{figure}[t]
		\begin{center}
				\begin{tikzpicture}[scale=.9]
				\coordinate (O) at (-4,2);
				\coordinate (A) at (-6,0);
				\coordinate (B) at (0,0);
				\draw (O)--(A)--(B)--cycle;
				
				\tkzMarkAngle[fill= gray,size=0.65cm,%
				opacity=.4](A,O,B)
				\tkzLabelAngle[pos = 0.35](A,O,B){$\phi_2$}
					\tkzLabelAngle[pos = -0.35](A,O,B){$s(\xi,\eta;\alpha)$}
				
				\tkzMarkAngle[fill= gray,size=1cm,%
				opacity=.4](B,A,O)
				\tkzLabelAngle[pos = 0.6](B,A,O){$\phi_1$}
				\tkzLabelAngle[pos = -0.35](B,A,O){$-1$}
				
				\tkzMarkAngle[fill= gray,size=1.5cm,%
				opacity=.4](O,B,A)
				\tkzLabelAngle[pos = 0.9](O,B,A){$\phi_3$}
				\tkzLabelAngle[pos = -0.35](O,B,A){$0$}
				\draw (-3,-.5) node {$T_1$};
			\end{tikzpicture}			
			\begin{tikzpicture}[scale=.9]
			\coordinate (O) at (-4,2);
			\coordinate (A) at (-5,0);
			\coordinate (B) at (0,0);
			\draw (O)--(A)--(B)--cycle;
			
			\tkzMarkAngle[fill= gray,size=0.65cm,%
			opacity=.4](A,O,B)
			\tkzLabelAngle[pos = 0.35](A,O,B){$\psi_2$}
			\tkzLabelAngle[pos = -0.35](A,O,B){$s(\xi,\eta;\alpha)$}
			
			\tkzMarkAngle[fill= gray,size=1cm,%
			opacity=.4](B,A,O)
			\tkzLabelAngle[pos = 0.6](B,A,O){$\psi_1$}
			\tkzLabelAngle[pos = -0.35](B,A,O){$-\alpha$}
			
			\tkzMarkAngle[fill= gray,size=1.5cm,%
			opacity=.4](O,B,A)
			\tkzLabelAngle[pos = 0.9](O,B,A){$\psi_3$}
			\tkzLabelAngle[pos = -0.35](O,B,A){$0$}
			\draw (-3,-.5) node {$T_\alpha$};
			\end{tikzpicture}
		\end{center}
	\caption{The triangles $T_1$ and $T_\alpha$.}
		\label{fig:triangles}
	\end{figure}
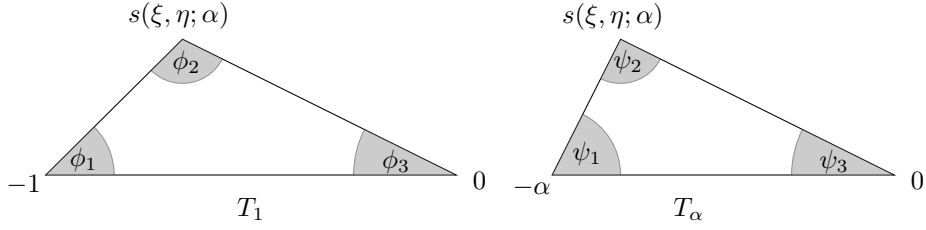
	\begin{theorem} \label{thm:main}
	Let $\alpha \in (0, 1]$. Let  $x,y\in \mathbb N$ be varying with $N$ 
	such that \eqref{eq:scaled_variables} holds
 with $(\xi,\eta) \in \mathcal L_\alpha$.
 Let $\phi_j = \phi_j(\xi,\eta;\alpha)$, $\psi_j = \psi_{j}(\xi,\eta;\alpha)$ for $j=1,2,3$ denote the angles of the
 triangles as shown in Figure \ref{fig:triangles}. Then
		\begin{equation}\label{prob paths up main thm}
		\lim_{N\to \infty}	\mathbb P\left(\tikz[scale=.3,baseline=(current bounding box.center)] {\draw (0,-1) \lozr; \filldraw (0,-1) circle(5pt); \draw (0,-1) node[below] {$(x,y)$}} \right) = \begin{cases}
			\frac{\phi_1}{\pi}, & x \text{ odd,}\\
				\frac{\psi_1}{\pi}, & x \text{ even.}
			\end{cases}
		\end{equation}
		\begin{equation}\label{prob paths hori main thm}
		\lim_{N \to \infty}	\mathbb P\left(\tikz[scale=.3,baseline=(current bounding box.center)] {\draw (0,0) \lozu; \filldraw (0,0) circle(5pt); \draw (0,0) node[below] {$(x,y)$} }\right) = \begin{cases}
		\frac{\phi_2}{\pi}, & x \text{ odd,}\\
		\frac{\psi_2}{\pi}, & x \text{ even,}
		\end{cases}
		\end{equation}
and 
	\begin{equation}\label{prob no paths main thm}
\lim_{N\to \infty}\mathbb P\left(\tikz[scale=.3,baseline=(current bounding box.center)] {\draw (0,0) \lozd; \filldraw (1,0) circle(5pt); \draw (1,0) node[below] {$(x,y)$}} \right) =\frac{\phi_3}{\pi}= \frac{\psi_3}{\pi}.
\end{equation}
\end{theorem}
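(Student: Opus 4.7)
\emph{Strategy.} The plan is to combine Proposition \ref{proposition1.1} with the steepest-descent program from \cite{DK}: express each of the three tile probabilities in \eqref{eq:tileprobs} in terms of a few values of the kernel $K_N$, substitute the Deift/Zhou asymptotics of the Christoffel--Darboux kernel $R_N$ into the double contour integral \eqref{eq:kernel}, and then perform a classical saddle-point analysis. Using the transition matrices \eqref{eq:Tmeven}--\eqref{eq:Tmodd} and the determinantal structure, each tile probability reduces, after the standard combinatorial cancellations for non-intersecting path models, to an (essentially single) evaluation of $K_N$ multiplied by a transition weight ($\alpha$ or $1$), so the task is reduced to the large-$N$ asymptotics of $K_N$ at coinciding or nearby points.

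\emph{Asymptotics of $K_N$ in the liquid region.} By the RH analysis built from the $g$-function \eqref{eq:gz}, the kernel $R_N(w,z)$ admits a leading-order description in terms of the outer parametrix on the relevant portions of the plane. Inserting this into \eqref{eq:kernel} and combining with the explicit $N$-dependent factors $(w+1)^N(w+\alpha)^N/w^{2N}$ and the $\lfloor x/2\rfloor$-powers, one obtains a double integrand of the form $e^{N(\mathcal F(z) - \mathcal F(w))}$ times a prefactor that is regular near the saddles. The phase $\mathcal F$ has derivative equal (up to branches) to the function $\Xi_{\alpha,\pm}$ in \eqref{eq:defz1z2low}--\eqref{eq:defz1z2high}, so its saddles are precisely the zeros of $\Xi_\alpha$; Proposition \ref{prop:saddle} then provides a unique saddle $s=s(\xi,\eta;\alpha) \in \mathbb C^+$ together with its conjugate $\bar s$ for $(\xi,\eta) \in \mathcal L_\alpha$.

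\emph{Saddle-point analysis and identification with triangle angles.} One deforms the $w$- and $z$-contours through the pair $\{s,\bar s\}$ along the corresponding steepest-descent paths. When the two contours cross the diagonal $z=w$ during this deformation, the residue that is picked up cancels exactly the single integral $\chi_{x_1>x_2}$ term of \eqref{eq:kernel}, which is the standard mechanism in the framework of \cite{DK}. The remaining saddle contribution, after the dominant exponential factors cancel between the evaluations at $s$ and at $\bar s$, produces a limit of the form
\begin{equation*}
\lim_{N\to\infty} K_N(x,y,x+\Delta x, y+\Delta y) = \frac{1}{\pi}\arg R_{\Delta x, \Delta y}(s),
\end{equation*}
where $R_{\Delta x,\Delta y}$ is an explicit rational function in $\{z,z+1,z+\alpha\}$ read off from the prefactor in \eqref{eq:kernel}. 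Specializing to the three pairs $(\Delta x,\Delta y)$ arising from Step 1, the argument $\arg R$ equals, in each case, the interior angle at a specific vertex of the triangle $T_1$ (when $x$ is odd) or $T_\alpha$ (when $x$ is even): $\phi_1=\arg(s+1)$, $\psi_1=\arg(s+\alpha)$, $\phi_3=\psi_3=\arg(-s^{-1})=\pi-\arg s$, and $\phi_2,\psi_2$ by complementarity in their respective triangles. This gives \eqref{prob paths up main thm}--\eqref{prob no paths main thm}, and the compatibility check that the three probabilities sum to $1$ follows from the angles of each triangle summing to~$\pi$.

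\emph{Main obstacles.} The principal difficulty is executing the steepest-descent analysis rigorously: the RH asymptotics of $R_N$ is valid only in specific regions of $\mathbb C$, so the contour deformations must remain inside the domain of validity of the outer parametrix, which requires tracking the branch cuts of $Q_\alpha^{1/2}$ and the poles of $\Xi_\alpha$ at $-1,-\alpha,0$, and verifying that the steepest-descent paths connecting $s$ and $\bar s$ can be taken without crossing these. One must further control errors uniformly as $(x,y)/N$ ranges over compact subsets of $\mathcal L_\alpha$ and choose branches of $\arg$ consistently so that the resulting expressions lie in $(0,\pi)$; this is what gives rise to the parity-dependence in \eqref{prob paths up main thm}--\eqref{prob paths hori main thm}.
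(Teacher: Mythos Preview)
Your overall strategy is the right one, and the final identification of the limits with the triangle angles via $\arg(s+1)$, $\arg(s+\alpha)$, $\pi-\arg s$ matches the paper exactly. However, two of your intermediate steps are genuinely incomplete or misdescribed compared to what actually makes the argument go through.

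First, your Step~1 claim that each tile probability ``reduces, after the standard combinatorial cancellations, to an (essentially single) evaluation of $K_N$ multiplied by a transition weight'' is not justified and is not how the paper proceeds. The type~III probability is indeed $1-K_N(x,y,x,y)$, but types~I and~II are \emph{not} single kernel values: the paper obtains them via the height function. One writes $\mathbb P(\text{type I at }(x,y))=\mathbb E[h(x,y+1)]-\mathbb E[h(x+1,y+1)]$, uses $\mathbb E[h(x,y)]=\sum_{k<y}K_N(x,k,x,k)$, and sums the resulting geometric series \emph{inside} the double integral to produce a factor $1/(w-z)$. The subsequent difference then collapses to a single double integral $\mathcal I_N(x,y;H)$ with an explicit rational $H(w,z)$ (e.g.\ $H_{1,\mathrm{even}}(w,z)=\tfrac{w}{z(w+\alpha)}$). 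This height-function / geometric-series step is where the clean formulas come from; without it you would be left with an infinite sum of kernel values and no obvious way to perform the saddle-point analysis.

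Second, your description of the saddle mechanism is off in an important way. In the paper's proof the residue picked up when deforming the $w$-contour through the $z$-contour does \emph{not} cancel the $\chi_{x_1>x_2}$ single integral (there is no such term in $\mathcal I_N(x,y;H)$, since both coordinates coincide); rather, that residue \emph{is} the leading term $\tfrac{1}{2\pi i}\int_{\bar s}^{s}H(z,z)\,dz$, and the remaining double integral is then shown to vanish. A key technical device that you omit is replacing the Christoffel--Darboux kernel $R_N(w,z)$ by its Cauchy-type transform $\mathcal R_N(w,z)=\begin{pmatrix}1&0\end{pmatrix}Y^{-1}(w)Y(z)\begin{pmatrix}1\\0\end{pmatrix}$, for which the RH analysis gives the uniform bound $\mathcal R_N(w,z)e^{N(g(w)-g(z))}=O(1)$; this is what makes the contour deformations and the vanishing of the residual double integral rigorous. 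Your ``main obstacles'' paragraph correctly anticipates the branch-cut and pole issues, but the actual cure is precisely this passage to $\mathcal R_N$ together with a careful analysis of the level set $\{\Re\Phi_\alpha=\Re\Phi_\alpha(s)\}$, and a reduction by symmetry to the lower-left part $\eta\le \xi/2\le 0$ of the liquid region.
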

Theorem \ref{thm:main} follows from Proposition \ref{prop:doubleintegrallimit} below, and the proof of
this proposition will be given in Section
\ref{sec:contour analysis}.

\medskip

Theorem \ref{thm:main} describes the situation in the liquid region $\mathcal{L}_{\alpha}$, but it also explains the behavior at the boundary of $\mathcal{L}_{\alpha}$. For each $(\xi,\eta) \in \mathcal L_\alpha$, both $s(\xi,\eta;\alpha)$ and $\overline{s(\xi,\eta;\alpha)}$ are simple zeros of $\Xi_{\alpha}$.
When the point $(\xi,\eta)$ approaches the boundary of $\mathcal L_{\alpha}$, the saddle  $s(\xi,\eta;\alpha)$ approaches the real line.  Thus, at the boundary $\partial \mathcal L_\alpha$, two zeros of $\Xi_{\alpha}$ collide 
to  form a double zero. Note also that when $s(\xi,\eta;\alpha)$ approaches the real line, the triangles $T_1$ and  $T_\alpha$ collapse with two of the angles approaching $0$ and the third approaching $\pi$. In view of Theorem \ref{thm:main}, this means that the tiling is frozen at the boundary of $\mathcal{L}_{\alpha}$.

\subsection{Structure in the low temperature regime}
Let us now discuss the low temperature regime in more detail. 
	
In the low temperature regime, each zero of $\Xi_{\alpha}$ is a zero of one of the functions $\Xi_{\alpha,+}$ or $\Xi_{\alpha,-}$ from \eqref{eq:defz1z2low}. These zeros are easy to find since each of the functions $\Xi_{\alpha,\pm}$ is as a rational function with a quadratic numerator. Setting the numerators equal to zero leads to the equations
	\begin{equation} \label{eq:saddleequationlow}
	(s-z_+)(s-z_-)
	= \pm \left[ \eta (s+1)(s+\alpha) - \xi s(s+ \tfrac{1+\alpha}{2})
	\right].
	\end{equation}
with $z_{\pm} = z_{\pm}(\alpha)$.
The equations \eqref{eq:saddleequationlow} are quadratic in $s$
with discriminants $D_{\pm} = D_{\pm}(\xi,\eta)$ that depend on
	 the coordinates $\xi$ and $\eta$:
	 \begin{equation} \label{eq:DplusDminus}
	 \begin{aligned} 
	 D_+(\xi,\eta)  & = \left(\tfrac{1+3\alpha}{2} - (1+\alpha)(\eta - \tfrac{\xi}{2})\right)^2 - 
	 4 \alpha (1-\eta)(1+\xi-\eta), \\
	 D_-(\xi,\eta)  & = \left(\tfrac{1+3\alpha}{2} +  (1+\alpha)(\eta - \tfrac{\xi}{2})\right)^2 - 
	 4 \alpha (1+\eta)(1-\xi+\eta) \\
	 & = D_+(-\xi,-\eta).  \end{aligned}
	 \end{equation}
	 The equations $D_+(\xi,\eta) = 0$, $D_-(\xi,\eta) = 0$
	 represent two ellipses in the $(\xi,\eta)$-plane. The ellipses are inside the hexagon and each one of them is
	 tangent to the boundary of the hexagon in four points.  The two ellipses are disjoint for $0 < \alpha < \frac{1}{9}$, and they become tangent at the origin for $\alpha = \frac{1}{9}$.
	 
Since a quadratic equation has two complex conjugate roots if and only if the discriminant is negative, we readily obtain the following proposition 
	 
	\begin{proposition} \label{prop:lowtemp} For each $0 < \alpha < \frac 19$, the liquid region $\mathcal{L}_\alpha$ is the disjoint union of the two open ellipses
		 $\mathcal L_\alpha^{\pm } $  defined by
		\begin{align}
		\mathcal L_\alpha^{\pm}  & =  \left\{	(\xi,\eta) \mid 
		D_{\pm}(\xi,\eta) <0\right\},
		\end{align}
		with $D_{\pm} = D_{\pm}(\xi,\eta)$ given by \eqref{eq:DplusDminus}.
		Moreover, the restrictions of $(\xi,\eta) \mapsto s(\xi,\eta;\alpha)$ to $\mathcal L_\alpha^{\pm}$ 
		are diffeomorphisms onto $\mathbb C^+$.
 	\end{proposition}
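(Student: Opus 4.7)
My approach is to exploit the decoupling \eqref{eq:defz1z2low} of $\Xi_\alpha$ into the two scalar rational functions $\Xi_{\alpha,\pm}$ available in the low temperature regime. Solving $\Xi_\alpha(s;\xi,\eta)=0$ in $\mathbb{C}^+$ is equivalent to solving one of $\Xi_{\alpha,\pm}(s;\xi,\eta)=0$. Clearing denominators in each, and using the Vieta relations $z_+(\alpha) z_-(\alpha) = \alpha$ and $z_+(\alpha) + z_-(\alpha) = -\tfrac{1+3\alpha}{2}$ that follow from \eqref{eq:zpmlow}, produces the real quadratics
\begin{equation*}
(1 \mp \eta \pm \xi)\, s^2 + \left(\tfrac{1+3\alpha}{2} \mp (1+\alpha)\left(\eta - \tfrac{\xi}{2}\right)\right) s + \alpha(1 \mp \eta) = 0,
\end{equation*}
whose discriminants coincide by direct expansion with $D_\pm(\xi,\eta)$ in \eqref{eq:DplusDminus}. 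Since a real quadratic has a conjugate pair of complex roots if and only if its discriminant is strictly negative, the existence of a saddle in $\mathbb{C}^+$ on the $\pm$ branch is equivalent to $D_\pm(\xi,\eta)<0$. Combined with the containment $\{D_\pm<0\}\subset\mathcal{H}$, which I would verify by an edge-by-edge check on the six edges of the hexagon (each edge reduces $D_\pm$ to a quadratic in one remaining variable with non-positive discriminant, yielding a perfect square such as $D_+(1,\eta) = ((1-\alpha)\eta - (1-2\alpha))^2$), this yields the set equality $\mathcal{L}_\alpha = \mathcal{L}_\alpha^+ \cup \mathcal{L}_\alpha^-$.

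For the disjointness I invoke Proposition \ref{prop:saddle}. If $(\xi,\eta) \in \mathcal{L}_\alpha^+ \cap \mathcal{L}_\alpha^-$, there exist $s, s' \in \mathbb{C}^+$ with $\Xi_{\alpha,+}(s;\xi,\eta)=0$ and $\Xi_{\alpha,-}(s';\xi,\eta)=0$. If $s=s'$, subtracting the two equations gives $2 Q_\alpha(s)^{1/2}=0$, forcing $s \in \{z_+(\alpha), z_-(\alpha)\} \subset \mathbb{R}$ in the low temperature regime, which contradicts $s \in \mathbb{C}^+$. If $s \neq s'$, then $\Xi_\alpha$ has two distinct zeros in $\mathbb{C}^+$, contradicting Proposition \ref{prop:saddle}. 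Hence $\mathcal{L}_\alpha^+ \cap \mathcal{L}_\alpha^- = \emptyset$.

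For the diffeomorphism claim I construct an explicit smooth inverse to $s: \mathcal{L}_\alpha^+ \to \mathbb{C}^+$. For fixed $s\in\mathbb{C}^+$, the equation $\Xi_{\alpha,+}(s;\xi,\eta)=0$ is a single complex equation affine in the real unknowns $(\xi,\eta)$; taking real and imaginary parts gives a real $2\times 2$ linear system whose determinant equals
\begin{equation*}
\tfrac{\Im s}{2|s|^2}\left(\tfrac{1}{|s+1|^2} + \tfrac{\alpha}{|s+\alpha|^2}\right),
\end{equation*}
which is strictly positive for $s \in \mathbb{C}^+$. Hence $(\xi,\eta)$ is uniquely and smoothly (in fact, rationally in $s$ and $\bar s$) determined. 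The forward map is likewise smooth because $s(\xi,\eta;\alpha)$ is given by the quadratic formula applied to smooth coefficients with strictly negative discriminant $D_+(\xi,\eta)$. The analogous argument using $\Xi_{\alpha,-}$ handles $\mathcal{L}_\alpha^-$. The main technical point is bookkeeping: maintaining the correspondence between the two sign conventions across the factorization of $Q_\alpha^{1/2}$, the resulting quadratic, its discriminant, and the branch of the inverse map, so that the identifications $\mathcal{L}_\alpha^\pm = \{D_\pm<0\}$ come out exactly as stated, together with the finite edge-by-edge verification of $\{D_\pm<0\}\subset\mathcal{H}$.
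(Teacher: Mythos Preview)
Your proposal is correct and follows essentially the same route as the paper: reduce to the two real quadratics \eqref{eq:saddleequationlow}, read off the discriminants \eqref{eq:DplusDminus}, and build the smooth inverse by solving the $2\times 2$ real linear system obtained from the affine dependence of $\Xi_{\alpha,\pm}(s;\xi,\eta)=0$ on $(\xi,\eta)$. Your explicit determinant $\tfrac{\Im s}{2|s|^2}\bigl(\tfrac{1}{|s+1|^2}+\tfrac{\alpha}{|s+\alpha|^2}\bigr)$ is equivalent to the paper's nonvanishing check \eqref{eq:positivedeterminant} (the paper first multiplies the equation through by $s$ before splitting into real and imaginary parts, which accounts for the missing $|s|^{-2}$), and your edge-by-edge containment verification and citation of Proposition~\ref{prop:saddle} for disjointness fill in details the paper leaves to the surrounding discussion.
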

See Section \ref{propositionproofsec} for the proof, in particular of
the statement about the diffeomorphisms.

Let us now discuss the behavior of the ellipses near the boundary of the hexagon. The three poles $z=0$, $z=-\alpha$, $z=-1$ of $\Xi_{\alpha,\pm}(z)$ together with the point at infinity correspond under the map $s$ precisely to the points $(\xi,\eta)$ where the ellipses touch the hexagon, see Figure \ref{fig:low}. A computation gives the following explicit expressions for the points of tangency:
\begin{align*} A_{1,2}  &=
 \pm (-1, -\tfrac{\alpha}{1-\alpha}), &  B_{1,2} &= \pm (1, \tfrac{1-2\alpha}{1-\alpha}), \\
  C_{1,2} & = \pm  (\tfrac{1-\alpha}{1+\alpha} , 1), &  
D_{1,2} & =\pm (-\tfrac{1-\alpha}{1+\alpha}, \tfrac{2\alpha}{1+\alpha}),
 \end{align*}
  where the $+$ and $-$ signs correspond to the subscripts $1$ and $2$, respectively. 
  
\begin{figure}[t]
	\begin{center}
	\begin{tikzpicture}
		\node (hexagonlow) at (3.5,2.5)  {\includegraphics[scale=0.3]{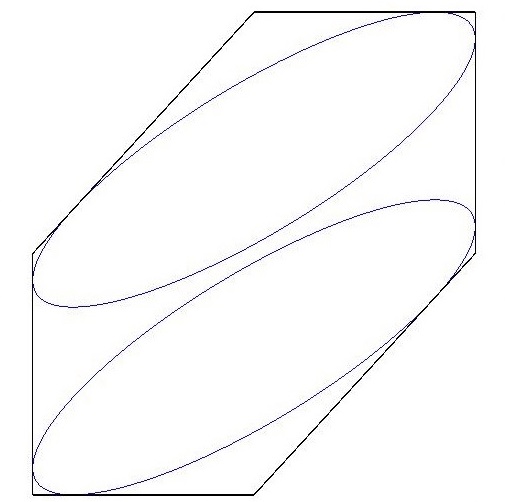}};

		 \draw (3.5,3.5) node {$\mathcal L_\alpha^{+}$};
		\draw (3.5,1.5) node  {$\mathcal L_\alpha^{-}$};
		
		\draw (1.2,2.1) node[left] {$A_1$};
		\filldraw (1.17,2.15)  circle (2pt);	 
		
		\draw  (1.35,3) node[above] {$D_1$};
		\filldraw (1.65,3)  circle (2pt);	 
		
		\draw (5.85,4.75) node[right] {$B_1$};
		\filldraw (5.85,4.75)  circle (2pt);	 
		
		\draw  (5.5,5.05) node[above] {$C_1$};
		\filldraw (5.5,5.05)  circle (2pt);	 
		
		\draw (1.2,0.1) node[left] {$B_2$};
		\filldraw (1.17,0.15)  circle (2pt);	 
		
		\draw  (1.45,-.15) node[below] {$C_2$};
		\filldraw (1.65,-.1)  circle (2pt);	 
		
		\draw (5.85,2.8) node[right] {$A_2$};
		\filldraw (5.85,2.8)  circle (2pt);	 
		
		\draw  (5.7,2.0) node[below] {$D_2$};
		\filldraw (5.5,2.05)  circle (2pt);

			\filldraw[gray!20!white]  (8,1) --++(6,0) --++(1,1) --++(-6,0) --++(-1,-1);
			\draw[very thick,gray] (8,1)--++(6,0); 
		\draw[gray!40!white] (7,0) --++(6,0) --++(2,2) --++(-6,0) --++(-2,-2);
		\draw (11,1.5) node {$s(\mathcal L_\alpha^{-})$};
		
			\draw[gray!40!white] (7,3) --++(6,0) --++(2,2) --++(-6,0) --++(-2,-2);
				\filldraw[gray!20!white]  (8,4) --++(6,0) --++(1,1) --++(-6,0) --++(-1,-1);
			\draw[very thick,gray] (8,4)--++(6,0); 
				\draw (11,4.5) node {$s(\mathcal L_\alpha^{+})$};		
				
				 \filldraw (8.5,1)  circle (2pt);	 
				\filldraw (9.5,1)  circle (2pt);	 
				\filldraw (11,1)  circle (2pt);	 
				\draw   (8.5,1) node[below] {$-1$};
				\draw   (9.5,1) node[below] {$-\alpha$};
				\draw   (11,1) node[below] {$0$};
				\draw   (14,1) node[below] {$\infty$};
				
				 \filldraw (8.5,4)  circle (2pt);	 
				\filldraw (9.5,4)  circle (2pt);	 
				\filldraw (11,4)  circle (2pt);	 
				\draw   (8.5,4) node[below] {$-1$};
				\draw   (9.5,4) node[below] {$-\alpha$};
				\draw   (11,4) node[below] {$0$};
				\draw   (14,4) node[below] {$\infty$};
				
			\end{tikzpicture}
			\caption{The liquid region (left) and the two disconnected sheets of $\mathcal{R}_{\alpha}$ (right) in the low temperature regime. The diffeomorphism $(\xi,\eta) \mapsto s(\xi,\eta;\alpha)$ maps the points $A_j$, $B_j$, $C_j$,
			$D_j$ to $-1$, $-\alpha$, $0$ and $\infty$,
			respectively.} 
			\label{fig:low}
	\end{center}
\end{figure}

Given two points $P, Q$ on one of the ellipses $\partial \mathcal{L}_\alpha^\pm$, we use the notation $\gamma_{PQ} \subset \partial \mathcal{L}_\alpha^\pm$ to denote the counterclockwise subarc of the ellipse which starts at $P$ and ends at $Q$.
As $(\xi, \eta) \in \mathcal{L}_\alpha$ approaches a point in $\gamma_{B_1C_1} \cup \gamma_{B_2C_2}$, the saddle point $s(\xi,\eta;\alpha)$ approaches a point in the interval $(-\alpha,0)$. Thus, in view of Theorem \ref{thm:main}, we see that
	\begin{equation}\label{eq:betweenBC}	\lim_{N\to \infty}	\mathbb P\left(\tikz[scale=.3,baseline=(current bounding box.center)] {\draw (0,-1) \lozu; \filldraw (0,-1) circle(5pt); \draw (0,-1) node[below] {$(x,y)$}} \right) =1,
	\end{equation}
	where $x,y$ and are such that \eqref{eq:scaled_variables} holds with $(\xi, \eta) \in \gamma_{B_1C_1} \cup \gamma_{B_2C_2}$.
This behavior extends into the frozen corners near $(\pm1,\pm1)$ where only lozenges of this type are present. Similarly, for $(\xi, \eta) \in \gamma_{C_1D_1} \cup \gamma_{C_2D_2}$,
	\begin{equation} \label{eq:betweenCD}
			\lim_{N\to \infty}	\mathbb P\left(\tikz[scale=.3,baseline=(current bounding box.center)] {\draw (0,0) \lozd; \filldraw (1,0) circle(5pt); \draw (1,0) node[below] {$(x,y)$}} \right) =1,
	\end{equation}
 and, for $(\xi, \eta) \in \gamma_{D_1A_1} \cup \gamma_{D_2A_2}$,
		\begin{equation} \label{eq:betweenDA} 
		\lim_{N\to \infty}	\mathbb P\left(\tikz[scale=.3,baseline=(current bounding box.center)] {\draw (0,-1) \lozr; \filldraw (0,-1) circle(5pt); \draw (0,-1) node[below] {$(x,y)$}} \right) =1.
\end{equation}
		
The situation is more interesting on the arcs $\gamma_{A_1B_1}$ and $\gamma_{A_2B_2}$. As $(\xi, \eta) \in \mathcal{L}_\alpha$ approaches one of these arcs, $s(\xi,\eta;\alpha)$ approaches the interval $(-1,-\alpha)$. In this limit we have $\phi_2= \pi$ and $\psi_1=\pi$, while all the other angles are zero. This means that at a point $(x,y)$ near this part of the boundary of the liquid domain, we have
 \begin{equation} \label{eq:betweenAB}
\begin{cases} \lim\limits_{N\to \infty}	\mathbb P\left(\tikz[scale=.3,baseline=(current bounding box.center)] {\draw (0,-1) \lozr; \filldraw (0,-1) circle(5pt); \draw (0,-1) node[below] {$(x,y)$}} \right) =1, & \textrm { if }  x  \textrm{ even,} \\
\lim\limits_{N\to \infty} \mathbb P\left(\tikz[scale=.3,baseline=(current bounding box.center)] {\draw (0,-1) \lozu; \filldraw (0,-1) circle(5pt); \draw (0,-1) node[below] {$(x,y)$}} \right) =1, &   \textrm { if }   x  \textrm{ odd,} 
\end{cases}
 \end{equation}
i.e., there is an alternating pattern involving two different types of lozenges, as is clearly visible in Figure \ref{fig:hexagonsample}.

\subsection{Structure in the high temperature regime}
In the high temperature regime $\frac19< \alpha \leq 1$, the equation $\Xi_{\alpha}(s;\xi,\eta)=0$ for the saddle points can be written after squaring as
\begin{equation} \label{eq:saddleequationhigh}
\left(s+ \sqrt{\alpha}\right)^2 (s-z_+)(s-z_-)
= \left(\eta (s+1)(s+\alpha)	- \xi s(s+ \tfrac{1+\alpha}{2}) \right)^2.
\end{equation}
The following proposition (which should be compared with Proposition \ref{prop:lowtemp}) shows that $s$ defines a diffeomorphism from the liquid region $\mathcal{L}_\alpha$ to the subset $\mathcal{R}_{\alpha}^+$ of $\mathcal{R}_{\alpha}$ defined by
\begin{align}\label{calRplusdef}
\mathcal{R}_{\alpha}^+ = \{(w,z) \in \mathcal R_{\alpha} \mid \Im z > 0 \}.
\end{align}

\begin{proposition}\label{prop:hightemp}
	For each $\frac19< \alpha \leq 1$, the map $(\xi,\eta) \mapsto s(\xi,\eta;\alpha)$ is a diffeomorphism from $\mathcal L_\alpha$ onto $\mathcal R_{\alpha}^+ $. Moreover, it maps the upper half
	$\mathcal L_\alpha^{+}=\left\{ (\xi,\eta) \in \mathcal L_\alpha  \mid  \eta>\frac{\xi}{2} \right\}$
	onto  $\{(w,z) \in \mathcal R_{\alpha,+} \mid \Im z>0\}$, and the lower half 
		$\mathcal L_\alpha^{-}=\left\{ (\xi,\eta) \in \mathcal L_\alpha  \mid  \eta  <\frac{\xi}{2}\right \}$
		onto   $\{(w,z) \in \mathcal R_{\alpha,-} \mid \Im z>0\}$.
\end{proposition}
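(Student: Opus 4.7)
The key observation is that the saddle equation $\Xi_\alpha(z;\xi,\eta)=0$ from \eqref{eq:defz1z2high} is \emph{linear} in $(\xi,\eta)$. After clearing denominators it reads
$$(z+\sqrt{\alpha})\,w \;=\; \eta(z+1)(z+\alpha)\,-\,\xi\, z\bigl(z+\tfrac{1+\alpha}{2}\bigr),\qquad (w,z)\in\mathcal{R}_\alpha.$$
For each $(w,z)\in\mathcal{R}_\alpha^+$, taking real and imaginary parts yields a $2\times 2$ real linear system in $(\xi,\eta)$ whose determinant equals $\Delta(z)=\Im\bigl(z(z+\tfrac{1+\alpha}{2})\,\overline{(z+1)(z+\alpha)}\bigr)$. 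I would first show that $\Delta(z)\neq 0$ on $\mathbb{C}^+$, which amounts to checking that the rational function $F(z)=z(z+\tfrac{1+\alpha}{2})/((z+1)(z+\alpha))$ avoids the real axis on $\mathbb{C}^+$. This reduces to the quadratic equation $F(z)=t$ having strictly positive discriminant for every $t\in\mathbb{R}$, and a direct manipulation rewrites that discriminant as $\bigl((1-\alpha)^2(1-2t)^2+4\alpha\bigr)/4\geq\alpha>0$. Solving the linear system then produces a real analytic candidate inverse $\Phi:\mathcal{R}_\alpha^+\to\mathbb{R}^2$ satisfying $\Xi_\alpha(z;\Phi(w,z))=0$ by construction.

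The main remaining task is to show $\Phi(\mathcal{R}_\alpha^+)=\mathcal{L}_\alpha$. Since $z\in\mathbb{C}^+$ is automatically a saddle, this reduces to $\Phi(\mathcal{R}_\alpha^+)\subset\mathcal{H}$, which I would establish by a boundary analysis. As $z\to\mathbb{R}$ along either sheet, the saddle collides with its conjugate and $\Phi(w,z)$ approaches $\partial\mathcal{L}_\alpha$; as $z$ tends to one of $\{-1,-\alpha,0,\infty\}$, an asymptotic expansion of the linearised equation shows that $\Phi(w,z)$ tends to a specific point on $\partial\mathcal{H}$. For instance, on $\mathcal{R}_{\alpha,+}$ as $z\to\infty$ (so $w\sim z$), matching the two leading powers in the linear equation gives $\eta-\xi=1$ and $\eta-\xi/2=3(1+\sqrt{\alpha})^2/(8(1+\alpha))$, which is a point on the upper-left edge of the hexagon. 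Combined with connectedness of $\mathcal{R}_\alpha^+$ and continuity of $\Phi$ (including across the portion of the branch cut $\mathcal{C}$ in $\mathbb{C}^+$, where $w$ and both sides of the linear equation reverse sign simultaneously), this forces $\Phi(\mathcal{R}_\alpha^+)\subseteq\mathcal{L}_\alpha$. Proposition~\ref{prop:saddle} gives the uniqueness needed for $s\circ\Phi=\mathrm{id}_{\mathcal{R}_\alpha^+}$, while $\Phi\circ s=\mathrm{id}_{\mathcal{L}_\alpha}$ holds by construction, so $s=\Phi^{-1}$ is a diffeomorphism.

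For the sheet correspondence, the same $z\to\infty$ expansion shows that on $\mathcal{R}_{\alpha,+}$ one approaches a point with $\eta>\xi/2$, i.e., in $\overline{\mathcal{L}_\alpha^+}$, while on $\mathcal{R}_{\alpha,-}$ (where $w\sim-z$) the sign flips and one lands in $\overline{\mathcal{L}_\alpha^-}$. Equivalently, the symmetry $\Xi_\alpha(z;-\xi,-\eta)=-\Xi_\alpha(z;\xi,\eta)$ after $w\mapsto-w$ shows that $(\xi,\eta)\mapsto(-\xi,-\eta)$ intertwines the sheet swap on $\mathcal{R}_\alpha^+$ with the exchange $\mathcal{L}_\alpha^+\leftrightarrow\mathcal{L}_\alpha^-$. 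The principal obstacle will be making the boundary analysis of the second paragraph rigorous, particularly near the branch points $z_\pm(\alpha)$ where the two sheets of $\mathcal{R}_\alpha^+$ meet and where $\Phi$ must extend smoothly onto $\partial\mathcal{L}_\alpha$.
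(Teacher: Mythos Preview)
Your approach is essentially the same as the paper's: both exploit the linearity of the saddle equation in $(\xi,\eta)$ to write down an explicit inverse via a $2\times 2$ real linear system, and both verify that this system is nonsingular for $z\in\mathbb{C}^+$. The paper does this more directly by observing (as in \eqref{eq:positivedeterminant}, reused from the low-temperature proof) that the determinant is a nonzero multiple of
\[
\Im\Bigl(-\tfrac{s}{2(s+1)}-\tfrac{s}{2(s+\alpha)}\Bigr)=\Im\Bigl(\tfrac{1}{2(s+1)}+\tfrac{\alpha}{2(s+\alpha)}\Bigr)<0\quad\text{for }\Im s>0,
\]
a one-line partial-fraction argument. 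Your discriminant computation for $F(z)=t$ is correct and yields the same conclusion, but is more work than necessary.

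You are in fact more careful than the paper about surjectivity. The paper simply asserts that the inverse formula \eqref{eq:inverseHigh} produces a bijection $\mathcal{L}_\alpha\to\mathcal{R}_\alpha^+$ without explicitly checking that the recovered $(\xi,\eta)$ lands in $\mathcal{H}$; you correctly flag this as the main remaining task and sketch a boundary/properness argument. For the sheet correspondence, the paper supplies an explicit trigonometric computation showing that the segment $\{\eta=\xi/2,\ |\xi|<\xi_{cusp}\}$ maps bijectively onto the upper part of the cut $\mathcal{C}$, which then separates $\mathcal{L}_\alpha^{\pm}$; your route via the symmetry $(\xi,\eta)\mapsto(-\xi,-\eta)$ (intertwining the sheet swap $w\mapsto -w$) combined with a single asymptotic check at $z\to\infty$ is an equally valid and somewhat cleaner alternative.
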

Proposition \ref{prop:hightemp} is proved in Section  \ref{propositionproofsec}.

\begin{figure}[t]
	\begin{center}
		\begin{tikzpicture}(15,10)(0,0)
			\node (hexagonhigh) at (3.5,2.4)  {\includegraphics[scale=0.3]{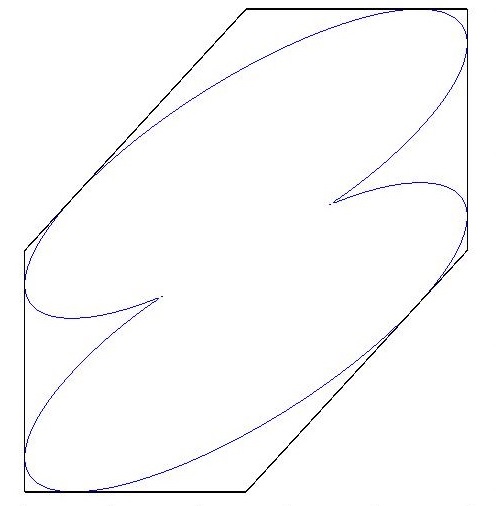}};
		
		\draw[dashed] (2.65,1.95) -- (4.5,3);
		 \draw (3.5,3.5) node {$\mathcal L_\alpha^{+}$};
		\draw (3.5,1.5) node  {$\mathcal L_\alpha^{-}$};
		
		\draw (1.15,2.1) node[left] {$A_1$};
		\filldraw (1.14,2.1)  circle (2pt);	 
		
		\draw  (1.35,3) node[above] {$D_1$};
		\filldraw (1.65,3)  circle (2pt);	 
		
		\draw (5.82,4.62) node[right] {$B_1$};
		\filldraw (5.82,4.62)  circle (2pt);	 
		
		\draw  (5.4,5.00) node[above] {$C_1$};
		\filldraw (5.35,5.00)  circle (2pt);	 
		
		\draw (1.15,0.2) node[left] {$B_2$};
		\filldraw (1.14,0.2)  circle (2pt);	 
		
		\draw  (1.45,-.15) node[below] {$C_2$};
		\filldraw (1.65,-.12)  circle (2pt);	 
		
		\draw (5.8,2.8) node[right] {$A_2$};
		\filldraw (5.8,2.8)  circle (2pt);	 
		
		\draw  (5.7,2.0) node[below] {$D_2$};
		\filldraw (5.5,2.05)  circle (2pt);	 
	
	

		\filldraw[gray!20!white]  (8,1) --++(6,0) --++(1,1) --++(-6,0) --++(-1,-1);
		\draw[very thick,gray] (8,1)--++(6,0); 
		\draw[gray!40!white] (7,0) --++(6,0) --++(2,2) --++(-6,0) --++(-2,-2);
		\draw (13,1.5) node {$s(\mathcal L_\alpha^{-})$};
		\draw[very thick](9.5,.5) .. controls (8.5,1) and (8.5,1) .. (10.5,1.5);
		
		\draw[gray!40!white] (7,3) --++(6,0) --++(2,2) --++(-6,0) --++(-2,-2);
		\filldraw[gray!20!white]  (8,4) --++(6,0) --++(1,1) --++(-6,0) --++(-1,-1);
		\draw[very thick,gray] (8,4)--++(6,0); 
		\draw (13,4.5) node {$s(\mathcal L_\alpha^{+})$};
			\draw[very thick](9.5,3.5) .. controls (8.5,4) and (8.5,4) ..(10.5,4.5);
			
			\draw[dashed,help lines] (9.5,.5)--(9.5,3.5);	
				\draw[dashed,help lines] (10.5,1.5)--(10.5,4.5);
					\draw (9.5,0.5) node[below] {$z_-$};		
						\draw (9.5,3.5) node[below] {$z_-$};		
							\draw (10.5,1.5) node[above] {$z_+$};		
								\draw (10.5,4.5) node[above] {$z_+$};		
								
					\filldraw (9.5,0.5) circle(1.5pt);	
				\filldraw  (9.5,3.5) circle(1.5pt);	
				\filldraw  (10.5,1.5) circle(1.5pt);		
			\filldraw (10.5,4.5) circle(1.5pt);	
					
					 \filldraw (8.5,1)  circle (2pt);	 
					\filldraw (9.5,1)  circle (2pt);	 
					\filldraw (11,1)  circle (2pt);	 
					\draw   (8.5,1) node[below] {$-1$};
					\draw   (9.5,1) node[below] {$-\alpha$};
					\draw   (11,1) node[below] {$0$};
					\draw   (14,1) node[below] {$\infty$};
					
					\filldraw (8.5,4)  circle (2pt);	 
					\filldraw (9.5,4)  circle (2pt);	 
					\filldraw (11,4)  circle (2pt);	 
					\draw   (8.5,4) node[below] {$-1$};
					\draw   (9.5,4) node[below] {$-\alpha$};
					\draw   (11,4) node[below] {$0$};
					\draw   (14,4) node[below] {$\infty$};
		\end{tikzpicture}
			\caption{\label{fig:high} The liquid region (left) and the two sheets of the Riemann surface $\mathcal{R}_{\alpha}$ (right) in the high temperature regime. The diffeomorphism $(\xi,\eta) \mapsto s(\xi,\eta;\alpha)$ maps the
			boundary points $A_j$, $B_j$, $C_j$ and $D_j$ to
			$-1$, $-\alpha$, $0$, and $\infty$, respectively.}
	\end{center}
\end{figure}

The boundary $\partial \mathcal L_\alpha$ of the liquid region
is part of the zero set of the discriminant of the quadratic equation \eqref{eq:saddleequationhigh}.
Since the discriminant is invariant under the map $(\xi,\eta) \mapsto (-\xi,-\eta)$, its zero set is symmetric with respect to the origin. Moreover, the zero set contains the line $\eta= \xi/2$, because \eqref{eq:saddleequationhigh} has a double zero at $s=-\sqrt{\alpha}$ when $\eta = \xi/2$. This line is however not part of the boundary of $\mathcal L_{\alpha}$. 

The discriminant also vanishes at all points $(\xi,\eta)$ which satisfy an algebraic equation of degree six. The real section of this algebraic curve is a curve inside the hexagon that touches the sides of the hexagon at the points (see Figure \ref{fig:high})
\begin{align*}
& A_{1,2} = \pm 
\left(-1, -\frac{1}{2} +  \frac{3(1-\sqrt{\alpha})}{4(1+\sqrt{\alpha})} \right), & & B_{1,2} = \pm \left(1, \frac{1}{2} +  \frac{3(1-\sqrt{\alpha})}{4(1+\sqrt{\alpha})} \right), \\
& C_{1,2} = \pm \left( \frac{5}{4} - \frac{3\sqrt{\alpha}}{2(1+\alpha)}, 1 \right), & & D_{1,2} = \pm
\left( -\frac{5}{4} + \frac{3\sqrt{\alpha}}{2(1+\alpha)}, 
- \frac{1}{4} + \frac{3 \sqrt{\alpha}}{2(1+\alpha)} \right).
\end{align*}
The liquid region is symmetric with respect to the line $\eta = \xi/2$.
The cusp points are located at 
\begin{align*}
E_{1,2} & = \pm (\xi_{cusp}, \eta_{cusp}),
\end{align*} 
where $\eta_{cusp} = \xi_{cusp}/2$ and
\begin{equation} \label{eq:xicusp}
	\xi_{cusp} =  \sqrt{\frac{5}{2} - \frac{3}{4} 
	\left(\sqrt{\alpha} + \tfrac{1}{\sqrt{\alpha}}\right)}
	= \sqrt{1- \frac{3}{4} \left( \alpha^{-1/4} - \alpha^{1/4}\right)^2}.
\end{equation}
We also have $\eta_{cusp} = \cos \frac{\theta_{\alpha}}{2}$.
Note that $\xi_{cusp} = 0$ for $\alpha = 1/9$ and $\xi_{cusp} = 1$
for $\alpha = 1$. 

At points on the subarc of the boundary $\partial \mathcal{L}_\alpha$ between $B_j$  and $C_j$ we have \eqref{eq:betweenBC}, between $C_j$ and $D_j$ we  have \eqref{eq:betweenCD}, and between $D_j$ and $A_j$ we have \eqref{eq:betweenDA}. This is a consequence of Theorem \ref{thm:main}
and it is the same as in the low temperature regime.
Finally, we have the alternating probabilities \eqref{eq:betweenAB} between $A_1$ and $B_2$, and between $A_2$ and $B_1$.

A notable difference compared with the low temperature regime is that the liquid region in the high temperature regime is connected. 
As a result, the frozen region with the two types of tiles (sometimes
called semi-frozen region)
becomes disconnected into two disjoint  components. 

 For $\alpha = 1$, the equation \eqref{eq:saddleequationhigh} has a double root at $s=-1$
	and two other roots that are the solutions of
	\[ s^2 + s+1 = (\eta(s+1) - \xi s)^2. \]
	The latter two roots coincide if $4 \xi^2 - 4 \xi \eta + 4 \eta^2 = 3$
	and this is the equation for the ellipse that is tangent to all six sides of the hexagon. The semi-frozen region disappears for $\alpha =1$. 
	
	\subsection{Local process in the bulk} 
	
We chose to present Theorem \ref{thm:main} as our main result, but we  stress that our method of proof allows us to compute much more complicated asymptotic behaviors (in this sense, our method of proof is the most important contribution of this paper). For instance, with a minor adaptation of the proof of Theorem \ref{thm:main} we compute the asymptotic behavior of local correlations in the bulk of the liquid region. 
	
	\begin{theorem} \label{thm:microsopiclimit}
		Let $0 < \alpha \leq 1$.
		For $j=1,2$, take
		\begin{align}\label{eq:microsopic_variables}
		x_j&= N \xi_N +u_j, \\
		y_j&= N \eta_N+v_j,
		\end{align}
		where $\xi_N$ and $\eta_N$ are such that $$\lim_{N\to \infty}(\xi_N,\eta_N)=(\xi,\eta)\in \mathcal L_{\alpha}$$ and $N\xi_N$ and $N \eta_N$ are integers for every $N\in \mathbb N$. We will additionally assume, without loss of generality,  that $N\xi_N$ is even. The  variables $u_1,u_2,v_1$ and $v_2$ are integer valued  local variables independent of~$N$.
		Then we have the limit 
		\begin{equation} \label{eq:HKintegral} 
		\lim_{N \to \infty} K_N(x_1,y_1,x_2,y_2) = \frac{1}{2\pi i} \int_{\overline{s}}^s  {(z+1)^{\lfloor \frac{u_1}{2}\rfloor-\lfloor \frac{u_2}{2} \rfloor}(z+\alpha)^{\lfloor \frac{u_1+1}{2}\rfloor-\lfloor \frac{u_2+1}{2}\rfloor } }\frac{ dz }{z^{v_1-v_2+1}}
		\end{equation}
		where $s = s(\xi,\eta;\alpha)$ and the integration path from $\overline{s}$ to $s$ in \eqref{eq:HKintegral}
		is in $\mathbb C \setminus (-\infty,0]$ if $u_1 \leq u_2$ and in $\mathbb C \setminus [0,\infty)$ if $u_1 >u_2$.
	\end{theorem}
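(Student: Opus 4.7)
The plan is to apply the same double-contour steepest descent machinery that drives Theorem~\ref{thm:main} to \eqref{eq:kernel}, but now retaining the microscopic $(u_j,v_j)$ dependence instead of sending it to zero. Substituting $x_j=N\xi_N+u_j$ and $y_j=N\eta_N+v_j$, and using that $N\xi_N$ is even so that $\lfloor x_j/2\rfloor=N\xi_N/2+\lfloor u_j/2\rfloor$ (and analogously for $\lfloor(x_j+1)/2\rfloor$), the $\chi_{x_1>x_2}$ single-integral term in \eqref{eq:kernel} simplifies, without any limit being taken, to
\begin{equation*}
-\frac{\chi_{u_1>u_2}}{2\pi i}\oint_\gamma (z+1)^{\lfloor u_1/2\rfloor-\lfloor u_2/2\rfloor}(z+\alpha)^{\lfloor(u_1+1)/2\rfloor-\lfloor(u_2+1)/2\rfloor}\,\frac{dz}{z^{v_1-v_2+1}}.
\end{equation*}
The same substitution splits the double integrand into a macroscopic part, with $N$-proportional exponents depending only on $(\xi_N,\eta_N)$, times $O(1)$ microscopic factors in $(u_j,v_j)$.

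Combining the macroscopic part with the Riemann--Hilbert asymptotics of $R_N$ developed in the preceding sections, the double integrand takes the form $e^{N[\Phi(z;\xi,\eta)-\Phi(w;\xi,\eta)]}$ times a bounded rational prefactor and the microscopic factors, where $\Phi'(\,\cdot\,;\xi,\eta)$ is a branch of $\Xi_\alpha(\,\cdot\,;\xi,\eta)$ from Definition~\ref{def:Xi}. By Proposition~\ref{prop:saddle} the only saddle in the open upper half plane is $s=s(\xi,\eta;\alpha)$, with conjugate partner $\bar s$ below. The next step is to deform both $\gamma$-contours onto steepest descent paths through $s$ and $\bar s$; on these paths standard Laplace estimates show that the remaining double integral is $O(N^{-1/2})$. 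The topological content of the deformation is a necessary crossing of the $z$- and $w$-contours along an arc $\Gamma_{\bar s,s}$ joining $\bar s$ and $s$, and the crossing contributes a residue. Using the reproducing property
\begin{equation*}
\frac{1}{2\pi i}\oint_\gamma R_N(w,z)\,\frac{(w+1)^N(w+\alpha)^N}{w^{2N}}\,q(w)\,dw=q(z),\qquad \deg q\leq N-1,
\end{equation*}
which follows directly from the orthogonality \eqref{eq:orthogonality1}, one evaluates this residue to the single-contour integral
\begin{equation*}
\frac{1}{2\pi i}\int_{\Gamma_{\bar s,s}}(z+1)^{\lfloor u_1/2\rfloor-\lfloor u_2/2\rfloor}(z+\alpha)^{\lfloor(u_1+1)/2\rfloor-\lfloor(u_2+1)/2\rfloor}\,\frac{dz}{z^{v_1-v_2+1}}.
\end{equation*}

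It remains to combine this residue with the $\chi_{u_1>u_2}$ term. The integrand of the single-integral term is meromorphic on $\mathbb{C}\setminus\{-1,-\alpha,0\}$, and the closed loop $\gamma$ equals the difference of any two paths from $\bar s$ to $s$ lying on opposite sides of $0$, since the only singularity enclosed by $\gamma$ is at $0$. Choosing the side on which $\Gamma_{\bar s,s}$ lies thus absorbs the $\chi$-term: for $u_1\leq u_2$ one takes $\Gamma_{\bar s,s}\subset\mathbb{C}\setminus(-\infty,0]$ and no subtraction is needed, whereas for $u_1>u_2$ one takes $\Gamma_{\bar s,s}\subset\mathbb{C}\setminus[0,\infty)$ and the residual discrepancy is exactly $-\chi_{u_1>u_2}$ times the $\gamma$-integral. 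This yields \eqref{eq:HKintegral}. The principal obstacle is the global contour analysis in the second paragraph: verifying that for every $(\xi,\eta)\in\mathcal L_\alpha$, in both the low and the high temperature regimes, the steepest descent deformation can be realized and forces the crossing along $\Gamma_{\bar s,s}$. This is precisely the analysis performed in Section~\ref{sec:contour analysis} for Proposition~\ref{prop:doubleintegrallimit}, and only local bookkeeping of the microscopic factors distinguishes the limit here from the macroscopic limit there.
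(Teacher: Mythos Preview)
Your proposal is correct and follows essentially the same route as the paper. The paper's proof is more economical: it observes that after the substitution the double contour integral in \eqref{eq:kernel} is exactly $\mathcal I_N(N\xi_N,N\eta_N;H_K)$ for the microscopic function
\[
H_K(w,z)=\frac{(z+1)^{\lfloor u_1/2\rfloor}(z+\alpha)^{\lfloor(u_1+1)/2\rfloor}}{(w+1)^{\lfloor u_2/2\rfloor}(w+\alpha)^{\lfloor(u_2+1)/2\rfloor}}\,\frac{w^{v_2}}{z^{v_1+1}},
\]
and then invokes Proposition~\ref{prop:doubleintegrallimit} directly, whereas you re-sketch that proposition's steepest descent argument inline. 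One small technical point: your appeal to the reproducing property to evaluate the crossing residue is not quite the right mechanism, since the $w$-dependent factor is not a polynomial of degree $\le N-1$; in the paper the residue comes from the pole at $w=z$ after passing to the $\mathcal R_N$ representation (Propositions~\ref{prop:deformationlow} and~\ref{prop:deformationhigh}), using $\mathcal R_N(z,z)=1$.
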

The proof of this theorem is given in Section 7.

	If $u_1=u_2$ then the integral at the right-hand side of \eqref{eq:HKintegral} can be computed explicitly to be the discrete sine kernel. For general $u_1$ and $u_2$ this is thus a kernel that is an extension of that discrete sine kernel. In fact, it falls into the class of extensions of the discrete sine kernel introduced in \cite{BorodinSine}. It is to note that the limiting kernel, and thus its associated point processes, depends on $\alpha$. The periodicity in the horizontal direction is thus preserved in the limit. 
	
	Theorem \ref{thm:microsopiclimit} gives the limiting correlation kernel for the point process of the paths. However, from the path picture one can compute the correlation functions for the different lozenges. For instance, the particle/hole duality tells us that  the lozenges $\tikz[scale=.2]{ \draw (0,0) \lozd }$ form a determinantal point processes with $1- K_{N}$ as correlation kernel. Under the same assumptions of Theorem \ref{thm:microsopiclimit}  (but possibly with more  than two  points) we thus have
	\begin{equation} \label{prob point process}
	\lim_{N \to \infty} \mathbb P\left[ \begin{array}{l} \text{lozenge of the  type } \tikz[scale=.2]{ \draw (0,0) \lozd; \filldraw (1,0) circle(5pt); } \text{ at}  \\ (x_1,y_1), \ldots,  (x_k,y_k)
	\end{array} \right]
	=\det\left[1- \tilde K(u_i,v_i,u_j,v_j)\right]_{i,j=1}^k,
	\end{equation}
	where $\tilde K$ is the kernel at the right-hand side of \eqref{eq:HKintegral}. 

\subsection{Some comments on further asymptotic results}
We end this section by commenting on further possible results on the asymptotic behavior of the random tilings. 
\begin{remark}[Frozen regions] The complement of the liquid region $\mathcal L_\alpha$ inside the hexagon, is called the frozen region.  By definition, in the frozen region there are no solutions of $\Xi_{\alpha}(z;\xi, \eta)=0$ in $\mathbb C^+$ and all solutions are real.  By using a saddle point analysis similar to the one we give in the proof of Theorem \ref{thm:main}, one can show that this implies exponential decay of the fluctuations.  	 Thus, in the frozen regions the randomness disappears rapidly and the tiling converges to deterministic patterns. In the corners of the hexagon the patterns are simple in the sense that we only have one type of lozenge in each corner. For $\alpha<1$ there are also other frozen regions near the centers of the vertical sides. Also here the randomness decays rapidly, but there are two types of lozenges forming a stair case pattern (as we also see in the degenerate siuation $\alpha=0$ as shown in the left picture in Figure \ref{fig:hexagonExtr}). Frozen regions that have different types of lozenges have appeared in other models. Some examples are \cite{BD,Duits1} (after identifying Gelfand-Tsetlin patterns with lozenge tilings of the half plane). In fact,  lozenge tilings   of the infinite hexagon  (or plane partitions) with  an arbitrarily chosen back wall  have a been studied \cite{BKMM,M1,M2}. Part of this back wall  can be a frozen region with more complicated patterns than the staircase pattern of the present paper.

\end{remark}

\begin{remark}[Edge Universality] At the boundary of the liquid region (away from the points where the boundary  touches the sides of the hexagon, and, in the high temperature regime, away from the cusp points) we expect Airy behavior. There is a vast amount of literature around this type of universality, and we only refer to \cite{J17}  for an overview of results.
\end{remark}
\begin{remark}[Turning points] The turning points are the points where the boundary of the liquid region touches a side of the hexagon. Here we need to distinguish between the turning points that touch the hexagon at a vertical side from the other turning points. In both the low and high temperature regimes (assuming $\alpha < 1$) there are four such points. They separate two frozen regions: one that contains two different types of lozenges, while the other has only one type of lozenges.   We expect the local processes there to be the same as the processes that were found in (with a similar weight) in \cite{M1}. At the turning points that are not at the vertical sides of the hexagon we expect the GUE minor process \cite{JN} to appear. 
\end{remark}
\begin{remark}[Cusp points]
In the high temperature limit, the boundary of the liquid region has cusp points. Such cusp points have appeared before in the context of random tilings. It is known that the local limit process near such a cusp point is the Pearcey process \cite{AOvM, BK, OR2,TW}.  
\end{remark}
We strongly believe that all the above universal behaviors can be verified using rather straightforward  modifcations of the analysis that we present in this paper. More involved are the following remarks:
\begin{remark}[Tacnode]
	At the critical value $\alpha = \frac 19$ there is a transition from the low to high temperature regimes.
	The liquid region becomes a union of two ellipses that are tangent at the origin, and the origin is a tacnode. The  tacnode process was first characterized in \cite{AFvM} and alternative characterizations were given shortly afterwards in \cite{DKZ,Jtac}. See also \cite{AJvM,FV}. Preliminary computations indicate that  the same tacnode process appears, but we will return to this in a forthcoming paper. 
\end{remark}
\begin{remark}[Height fluctuations]
	Another interesting feature of random tilings are the fluctuation of the height function. It was found in \cite{KO} that the limiting height function can be described by the complex Burgers equation. In \cite{KO} it is also conjectured that the fluctuations are described by the Gaussian Free Field. There is by now a long list of random tiling models where  this conjecture has been verified, and we only mention \cite{BF,BuGo1,BuGo2,BuKn,Duits1,Duits2,Petrov2}. This turns out to be a very robust universality. Also in the model considered in this paper, we expect the Gaussian Free Field to appear, but with an interesting transition from the low to high temperature regimes. In the low temperature regime, the correlations between the  different ellipses are expected to converge to zero exponentially and we expect to obtain two independent Gaussian Free Fields (in the appropriate coordinates), whilst we have only one Gaussian Free Field in the high temperature regime. It is  natural 
	 to ask how these two fields merge to one in the transition from the low to high temperature regime. We plan to answer this question in a forthcoming paper. 
\end{remark}
	
\subsection{Overview of the rest of the paper}

In the next section we first prove Propositions \ref{prop:saddle}, \ref{prop:lowtemp} and \ref{prop:hightemp}.

The rest of the paper is devoted to the proof of Theorem \ref{thm:main}.
It is an asymptotic analysis of the double integral in
\eqref{eq:kernel} for $K_N(x,y,x,y)$ and for related double integrals 
that give the probabilities for each of the three lozenges.
These double integrals are presented in Theorem \ref{thm:doubleintegrals_for_lozenge_densities} below.

The asymptotic analysis has two main parts.
In the first part we analyze the orthogonal polynomials and
their reproducing kernel $R_N(w,z)$ in the large $N$ limit.
The orthogonal polynomials are characterized by a RH problem
that is essentially due to Fokas, Its and Kitaev \cite{FIK}. This is
recalled in section \ref{subsec:rhp}. The reproducing kernel
has a convenient formulation in terms of the solution
of the RH problem, see Proposition \ref{prop:RHPforRN}.
For the asymptotic analysis we use the Deift-Zhou steepest
descent method for RH problems. A main ingredient for the
analysis is the $g$-function, which in the present context
is associated with an equilibrium measure on a contour
in the complex plane.

This equilibrium measure is discussed in detail in section
\ref{gfunctionsec}. The transition at $\alpha = \frac{1}{9}$
is visible in the equilibrium measure since for $\frac{1}{9} <
\alpha \leq 1$ the equilibrium measure is supported on a
circular arc in the complex plane, while for $0 < \alpha 
\leq \frac{1}{9}$ it is supported on a full circle.
We are able to give explicit formulas for the equilibrium
measure, see Definition \ref{mu0gdef}.

The steepest descent analysis of the RH problem is done
in section \ref{sec:RHP}. We do not need strong asymptotics of
the reproducing kernel $R_N$, it suffices to have a
uniform bound on $\mathcal R_N(w,z) e^{N(g(w)-g(z))}$
(this is in Corollary \ref{cor:TandTinvsmall})
where  $\mathcal R_N(w,z)$ is a function related
to the reproducing kernel, and which is given by \eqref{eq:calRN}.

The second part of the asymptotic analysis is a saddle point
analysis of the double integrals like the one in \eqref{eq:kernel}.
The saddle points depend on the  asymptotic location $(\xi,\eta)$
in the hexagon. We focus on the lower left part of the liquid region which corresponds to $\eta \leq  \frac{\xi}{2} \leq  0$.
Then the saddle point $s = s(\xi,\eta;\alpha)$ is the zero
of the derivative of a function $\Phi_{\alpha}$ that is introduced
in section \ref{subsec:Phase}. We want to move the contours
in the double integrals to contours $\gamma_z$ and $\gamma_w$
passing through the saddles $s$ and $\overline{s}$, and such
that 
\[ \Re \Phi_{\alpha}(w) > \Re \Phi_{\alpha}(s) > \Re \Phi_{\alpha}(z) \]
whenever $w \in \gamma_{w} \setminus \{s, \overline{s}\}$
and $z \in \gamma_z \setminus \{s, \overline{s}\}$.
To be able to do the deformation we need an analysis of 
the critical level set
$\Re \Phi_{\alpha}(z) = \Re \Phi_{\alpha}(s)$ of $\Re \Phi_{\alpha}$ passing through the saddle. This is done in section \ref{subsec:levelRePhi}.

The actual deformation and splitting of contours is done in section \ref{sec:contour analysis}. It turns out that the 
limiting probabilities  in \eqref{prob paths up main thm},
\eqref{prob paths hori main thm}, \eqref{prob no paths main thm}
come from residue contributions that arise from pole crossings 
during the deformations of contours.
The remaining double contour integrals are then estimated
and we only need they tend to zero as $N \to \infty$.
The details of the deformations are different for the
low and high temperature regimes.

\section{Proofs of Propositions \ref{prop:saddle}, \ref{prop:lowtemp} and \ref{prop:hightemp}}\label{propositionproofsec}
 In this section we prove  Propositions \ref{prop:saddle}, \ref{prop:lowtemp} and \ref{prop:hightemp}. We consider the low and high temperature regimes separately. 
  
\subsection{The low temperature regime}
Since the saddle point equation $\Xi_{\alpha}(s;\xi,\eta)=0$ 
reduces to the two quadratic equations \eqref{eq:saddleequationlow} in the low temperature regime $0< \alpha< \frac 19$, and also in the critical regime
$\alpha =  \frac 19$, Proposition  \ref{prop:saddle} is straightforward to prove in this regime.
  
  \begin{proof}[Proof of Proposition \ref{prop:saddle} for $0< \alpha \leq \frac19$]
  	Any solution to $\Xi_{\alpha}(s;\xi,\eta)=0$ is a solution to one of the quadratic equations in \eqref{eq:saddleequationlow}. The discriminants for these quadratic equations are given in \eqref{eq:DplusDminus}. If, and only if, one of the discriminants is negative, then the corresponding quadratic equation has a zero in $\mathbb C^+$. Since the discriminants cannot be simultaneously negative, the statement follows. 
  \end{proof} 
  
  \begin{proof}[Proof of Proposition \ref{prop:lowtemp}]
It is clear from the discussion preceding Proposition \ref{prop:lowtemp} that $\mathcal L_\alpha= \mathcal L_{\alpha}^{+}\cup \mathcal L_{\alpha}^{-}$. It is therefore enough to show that the restrictions of $(\xi,\eta) \mapsto s(\xi,\eta;\alpha)$ to $\mathcal L_{\alpha}^{\pm}$ are diffeomorphisms onto $\mathbb{C}^+$. 
  	
We will show that for each $s$ with $\Im s>0$, there are unique points $(\xi_+,\eta_+) \in \mathcal L_{\alpha}^{+}$ and $(\xi_-,\eta_-) \in \mathcal L_{\alpha}^{-}$ such that $s= s(\xi_+,\eta_+)= s(\xi_-,\eta_-)$.  We rewrite \eqref{eq:saddleequationlow} as
  \begin{equation} \label{eq:sequationlow}
 \left( - \frac{s}{2(s+1)} - \frac{s}{2(s+\alpha)} \right)  
  	\xi + \eta = \pm \frac{(s-z_+)(s-z_-)}{(s+1)(s+ \alpha)}.
	\end{equation}
  	Since $\xi$ and $\eta$ are real, we obtain the following two real equations
  	by taking the real and imaginary parts of \eqref{eq:sequationlow}:
    \begin{equation} \label{eq:sequation2low}
\begin{pmatrix} 
  	\Re \left( - \frac{s}{2(s+1)} - \frac{s}{2(s+\alpha)} \right)  & 1 \\
  	\Im \left( - \frac{s}{2(s+1)} - \frac{s}{2(s+\alpha)} \right) & 0 
  	\end{pmatrix}
  	\begin{pmatrix} \xi \\[10pt] \eta \end{pmatrix}
  	= \pm \begin{pmatrix}  \Re  \frac{(s-z_+)(s-z_-)}{(s+1)(s+ \alpha)} \\[10pt]
  	 \Im \frac{(s-z_+)(s-z_-)}{(s+1)(s+ \alpha)}  
  	\end{pmatrix}. 
	\end{equation}
  	We readily see that 
  	\begin{equation}\label{eq:positivedeterminant}
  	\Im\left( - \frac{s}{2(s+1)} - \frac{s}{2(s+\alpha)} \right)
  	= \Im \left(-1 + \frac{1}{2(s+1)} + \frac{\alpha}{2(s+\alpha)} \right)  < 0, 
  	\end{equation}
  	for $ s\in \mathbb C^+$. Hence the $2 \times 2$ matrix on the left-hand side of \eqref{eq:sequation2low} is invertible whenever $\Im s > 0$.
  	It follows that given $s \in \mathbb C^+$ we can recover
  	$\xi_\pm$ and $\eta_\pm$ uniquely by
  	\begin{equation} \label{eq:inverseLow}
  		\begin{pmatrix} \xi \\[10pt] \eta \end{pmatrix}
  	= \pm \begin{pmatrix} 
  	\Re \left( - \frac{s}{2(s+1)} - \frac{s}{2(s+\alpha)} \right)  & 1 \\
  	\Im \left( - \frac{s}{2(s+1)} - \frac{s}{2(s+\alpha)} \right) & 0 
  	\end{pmatrix}^{-1} 
  	\begin{pmatrix} \Re  \frac{(s-z_+)(s-z_-)}{(s+1)(s+ \alpha)}  \\[10pt]
  	\Im \frac{(s-z_+)(s-z_-)}{(s+1)(s+ \alpha)} 
  	\end{pmatrix}.
  	\end{equation}
  	This proves that the restrictions of $s$  to $\mathcal L_\alpha^{\pm}$ are bijections onto $\mathbb C^+$. The differentiability is also clear, and thus we have proved the statement. 
  \end{proof}
  
  \subsection{The high temperature regime}
  We now consider the high temperature regime and thus assume $\frac{1}{9} < \alpha \leq 1$.
We start by defining the polynomial $\Pi_{\alpha}$ by 
  \begin{equation} \label{eq:saddleequationP}
  \Pi_{\alpha}(s) = \left(s+ \sqrt{\alpha}\right)^2 (s-z_+)(s-z_-)
  - \left(\eta (s+1)(s+\alpha)- \xi s(s+ \tfrac{1+\alpha}{2}) \right)^2.
  \end{equation} 
By \eqref{eq:saddleequationhigh}, the zero set of $\Pi_{\alpha}$ is the image of the zero set of $\Xi_{\alpha}$ under the natural projection $\mathcal{R}_{\alpha} \to \mathbb{C}$, $(w,z) \mapsto z$.
  
\begin{lemma} \label{Pilemmahigh}
  Let $(\xi, \eta) \in \mathcal H^{\mathrm{o}}$ (interior of
  the hexagon $\mathcal H$) and $\frac{1}{9} < \alpha <1$. 
  	\begin{enumerate}
  		\item[\rm (a)] The leading coefficient of $\Pi_{\alpha}$ is $1- (\eta-\xi)^2 > 0$.
  		\item[\rm (b)] $\Pi_{\alpha}(0) = \alpha^2 (1-\eta^2) > 0 $.
  		\item[\rm (c)] $\Pi_{\alpha}(-\alpha) = \frac{\alpha^2(1-\alpha)^2}{4} (1- \xi^2) > 0$.
  		\item[\rm (d)] $\Pi_{\alpha}(-\sqrt{\alpha}) = - \alpha(1-\sqrt{\alpha})^4
  		(\frac{\xi}{2} - \eta)^2 \leq 0$.
  		\item[\rm (e)] $\Pi_{\alpha}(-1) = \frac{(1-\alpha)^2}{4} (1-\xi^2) > 0$.
  	\end{enumerate}
  \end{lemma}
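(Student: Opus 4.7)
The plan is to verify each item (a)–(e) by direct substitution into the defining formula \eqref{eq:saddleequationP}. The only non-elementary input is the pair of Vieta relations $z_+ + z_- = -\tfrac{3 - 2\sqrt{\alpha} + 3\alpha}{4}$ and $z_+ z_- = \alpha$, which follow immediately from \eqref{eq:zpmhigh} (the second one is just $|z_\pm|^2 = \alpha$). Two algebraic identities will be used repeatedly: $1 \pm 2\sqrt{\alpha} + \alpha = (1 \pm \sqrt{\alpha})^2$ and $(1-\alpha)^2 = (1-\sqrt{\alpha})^2 (1+\sqrt{\alpha})^2$. Combined with the Vieta relations these give, in closed form,
\[ (1+z_+)(1+z_-) = \tfrac{(1+\sqrt{\alpha})^2}{4}, \qquad (\alpha+z_+)(\alpha+z_-) = \tfrac{\alpha(1+\sqrt{\alpha})^2}{4}, \]
which are precisely the quantities needed to evaluate $(s-z_+)(s-z_-)$ at $s=-1$ and $s=-\alpha$.

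For part (a), compare the leading coefficients of the two quartics that make up $\Pi_{\alpha}$: the first term $(s+\sqrt{\alpha})^2(s-z_+)(s-z_-)$ is monic, while the second $\bigl(\eta(s+1)(s+\alpha) - \xi s(s+\tfrac{1+\alpha}{2})\bigr)^2$ has leading coefficient $(\eta-\xi)^2$. Strict positivity of $1-(\eta-\xi)^2$ is the open condition $-1 < \eta - \xi < 1$ from \eqref{eq:Hhexagon}. For (b), (c), (e), substitute $s = 0, -\alpha, -1$ directly. At $s=0$ the first term is $\alpha \cdot z_+ z_- = \alpha^2$ and the bracket is $\eta \alpha$, yielding $\alpha^2(1-\eta^2)$. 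At $s=-\alpha$ and $s=-1$, the factor $\eta(s+1)(s+\alpha)$ vanishes so the bracket collapses to a multiple of $\xi$; the first term, evaluated by means of the Vieta products above, produces exactly the same prefactor squared, so $\Pi_\alpha$ factors as that prefactor squared times $(1-\xi^2)$. Positivity follows from $-1 < \xi, \eta < 1$ on $\mathcal{H}^{\mathrm{o}}$.

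For part (d), the key observation is that $(s+\sqrt{\alpha})^2$ vanishes at $s = -\sqrt{\alpha}$, so the first term of \eqref{eq:saddleequationP} drops out and $\Pi_{\alpha}(-\sqrt{\alpha})$ equals minus a square. A short computation uses
\[ (-\sqrt{\alpha}+1)(-\sqrt{\alpha}+\alpha) = -\sqrt{\alpha}(1-\sqrt{\alpha})^2, \qquad -\sqrt{\alpha}\bigl(-\sqrt{\alpha} + \tfrac{1+\alpha}{2}\bigr) = -\tfrac{\sqrt{\alpha}(1-\sqrt{\alpha})^2}{2}, \]
to factor $\sqrt{\alpha}(1-\sqrt{\alpha})^2$ out of the bracket, leaving the simple expression $\tfrac{\xi}{2} - \eta$. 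Squaring and negating gives the stated formula and its sign.

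The only possible obstacle is algebraic bookkeeping, ensuring that in (b), (c), (e) the prefactors from the two terms in \eqref{eq:saddleequationP} match exactly so that $\Pi_\alpha$ simplifies cleanly as a positive constant times $1 - \xi^2$ or $1 - \eta^2$. This is routine once the Vieta relations and the identities $1 \pm 2\sqrt{\alpha} + \alpha = (1 \pm \sqrt{\alpha})^2$ are in hand; there is no conceptual difficulty.
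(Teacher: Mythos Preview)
Your proposal is correct and follows exactly the same approach as the paper, which simply states that these are straightforward calculations from \eqref{eq:saddleequationP} together with the interior inequalities $-1<\xi,\eta,\eta-\xi<1$. You have merely supplied the computational details (the Vieta relations for $z_\pm$ and the resulting closed-form evaluations) that the paper leaves to the reader.
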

  \begin{proof}
  	These are all simple calculations based on \eqref{eq:saddleequationP}. The inequalities hold since $-1 < \xi < 1$, $-1 < \eta < 1$ and $-1 < \eta - \xi < 1$ for $(\xi, \eta) \in \mathcal H^{\mathrm{o}}$.
  \end{proof}
  
  \begin{corollary} \label{cor:Pihigh} 
  	Let $(\xi, \eta) \in \mathcal H^{\mathrm{o}}$ and $\frac{1}{9} < \alpha <1$. 
  	If $\eta = \xi/2$ then $\Pi_{\alpha}(s)$ has a double zero of at $s=-\sqrt{\alpha}$.
  	If $\eta \neq \xi/2$ then $\Pi_{\alpha}(s)$ has at least one zero in $(-1,-\sqrt{\alpha})$ and
  	at least one zero in $(-\sqrt{\alpha},-\alpha)$.
  \end{corollary}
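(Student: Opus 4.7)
The corollary is a straightforward consequence of Lemma~\ref{Pilemmahigh}, and my plan is to use the intermediate value theorem together with one small algebraic observation.

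The first step is to note that in the high temperature regime $\frac{1}{9} < \alpha < 1$ we have the strict ordering $-1 < -\sqrt{\alpha} < -\alpha$, so the three real numbers at which Lemma~\ref{Pilemmahigh} evaluates $\Pi_{\alpha}$ are distinct and lie in the order we want.

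For the case $\eta = \xi/2$, my plan is to show that $(s+\sqrt{\alpha})^2$ actually divides $\Pi_{\alpha}(s)$. Substituting $\eta = \xi/2$ into the bracket $\eta(s+1)(s+\alpha) - \xi s\bigl(s + \tfrac{1+\alpha}{2}\bigr)$ and expanding, the terms linear in $s$ cancel and one is left with $\tfrac{\xi}{2}(\alpha - s^2) = -\tfrac{\xi}{2}(s-\sqrt{\alpha})(s+\sqrt{\alpha})$. Squaring this and inserting into the defining formula \eqref{eq:saddleequationP} for $\Pi_{\alpha}$ yields
\begin{equation*}
\Pi_{\alpha}(s) = (s+\sqrt{\alpha})^2 \left[(s-z_+)(s-z_-) - \tfrac{\xi^2}{4}(s-\sqrt{\alpha})^2\right],
\end{equation*}
which exhibits $-\sqrt{\alpha}$ as (at least) a double zero. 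This matches what Lemma~\ref{Pilemmahigh}(d) already suggests, since the value there vanishes precisely when $\eta=\xi/2$.

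For the case $\eta \neq \xi/2$, Lemma~\ref{Pilemmahigh}(d) gives the strict inequality $\Pi_{\alpha}(-\sqrt{\alpha}) < 0$, while parts (c) and (e) give $\Pi_{\alpha}(-\alpha) > 0$ and $\Pi_{\alpha}(-1) > 0$ respectively (these strict inequalities use $(\xi,\eta) \in \mathcal H^{\mathrm{o}}$, so $\xi \in (-1,1)$). Since $\Pi_{\alpha}$ is a polynomial and hence continuous on $\mathbb R$, the intermediate value theorem produces a real zero in each of the intervals $(-1,-\sqrt{\alpha})$ and $(-\sqrt{\alpha},-\alpha)$.

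There is no real obstacle here since all the sign information has been packaged into Lemma~\ref{Pilemmahigh}; the only mildly non-trivial point is identifying the double zero in the critical case, which is handled by the algebraic identity above. The argument is otherwise a two-line application of the IVT.
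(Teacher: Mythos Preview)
Your proof is correct and follows essentially the same route as the paper: the sign-change argument via parts (c), (d), (e) of Lemma~\ref{Pilemmahigh} for $\eta\neq\xi/2$, and the explicit factorization of $\Pi_\alpha$ for $\eta=\xi/2$. The only cosmetic difference is that you write the coefficient in the bracket as $\xi^2/4$ while the paper writes $\eta^2$, which of course agree when $\eta=\xi/2$.
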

  
  \begin{proof}
  	If $\eta \neq \xi/2$ then, by parts (c), (d), and (e) of Lemma \ref{Pilemmahigh},
  	$\Pi_{\alpha}$ has a sign change, and therefore a zero, in each of the intervals $(-1,-\sqrt{\alpha})$ and $(-\sqrt{\alpha},-\alpha)$.
  	For $\eta = \xi/2$, $\Pi_{\alpha}$ has a zero at $-\sqrt{\alpha}$ by
  	part (d), and in fact
  	\begin{equation} \label{eq:Piatmiddle} \Pi_{\alpha}(s) = (s+\sqrt{\alpha})^2
  	\left[ (s-z_+)(s-z_-) - \eta^2 (s-\sqrt{\alpha})^2 \right]
  	\quad \text{if } \eta = \xi/2, \end{equation}
  	as can be checked from \eqref{eq:saddleequationP}.
  	Hence $s=-\sqrt{\alpha}$ is a double zero if $\eta = \xi/2$. 
  \end{proof}
  
  We now give the proof of Proposition \ref{prop:saddle}
  in the high temperature regime.
  \begin{proof}[Proof of Proposition \ref{prop:saddle} for $\frac{1}{9} < \alpha \leq 1$]
  	From Corollary \ref{cor:Pihigh} it follows in particular that there 
  	are at least two zeros of $\Pi_{\alpha}$ in $(-1,-\alpha)$
  	in case $\alpha < 1$. 
  	The remaining two zeros  can also be real (frozen phase), 
  	or be a pair of complex conjugate non-real zeros (liquid phase).
  	There is at most one complex conjugate pair of non-real
  	zeros, and thus at most one zero with strictly positive imaginary
  	part. By continuity this last fact also holds for $\alpha = 1$.
  	This proves  Proposition \ref{prop:saddle} in the
  	high temperature regime. 
  \end{proof}

  \begin{proof}[Proof of Proposition \ref{prop:hightemp}]
  	The proof is similar to the proof of Proposition~\ref{prop:lowtemp}. 
	If  $s=s(\xi,\eta;\alpha)$ with $(\xi,\eta)\in \mathcal L_\alpha$ then
  	\[ \left( - \frac{s}{2(s+1)} - \frac{s}{2(s+\alpha)} \right)  
  	\xi + \eta = \pm s Q_{\alpha}(s)^{1/2}, \]
  	see \eqref{eq:Qalphahigh} and \eqref{eq:saddleequationhigh}.
  	As in the proof of Proposition \ref{prop:lowtemp}, we obtain two real equations
  	by considering the real and imaginary parts. 
  	It follows that given $s \in \mathcal R_{\alpha}^+$, where $\mathcal{R}_{\alpha}^+$ denotes the subset of $\mathcal{R}_{\alpha}$ defined in \eqref{calRplusdef}, we  recover
  	$\xi$ and $\eta$ from
  	\begin{equation}
   \label{eq:inverseHigh}\begin{pmatrix} \xi \\[10pt] \eta \end{pmatrix}
  	= \begin{pmatrix} 
  	\Re \left( - \frac{s}{2(s+1)} - \frac{s}{2(s+\alpha)} \right)  & 1 \\
  	\Im \left( - \frac{s}{2(s+1)} - \frac{s}{2(s+\alpha)} \right) & 0 
  	\end{pmatrix}^{-1} 
  	\begin{pmatrix} \Re \left(  s Q_{\alpha}(s)^{1/2} \right) \\[10pt]
  	\Im \left(  s Q_{\alpha}(s)^{1/2} \right)
  	\end{pmatrix},
  	\end{equation}
  	where the choice of square root in $Q_{\alpha}(s)^{1/2}$ is dictated
  	by the location of $s$ on the Riemann surface (different sign on
  	different sheets).
  	
  	This shows that $(\xi,\eta) \mapsto s(\xi,\eta;\alpha)$ is a bijection from $\mathcal L_{\alpha}$ to $\mathcal R_{\alpha}^+$.
  	It is clearly also differentiable (but not analytic!) 
  	and therefore it is a diffeomorphism.
  	It also extends continuously to the boundary of $\mathcal L_{\alpha}$
  	mapping for example $A_{1,2}$ to $-1$, $B_{1,2}$ to $-\alpha$, $C_{1,2}$ to $0$,  $D_{1,2}$ to $\infty$, and $E_{1,2}$ to $-\sqrt{\alpha}$, where the points with
  	subscript $1$ are mapped to the first sheet and points with subscript $2$ to the second sheet, see also Figure \ref{fig:high}.
  	
  	We finally prove that the line segment 
  	$\{ (\xi, \xi/2) \mid - \xi_{cusp} < \xi < \xi_{cusp} \}$
  	is mapped bijectively onto $\mathcal C^+ = \mathcal C \cap \mathcal R_{\alpha}^+$
  	where $(0,0)$ is mapped to the branch point $z_+$ and 
  	$\pm (\xi_{cusp},  \xi_{cusp}/2)$ is mapped to $z = -\sqrt{\alpha}$
  	with opposite $w$ values $w = \pm 2 \alpha (1+\cos \theta_{\alpha})$.
  	  	
  	For $\eta = \xi/2$, we see from \eqref{eq:Piatmiddle} that $\Pi_{\alpha}(s)$  has a double zero at $-\sqrt{\alpha}$ while the two remaining zeros satisfy
  	\[ (s-z_+)(s-z_-) - \eta^2(s-\sqrt{\alpha})^2 = 0 \]
  	which is also
  	\[ (1-\eta^2) (s^2 + \alpha)  + (-2 \cos \theta_{\alpha}+ 2 \eta^2)\sqrt{\alpha} s = 0  \]
  	since $z_+z_-=\alpha$ and $z_+ + z_- = 2 \sqrt{\alpha} \cos \theta_{\alpha}$.
  	
  	Suppose $\eta \in [0, \eta_{cusp}]$. Since $\eta_{cusp} = \cos \frac{\theta_{\alpha}}{2}$, we can write
  	$ \eta = \cos \frac{\theta}{2}$ with $\theta_{\alpha} \leq \theta \leq \pi$.
  	There is a unique $\psi \in [\theta_{\alpha}, \pi]$ with
  	\[ \sin \frac{\psi}{2}  \sin \frac{\theta}{2} = \sin \frac{\theta_{\alpha}}{2} \]
  	and  with the aid of trigonometric identities one can show that
  	$ s = \sqrt{\alpha} e^{i \psi}$  	is a zero of $\Pi_{\alpha}(s)$. 
  	If $\eta$ increases from $0$ to $\eta_{cusp}$, then $\theta$ decreases from $\pi$ to $\theta_{\alpha}$, and 
  	$\psi$ increases from $\theta_{\alpha}$ to $\pi$. It follows that
  	$s$ moves along the circle with radius $\sqrt{\alpha}$ from $z_+$
  	to $-\sqrt{\alpha}$, that is, it moves along one side of the
  	cut $\mathcal C$ on the Riemann surface.
  	By symmetry, if $\eta$ decreases from $0$ to $-\eta_{cusp}$
  	then the saddle moves along the same circle but on the other side
  	of $\mathcal C$.
  \end{proof}
\section{Equilibrium measure and $g$-function}\label{gfunctionsec}


\subsection{Preliminaries}
The  orthogonality \eqref{eq:orthogonality1} does not depend on the
specific choice of contour $\gamma$. By analyticity we can deform
it to any other contour $\gamma_0$ that goes around~$0$ once
in the positive direction. For the asymptotic analysis we need to select
the `correct' contour.
The correct contour is typically (but not always...) the contour
that attracts the zeros of the orthogonal polynomials as
the degree tends to infinity.  In \eqref{eq:orthogonality1}
the orthogonality weight
\begin{equation} \label{eq:weight} 
e^{-NV(z)} = \frac{(z+1)^N(z+\alpha)^N}{z^{2N}} \end{equation}
varies with $N$, where we put 
\begin{equation} \label{eq:Valpha} 
V(z) = V_{\alpha}(z) =  2\log(z) - \log(z+1) - \log(z+ \alpha). 
\end{equation}

Such problems were studied in approximation theory where $V$ is
referred to as an external field \cite{ST}.
Since the works of Stahl \cite{Stahl} and Gonchar-Rakhmanov \cite{GR} 
it is known that the zeros tend to a contour with a certain 
symmetry property for the  logarithmic potential of its equilibrium measure. Such  contours are now called $S$-contours.  
Later, Rakhmanov \cite{Rak} made a systematic study of a max-min
characterization of $S$-contours, and with Mart\'inez-Finkelshtein 
\cite{MFR} introduced the notion of a critical measure and identified the $S$-contours as trajectories of quadratic differentials.
See \cite{KS, MFR2} for further developments and historical remarks.

For $\alpha = 1$ the external field \eqref{eq:Valpha} has only
two logarithmic singularities and in such a case the orthogonal
polynomials can be written in terms of classical Jacobi polynomials. 
Indeed, the $n$th degree polynomial $p_n$ is a multiple of the 
Jacobi polynomial
\begin{equation} \label{eq:JacobiPN}    
P_n^{(-2N,2N)}(2z+1) 
\end{equation}
in case $\alpha = 1$. The Jacobi polynomial is non-standard, since one of
the parameters is negative. The asymptotic zero distribution of Jacobi
polynomials with varying non-standard parameters was studied in
\cite{KMF, MFO, MFMGT}. The case \eqref{eq:JacobiPN} is contained
in \cite{MFO}, see also \cite{DD}, and it is known that the zeros
of \eqref{eq:JacobiPN} tend to an arc on the unit circle as $n, N \to \infty$ with  $n/N \to 1$.

\subsection{Equilibrium measure}

In order to successfully apply the RH steepest descent analysis
to the RH problem \ref{rhp:Y}, we need a contour $\gamma_0$ 
going around
$0$ and a probability measure $\mu_0$ on $\gamma_0$ with a
corresponding $g$-function
\begin{equation} \label{eq:gfunction} 
g(z) = \int \log(z-s) d\mu_0(s) 
\end{equation}
such that, for some constant $\ell \in \C$,
\begin{align} \label{eq:gcondition1}
\Re \left[ g_+(z) + g_-(z) - V(z) + \ell \right] &
\begin{cases} 	= 0, & \text{ for } z \in \supp(\mu_0), \\
\leq 0, & \text{ for } z \in \gamma_0 \setminus \supp(\mu_0),
\end{cases}	
\\
\Im \left[g_+(z) + g_-(z) - V(z) \right] & 
\begin{array}{l}
\text{ is constant on each connected } \\
\text{ component of } \supp(\mu_0), \end{array}
\label{eq:gcondition2}
\end{align}
with $V$ as in \eqref{eq:Valpha}. We call a probability measure $\mu_0$ 
satisfying \eqref{eq:gfunction}-\eqref{eq:gcondition2} 
an \textit{equilibrium measure in the external field} $V$.

For a given $\gamma$ we consider  the probability measure $\mu$ on $\gamma$ that minimizes the  energy functional 
\[ \iint \log \frac{1}{|s-t|} d\mu(s) d\mu(t) 
+ \Re \int V d\mu \]
among all probability measures on $\gamma$. By classical results
from logarithmic potential theory \cite{ST}, there is a unique
minimizer and it satisfies the conditions \eqref{eq:gcondition1}
on the real part of $g_+ + g_- - V$.
In order to be an equilibrium measure for $V$ (as
we defined it) we also need the condition \eqref{eq:gcondition2}
on the imaginary part. This condition characterizes $S$-contours.

Indeed, by the Cauchy--Riemann equations  the property \eqref{eq:gcondition2}  is equivalent to 
\[ \frac{ \partial}{\partial n_+} \left[ U^{\mu_0} + \frac{\Re V}{2} \right]
= \frac{\partial}{\partial n_-} \left[ U^{\mu_0} + \frac{\Re V}{2} \right] \]
on the support $\Sigma_0 = \supp(\mu_0)$,  where
\[ U^{\mu_0}(z) = \int \log \frac{1}{|z-s|} d\mu_0(s) \]
and $ \frac{\partial}{\partial n_{\pm}}$ denotes the normal derivatives 
on $\gamma$.
This property is known as the $S$-property of $\Sigma_0$, and $\gamma_0$
is an $S$-contour.

We remark that the equilibrium measure is not necessarily unique.
For example, if $V(z) = \log z$ then the normalized Lebesgue measure
$d\mu = \frac{ds}{2\pi i s}$ on any circle centered at the origin
is an equilibrium measure for $V$. The radius is arbitrary and
the equilibrium measure is not unique. This is a more general
phenomenon in case the support is a full closed contour.

\subsection{Construction of the equilibrium measure}
From conditions \eqref{eq:gcondition1}-\eqref{eq:gcondition2}
it follows that we are looking for $\mu_0$ such that
$g_+ + g_- - V$ is piecewise constant on the support of $\mu_0$ and 
therefore 
\[ g_+' + g_-' - V' = 0  \quad \text{ on } \Sigma_0 = \supp(\mu_0). \] 
This means that $ (g' - \frac{1}{2} V')_+ = - (g' - \frac{1}{2} V')_-$
and therefore
\begin{equation} \label{eq:Qalpha} 
Q(z) = \left[ \int \frac{d\mu_0(s)}{z-s} 
- \frac{V'(z)}{2} \right]^2 
\end{equation}
is analytic across the support of $\mu_0$.
Thus $Q$ is an analytic function in the complex plane 
with singularities determined by the  singularities of $V'$. 
We can furthermore recover  $\mu_0$ from $Q$. Indeed with an appropriate branch
of the square root,
\[ \int \frac{d\mu_0(s)}{z-s} = \frac{V'(z)}{2} + Q(z)^{1/2} \]
and then by the Sokhotski Plemelj formula
\begin{equation} \label{eq:mu0withQ}
d\mu_0(s) = \frac{1}{\pi i} Q_-(s)^{1/2} ds. 
\end{equation}
In our case of interest we have  \eqref{eq:Valpha} and
\begin{equation} \label{eq:Vprime} 
V_{\alpha}'(z) = \frac{2}{z} - \frac{1}{z+1} - \frac{1}{z+\alpha}
\end{equation}
is rational with three simple poles.  Therefore by \eqref{eq:Qalpha}
$Q = Q_{\alpha}$ is a rational function with double poles
at $z=0$, $z=-1$, and $z=-\alpha$.
We can determine $Q_{\alpha}$ explicitly, and it is given
by the formulas in Definition \ref{def:Qa}, see also section \ref{subsec:comments} below.
We will prove that the associated measure \eqref{eq:mu0withQ} is indeed an equilibrium measure 
with external field $V_{\alpha}$. 

\begin{remark}
We recall from section \ref{sec:zeta} that 
\begin{equation} \label{eq:Qasqrtlow} 
Q_{\alpha}(z)^{1/2} = \frac{(z-z_+)(z-z_-)}{z(z+1)(z+\alpha)}, \qquad \text{if } 0 < \alpha \leq \frac{1}{9}, 
\end{equation}
while for $\frac{1}{9} < \alpha \leq 1$ the square root $Q_{\alpha}(z)^{1/2}$ was considered
as a function on the first sheet of the Riemann surface $\mathcal R_{\alpha}$
shown in the right panel of Figure \ref{fig:riemannsurfaces}.
From now on it will be more convenient to change the branch cut of the Riemann surface from $\mathcal C$ to
\begin{equation} \label{eq:Sigma0high}
\Sigma_0 = \{  \sqrt{\alpha} e^{i t} \mid -\theta_{\alpha} \leq t \leq \theta_{\alpha} \}
\end{equation}
where 	$\theta_{\alpha} = \arg z_+ = - \arg z_-$.
We also modify the definition of $Q_{\alpha}(z)^{1/2}$ so that now
\begin{equation} \label{eq:Qasqrthigh} 
Q_{\alpha}(z)^{1/2} = \frac{(z+\sqrt{\alpha}) ((z-z_+)(z-z_-))^{1/2}}{z(z+1)(z+\alpha)}, 
\quad \text{if } \frac{1}{9} < \alpha \leq 1, \end{equation}
is defined and analytic for $z \in \mathbb C \setminus \Sigma_0$ with
the square root such that $Q_{\alpha}(z)^{1/2} \sim \frac{1}{z}$ as $z \to \infty$.
The circular arc \eqref{eq:Sigma0high} will be the support of the equilibrium measure $\mu_0$.
\end{remark}

We let $\gamma_0$ denote the circle of radius $\sqrt{\alpha}$ centered at $0$ oriented in the counterclockwise direction.

With \eqref{eq:Qasqrtlow} and \eqref{eq:Qasqrthigh}, we define the measure $\mu_0$, 
the associated $g$-function, and the variational constant $\ell$ as follows.

\begin{definition}\label{mu0gdef} \

(a) If $\frac{1}{9} \leq \alpha \leq 1$, then we define the measure $\mu_0$ by
\begin{align} \nonumber	
d\mu_0(s) & = \frac{1}{\pi i} Q_{\alpha,-}(s)^{1/2} ds \\
& = \frac{1}{\pi i} \frac{(s+\sqrt{\alpha})
	\, ((s-z_+)(s-z_-))^{1/2}_-}{s(s+1)(s+\alpha)} \, ds, 
\qquad s \in  \Sigma_0,  \label{eq:mu0high}
\end{align}
where $\Sigma_0$ is given by \eqref{eq:Sigma0high} with counterclockwise orientation,
and $Q_{\alpha,-}(s)^{1/2}$ denotes the limit of $Q_{\alpha}(z)^{1/2}$ as $z \to s \in \Sigma_0$
with $z$ in the exterior of the circle $\gamma_0$.
Recall  $z_{\pm}= z_{\pm}(\alpha)$ are given by \eqref{eq:zpmhigh}.	

The associated $g$-function is defined by
\begin{equation} \label{eq:gfunctionhigh} 
g(z) = \int_{\Sigma_0} \log(z-s) d\mu_0(s), \qquad z \in \C \setminus 
\left((-\infty, -\sqrt{\alpha}] \cup \{\sqrt{\alpha} e^{i t} \mid -\pi \leq t \leq \theta_{\alpha} \} \right),
\end{equation}
where for each $s \in \Sigma_0$, the branch of the logarithm
$z \mapsto \log(z-s)$ is taken that is analytic in $\C \setminus ((-\infty, -\sqrt{\alpha}] \cup 
\{ \sqrt{\alpha} e^{i t} \mid -\pi \leq t \leq \arg s\}$ and behaves like 
$\log(z-s) \sim \log |z| + i \arg(z) $, $-\pi < \arg z < \pi$ as  $z \to \infty$. 

\medskip	
(b) If $0 < \alpha \leq \frac{1}{9}$, then we define the measure $\mu_0$ by
\begin{align} \nonumber	
d\mu_0(s) & = \frac{1}{\pi i} Q_{\alpha}(s)^{1/2} ds \\
& = \frac{1}{\pi i} \frac{(s-z_+)(s-z_-)}{s(s+1)(s+\alpha)} ds, \qquad s \in  \Sigma_0,  \label{eq:mu0low}
\end{align}
where $\Sigma_0 = \gamma_0 = \supp(\mu_0)$ is the full circle of radius $\sqrt{\alpha}$ oriented 
in the counterclockwise direction and $z_{\pm}= z_{\pm}(\alpha)$ are given by \eqref{eq:zpmlow}.

The associated $g$-function is defined by
\begin{equation} \label{eq:gfunctionlow} 
g(z) = \int_{\Sigma_0} \log(z-s) d\mu_0(s), \qquad z \in \C \setminus \left((-\infty, -\sqrt{\alpha}] \cup \Sigma_0\right)
\end{equation}
where $z \mapsto \log(z-s)$ is defined in the same way as
in the high temperature regime.  

\medskip
(c) We define the
variational constant $\ell \in \C$  by
\begin{align}\label{variationalelldef}
\ell = \begin{cases}
-2g_{-}(\sqrt{\alpha}) + V_{\alpha}(\sqrt{\alpha}) - \pi i, & \mbox{if } 0 < \alpha \leq \frac{1}{9} \\ 
- 2g(z_{+}) + V_{\alpha}(z_{+}), &  \mbox{if } \frac{1}{9} < \alpha \leq 1
\end{cases}.
\end{align}
\end{definition}

The definition \eqref{variationalelldef} is such that equality
holds in \eqref{eq:gcondition1} at $z = z_{+} \in \Sigma_0$ for $\frac{1}{9}<\alpha \leq 1$ and at $z = \sqrt{\alpha}\in \Sigma_{0}$ for $0<\alpha \leq \frac{1}{9}$.

For the steepest descent analysis of the RH problem, it is convenient to introduce a function $\phi(z)$ which  is a primitive function of $Q_\alpha(z)^{1/2}$ (with appropriate choices of the branch).

\begin{definition}\label{phidef} \

(a) If $\frac{1}{9} < \alpha \leq 1$, then $\phi:\C  \setminus ((-\infty,0] \cup \{\sqrt{\alpha} e^{it} \mid 
-\pi\leq t \leq \theta_{\alpha} \})  \to \C$ is defined by
\begin{equation} \label{phidefhigh} 
\phi(z) = \int_{z_{+}}^z Q_{\alpha}(s)^{1/2} ds, 
\end{equation}
with $Q_{\alpha}^{1/2}$ given by \eqref{eq:Qasqrthigh}, and  the integration path from $z_{+}$ to $z$ does not intersect
$(-\infty,0] \cup \{\sqrt{\alpha} e^{it} \mid 
-\pi\leq t \leq \theta_{\alpha} \}$.
\medskip

(b) If $0 < \alpha < \frac 19$, then $\phi:\C  \setminus ((-\infty,0] \cup \Sigma_0)  \to \C$ is defined by
\begin{align}\label{phideflow}
\phi(z) = \begin{cases}
\ds - \frac{\pi i}{2} + \int_{\sqrt{\alpha}}^{z} Q_{\alpha}(s)^{1/2} ds, &  \text{for } 
|z| > \sqrt{\alpha}, 
\\[10pt] \ds \frac{\pi i}{2}
 -\int_{\sqrt{\alpha}}^{z} Q_{\alpha}(s)^{1/2} ds, & \text{for }
|z| < \sqrt{\alpha},
\end{cases}
\end{align}
with $Q_{\alpha}^{1/2}$ given by \eqref{eq:Qasqrtlow}, and
the integration path from $\sqrt{\alpha}$ to $z$ does not intersect 
$(-\infty,0] \cup \Sigma_0$.
\end{definition}

The formulas \eqref{eq:mu0high} and \eqref{eq:mu0low}
define $\mu_0$ as a complex measure on $\Sigma_0$. The fact that it is
a probability measure is part of the statement of the following
proposition whose proof is given in Section \ref{measurepropproofsec}. 

\begin{proposition} \label{prop:mu0} 
Let $0 < \alpha \leq 1$ and let $\gamma_0$
be the circle of radius $\sqrt{\alpha}$ centered at $0$ oriented positively.
Then the measure $\mu_0$ defined in \eqref{eq:mu0high} and \eqref{eq:mu0low} is a probability measure
on $\Sigma_0$ and is an equilibrium measure in the external field $V_{\alpha}$. 
The functions $g$ and $\phi$ are analytic in their domains of definitions and are related by
\begin{align}\label{phigrelation}
\phi(z) = g(z) - \frac{V_\alpha(z)}{2} + \frac{\ell}{2} 
\end{align}
for all $z$ in the domain of $\phi$. Moreover,
\begin{align}\label{gjumponSigma0}
g_+(z) + g_-(z) - V_\alpha(z) & = - \ell, \quad \text{ for } z \in \Sigma_0,
\\ \label{gjumponSigma0b}
g_+(z) - g_-(z) - 2\phi_+(z) & = 0, \qquad \text{for } z \in \Sigma_0.
\end{align}
\end{proposition}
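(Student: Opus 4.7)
My plan is to derive everything from a single computation of $g'(z)$ as the Cauchy transform $\int_{\Sigma_0}\frac{d\mu_0(s)}{z-s}$, and then compare with $\phi'$. Analyticity of $\phi$ in its domain is immediate from Definition \ref{phidef}, and analyticity of $g$ off the prescribed branch cuts follows from the fact that Definition \ref{mu0gdef} chose a branch of $\log(z-s)$ valid uniformly for all $s \in \Sigma_0$. The key identity to establish is
\[
g'(z) = \tfrac{1}{2}V_\alpha'(z) \pm Q_\alpha^{1/2}(z),
\]
with sign depending on the region. In the low-temperature case $\Sigma_0$ is a full closed circle and $Q_\alpha^{1/2}$ is rational, so the Cauchy transform reduces to a residue sum at the interior poles $s = 0,\,-\alpha$ (and at $s = z$ when $|z| < \sqrt{\alpha}$), yielding the claim with sign $+$ outside and $-$ inside. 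In the high-temperature case, both candidates are analytic in $\mathbb{C}\setminus\Sigma_0$, share the same Plemelj jump across $\Sigma_0$ (since $Q_{\alpha,+}^{1/2} = -Q_{\alpha,-}^{1/2}$ there), and both behave like $1/z$ at infinity, since $V_\alpha'(z) = O(1/z^2)$ and $Q_\alpha^{1/2}(z) = 1/z + O(1/z^2)$; hence they coincide by Liouville, and reading off the $1/z$-coefficient simultaneously gives $\int d\mu_0 = 1$.

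Integrating $g' - \tfrac{1}{2}V_\alpha' = \phi'$ on each connected component of the common domain shows that $g - \tfrac{1}{2}V_\alpha - \phi$ is locally constant, and the constant is fixed by the normalization \eqref{variationalelldef}: one evaluates at $z_+$ with $\phi(z_+) = 0$ in the high-temperature case, and at $\sqrt{\alpha}$ approached from the exterior with $\phi_-(\sqrt{\alpha}) = -\pi i/2$ in the low-temperature case. This yields \eqref{phigrelation}. The jump relation \eqref{gjumponSigma0b} follows by subtracting the $\pm$ boundary limits of \eqref{phigrelation} on $\Sigma_0$, and \eqref{gjumponSigma0} follows by adding them once one knows that $\phi_+ + \phi_- = 0$ identically on $\Sigma_0$. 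This last identity is verified by observing that its derivative vanishes (because $Q_{\alpha,+}^{1/2} + Q_{\alpha,-}^{1/2} = 0$ there) together with the continuity of $\phi$ at the branch point $z_+$, where $\phi(z) \sim \tfrac{2}{3}C(z-z_+)^{3/2}$ extends with the value $0$ from both sides.

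It remains to verify that $\mu_0$ is an equilibrium measure. The imaginary-part condition \eqref{eq:gcondition2} is trivial because $\Sigma_0$ is connected and $-\ell$ is a single constant, and the equality in \eqref{eq:gcondition1} on $\Sigma_0$ is exactly \eqref{gjumponSigma0}; in the low-temperature regime $\gamma_0 = \Sigma_0$ and there is nothing left to check. In the high-temperature regime I must establish $\Re\phi \leq 0$ on the complementary arc $\{\sqrt{\alpha}e^{i\theta} : \theta_\alpha < |\theta| \leq \pi\}$, and I expect this to be the main obstacle. My plan is to parametrize $z = \sqrt{\alpha}e^{i\theta}$ and observe that (i) $\Re\phi$ vanishes at the endpoints $z_\pm$ (by the boundary values of $\phi$ on $\Sigma_0$), (ii) $\frac{d}{d\theta}\Re\phi = \Re\bigl(i\sqrt{\alpha}e^{i\theta}\,Q_\alpha^{1/2}(\sqrt{\alpha}e^{i\theta})\bigr)$ also vanishes at $\theta = \pm\pi$ thanks to the factor $(z+\sqrt{\alpha})$ in \eqref{eq:Qasqrthigh}, and (iii) using the symmetry $z\mapsto\bar{z}$, reducing the analysis to monotonicity of $\Re\phi$ on the upper sub-arc. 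Positivity of the density $d\mu_0/d\theta$ on $\Sigma_0$ is verified in both regimes by a short trigonometric simplification of the rational expression, completing the proof that $\mu_0$ is a probability measure.
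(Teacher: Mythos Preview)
Your approach is essentially the same as the paper's, and correct in outline. A few points deserve attention.

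\textbf{Pole cancellation in the Liouville argument.} In the high-temperature case you claim that $\tfrac{1}{2}V_\alpha'+Q_\alpha^{1/2}$ is analytic in $\mathbb C\setminus\Sigma_0$, but both summands have simple poles at $0,-1,-\alpha$. The cancellation is equivalent to $\Res_{0}Q_\alpha^{1/2}=-1$, $\Res_{-1}Q_\alpha^{1/2}=\Res_{-\alpha}Q_\alpha^{1/2}=\tfrac12$, which you must verify directly from \eqref{eq:Qasqrthigh}; this is exactly the paper's computation \eqref{eq:Qalphares}, and without it the Liouville step does not close. Once the poles cancel, your argument and the paper's contour deformation are the same residue calculation in two dialects.

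\textbf{Low-temperature interior constant.} In the low-temperature regime the common domain of $g,\phi,V_\alpha$ has two components, and your normalization at $\sqrt{\alpha}$ from the exterior fixes only the outer constant. The paper handles the inner component by checking that $g_+(\sqrt{\alpha})=g_-(\sqrt{\alpha})+\pi i$ (from the logarithm branch in Definition~\ref{mu0gdef}), which together with $\phi_+(\sqrt{\alpha})=\pi i/2$ gives the same constant inside. Similarly, your argument for $\phi_++\phi_-=0$ via continuity at $z_+$ is high-temperature only; in low temperature use $\phi'_+ +\phi'_- = -Q_\alpha^{1/2}+Q_\alpha^{1/2}=0$ and evaluate at $\sqrt{\alpha}$.

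\textbf{Variational inequality.} Your monotonicity plan along the complementary arc is viable: since the arc is an orthogonal trajectory one has $\Im\phi\equiv 0$ there, and $\phi'\,z'$ is real and nonvanishing on the open arc, so $\Re\phi$ is strictly monotone from $0$ at $z_+$. What remains is a sign determination, and your items (i)--(iii) do not yet pin it down. The paper avoids the local branch analysis by arguing globally: $\Im\phi_-$ increases along $\Sigma_0$ from $z_-$ to $z_+$, so by Cauchy--Riemann $\Re\phi>0$ just outside $\Sigma_0$; the sign then flips to $\Re\phi<0$ across the critical trajectory enclosing $-1$, and the complementary arc lies in that region. Either route works, but if you stay with yours you should either compute the sign of $iz\,Q_\alpha^{1/2}(z)$ near $\theta_\alpha^+$ explicitly, or evaluate $\phi(-\sqrt\alpha)$ directly from \eqref{phidefhigh} along the real segment (where $Q_\alpha^{1/2}$ is real and its sign is easy to read off).
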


\subsection{The zero set of $\Re \phi$}
To prepare for the proof of Proposition \ref{prop:mu0} we first
present a lemma about the quadratic differential $Q_{\alpha}(z) dz^2$.

A smoothly parametrized curve $z=z(t)$, $t \in [a,b]$, is a {\it trajectory} of 
a quadratic differential $Q(z) dz^2$ if $Q(z(t)) z'(t)^2 < 0$ for every $t \in (a,b)$. It is 
an {\it orthogonal trajectory} if $Q(z(t)) z'(t)^2 > 0$ 
for every $t \in (a,b)$.
A trajectory or an orthogonal trajectory is {\it critical} if it contains
a zero or a pole of $Q$.

\begin{lemma} \label{lem:trajectory}
\begin{enumerate} 
	\item[(a)] For every $\alpha \in (0,1]$, the curve
	$\Sigma_0$ is a trajectory of the quadratic differential
	$Q_{\alpha}(z) dz^2$. If $\alpha \geq \frac{1}{9}$, then it is a critical 
	trajectory passing through the zeros $z_\pm(\alpha)$ of $Q_{\alpha}$.
	
	\item[(b)] For every $\alpha \in (\frac{1}{9},1]$, the complementary arcs
	on the circle $|z| = \sqrt{\alpha}$, with parametrizations
	$z(t) = \sqrt{\alpha} e^{it}$, $t \in (\theta_{\alpha}, \pi)$
	or $t \in (-\pi, -\theta_{\alpha})$ are critical orthogonal trajectories 
	that connect $z_{\pm}(\alpha)$ with the double zero at $-\sqrt{\alpha}$. \end{enumerate}
\end{lemma}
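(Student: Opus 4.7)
The plan is to parametrize the relevant arcs by $z(t) = \sqrt{\alpha}\, e^{it}$ and reduce everything to checking the sign of the real quantity $Q_{\alpha}(z(t))\, z'(t)^2$. Since $z'(t)^2 = -\alpha\, e^{2it}$, the task boils down to computing $Q_{\alpha}(z(t))\, e^{2it}$ in closed form and observing that, after the exponential factors cancel, what remains is a simple rational expression in $\cos t$ whose sign is easy to read off.

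The two algebraic identities I would establish at the outset are
\begin{equation*}
(z(t)+\sqrt{\alpha})^2 = 4\alpha \cos^2(t/2)\, e^{it}, \qquad z(t)^2(z(t)+1)^2(z(t)+\alpha)^2 = \alpha^2 e^{4it}\bigl(1+\alpha+2\sqrt{\alpha}\cos t\bigr)^2.
\end{equation*}
Both follow by direct expansion using $\overline{z(t)} = \alpha/z(t)$. The factor $1+\alpha+2\sqrt{\alpha}\cos t = |1+\sqrt{\alpha}\, e^{it}|^2$ is strictly positive throughout $\Sigma_0$ (the only zero would be at $\alpha=1$, $t=\pm\pi$, which does not lie on $\Sigma_0$ since $\theta_1 = 2\pi/3$).

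For part (a) in the high temperature regime $\alpha \geq 1/9$, writing $z_{\pm} = \sqrt{\alpha}\, e^{\pm i\theta_\alpha}$ gives $(z(t)-z_+)(z(t)-z_-) = 2\alpha\, e^{it}(\cos t - \cos\theta_\alpha)$. Substituting this, together with the two identities above, into \eqref{eq:Qalphahigh} collapses everything to
\begin{equation*}
Q_{\alpha}(z(t))\, z'(t)^2 \;=\; -\,\frac{8\alpha \cos^2(t/2)\,(\cos t - \cos\theta_\alpha)}{(1+\alpha+2\sqrt{\alpha}\cos t)^2}.
\end{equation*}
This is real, and strictly negative exactly on $|t|<\theta_\alpha$, which is the interior of $\Sigma_0$; hence $\Sigma_0$ is a trajectory, and since $z_{\pm}(\alpha)$ are endpoints where $Q_\alpha$ vanishes, it is critical. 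For part (b), on the complementary arcs $|t|\in(\theta_\alpha,\pi)$ the same formula gives a strictly positive value, so these arcs are orthogonal trajectories; at $t=\pm\pi$ they reach the double zero $-\sqrt{\alpha}$ of $Q_\alpha$, so they are critical and connect $z_{\pm}(\alpha)$ to $-\sqrt{\alpha}$ as claimed.

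For the low temperature case $0<\alpha<1/9$ in part (a), I would use $z_++z_- = -(1+3\alpha)/2$ and $z_+z_- = \alpha$ from \eqref{eq:zpmlow} to get $(z(t)-z_+)(z(t)-z_-) = \sqrt{\alpha}\, e^{it}\bigl(2\sqrt{\alpha}\cos t + \tfrac{1+3\alpha}{2}\bigr)$, and then from \eqref{eq:Qalphalow}
\begin{equation*}
Q_{\alpha}(z(t))\, z'(t)^2 \;=\; -\,\frac{\bigl(2\sqrt{\alpha}\cos t + \tfrac{1+3\alpha}{2}\bigr)^{2}}{(1+\alpha+2\sqrt{\alpha}\cos t)^2}.
\end{equation*}
The numerator bracket is strictly positive for every $t$ because $(1+3\alpha)/2 > 2\sqrt{\alpha}$ is equivalent to $(3\sqrt{\alpha}-1)(\sqrt{\alpha}-1)>0$, which holds for $\alpha<1/9$. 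Hence the expression is strictly negative on the whole circle, so $\Sigma_0 = \{|z|=\sqrt{\alpha}\}$ is a (non-critical) trajectory. The borderline $\alpha=1/9$ follows from \eqref{eq:Qalphacrit} or by continuity from either side. There is no real conceptual obstacle here: the proof is a direct computation, and the only point that requires a little care is recognizing that the natural groupings $(z+\sqrt{\alpha})^2$ and $(z+1)(z+\alpha)$ are precisely the ones that turn the substitution into a real quantity.
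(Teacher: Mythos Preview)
Your proof is correct and follows essentially the same approach as the paper: parametrize the circle by $z(t)=\sqrt{\alpha}\,e^{it}$, compute $Q_{\alpha}(z(t))\,z'(t)^2$ explicitly, and read off the sign. The paper writes the high-temperature numerator via the product formula $\cos t-\cos\theta_{\alpha}=2\sin\frac{\theta_{\alpha}-t}{2}\sin\frac{\theta_{\alpha}+t}{2}$, whereas you leave it as $\cos t-\cos\theta_{\alpha}$; otherwise the two arguments are identical.
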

\begin{proof}
Let $z= z(t) = \sqrt{\alpha} e^{i t}$, so that $z' = i z$. 
For $\alpha \geq \frac{1}{9}$, we write $z_{\pm} = \sqrt{\alpha} e^{\pm i \theta_{\alpha}}$ with $0 < \theta_{\alpha} \leq \pi$, and then
by \eqref{eq:Qalphahigh}
\begin{align} \nonumber
Q_\alpha(z) (z')^2 & = -
\frac{(z+ \sqrt{\alpha})^2 (z-z_+)(z-z_-)}{(z+1)^2(z+\alpha)^2} \\
& = - \alpha^2 \frac{(e^{it} + 1)^2  \nonumber
	(e^{it}-e^{i\theta_{\alpha}})(e^{it}-e^{-i\theta_{\alpha}})}
{(\sqrt{\alpha} e^{it} + 1)^2 (\sqrt{\alpha} e^{it} + \alpha)^2} \\
& = - 16 \alpha \frac{ \left(\cos \frac{t}{2} \right)^2
	\sin \left( \frac{\theta_{\alpha}-t}{2} \right)
	\sin \left( \frac{\theta_{\alpha}+t}{2} \right)}
{(1+ \alpha + 2\sqrt{\alpha} \cos t)^2}. \label{eq:zQzhigh}
\end{align}
This expression is indeed $< 0$ for $-\theta_{\alpha} < t < \theta_{\alpha}$
and $>0$ for $\theta_{\alpha} < t < \pi$ and $-\pi < t < -\theta_{\alpha}$.

For $0 < \alpha < \frac{1}{9}$, a similar computation
using \eqref{eq:Qalphalow} and \eqref{eq:zpmlow} gives
\begin{align} 
Q_\alpha(z) (z')^2 & = -
\frac{(z- z_+)^2 (z-z_-)^2}{(z+1)^2(z+\alpha)^2} = - \frac{(z^2 + \frac{1+3\alpha}{2} z + \alpha)^2}
{(z+1)^2(z+\alpha)^2} = - \frac{ \left( \frac{1+3\alpha}{2} + 2 \sqrt{\alpha} \cos t\right)^2}
{(1+\alpha + 2 \sqrt{\alpha} \cos t)^2}. \label{eq:zQzlow}
\end{align}
Since $0<\alpha < \frac{1}{9}$ we have 
$\frac{1+3 \alpha}{2} > 2 \sqrt{\alpha}$ and therefore the numerator
is always $>0$. Thus $Q_\alpha(z) (z')^2 < 0$ for every $t \in [-\pi, \pi]$.
\end{proof}

For $\alpha > \frac{1}{9}$ we recall that $z_{\pm}$ are simple
zeros of $Q_{\alpha}$. From the local structure of trajectories
of a quadratic differential there are three critical trajectories
emanating from each of the points $z_{\pm}$. One of these is an arc on the circle
$|z|= \sqrt{\alpha}$, as we have seen. The other critical trajectories
also connect $z_+$ with $z_-$ and a representative situation
is shown in Figure \ref{fig: crit traj alpha 03}.

\begin{figure}[t]
\begin{center}
	\begin{tikzpicture}
	\node at (0,0) {\includegraphics[width = 10cm]{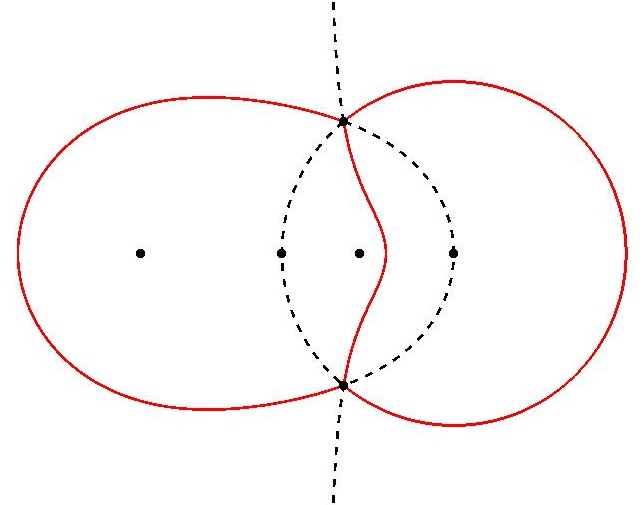}};
	\node at (-1.7,0) {$-$};
	\node at (2.7,0) {$+$};
	\node at (0,3) {$+$};
	\end{tikzpicture} 
\end{center}
\caption{\label{fig: crit traj alpha 03}The critical trajectories (in full red lines) and the critical orthogonal trajectories (the dashed black lines) of $Q_{\alpha}$ for $\alpha = 0.3$. The dots are the zeros and poles of $Q_{\alpha}$: $z_{+}$, $z_{-}$, $-\sqrt{\alpha}$, and $-1$, $-\alpha$, $0$. The critical trajectories are level lines $\Re \phi = 0$
	and their complement consists of three regions where the sign of $\Re \phi $ is constant, as shown by $+$ and $-$ in the figure.}
\end{figure}

The trajectories of the quadratic differential $Q_{\alpha}(z)dz^2$
are level lines of $\Re \phi$, since $\phi$ is a primitive 
function of $\pm Q_{\alpha}^{1/2}$ as follows from Definition \ref{phidef}.
The orthogonal trajectories are level lines of $\Im \phi$.

Since $\sqrt{\alpha} \in \Sigma_0$ we in fact have that
$\Re \phi = 0$ on $\Sigma_0$ as  well as on the other critical
trajectories (in the high temperature regime) that are shown
in Figure \ref{fig: crit traj alpha 03} for $\alpha = 0.3$. 
The three critical trajectories are boundaries of three regions
in the complex plane on which $\Re \phi$ has a constant sign.
Namely $\Re \phi < 0$ in the region containing $-1$, and
$\Re \phi > 0$ in the region containing $0$ and in the unbounded region.

To prove this we introduce
\begin{equation} \label{eq:Nphi}
\mathcal N_\phi = \{ z  \mid  \Re \phi(z)=0\}.
\end{equation}
Then $\Sigma_0$ is contained in $\mathcal N_\phi$, but 
$\mathcal N_\phi$ also contains other parts, see Figures
\ref{fig:sign of xi high} and \ref{fig:sign of xi low} for representative
figures in the high and low temperature regimes.

The first thing to observe is that $\Re \phi$ extends to a continuous function on $\mathbb C$ away from  $-1$, $-\alpha$, and $0$. 
Indeed, $Q_{\alpha}^{1/2}$ has simple poles at these three
values, and   therefore by integration as in definitions \eqref{phidefhigh} and \eqref{phideflow}, we find that
$\phi$ has logarithmic behavior.
However, since the residues of $Q_{\alpha}^{1/2}$ are real, the real
part of $\phi$ is single-valued.
Thus $\Re \phi$ is continuous on $\mathbb C \setminus \{-1,-\alpha,0\}$
and harmonic on $\mathbb C \setminus (\Sigma_0 \cup \{-1,-\alpha,0\})$.
We also note 
\begin{equation}
\label{eq:Nphinearpoles} 
\begin{aligned} 
\phi(z) & = - \log z + \bigO(1)  \text{ as } z \to 0, & 
\lim_{z \to 0} \Re \phi(z) = + \infty \\
\phi(z) & = \frac{1}{2} \log(z+\alpha) + \bigO(1) \text{ as } z \to -\alpha, &
\lim_{z \to -\alpha} \Re \phi(z) = - \infty \\
\phi(z) & = \frac{1}{2} \log(z+1) + \bigO(1) \text{ as } z \to -1, &
\lim_{z \to -1} \Re \phi(z) = - \infty \\
\phi(z) & = \log(z) + \bigO(1) \text{ as } z \to \infty, &
\lim_{z \to \infty} \Re \phi(z) = + \infty. 
\end{aligned}
\end{equation}

In the high temperature regime the level set $\mathcal N_{\phi}$
consists of the critical trajectories of the quadratic
differential $Q_{\alpha}(z) dz^2$ emanating from $z_+(\alpha)$.

\begin{figure}[t]
	\begin{center}
		\begin{tikzpicture}[master,scale = 1.3,every node/.style={scale=1.3}]
		\node at (0,0) {\includegraphics[width=9.23cm]{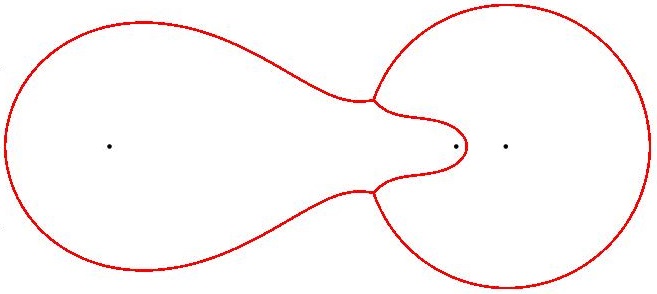}};
		\node at (0.3,1.5) {$+$};
		\node at (2.7,1.1) {$+$};
		\node at (-2,0.6) {$-$};
		\node at (4.15,1.6) {\small $\Sigma_0$};
		\node at (2.1,0.45) {\small $\Sigma_{-\alpha}$};
		\node at (-1.5,1.8) {\small $\Sigma_{-1}$};
		
		\node at (-4.4,-0)  {\tiny $\bullet x_1$};
		\node at (-3.1,-0.2) {\tiny $-1$};
		\node at (0.2,-0)  {\tiny $-\sqrt{\alpha} \bullet$};
		\node at (1.67,-0.1) {\tiny $-\alpha$};
		\node at (2.1,-0)  {\tiny $\bullet x_2$};
		\node at (2.5,-0.15) {\tiny $0$};
		\node at (4.72,-0)  {\tiny $\bullet \sqrt{\alpha}$};
		
		\node at (0.5,0.8) {\tiny $z_+$};
		\node at (0.5,-0.8) {\tiny $z_-$};
		\node at (0.64,0.67) {\tiny $\bullet$};
		\node at (0.64,-0.67) {\tiny $\bullet$};
		\end{tikzpicture}
	\end{center}
	\caption{\label{fig:sign of xi high} The set $\mathcal{N}_{\phi}=\{z \in \mathbb{C} : \Re \phi(z) = 0\} = \Sigma_{-1} \cup \Sigma_{-\alpha} \cup \Sigma_0$ is shown for $\alpha = \frac{1}{8}$. This set divides $\mathbb{C}$ into three regions, and the sign of $\Re \phi$ is shown in each of these regions.}
\end{figure}

\begin{lemma}\label{lem:Nphihigh}
	Let $\frac{1}{9} < \alpha \leq 1$. The set $\mathcal N_\phi$ consists of
	three analytic arcs connecting $z_+$ and $z_-$ which we denote by $\Sigma_{-1}$, $\Sigma_{-\alpha}$ and $\Sigma_0$. The arc $\Sigma_{-1}$
	intersects the real axis at $x_1 \in  (-\infty,-1)$ and $\Sigma_{-\alpha}$
	intersects the real axis at $x_2 \in  (-\alpha,0)$. The arc $\Sigma_0$
	is the support of the measure $\mu_0$ and is part of the circle $|z| = \sqrt{\alpha}$.
\end{lemma}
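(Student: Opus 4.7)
The plan is to combine three ingredients: the identity $\phi_{+} + \phi_{-} = 0$ on $\Sigma_0$ (which follows from $\phi = g - V_\alpha/2 + \ell/2$ together with \eqref{gjumponSigma0}), the local structure of the level set $\mathcal{N}_\phi$ at the critical points of $\phi$, and the classification of trajectories of the quadratic differential $Q_\alpha(z)\,dz^2$ near its poles. From $\phi_{+} + \phi_{-} = 0$ on $\Sigma_0$ one deduces $\Re\phi \equiv 0$ on $\Sigma_0$ (using the continuity of $\Re\phi$ across $\Sigma_0$, with only $\Im\phi$ jumping), hence $\Sigma_0 \subset \mathcal{N}_\phi$. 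The only critical points of $\phi$ are the two simple branch points $z_{\pm}$ of $\phi' = \pm Q_\alpha^{1/2}$, at which $\phi(z) = O((z-z_\pm)^{3/2})$ so that three analytic arcs of $\mathcal{N}_\phi$ emanate at angles $2\pi/3$, and the simple zero $-\sqrt{\alpha}$ of $\phi'$, at which $\phi(z) - \phi(-\sqrt{\alpha}) = O((z+\sqrt{\alpha})^2)$. Away from these three points and from $\{-1,-\alpha,0\}$, $\Re\phi$ is harmonic with non-vanishing gradient, so $\mathcal{N}_\phi$ consists locally of smooth analytic arcs.

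The central step is to identify $\mathcal{N}_\phi$ with the union of the three critical trajectories emanating from $z_{+}$ (equivalently from $z_{-}$) at level $\Re\phi = 0$, and to show each of them terminates at $z_{-}$. For this I use the classification of trajectories of a rational quadratic differential: each non-closed trajectory terminates at either a zero or a pole of order $\geq 2$. Computing $Q_\alpha(z) \sim a/(z-p)^2$ near each of the double poles $p \in \{0, -1, -\alpha, \infty\}$ gives coefficients $a \in \{1, \tfrac{1}{4}, \tfrac{1}{4}, 1\}$ respectively, all strictly positive; by the standard local description of trajectories near a double pole with positive quadratic residue, trajectories in a punctured neighbourhood of such a pole form a nested family of closed loops encircling the pole, which blocks termination of any critical trajectory at these points. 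Termination at $-\sqrt{\alpha}$ is excluded because $\Re\phi(-\sqrt{\alpha}) \neq 0$: on the real interval $(-1,-\alpha)$ the rational function $Q_\alpha$ is strictly positive away from $-\sqrt{\alpha}$, so $\phi$ restricted to this interval is real with $\Re\phi \to -\infty$ at both endpoints and a unique critical point at $-\sqrt{\alpha}$; a short direct evaluation (using conjugation symmetry and $\phi(z_\pm) = 0$) shows the resulting maximum is strictly negative. This forces each of the three critical trajectories from $z_{+}$ to end at $z_{-}$, proving that $\mathcal{N}_\phi$ is a union of exactly three analytic arcs connecting $z_{+}$ and $z_{-}$.

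To locate the points $x_1$ and $x_2$, I analyze $\Re\phi$ along $\mathbb{R}\setminus\{-1,-\alpha,0\}$. The asymptotics \eqref{eq:Nphinearpoles} give $\Re\phi \to +\infty$ at $0$ and as $|x|\to\infty$, and $\Re\phi \to -\infty$ at $-1$ and $-\alpha$. On each of $(-\infty,-1)$ and $(-\alpha,0)$ the function $\phi$ is real with non-vanishing derivative, hence strictly monotonic; the intermediate value theorem then provides a unique zero $x_1 \in (-\infty,-1)$ and a unique zero $x_2 \in (-\alpha,0)$. On $(-1,-\alpha)$ we already observed $\Re\phi$ stays strictly negative, and on $(0,\sqrt{\alpha}) \cup (\sqrt{\alpha},\infty)$ a similar monotonicity argument rules out further zeros. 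Hence $x_1$ and $x_2$ are the only zeros of $\Re\phi$ on $\mathbb{R}\setminus\{\sqrt{\alpha}\}$; since both are regular points of $\Re\phi$, the level set $\mathcal{N}_\phi$ passes transversally through each, and by conjugation symmetry each lies on a conjugation-invariant arc of $\mathcal{N}_\phi$ distinct from $\Sigma_0$. We label these arcs $\Sigma_{-1}$ and $\Sigma_{-\alpha}$ respectively.

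The main obstacle is the global trajectory analysis: both the exclusion of termination at the double poles (via positive quadratic residues and the local closed-loop structure) and the verification that $\Re\phi(-\sqrt{\alpha}) \neq 0$ (to exclude termination at $-\sqrt{\alpha}$) need to be carried out carefully in order to obtain the exact count of three arcs and to prevent additional isolated intersections of $\mathcal{N}_\phi$ with itself at the saddle $-\sqrt{\alpha}$.
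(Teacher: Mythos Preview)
Your approach is valid but genuinely different from the paper's, and it has a couple of gaps that need attention.

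\textbf{Comparison.} The paper argues directly with the maximum/minimum principle for harmonic functions: the two critical trajectories from $z_+$ other than $\Sigma_0$ must stay bounded and away from the poles (by \eqref{eq:Nphinearpoles}); if one did not reach the real axis it would close up in the upper half plane, and since $\Re\phi$ is harmonic there (all poles are real) this contradicts the maximum principle. By conjugation symmetry the trajectories then continue to $z_-$. You instead invoke the global structure theory of trajectories of rational quadratic differentials: every non-closed trajectory terminates at a critical point, and double poles with positive quadratic residue are ``centers'' (local closed loops) so cannot be endpoints. Both routes are correct; yours packages more into cited theory (Jenkins--Strebel) while the paper's is self-contained with only the maximum principle.

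\textbf{Gaps.} First, your trajectory-termination argument does not by itself exclude a critical trajectory from $z_+$ returning to $z_+$ along a different direction. The paper's maximum-principle step in the upper half plane rules this out immediately; in your framework you would need the real-axis analysis plus conjugation symmetry (which you do later) fed back into this step. Second, and more importantly, the step ``a short direct evaluation (using conjugation symmetry and $\phi(z_\pm)=0$) shows $\Re\phi(-\sqrt{\alpha})<0$'' is not carried out, and it is not clear what evaluation you have in mind. The paper obtains this from the strict variational inequality \eqref{eq:gcondition1} at $-\sqrt{\alpha}\in\gamma_0\setminus\Sigma_0$ (see \eqref{eq:gminVstrict}), which is proved independently via a Cauchy--Riemann argument on $\Sigma_0$ and the orthogonal trajectory to $-\sqrt{\alpha}$. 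That is the cleanest way to close this gap. Third, the sentence ``proving that $\mathcal{N}_\phi$ is a union of \emph{exactly} three analytic arcs'' does not follow from what precedes it: you have shown the three critical arcs lie in $\mathcal{N}_\phi$, but not that there are no further closed components. The paper handles this by first pinning down $\mathcal{N}_\phi\cap\mathbb{R}=\{x_1,x_2,\sqrt{\alpha}\}$ via monotonicity of $\Re\phi$ on each real subinterval, and then noting that any extra component would be a closed loop in a half plane, again contradicting the maximum principle. Your real-axis analysis in the third paragraph gives the ingredients for this, but you should make the final exclusion explicit.
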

\begin{proof} Because of the local behavior of trajectories
	of a quadratic differential at a simple zero, there 
	are three trajectories emanating from $z_+$. One of these trajectories 
	is $\Sigma_0$. The other two trajectories have to remain bounded
	and stay away from the poles $-1$, $-\alpha$, $0$ by \eqref{eq:Nphinearpoles}.
	They have to come to the real axis. Indeed, if not, they would have to form a close loop in the upper haf plane and, since $\Re \phi$ is harmonic inside this closed loop, we obtain a contradiction with the maximum/minimum principle for harmonic functions. 	Therefore, the trajectories come to the real axis and, by symmetry,
	they continue to the other simple zero $z_- = \overline{z_+}$.
	The three trajectories enclose two bounded domains and $\Re \phi = 0$  
	on the boundary of these domains. Again, note that $\Re \phi$ is harmonic in the 
	interior, except at $-1$, $-\alpha$, $0$,
	where it tends to $\pm \infty$, see \eqref{eq:Nphinearpoles}.
	By the maximum/minimum principle of harmonic functions each
	of the domains should contain at least one of the singularities.
	
	Again by \eqref{eq:Nphinearpoles} there are
	points $x_1 \in (-\infty,-1)$ and $x_2 \in (-\alpha,0)$
	with $\Re \phi(x_1) = \Re \phi(x_2) = 0$. Also $\Re \phi(\sqrt{\alpha}) = 0$
	and we claim that $x_1, x_2, \sqrt{\alpha}$ are the only points
	in $\mathcal N_{\phi} \cap \mathbb R$. 
	
	To see this we recall that $\phi' = Q_{\alpha}^{1/2}$,
	with a branch cut along $\Sigma_0$ for the square root. 
	From the formula \eqref{eq:Qasqrthigh} we then see that
	$\phi'$ changes sign in the five values $-1$, 
	$-\sqrt{\alpha}$, $-\alpha$, $0$, and $\sqrt{\alpha} \in \Sigma_0$.
	Thus $\phi' > 0$ (and $\Re \phi$ is strictly increasing) on
	the intervals  $(-1,-\sqrt{\alpha})$, $(-\alpha, 0)$, and $(\sqrt{\alpha}, \infty)$, while
	$\phi' < 0$ (and $\Re \phi$ is stictly decreasing) on 
	$(-\infty, -1)$, $(-\sqrt{\alpha},-\alpha)$, and $(0, \sqrt{\alpha})$.
	Since $\Re \phi(\sqrt{\alpha}) = 0$, we conclude that there
	are no other zeros of $\Re \phi$ in $[0,\infty)$. Also
	$x_1$ is the only  zero in $(-\infty,-1]$ and $x_2$ is the
	only zero of $\Re \phi$ in $[-\alpha,0]$. 
	On the remaining interval $(-1,-\alpha)$, we see that
	$\Re \phi$ assumes its maximum value at $-\sqrt{\alpha}$.
	At $-\sqrt{\alpha}$ we have by \eqref{phigrelation}
	\[ \Re \phi = \Re \left(g - \frac{V_{\alpha}}{2} + \frac{\ell}{2}\right)	
	< 0 \]
	where the inequality holds because of the variational inequality \eqref{eq:gcondition1}
	at $-\sqrt{\alpha} \in \gamma_0 \setminus \Sigma_0$, which
	in the high temperature regime is a strict inequality, see
	also \eqref{eq:gminVstrict}.
	Therefore $\Re \phi$ has no zeros in $(-1,-\alpha)$,
	and we proved the claim that 
	\[ \mathcal N_{\phi} \cap \mathbb R = \{x_1, x_2, \sqrt{\alpha} \}. \]
	
	We conclude that one critical trajectory comes to $x_1$
	and another one to $x_2$. This defines the contours $\Sigma_{-1}$ and 
	$\Sigma_{-\alpha}$. 
	
	It remains to prove there are no other parts in $\mathcal N_\phi$.
	Any potential other part of $\mathcal N_{\phi}$ cannot intersect
	the real axis, as we already saw. Then such a part would be
	a closed contour in the upper or lower half plane and we arrive, again,
	at a contradiction because of the maximum/minimum principle for 
	harmonic functions.
\end{proof}

The structure of 
$\mathcal N_{\phi}$ is different in the low temperature regime,
see Figure~\ref{fig:sign of xi low}.

\begin{lemma} \label{lem:Nphi}
	Let $0 < \alpha < \frac 19$.
	The set $\mathcal N_\phi$ is the disjoint union of three analytic closed curves which we denote by $\Sigma_{-1}$, $\Sigma_{-\alpha}$ and  $\Sigma_0$. The closed curve $\Sigma_0$ is the circle of radius $\sqrt{\alpha}$ around $0$, as before,
	and $\Sigma_{-1}$, $\Sigma_{-\alpha}$ are two closed curves lying in
	the exterior/interior of $\Sigma_0$ and going
	around $-1$ and $-\alpha$, respectively.
\end{lemma}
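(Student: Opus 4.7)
The plan is to mirror the proof of Lemma~\ref{lem:Nphihigh}, exploiting the simplifications that arise in the low temperature regime. In this regime $Q_\alpha$ is a perfect square by \eqref{eq:Qalphalow}, so $Q_\alpha^{1/2}$ is a bona fide rational function, with simple zeros at the real numbers $z_\pm$ (ordered as $-1<z_-<-\sqrt{\alpha}<z_+<-\alpha$) and simple poles at $0,-1,-\alpha$ whose residues $1,\tfrac12,-\tfrac12$ are all real. Consequently the globally defined antiderivative $\tilde\phi(z)=\int_{\sqrt{\alpha}}^{z} Q_\alpha^{1/2}(s)\,ds$ is only multivalued modulo $i\mathbb{R}$-periods, so $\Re\tilde\phi$ is a single-valued harmonic function on $\mathbb{C}\setminus\{0,-1,-\alpha\}$. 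Definition~\ref{phidef}(b), with its built-in sign flip across $\Sigma_0$, gives $\Re\phi=\Re\tilde\phi$ outside $\Sigma_0$ and $\Re\phi=-\Re\tilde\phi$ inside, so in either case $\mathcal{N}_\phi=\{\Re\tilde\phi=0\}$. Because $\Sigma_0$ is a trajectory of $Q_\alpha(z)\,dz^2$ by Lemma~\ref{lem:trajectory}(a) (so $Q_\alpha^{1/2}\,dz$ is purely imaginary along it) and $\Re\tilde\phi(\sqrt{\alpha})=0$, the full circle $\Sigma_0$ lies in $\mathcal{N}_\phi$.

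Next I would trace $\Re\tilde\phi$ along the real axis using $\tilde\phi'=Q_\alpha^{1/2}$ and the sign of this rational function on each subinterval cut out by $-1,z_-,-\sqrt{\alpha},z_+,-\alpha,0,\sqrt{\alpha}$, together with the boundary behavior at the three poles and at $\infty$. The resulting monotonicity picture locates exactly six real zeros of $\Re\tilde\phi$: the two points $\pm\sqrt{\alpha}$ on $\Sigma_0$, two further zeros $x_1\in(-\infty,-1)$ and $x_1'\in(-1,z_-)$ outside $\Sigma_0$, and two further zeros $x_2'\in(z_+,-\alpha)$ and $x_2\in(-\alpha,0)$ inside $\Sigma_0$. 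The same calculation yields $\Re\tilde\phi(z_-)>0$ and $\Re\tilde\phi(z_+)<0$, so the two saddles of $\Re\tilde\phi$ are off $\mathcal{N}_\phi$. Hence $\mathcal{N}_\phi$ is a smooth compact $1$-submanifold of $\mathbb{C}\setminus\{0,-1,-\alpha\}$—compactness because $\Re\tilde\phi\to+\infty$ at $\infty$—and therefore a finite disjoint union of smooth closed curves, each symmetric about $\mathbb{R}$ by Schwarz reflection and consequently meeting $\mathbb{R}$ in exactly two points.

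The count of components is then closed by the maximum principle. Any closed curve of $\mathcal{N}_\phi$ must enclose at least one of the singularities $\{0,-1,-\alpha\}$, since otherwise $\Re\tilde\phi$ would be a bounded harmonic function with zero boundary values on the enclosed region, hence identically zero, contradicting the real-axis analysis. Two closed curves of $\mathcal{N}_\phi$ enclosing precisely the same subset of $\{0,-1,-\alpha\}$ would be nested and bound an annular region on which $\Re\tilde\phi$ is harmonic with vanishing boundary values, forcing $\Re\tilde\phi\equiv 0$ there, again a contradiction. Thus distinct curves enclose distinct subsets, and matching against the six real zeros and the topology (each curve crosses $\mathbb{R}$ twice) leaves exactly the three claimed curves: $\Sigma_0$ through $\pm\sqrt{\alpha}$ enclosing $\{0,-\alpha\}$; a curve $\Sigma_{-\alpha}$ through $x_2',x_2$ enclosing only $-\alpha$ and lying strictly inside $\Sigma_0$; and a curve $\Sigma_{-1}$ through $x_1,x_1'$ enclosing only $-1$ and lying strictly outside $\Sigma_0$.

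The only genuine obstacle, absent from the high temperature proof, is the sign flip across $\Sigma_0$ built into Definition~\ref{phidef}(b). This flip makes $\Re\phi$ continuous but not harmonic across $\Sigma_0$, so a direct maximum-principle argument with $\Re\phi$ itself would require separate analyses inside and outside $\Sigma_0$ followed by a stitching step. Passing to the globally defined $\tilde\phi$—whose real part is harmonic on all of $\mathbb{C}\setminus\{0,-1,-\alpha\}$ and has the same zero set as $\Re\phi$—is the clean way to bypass this technicality and reduce the entire argument to one harmonic function on one region.
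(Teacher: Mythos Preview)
Your proof is correct and follows the same strategy as the paper's (rule out the saddles $z_\pm$ from $\mathcal{N}_\phi$, count the six real-axis crossings, and apply the maximum principle to bound the number of components), just with more detail and the cosmetic device of passing to the global antiderivative $\tilde\phi$. The obstacle you flag in your final paragraph is overstated: since $\Sigma_0\subset\mathcal{N}_\phi$, every component of $\overline{\mathbb{C}}\setminus\mathcal{N}_\phi$ already lies entirely on one side of $\Sigma_0$, where $\Re\phi$ is genuinely harmonic, so the paper's direct maximum-principle argument with $\Re\phi$ requires no stitching across $\Sigma_0$.
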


\begin{figure}[t]
	\begin{center}
		\begin{tikzpicture}[master,every node/.style={scale=1.3}]
		\node at (0,0) {\includegraphics[width=9.23cm]{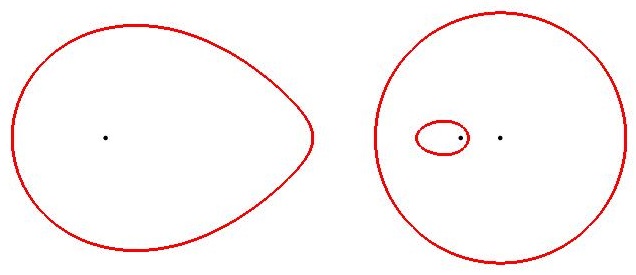}};
		\node at (3.5,1.6) {$+$};
		\node at (0.4,1.5) {$+$};
		\node at (2.25,0.0) {$-$};
		\node at (-2,0.6) {$-$};
		\node at (5.3,2) {\small $\Sigma_0$};
		\node at (2.9,0.52) {\small $\Sigma_{-\alpha}$};
		\node at (-1.6,2) {\small $\Sigma_{-1}$};
		
		\node at (-5.58,-0)  {\tiny $\bullet x_1$};
		\node at (-4.1,-0.2) {\tiny $-1$};
		\node at (-0.3,-0)  {\tiny $x_2 \bullet$};
		\node at (0.7,-0)  {\tiny $-\sqrt{\alpha} \bullet$};
		\node at (1.7,-0) {\tiny $x_3 \bullet$};
		\node at (2.49,-0.13) {\tiny $-\alpha$};
		\node at (3.07,-0)   {\tiny $\bullet x_4$};
		\node at (3.45,-0.2) {\tiny $0$};
		\node at (6.08,-0)  {\tiny $\bullet \sqrt{\alpha}$};
		\end{tikzpicture}
	\end{center}
	\caption{\label{fig:sign of xi low} The set $\mathcal{N}_{\phi}=\{z \in \mathbb{C} \mid \Re \phi(z) = 0\} = \Sigma_{-1} \cup \Sigma_{-\alpha} \cup \Sigma_0$ is shown for $\alpha = \frac{1}{10}$. This set divides $\mathbb{C}$ into four regions, and the sign of $\Re \phi$ is shown in each of these regions.}
\end{figure}
\begin{proof}
	Because of \eqref{eq:Nphinearpoles} the level set $\mathcal N_\phi$ 
	is bounded and stays away from the poles  $-1$, $-\alpha$,
	and $0$ of $Q_{\alpha}$. Since we already know from Lemma \ref{lem:trajectory} that $\Re \phi(-\sqrt{\alpha}) = 0$, we infer from \eqref{phideflow} that the zeros $z_{\pm}$ of $Q_{\alpha}$ are not on 
	$\mathcal N_{\phi}$. Therefore $\mathcal N_{\phi}$ does not contain
	any critical trajectories and hence consists
	of a finite union of disjoint closed curves. 
	Because of the maximum/minimum principle for harmonic functions
	each component of $\overline{\mathbb{C}} \setminus \mathcal N_\phi$ 
	contains 
	at least one of the singularities $-1$,$-\alpha$, $0$, or $\infty$. 
	
	A closer inspection of  $\Re \phi(z)$  for $ z  \in \mathbb R$ 
	(also based on \eqref{eq:Qasqrtlow}, \eqref{phideflow} and  \eqref{eq:Nphinearpoles} reveals  that 
	$\mathcal N_\phi$ has six intersection points with $\mathbb R$. 
	Two of them are the points $\pm \sqrt \alpha$ that belong to $\Sigma_0$. 
	Then we have one point in each of the intervals $(-\infty,-1)$, $(-1,-\sqrt \alpha)$, $(-\sqrt \alpha, -\alpha)$ and $(-\alpha,0)$. 
	This shows that there is a closed curve $\Sigma_{-\alpha}$ inside $\Sigma_0$ and a closed curve $\Sigma_{-1}$ outside $\Sigma_0$ as indicated in the statement. 
\end{proof}

\subsection{Proof of Proposition \ref{prop:mu0}}\label{measurepropproofsec}

We compute $\int\limits_{\Sigma_0} d\mu_0$ by means of a
residue calculation. Let us first consider the case $\frac{1}{9} \leq \alpha \leq 1$.
Then by \eqref{eq:mu0high} and the fact
that $Q_{\alpha,+}(s)^{1/2} = - Q_{\alpha,-}(s)^{1/2}$ for $s \in \Sigma_0$, we have
\begin{equation}  \label{eq:mu0contour} 
\int_{\Sigma_0} d\mu_0 =
\frac{1}{2\pi i} \oint_{C} Q_{\alpha}(s)^{1/2} ds \end{equation}
where $C$ is a closed contour going around $\Sigma_0$ once
in the positive direction, and without enclosing any of the poles.
Deforming the contour $C$ to infinity, we pick up residue
contributions at the poles. 
It is a straightforward calculation to show that
\begin{equation} \label{eq:Qalphares}
\Res_{s=0} Q_{\alpha}(s)^{1/2} = -1, \quad
\Res_{s=-1} Q_{\alpha}(s)^{1/2} = \frac{1}{2}, \quad
\Res_{s=-\alpha} Q_{\alpha}(s)^{1/2} = \frac{1}{2}.
\end{equation}
The residues add up to zero, and since 
$Q_{\alpha}(s)^{1/2} = \frac{1}{s} + \bigO(s^{-2})$ as $s \to \infty$, we 
thus find from \eqref{eq:mu0contour} 
\begin{align} \label{eq:mu0total}
\int_{\Sigma_0} d\mu_0  = 1.
\end{align}

Let $z(t) = \sqrt{\alpha} e^{i t}$, 
$-\theta_{\alpha} < t < \theta_{\alpha}$, be a parametrization of
$\Sigma_0$. Then the mapping
\begin{equation} \label{eq:mu0mass} 
t \mapsto \int_{z_-}^{z(t)} d\mu_0 = \frac{1}{\pi i} 
\int_{z_-}^{z(t)} Q_{\alpha,-}(s)^{1/2} ds \end{equation}
has as its derivative
\[ \frac{1}{\pi i} Q_{\alpha,-}(z(t))^{1/2} \cdot z'(t) \]
which is real and non-zero for $t \in (-\theta_{\alpha}, \theta_{\alpha})$ 
since $Q_{\alpha}(z) (z')^2 < 0$ as $\Sigma_0$ is a trajectory of the quadratic differential by Lemma \ref{lem:trajectory} (a).

Note also that the right-hand side of \eqref{eq:mu0mass} vanishes for $t = - \theta_{\alpha}$
and equals $1$ for $t= \theta_{\alpha}$ by \eqref{eq:mu0total}.
Therefore \eqref{eq:mu0mass} is monotonically increasing from $0$ to $1$
as $t$ goes from $-\theta_{\alpha}$ to $\theta_{\alpha}$,
and this is enough to conclude that  $\mu_0$ is a probability  
measure on $\Sigma_0$. 

It now also follows (compare \eqref{phidefhigh} and
\eqref{eq:mu0mass}, and use $Q_{\alpha,+}^{1/2} = - Q_{\alpha,-}^{1/2}$ on
$\Sigma_{0}$)
that  $\phi_-$ is purely imaginary along $\Sigma_0$ and we have
\begin{equation} \label{eq:phipm} 
\phi_+(z) = - \phi_-(z), \qquad \text{for } z \in \Sigma_0. 
\end{equation}
Next we calculate $g'(z) = \int\limits_{\Sigma_0} \frac{d\mu_0(s)}{z-s}$.
We write $g'$ as a contour integral
\[ g'(z) = \frac{1}{2\pi i} \oint_C \frac{Q_{\alpha}(s)^{1/2}}{z-s} ds,
\qquad z \in \mathbb C \setminus \Sigma_0,
\]
with the same closed contour $C$ as in \eqref{eq:mu0contour}, but we now also
assume that $z$ is in the exterior of $C$.
We deform the contour to infinity where we now pick up a residue contribution
from the pole at $s=z$ as well, which is $Q_{\alpha}(z)^{1/2}$.
We use \eqref{eq:Qalphares} to calculate the other residue contributions.
There is no contribution from infinity and the result is that
\begin{align} \nonumber
g'(z) & = \frac{1}{z} -  \frac{1}{2(z+1)} - \frac{1}{2(z+\alpha)}
+ Q_{\alpha}(z)^{1/2} \\
& =   \frac{V_{\alpha}'(z)}{2} + \phi'(z),
\qquad z \in \mathbb C \setminus \Sigma_0.
\label{eq:gprimehigh}
\end{align}
Integrating \eqref{eq:gprimehigh} from $z_{+}$ to $z$ along a
path that does not intersect $(-\infty, 0] \cup
\{\sqrt{\alpha} e^{it} \mid -\pi \leq t \leq \theta_{\alpha}\})$, 
we find
\begin{align}\label{gzminusgzplus}
g(z) - g(z_{+}) = \frac{V_\alpha(z) - V_\alpha(z_{+})}{2} + \phi(z) - \phi(z_{+}),
\end{align} 
which proves \eqref{phigrelation} for $\alpha \in [\frac 19,1]$
by the definition \eqref{variationalelldef} of $\ell$ and the fact
that $\phi(z_{+})=0$.

From \eqref{phigrelation} and \eqref{eq:phipm} 
we obtain for $z \in \Sigma_0$,
\begin{align*}
g_+(z) + g_-(z) - V_{\alpha}(z) = \phi_+(z) + \phi_-(z) - \ell
= -\ell,
\end{align*} 
which proves \eqref{gjumponSigma0}. Also by \eqref{phigrelation}
and \eqref{eq:phipm} 
\begin{align*}
g_+(z) - g_-(z) = \phi_+(z) - \phi_-(z) = 2 \phi_+(z)
\end{align*}
which is \eqref{gjumponSigma0b}.

We have also shown that $\phi_-(z) \in i \mathbb R$ 
for $z \in \Sigma_0$,  
and similarly $\phi(z) \in i \mathbb R$
on the other critical trajectories that emanate from $z_+$ and $z_-$, 
see Figure \ref{fig: crit traj alpha 03}. Moreover,
$\Im \phi $ is constant on orthogonal trajectories.
We also saw that $\Im \phi_-(z)$ increases as $z$
moves away from $z_-$ to $z_+$ along $\Sigma_0$. Then by the Cauchy-Riemann equations,
we have $\Re \phi > 0$ in the domain on the minus side of $\Sigma_0$
and by continuity it holds in the outer domain
bounded by the critical trajectories. Then $\Re \phi < 0$
if we cross the critical trajectory going around $-1$,
and in particular $\Re \phi(z) < 0$ for $z$ on the critical orthogonal
trajectory from $z_+$ to $-\sqrt{\alpha}$. 
In view of \eqref{phigrelation}, this gives 
\begin{equation} \label{eq:gminVstrict} 
\Re \left[ 2g(z) - V_{\alpha}(z) + \ell\right] < 0,  
\end{equation}
for $z$ on this orthogonal trajectory, which is part of $\gamma_0 \setminus \Sigma_0$.
This proves the inequality in \eqref{eq:gcondition1}.
By symmetry the inequality also holds for $z$ on the
critical orthogonal trajectory from $z_-$ to $-\sqrt{\alpha}$.
This completes the proof for the case $\alpha \geq \frac{1}{9}$.

\medskip
The proof for $0 < \alpha < \frac{1}{9}$ is simpler. In this case 
\eqref{eq:Qasqrtlow}  is a rational function   with partial fraction decomposition
\begin{equation} \label{eq:Qalphasqrtlow}
Q_{\alpha}(s)^{1/2} = \frac{1}{s} + \frac{1}{2(s+1)} - \frac{1}{2(s+\alpha)}.
\end{equation}
The total integral of $\mu_0$ defined by \eqref{eq:mu0low} is
\[ \int_{\Sigma_0} d\mu_0 = \frac{1}{\pi i} \oint_{\gamma_0}
\left(\frac{1}{s} + \frac{1}{2(s+1)} - \frac{1}{2(s+\alpha)} \right) 
ds = 1 \]
by a simple residue calculation with contributions only from the poles at $s=0$
and $s=-\alpha$. The total mass is $1$ and as before
it follows that $\mu_0$ is a probability measure.

We compute $g'(z)$ with another residue calculation
\begin{align*} g'(z) & = \frac{1}{\pi i}
\oint_{\gamma_0} \frac{1}{z-s} \left(\frac{1}{s} + \frac{1}{2(s+1)} - \frac{1}{2(s+\alpha)} \right) 
ds \\
& = \begin{cases}
\frac{2}{z} - \frac{1}{z+\alpha}, & \text{ if } |z| > \sqrt{\alpha}, \\
-\frac{1}{z+1}, 
& \text{ if } |z| < \sqrt{\alpha}.
\end{cases} 
\end{align*}
Recalling the definition \eqref{phideflow} of $\phi(z)$ and the expression \eqref{eq:Vprime} for $V_\alpha'(z)$, we conclude
\begin{align}\label{eq:gprimelow}
\phi'(z) = g'(z) - \frac{V_\alpha'(z)}{2}.
\end{align}
Integrating \eqref{eq:gprimelow} from  $\sqrt{\alpha}$ to $z$ along a path that does not intersect $(-\infty,0] \cup \Sigma_0$, we find
\begin{equation} \label{eq:phiglow} 
\phi(z) = -\frac{\pi i}{2} +
g(z) - \frac{V_\alpha(z)}{2} - g_-(\sqrt{\alpha}) +  \frac{V_\alpha(\sqrt{\alpha})}{2}, 
\end{equation}
if $|z| > \sqrt{\alpha}$. For $|z| < \sqrt{\alpha}$ we 
similarly find
\[ \phi(z) = \frac{\pi i}{2} + g(z) - \frac{V_{\alpha}(z)}{2} - 
g_+(\sqrt{\alpha}) + \frac{V_{\alpha}(\sqrt{\alpha})}{2}. \]
Then \eqref{eq:phiglow} also holds for $|z| < \sqrt{\alpha}$,
since $g_+(\sqrt{\alpha}) = g_-(\sqrt{\alpha}) + \pi i$,
as can be verified from the definition of the branch of $\log(z-s)$
that was used in the definition of $g$.
Thus \eqref{phigrelation} holds for $0 < \alpha < \frac{1}{9}$ 
in the low temperature regime  because of the definition of the
constant $\ell$. The identities \eqref{gjumponSigma0}
and \eqref{gjumponSigma0b} follow from \eqref{phigrelation}
in the same way as in the case $\frac{1}{9} < \alpha \leq 1$.

\subsection{Calculations leading to $Q_{\alpha}$} \label{subsec:comments}

The reader may wonder how to obtain the expressions \eqref{eq:Qalphahigh}
and \eqref{eq:Qalphalow}. One clue is that we need the residues
\eqref{eq:Qalphares}.
This translates into the three conditions (which are consistent with  \eqref{eq:Qalpha})
\begin{equation} \label{eq:Qalphalimits}
\begin{aligned} 
\lim_{z \to 0} z^2 Q_{\alpha}(z) & = 1, \\
\lim_{z \to -1} (z+1)^2 Q_{\alpha}(z) & = \frac{1}{4}, \\
\lim_{z \to -\alpha} (z+\alpha)^2 Q_{\alpha}(z) & = \frac{1}{4}.
\end{aligned}
\end{equation}
It is also clear from \eqref{eq:Qalpha} and \eqref{eq:Vprime}
that
\[ \lim_{z \to \infty} z^2 Q_{\alpha}(z) = 1. \]
Then
\begin{equation} \label{eq:Qtbd} 
Q_{\alpha}(z) = \frac{z^4 + A z^3 + B z^2 + C z + D}{z^2(z+1)^2(z+\alpha)^2} \end{equation}
and the limits \eqref{eq:Qalphalimits} give us three equations
for the coefficients, namely
\begin{equation} \label{BCDintermsofA}
D  = \alpha^2, \qquad
C = \alpha A, \qquad
B  = (\alpha+1)A - \frac{3}{4} \alpha^2 - \frac{1}{2} \alpha - \frac{3}{4}. 
\end{equation}
which leaves us with one parameter $A$ only.

To proceed, we make the one-cut assumption which says 
that $Q_{\alpha}$ should have at least one multiple zero. It means
that the discriminant of the numerator polynomial should be zero.
The discriminant factors as 
\[ \alpha^2(1-\alpha)^2 (A-\alpha-3)^2
(A-3\alpha-1)^2
\left(A^2-  \frac{3}{2} (1+ \alpha) A + \frac{9}{16}(1-\alpha)^2 \right) \]
which leaves us with four possible choices for $A$,
namely $A_1 = 3 +\alpha$, $A_2 = 3\alpha+1$, $A_3 = \frac{3}{4}(1-\sqrt{\alpha})^2$, and $A_4 = \frac{3}{4}(1+\sqrt{\alpha})^2$.

For $\alpha=1$ we should recover \eqref{eq:Qalphais1}
which means that we have to take $A=A_4$ for $\alpha =1$,
and then by continuity also for $\alpha$  between $1$ and a critical
value of $\alpha$. This leads to the formulas \eqref{eq:Qalphahigh}
and \eqref{eq:zpmhigh}. 
The critical value is when $z_+(\alpha) = z_-(\alpha)$, and this
happens for $\alpha = 1/9$.

For $\alpha = \frac{1}{9}$, the two values $A_2$ and $A_4$ coincide,
and for $\alpha < \frac{1}{9}$ we find that $A_2$ takes over.
This leads to the formulas \eqref{eq:Qalphalow} and \eqref{eq:zpmlow}
with two double zeros of $Q_{\alpha}$.

 \section{Orthogonal polynomials and Riemann--Hilbert problem}
 \label{sec:RHP}

We will now prove the existence of the orthogonal polynomials and pose a RH problem for the reproducing kernel $R_N(w,z)$ that appears in the double contour integral in the kernel \eqref{eq:kernel}.
 
 \subsection{Existence of the orthogonal polynomials}

 \begin{proposition} \label{prop:prop51}
 	Let $0 < \alpha \leq 1$ and $N \in \mathbb N$. Then for every $n=0,1, \ldots, 2N$
 	there is a unique monic polynomial $p_n$ of degree $n$ such that
 	\begin{equation}
 	\label{eq:orthogonality}
 	\frac{1}{2\pi i} \oint_{\gamma} p_n(z) z^j
 	\frac{(z+1)^N (z+\alpha)^N}{z^{2N}} dz = 0, 
 	\qquad j=0,1, \ldots, n-1. 
 	\end{equation}
 	In addition, if $n \leq 2N-1$, then
 	\begin{equation} \label{eq:normconstant} 
 	\kappa_n = \frac{1}{2\pi i} \oint_{\gamma}
 	\left(p_n(z)\right)^2 \frac{(z+1)^N (z+\alpha)^N}{z^{2N}} dz \neq 0.
 	\end{equation}
 \end{proposition}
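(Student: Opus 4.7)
The plan is to reformulate the orthogonality as a coefficient condition via residues, reduce the statement to the non-vanishing of a sequence of Hankel determinants, and then establish the non-vanishing by combining an explicit identification at the top degree $n=2N$ with a P\'olya-frequency positivity argument.

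First, I would use the fact that the integrand in \eqref{eq:orthogonality} is meromorphic inside $\gamma$ with its only singularity at $z=0$, so by residues
\[
\frac{1}{2\pi i}\oint_\gamma p_n(z)z^j\frac{(z+1)^N(z+\alpha)^N}{z^{2N}}\,dz = [z^{2N-1-j}]\bigl(p_n(z)w(z)\bigr),
\]
with $w(z):=(z+1)^N(z+\alpha)^N$. Orthogonality is thus equivalent to the decomposition $p_n(z)w(z)=A_n(z)+z^{2N}B_n(z)$ with $\deg A_n\leq 2N-n-1$ and $B_n$ monic of degree $n$. Writing $p_n(z)=z^n+\sum_{i=0}^{n-1}c_iz^i$, this is an $n\times n$ linear system for $(c_0,\ldots,c_{n-1})$ with coefficient matrix the Hankel matrix $H_n=(m_{i+j})_{i,j=0}^{n-1}$, where $m_k:=[z^{2N-1-k}]w(z)$. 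Existence and uniqueness of $p_n$ reduces to $\det H_n\neq 0$, and the statement $\kappa_n\neq 0$ reduces to $\det H_{n+1}\neq 0$ via the standard identity $\kappa_n=\det H_{n+1}/\det H_n$.

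Next, for the case $n=2N$, orthogonality forces $z^{2N}\mid p_{2N}(z)w(z)$; since $\gcd(w(z),z^{2N})=1$ and $p_{2N}$ is monic of degree $2N$, this yields $p_{2N}(z)=z^{2N}$ uniquely, so $\det H_{2N}\neq 0$. A direct residue calculation gives $\kappa_{2N}=\frac{1}{2\pi i}\oint_\gamma w(z)\,dz=0$ (the integrand is entire inside $\gamma$), which accounts for the restriction $n\leq 2N-1$ in the second statement. For $n<2N$, I would reduce $\det H_n$ (after row reversal, which introduces a sign $(-1)^{n(n-1)/2}$) to the shifted Toeplitz minor $\det T_n:=\det(a_{(2N-n)+(i-j)})_{i,j=0}^{n-1}$, where $a_k:=[z^k]w(z)$. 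Since $w(z)$ has only real negative roots, the sequence $(a_k)_{k=0}^{2N}$ is a P\'olya frequency sequence by the Aissen--Edrei--Schoenberg theorem, and consequently all minors of the associated Toeplitz matrix are non-negative.

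The main obstacle is upgrading this non-negativity to strict positivity of $\det T_n$ for $1\leq n\leq 2N$ and $\alpha\in(0,1]$. My plan is to establish this via the Lindstr\"om--Gessel--Viennot interpretation: $\det T_n$ counts configurations of $n$ non-intersecting lattice paths in a grid determined by the two factors $(z+1)^N$ and $(z+\alpha)^N$, and the bound $n\leq 2N$ matches the total root multiplicity of $w(z)$, which is exactly what is needed to ensure the non-intersecting path count is strictly positive for all $\alpha\in(0,1]$. A pragmatic alternative, which I would attempt in parallel, is to verify $\det H_n(\alpha)\neq 0$ at $\alpha=1$ using the Jacobi-polynomial identification mentioned in the introduction (that $p_n$ is proportional to $P_n^{(-2N,2N)}(2z+1)$ for $\alpha=1$) and then exploit the polynomial dependence of $\det H_n(\alpha)$ in $\alpha$ together with a sign-preservation argument to extend the non-vanishing to all $\alpha\in(0,1]$.
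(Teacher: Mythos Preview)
Your proposal is correct and lands on the same key tool as the paper --- the Lindstr\"om--Gessel--Viennot lemma --- but you reach it by a longer route. The paper's proof goes directly: it writes the moment matrix $M_n=[\langle z^j,z^k\rangle]$, sets up an explicit directed graph on $\mathbb Z^2$ whose path-weights between vertices $A_j=(0,j)$ and $B_k=(2N,2N-n+k)$ equal $\langle z^j,z^{n-k-1}\rangle$, and then LGV gives $\det W_n>0$ immediately (non-intersecting path systems exist because $n\le 2N$, and every weight is positive since $\alpha>0$); $M_n$ differs from $W_n$ only by a column reversal. Your detour through P\'olya frequency sequences and Aissen--Edrei--Schoenberg gives only non-negativity and then forces you back to LGV for strictness, so that machinery adds nothing. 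The paper also handles $\kappa_n\neq 0$ more simply than your determinant ratio: if $\kappa_n=0$ then $p_{n+1}+p_n$ would be a second monic degree-$(n+1)$ polynomial satisfying the orthogonality, contradicting the uniqueness already established for $n+1\le 2N$. Finally, your alternative plan of checking $\alpha=1$ and using ``sign preservation'' of a polynomial in $\alpha$ is not a proof as stated --- a polynomial nonzero at one point can certainly vanish elsewhere on $(0,1]$ --- so you should drop it and rely solely on the LGV argument.
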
 
 
 \begin{proof}
 	The orthogonality condition \eqref{eq:orthogonality} is associated with the
 	non-Hermitian bilinear form
 	\[ \langle f, g \rangle =
 	\frac{1}{2\pi i}	\oint_{\gamma} f(z) g(z) 
 	\frac{(z+1)^N (z+\alpha)^N}{z^{2N}} dz \]
 	defined for polynomials $f$ and $g$. The polynomial $p_n$ exists and is unique
 	if and only if the $n \times n$ matrix of moments
 	\begin{equation} \label{eq:matrixMn} 
 	M_n = \begin{bmatrix} \langle z^j, z^k \rangle \end{bmatrix}_{j,k=0}^{n-1}
	\end{equation}
 	is invertible.
 	We use the Lindstr\"om-Gessel-Viennot (LGV) lemma to prove that this is the case for $n \leq 2N$.
	
 	Consider the directed graph on $\mathbb Z^2$ with an edge
 	from $(i,j)$ to $(i',j')$ if and only if $i' = i + 1$ and
 	$j' - j \in \{0,1\}$. The weights on the edges are 
 	\begin{align*} 
 	w((i,j), (i+1, j)) & = \begin{cases} \alpha & \text{ if $i$ is even}, \\
 	1 & \text{ if $i$ is odd}, \end{cases} \\
 	w((i,j), (i+1, j+1)) & = 1. 
 	\end{align*}
 	For two vertices $A, B \in \mathbb{Z}^2$ we define
 	\[ w(A,B) = \sum_{P : A \to B } \prod_{e \in P} w(e), \]
 	where the sum is over all directed paths $P$ on the graph from vertex
 	$A$ to vertex $B$. If there are no such paths then $w(A,B) = 0$.
 	
 	We assume $0 \leq n \leq 2N$ and we take  
 	vertices $A_j = (0, j)$ and $B_j = (2N, 2N-n+j)$ for $j=0, 1, \ldots, n-1$.
 	The LGV lemma \cite{GV} states that
 	$ \det \left[ w(A_j, B_k) \right]_{j,k=0}^{n-1}$
 	is equal to the weighted sum of all non-intersecting path systems 
 	from $A_0, \ldots A_{n-1}$ to $B_0, \ldots, B_{n-1}$.
 	It is easy to verify that there exist such non-intersecting path systems
 	(due to the fact that $0 \leq n \leq 2N$). Each non-intersecting
 	path system has a positive weight since $\alpha > 0$. Therefore
 	$\det \left[ w(A_j, B_k) \right]_{j,k=0}^{n-1}  > 0$, which, in particular,
 	implies that 
 	\begin{equation} \label{eq:matrixWn} 
 	W_n = \left[ w(A_j, B_k) \right]_{j,k=0}^{n-1} \end{equation}
 	is an invertible matrix.
 	
 	To calculate $w(A_j,B_k)$ we observe that any path from $A_j$
 	to $B_k$ is of length $2N$ with $n-k+j$ horizontal edges.
 	The weight of such a path is $\alpha^l$ where $l$ is the
 	number of horizontal edges at an even level.
 	We pick $l$ out of the possible $N$ even levels, and $n-k+j-l$
 	out of the possible $N$ odd levels, and we see that there
 	are  $ \binom{N}{l} \binom{N}{n-k+j-l} $ paths from $A_j$ to $B_k$
 	with weight $\alpha^l$. Summing over $l$ yields
 	\[ w(A_j,B_k) = \sum_{l=0}^{N} \binom{N}{l} \binom{N}{n-k+j-l} \alpha^l. \]
 	
 	This sum over products of binomial coefficients is easily seen to be 
 	equal to the coefficient of $z^{2N-n+k-j}$ in the product
 	$(z+1)^N (z+\alpha)^N$. Therefore, by Cauchy's theorem
 	\begin{align*} 
 	w(A_j,B_k) & = \frac{1}{2\pi i} \oint_{\gamma}
 	\frac{(z+1)^N (z+\alpha)^N}{z^{2N-n+k-j+1}} dz \\
 	& = \langle z^j, z^{n-k-1} \rangle.
 	\end{align*}
 	Comparing \eqref{eq:matrixMn} and \eqref{eq:matrixWn} we then see
 	that $M_n$ is obtained from $W_n$ by reversing the order of the 
 	columns. Since $W_n$ is invertible, also $M_n$ is invertible,
 	and it follows that $p_n$ uniquely exists.
 	
 	To prove \eqref{eq:normconstant} let us assume that $\kappa_n = 0$. Then by orthogonality we have $\langle p_n, z^j \rangle = 0$ 
 	not only for $j=0,1, \ldots, n-1$ but also for $j=n$. It follows
 	again by orthogonality of $p_{n+1}$ in case $n \leq 2N-1$, that 
 	$ \langle p_{n+1} + p_n, z^j \rangle = 0 $ for every $j=0, 1, \ldots, n$.
 	However, we established that $p_{n+1}$ is the only monic polynomial
 	of degree $n+1$ with these properties (if $n\leq 2N-1$). This
 	contradiction shows that $\kappa_n \neq 0$.
 \end{proof}
 \subsection{Riemann-Hilbert problem} \label{subsec:rhp}
 
 It is well-known that the orthogonal polynomials and the associated Christoffel--Darboux kernel can be  characterized by a RH problem.
 
 \begin{rhp}\label{rhp:Y} Let $\gamma_0$ be the circle
 	of radius $\sqrt{\alpha}$ around $0$ with positive direction.
 Find a function $Y : \mathbb{C}\setminus \gamma_0 \to \mathbb{C}^{2\times 2}$  with the following properties:
 \begin{itemize}
 	\item[(a)] $Y : \mathbb{C}\setminus \gamma_0 \to \mathbb{C}^{2\times 2}$ is analytic.
 	\item[(b)] The limits of $Y(z)$ as $z$ approaches $\gamma_0$ from inside and outside exist, are continuous on $\gamma_0$ and are denoted by $Y_+$ and $Y_-$, respectively. Furthermore they are related by
 	\begin{equation}\label{jump relations of Y}
 	Y_{+}(z) = Y_{-}(z) \begin{pmatrix}
 	1 & \frac{(z+1)^{N}(z+\alpha)^{N}}{z^{2N}} \\ 0 & 1
 	\end{pmatrix} \qquad \text{ for } z \in \gamma_0.
 	\end{equation}
 	\item[(c)] $Y(z) = \left(I + \bigO (z^{-1})\right) \begin{pmatrix}
 	z^{N} & 0 \\ 0 & z^{-N}
 	\end{pmatrix}$ as $z \to \infty$.
 \end{itemize}
\end{rhp}

The RH problem \ref{rhp:Y} is due to Fokas, Its, and Kitaev \cite{FIK}.
Its solution contains the orthogonal polynomials of degrees
$N$ and $N-1$ that uniquely exist by Proposition \ref{prop:prop51},
\begin{equation} \label{eq:Yz} 
	Y(z) = \begin{pmatrix} p_N(z) & 
	 \ds \frac{1}{2\pi i} \oint_{\gamma_0} p_N(s) \frac{(s+1)^N(s+\alpha)^N}{s^{2N}} \frac{ds}{s-z} \\[10pt]
	-\kappa_{N-1}^{-1} p_{N-1}(z) & 
	\ds -\frac{\kappa_{N-1}^{-1}}{2\pi i} \oint_{\gamma_0} p_{N-1}(s) \frac{(s+1)^N(s+\alpha)^N}{s^{2N}} \frac{ds}{s-z} 
\end{pmatrix}, \end{equation}
for $z \in \mathbb C \setminus \gamma_0$. 
 
 \begin{proposition}\label{prop:RHPforRN}
 	\begin{enumerate}
 		\item[\rm (a)] 
 	The kernel $R_N$ is given in terms of the solution $Y$ of the
 RH problem \ref{rhp:Y} by
\begin{align}\label{RnzwY}
 R_N(w,z) = \frac{1}{z-w} \begin{pmatrix} 0 & 1 \end{pmatrix}
 Y^{-1}(w) Y(z) \begin{pmatrix} 1 \\ 0 \end{pmatrix}.
 \end{align}
 \item[\rm (b)] Also for $w, z \in \mathbb C \setminus \gamma_0$,
 	\begin{align} \nonumber
 	\mathcal R_N(w,z) & := \begin{pmatrix}
	 	1 & 0 \end{pmatrix} Y^{-1}(w) Y(z) \begin{pmatrix} 1 \\ 0 \end{pmatrix} \\
	 	& = \frac{1}{2\pi i} \oint_{\gamma_0} R_N(s,z)
	 		\frac{(s+1)^N(s+\alpha)^N}{s^{2N}}
	 		\frac{s-z}{s-w} ds. \label{eq:calRN}
 	\end{align}
 \end{enumerate}
 \end{proposition}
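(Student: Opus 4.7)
My plan is to obtain both formulas by a direct computation from the explicit representation \eqref{eq:Yz} of $Y$, using the Christoffel--Darboux form of $R_N$ in \eqref{eq:CDkernel}. The only preliminary fact I need is that $\det Y \equiv 1$: the jump matrix in \eqref{jump relations of Y} is unimodular, so $\det Y$ is entire, and the asymptotic condition (c) forces $\det Y(z) \to 1$ as $z \to \infty$; Liouville then gives $\det Y\equiv 1$. Consequently
\[ Y^{-1}(w) = \begin{pmatrix} Y_{22}(w) & -Y_{12}(w) \\ -Y_{21}(w) & Y_{11}(w) \end{pmatrix}. \]

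For part (a), the scalar $\begin{pmatrix} 0 & 1 \end{pmatrix} Y^{-1}(w) Y(z) \begin{pmatrix} 1 \\ 0 \end{pmatrix}$ is simply the $(2,1)$-entry of $Y^{-1}(w)Y(z)$, which equals $-Y_{21}(w)Y_{11}(z) + Y_{11}(w)Y_{21}(z)$. Substituting $Y_{11}=p_N$ and $Y_{21}=-\kappa_{N-1}^{-1}p_{N-1}$ from \eqref{eq:Yz} gives
\[ \kappa_{N-1}^{-1}\bigl(p_{N-1}(w)p_N(z)-p_N(w)p_{N-1}(z)\bigr), \]
and dividing by $z-w$ reproduces the Christoffel--Darboux expression \eqref{eq:CDkernel} for $R_N(w,z)$. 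This settles (a).

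For part (b), I compute the $(1,1)$-entry of $Y^{-1}(w)Y(z)$, which is $Y_{22}(w)Y_{11}(z) - Y_{12}(w)Y_{21}(z)$. The off-diagonal entries of $Y$ are Cauchy transforms: reading them off \eqref{eq:Yz} and collecting the integrand under a common $\oint_{\gamma_0}$, the bracket becomes
\[ \frac{\kappa_{N-1}^{-1}}{s-w}\bigl(p_N(s)p_{N-1}(z)-p_{N-1}(s)p_N(z)\bigr)\,\frac{(s+1)^N(s+\alpha)^N}{s^{2N}}. \]
By \eqref{eq:CDkernel}, the numerator factor equals $(s-z)R_N(s,z)$ (using symmetry $R_N(s,z)=R_N(z,s)$), and inserting this identifies $\mathcal R_N(w,z)$ with the contour integral in \eqref{eq:calRN}.

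There is no real obstacle: both statements reduce to bookkeeping once one has the explicit $Y$ and the fact that $\det Y=1$. The only spot requiring a moment of care is to track the sign conventions in the $(2,1)$ entry of $Y^{-1}(w)Y(z)$ and to ensure the contour $\gamma_0$ used to define $Y_{12},Y_{22}$ is the same one appearing in the statement of (b), which is exactly the circle of radius $\sqrt\alpha$ fixed in RH problem \ref{rhp:Y}; by analyticity of the integrand in \eqref{eq:calRN} off $\{0\}$, this contour can in any case be deformed to the one in \eqref{eq:orthogonality1} without changing the value.
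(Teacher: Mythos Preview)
Your proof is correct and follows exactly the approach the paper indicates: both parts are obtained by direct computation from the explicit formula \eqref{eq:Yz} together with $\det Y\equiv 1$, which is precisely what the paper's one-line proof says (``readily checked from \eqref{eq:Yz} together with the fact that $\det Y\equiv 1$''). One minor caveat: your closing parenthetical about deforming $\gamma_0$ is imprecise since the integrand in \eqref{eq:calRN} also has a pole at $s=w$, so the value does depend on which side of the contour $w$ lies---but this remark is extraneous to the proof itself.
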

 
\begin{proof}
	The formula \eqref{RnzwY} is a reformulation of the 
	Christoffel-Darboux formula \eqref{eq:CDkernel}, as can be 
	readily checked from \eqref{eq:Yz} together with the
	fact that $\det Y \equiv 1$.
	The formula \eqref{eq:calRN} is obtained from \eqref{eq:Yz}
	in a similar way. 
\end{proof}
 
\subsection{First transformation of the RH problem}

The steepest descent analysis of the RH problem \ref{rhp:Y} for $Y$ is fairly
standard by now. It is modelled after the method developed by Deift et al.~\cite{DKMVZ} for
orthogonal polynomials on the real line. The extension to the complex plane
is standard, once one has identified the correct contour $\gamma_0$
with the equilibrium  measure $\mu_0$. In the high temperature regime we basically follow \cite{DKMVZ}
including the construction of Airy parametrices for the local analysis at 
branch points $z_{\pm}$.
The RH analysis in the low temperature regime is even simpler since we can separate
contours and no local analysis is needed.  The critical case $\alpha = 1/9$
is more difficult, but can be handled with the construction of a local
parametrix built out of Lax pair solutions associated with the Hastings-McLeod
solution of Painlev\'e II. This is similar to the construction in \cite{CK}
for orthogonal polynomials on the real line in cases where the equilibrium
density vanishes quadratically at an interior point of its support. 
We will not give any details for this case.

In terms of the function $V_\alpha$ defined in \eqref{eq:Valpha}, the jump relation \eqref{jump relations of Y} for $Y$ can be expressed as
\[ Y_{+}(z) = Y_{-}(z) \begin{pmatrix}
1 & e^{-NV_\alpha(z)} \\ 0 & 1
\end{pmatrix} \qquad \text{ for } z \in \gamma_0. \]

The first transformation $Y \mapsto T$ uses the $g$-function 
from Definition~\ref{mu0gdef} to normalize the RH problem at infinity. 
We define
\begin{equation} \label{Y to T transformation}
T(z) = e^{\frac{N\ell}{2}\sigma_{3}}Y(z)e^{-Ng(z)\sigma_{3}}e^{-\frac{N\ell}{2}\sigma_{3}}, \qquad \sigma_3 \coloneqq \begin{pmatrix} 1 & 0 \\0 & -1 \end{pmatrix}.
\end{equation}
The jumps in the RH problem for $T$ are conveniently expressed in terms of
the function $\phi$ defined in \eqref{phidefhigh} and \eqref{phideflow}.
From the identities \eqref{phigrelation}, \eqref{gjumponSigma0}, and \eqref{gjumponSigma0b} and the definition \eqref{Y to T transformation}, we find the following RH problem.
\begin{rhp} \label{rhp:T}
	$T$ satisfies
	\begin{itemize}
		\item[(a)] $T : \mathbb{C}\setminus \gamma_0 \to \mathbb{C}^{2\times 2}$ is analytic.
		
		\item[(b)] $T$ has boundary values on $\gamma_0$ that satisfy
\begin{align}	
\hspace{-1.3cm}		T_{+}(z) & = T_{-}(z) \begin{pmatrix}
		e^{-2N\phi_{+}(z)} & 1 \\
		0 & e^{-2N\phi_{-}(z)}
		\end{pmatrix}, & & \hspace{-1cm} \text{ for } z \in  \Sigma_0 \subset \gamma_0, \label{eq:Tjump1} \\
\hspace{-1.3cm}		T_{+}(z) & = T_{-}(z) \begin{pmatrix}
		1 & e^{2N\phi(z)} \\
		0 & 1
		\end{pmatrix}, & & \hspace{-1cm} \text{ for } z \in \gamma_0 \setminus \Sigma_0. \label{eq:Tjump2}
		\end{align}
		\item[(c)] $T(z) = I + \bigO(z^{-1})$ as $z \to \infty$.
	\end{itemize}
\end{rhp}

Note that $T$ depends on $N$, even though this is not indicated in the 
notation.
What is important for us, is that $T$ and $T^{-1}$ remain bounded
as $N \to \infty$, provided we stay away from the branch points $z_{\pm}(\alpha)$ (only in the high temperature regime).
We summarize what we need from the RH analysis 
in the following proposition.

\begin{proposition} \label{prop:TandTinvsmall}
	\begin{enumerate}
		\item[\rm (a)] If $0 < \alpha \leq \frac{1}{9}$, then 
		both $T(z)$ and $T(z)^{-1}$ are uniformly bounded for $z \in \mathbb{C}\setminus \gamma_{0}$ as $N \to \infty$.
		\item[\rm (b)] If $\frac{1}{9} < \alpha \leq 1$, then
		$T(z) = \bigO(N^{1/6})$ and $T^{-1}(z) = \bigO(N^{1/6})$
		as $N \to \infty$, uniformly for $z \in \mathbb C \setminus \gamma_0$.
		In addition, for every $\delta > 0$, we have that
		$T(z)$ and $T^{-1}(z)$ are
		bounded as $N \to \infty$ uniformly for $z$ 
		in the domain 
		\begin{equation} \label{eq:zawayfrombranchpoints} 
		\{ z \in \mathbb C \mid |z-z_+(\alpha)| \geq \delta, 
		|z-z_-(\alpha)| \geq \delta \}. \end{equation}
	\end{enumerate}
\end{proposition}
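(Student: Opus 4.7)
The plan is to carry out the Deift--Zhou steepest descent analysis for RH Problem \ref{rhp:T} along the classical lines of \cite{DKMVZ}. The starting point is the identity $\phi_+(z)+\phi_-(z)=0$ on $\Sigma_0$, which follows from Proposition \ref{prop:mu0} (more precisely from the derivation around \eqref{eq:phipm}). This permits the factorization of the jump matrix in \eqref{eq:Tjump1} into a product of a lower triangular factor, the constant matrix $\bigl(\begin{smallmatrix}0 & 1\\ -1 & 0\end{smallmatrix}\bigr)$, and an upper triangular factor, with the triangular pieces involving $e^{\mp 2N\phi(z)}$. Using this factorization I would open lenses on both sides of $\Sigma_0$ and define $S = T M^{-1}$, where $M$ is the appropriate triangular factor on each lens region. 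The sign structure of $\Re\phi$ near $\Sigma_0$, completely settled by Lemmas \ref{lem:Nphihigh} and \ref{lem:Nphi} together with \eqref{eq:Nphinearpoles}, ensures that the new jumps of $S$ on the lens boundary decay exponentially in $N$, uniformly away from the branch points $z_\pm(\alpha)$ in the high temperature regime.

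Next I would construct a global parametrix $P^{(\infty)}$ solving the model problem with jump $\bigl(\begin{smallmatrix}0 & 1\\ -1 & 0\end{smallmatrix}\bigr)$ on $\Sigma_0$ and normalization $P^{(\infty)}\to I$ at $\infty$. In the low temperature regime, where $\Sigma_0$ is the full circle $\gamma_0$, this admits a simple closed form (a multiplicative Szeg\H{o}-type function on $\gamma_0$), and $P^{(\infty)}$ and its inverse are uniformly bounded on $\C\setminus\gamma_0$. In the high temperature regime, where $\Sigma_0$ is the arc from $z_-(\alpha)$ to $z_+(\alpha)$, the parametrix is the standard construction built from $((z-z_+)(z-z_-))^{1/2}$; it is bounded (with bounded inverse) on any set of the form \eqref{eq:zawayfrombranchpoints} but blows up like $|z-z_\pm(\alpha)|^{-1/4}$ near the endpoints.

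In the high temperature regime one then has to glue in local Airy parametrices $P^{(z_\pm)}$ in small fixed disks around $z_\pm(\alpha)$. The fact that $\phi(z)$ behaves like $c_\pm (z-z_\pm)^{3/2}$ near a simple zero of $Q_\alpha$ (visible from \eqref{eq:Qasqrthigh} and the definition \eqref{phidefhigh}) provides the conformal local coordinate that the Airy model problem requires, and the matching with $P^{(\infty)}$ on the boundary of the disks is uniform up to $O(N^{-1})$. Inside the disks the Airy parametrix has entries of size $O(N^{1/6})$, and this is the sole source of the $N^{1/6}$ growth in (b). Defining $P$ to equal $P^{(z_\pm)}$ inside the disks and $P^{(\infty)}$ outside (or $P=P^{(\infty)}$ globally in the low temperature case), the residual matrix $R=SP^{-1}$ satisfies a RH problem with jumps $I+O(N^{-1})$ on its contour and normalization $R(\infty)=I$; by standard small-norm theory $R(z)=I+O(N^{-1})$ uniformly on $\C$.

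Unwinding $T = R\,P\,M$ and using that the opening-of-lens factor $M$ is uniformly bounded because $\Re\phi>0$ outside $\gamma_0$ in a lens neighborhood of $\Sigma_0$ and $\Re\phi<0$ inside (Lemmas \ref{lem:Nphihigh}, \ref{lem:Nphi}), the bounds claimed in the proposition follow: in the low temperature regime every ingredient is uniformly bounded, while in the high temperature regime the only growth comes from the Airy parametrix, which is $O(N^{1/6})$ globally and bounded on \eqref{eq:zawayfrombranchpoints}. The same argument applied to $T^{-1}=M^{-1}P^{-1}R^{-1}$ yields the matching bounds on the inverse. The main technical obstacle is the careful matching of the Airy parametrix to $P^{(\infty)}$ and the verification of the exponential decay of lens jumps uniformly on the critical contours identified in Figures \ref{fig:sign of xi high} and \ref{fig:sign of xi low}; everything else is routine once the $g$-function and the sign diagram of $\Re\phi$ from Proposition \ref{prop:mu0} are in place.
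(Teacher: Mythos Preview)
Your overall strategy matches the paper's: open lenses around $\Sigma_0$, build a global parametrix for the constant jump $\bigl(\begin{smallmatrix}0&1\\-1&0\end{smallmatrix}\bigr)$, insert Airy local parametrices at $z_\pm(\alpha)$ in the high temperature regime, and conclude by small-norm theory for $R$. Two points need correction.

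First, your sign claim for $\Re\phi$ near $\Sigma_0$ is wrong, and it makes your boundedness argument for the lens factor $M$ internally inconsistent. With $\phi$ defined as in Definition~\ref{phidef} (note the sign flip in \eqref{phideflow} across $\gamma_0$), one has $\Re\phi>0$ on \emph{both} sides of $\Sigma_0$ in a lens neighborhood, not opposite signs; this is exactly what Figures~\ref{fig:sign of xi high} and~\ref{fig:sign of xi low} and Lemmas~\ref{lem:Nphihigh}--\ref{lem:Nphi} show. The lens factors are of the form $\bigl(\begin{smallmatrix}1&0\\ \pm e^{-2N\phi}&1\end{smallmatrix}\bigr)$ on both sides, so boundedness of $M$ and exponential decay of the lens jumps both require $\Re\phi\ge 0$ in the entire annular lens region. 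If $\Re\phi<0$ held inside as you state, $M$ would blow up and the argument would fail. Relatedly, in the low temperature case the global parametrix is not a Szeg\H{o}-type scalar function but simply the constant matrix $\bigl(\begin{smallmatrix}0&-1\\1&0\end{smallmatrix}\bigr)$ inside $\gamma_0$ and $I$ outside (see \eqref{eq:StoRlow}); the paper removes the constant jump directly without a separate parametrix step.

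Second, you do not address the critical value $\alpha=\tfrac{1}{9}$, which is included in part~(a). There the two double zeros of $Q_\alpha$ coalesce at $-\sqrt{\alpha}=-\tfrac13$ and one cannot choose $\gamma_\pm$ with $\Re\phi>\epsilon>0$ uniformly; the lenses must be pinched at $-\tfrac13$ and a local parametrix of Painlev\'e~II type (as in \cite{CK}) is required. The paper handles this case separately and only uses that this local parametrix is uniformly bounded.
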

The proposition is a result of the steepest descent analysis that
we will perform next for the two regimes separately. 

Because of \eqref{Y to T transformation} and the formula \eqref{eq:calRN}
for $\mathcal R_N$, we have
\begin{equation} \label{eq:calRNinT} 
	\mathcal R_N(w,z) = \begin{pmatrix} 1 & 0 \end{pmatrix}	
	T^{-1}(w) T(z) \begin{pmatrix} 1 \\ 0 \end{pmatrix}	e^{N(g(z)-g(w))} 
	\end{equation}
and before turning to the proof of Proposition \ref{prop:TandTinvsmall}
we  note the following consequence.

\begin{corollary} \label{cor:TandTinvsmall}
\begin{enumerate}
	\item[\rm (a)] If $0 < \alpha \leq \frac{1}{9}$ then
	$\mathcal R_N(w,z) e^{N(g(w) - g(z))}$ remains bounded
	as $N \to \infty$, uniformly for $w \in \mathbb C \setminus \gamma_0$
	and $z \in \mathbb C\setminus \gamma_0$.
	\item[\rm (b)] If $\frac{1}{9} < \alpha \leq 1$ then
	$\mathcal R_N(w,z) e^{N(g(w) - g(z))}$
	remains bounded as $N \to \infty$, uniformly 
	for $w \in \mathbb C \setminus \gamma_0$
	and $z \in \mathbb C$, both in the domain \eqref{eq:zawayfrombranchpoints} 
	for some $\delta > 0$.
	\item[\rm (c)] If $\frac{1}{9} < \alpha \leq 1$, then
	 the analytic continuation of 
	$w \mapsto \mathcal R_N(w,z) e^{N(g(w)-g(z))}$
	from the disk $|w| < \sqrt{\alpha}$ across $\gamma_0 \setminus \Sigma_0$ 
	into the domain
	bounded by $\Sigma_{-1}$ and $\gamma_0 \setminus \Sigma_0$
	remains bounded as $N \to \infty$, again uniformly
	for $w$ and $z$ in the domain  \eqref{eq:zawayfrombranchpoints} 
	for some $\delta > 0$.
\end{enumerate}	
\end{corollary}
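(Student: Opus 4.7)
The plan is to deduce all three bounds from the representation \eqref{eq:calRNinT}, which exhibits $\mathcal R_N(w,z) e^{N(g(w)-g(z))}$ as the $(1,1)$-entry of the matrix product $T^{-1}(w)T(z)$. Parts (a) and (b) require no further input: using $|(T^{-1}(w)T(z))_{1,1}|\le \|T^{-1}(w)\|\,\|T(z)\|$, the uniform bounds of Proposition~\ref{prop:TandTinvsmall}(a) handle case (a), and the bounds of Proposition~\ref{prop:TandTinvsmall}(b) on the domain \eqref{eq:zawayfrombranchpoints} handle case (b).

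For part (c) the additional task is to follow the analytic continuation of $T^{-1}(w)$ from $|w|<\sqrt{\alpha}$ across $\gamma_0\setminus\Sigma_0$. Inverting the jump relation \eqref{eq:Tjump2} gives
\[
T_-^{-1}(w) = \begin{pmatrix} 1 & e^{2N\phi(w)} \\ 0 & 1 \end{pmatrix} T_+^{-1}(w), \qquad w \in \gamma_0\setminus\Sigma_0,
\]
and because $N$ is an integer, $e^{2N\phi(w)}$ is single-valued across the portion of the branch cut of $\phi$ that lies on $\gamma_0\setminus\Sigma_0$ (the jumps of $\phi$ there contribute only to $2\pi i\mathbb{Z}$). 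Tracking the continuation from the interior of $\gamma_0$ just across to the exterior therefore produces
\[
\widetilde{T^{-1}}(w) = \begin{pmatrix} 1 & -e^{2N\phi(w)} \\ 0 & 1 \end{pmatrix} T^{-1}(w),
\]
with $T^{-1}$ on the right-hand side now referring to the RH solution on the exterior side. Substituting this into \eqref{eq:calRNinT} expresses the analytic continuation of $\mathcal R_N(w,z)e^{N(g(w)-g(z))}$ as $(T^{-1}(w)T(z))_{1,1}-e^{2N\phi(w)}(T^{-1}(w)T(z))_{2,1}$, valid throughout the region in part (c).

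The only step requiring real work, and thus the main obstacle, is to show that $|e^{2N\phi(w)}|$ remains bounded on that region. For this I will invoke the structure of $\mathcal N_\phi$ from Lemma~\ref{lem:Nphihigh} together with the accompanying sign analysis: the region bounded by $\Sigma_{-1}$ and $\gamma_0\setminus\Sigma_0$ lies inside the component of $\mathbb{C}\setminus\mathcal N_\phi$ that contains $-1$ and $-\alpha$, on which $\Re\phi<0$. One can verify this by the maximum principle, since $\Re\phi$ is harmonic in the region, vanishes on the boundary arc $\Sigma_{-1}\subset\mathcal N_\phi$, and is strictly negative on the interior of $\gamma_0\setminus\Sigma_0$ (where $\phi$ is real, with the negative midpoint value $\phi(-\sqrt{\alpha})<0$ supplied by the argument around \eqref{eq:gminVstrict}). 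Consequently $|e^{2N\phi(w)}|\le 1$, and combining with Proposition~\ref{prop:TandTinvsmall}(b) applied to $w,z$ in \eqref{eq:zawayfrombranchpoints} gives the stated uniform bound.
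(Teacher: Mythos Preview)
Your proof is correct and follows essentially the same approach as the paper: parts (a) and (b) come directly from \eqref{eq:calRNinT} and Proposition~\ref{prop:TandTinvsmall}, and for part (c) you use the jump \eqref{eq:Tjump2} to write the continuation as $\begin{pmatrix}1&-e^{2N\phi(w)}\end{pmatrix}T^{-1}(w)T(z)\begin{pmatrix}1\\0\end{pmatrix}$ and then invoke $\Re\phi<0$ in the relevant region. The paper's own proof is terser---it simply cites Figure~\ref{fig:sign of xi high} for the sign of $\Re\phi$---whereas you supply a maximum-principle justification; note, however, that $\Re\phi$ is not harmonic throughout that region because of the logarithmic singularity at $-1$, though this only helps since $\Re\phi\to-\infty$ there and the maximum principle for the supremum still applies.
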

\begin{proof}
	Parts (a) and (b) are immediate from 	
	 \eqref{eq:calRNinT} and Proposition \ref{prop:TandTinvsmall}.
	 
	Because of \eqref{eq:calRNinT} and the jump condition \eqref{eq:Tjump2} 
	for $T$ along $\gamma_0 \setminus \Sigma_0$, the analytic
	continuation from part (c) is given by
	\[ \begin{pmatrix} 1 & - e^{2N \phi(w)} \end{pmatrix}	T^{-1}(w)
		T(z) \begin{pmatrix} 1 \\ 0 \end{pmatrix} \]
	Since $\Re \phi(w) < 0$ for $w$ in the region under consideration
	in part (c), see for example Figure \ref{fig:sign of xi high},
	part (c) follows from Proposition \ref{prop:TandTinvsmall}
	as well. 	
\end{proof}

\subsection{Proof of Proposition \ref{prop:TandTinvsmall} (a) }
\begin{proof}
	Suppose $ 0 < \alpha < \frac{1}{9}$. Then we can find contours $\gamma_{+}$ 
	and $\gamma_{-}$ lying in the interior and exterior of $\gamma_0 = \Sigma_0$, respectively,
	such that  
	\begin{equation}\label{rephiongammaplusminus}
	\Re \phi(z) > \epsilon > 0 \qquad \text{for all } z \in \gamma_{+}\cup \gamma_{-}
	\end{equation}
	for some fixed $\epsilon > 0$, see Figure \ref{fig:opening of the lenses low}. 
	\begin{figure}[ht]
		\begin{center}
			\begin{tikzpicture}[master,scale = 1.3,every node/.style={scale=1.3}]
			\node at (0,0) {\includegraphics[width=9cm]{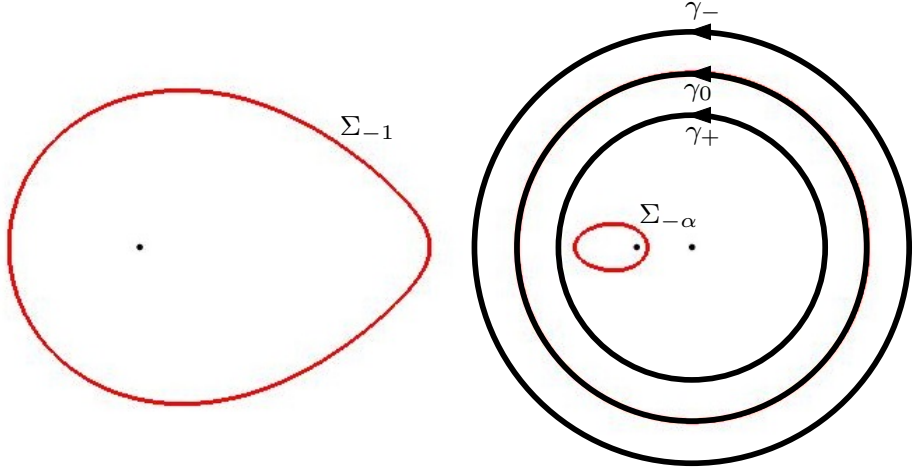}};
			\draw [line width=0.75 mm] (2.58,-0.01) circle(1.77);
			\draw[black,arrows={-Triangle[length=0.33cm,width=0.22cm]}]
			($(2.52,1.76)$) --  ++(-0.0001,0);
			
			
			\draw [line width=0.75 mm] (2.58,-0.01) circle(2.2);
			
			\draw [line width=0.75 mm] (2.58,-0.01) circle(1.35);
			
			\draw[black,arrows={-Triangle[length=0.33cm,width=0.22cm]}]
			($(2.52,2.19)$) --  ++(-0.0001,0);
			
			\draw[black,arrows={-Triangle[length=0.33cm,width=0.22cm]}]
			($(2.52,1.34)$) --  ++(-0.0001,0);
			
			\node at (2.7,1.12) {\footnotesize $\gamma_{+}$};
			\node at (2.7,2.4) {\footnotesize $\gamma_{-}$};
			\node at (2.65,1.57) {\footnotesize $\gamma_{0}$};
			\node at (-.7,1.2) {\footnotesize $\Sigma_{-1}$};
			\node at (2.35,0.3) {\footnotesize $\Sigma_{-\alpha}$};
			\end{tikzpicture}
		\end{center}
		\caption{\label{fig:opening of the lenses low} The jump contour $\gamma_0 \cup \gamma_+ \cup \gamma_-$ for the RH problem \ref{rhp:S} for $S$ (black), the curves $\Sigma_{-1}$ and $\Sigma_{-\alpha}$ (red), and the points $-1, -\alpha, 0$ (black dots) in the low temperature regime.}
	\end{figure}
	
	We define 
	\begin{equation}\label{eq:TtoSlow}
	S(z) = T(z) \times \begin{cases}
	\begin{pmatrix}
	1 & 0 \\ -e^{-2N\phi(z)} & 1 
	\end{pmatrix}, & \text{for $z$ between $\gamma_{0}$ and $\gamma_+$}, \\
	\begin{pmatrix}
	1 & 0 \\ e^{-2N\phi(z)} & 1 
	\end{pmatrix}, & \text{for $z$ between $\gamma_{0}$ and $\gamma_-$}, \\
	I, & \text{elsewhere}.
	\end{cases}
	\end{equation}
	Then $S$ satisfies the following RH problem.
	
	\begin{rhp} \label{rhp:S} \,
		\begin{itemize} 
			\item[(a)] $S : \mathbb{C}\setminus (\gamma_0 \cup \gamma_{+} 
			\cup \gamma_{-}) \to \mathbb{C}^{2\times 2}$ is analytic.
			\item[(b)] $S$ has boundary values on $\gamma_0$, $\gamma_+$
			and $\gamma_-$ that satisfy 
			\begin{align}
			S_{+}(z) & = S_{-}(z) \begin{pmatrix}
			1 & 0 \\ e^{-2N\phi(z)} & 1
			\end{pmatrix}, & & \text{ for } z \in \gamma_{+} \cup \gamma_{-}, \\
			S_{+}(z) & = S_{-}(z) \begin{pmatrix}
			0 & 1 \\ 	-1 & 0
			\end{pmatrix}, & & \text{ for } z \in \gamma_{0}.
			\end{align}
			\item[(c)] $S(z) = I + \bigO(z^{-1})$ as $z \to \infty$.
		\end{itemize}
	\end{rhp}
	
	We remove the constant jump on $\gamma_0$ by defining
	\begin{equation} \label{eq:StoRlow} 
	R(z) = S(z) \times \begin{cases}
	\begin{pmatrix} 0 & -1 \\ 1 & 0 \end{pmatrix}, & \text{ for $z$ inside $\gamma_0$}, \\
	I, & \text{ for $z$ outside $\gamma_0$}.
	\end{cases} \end{equation}
	Of course $R$ should  not be confused with the reproducing
	kernel $R_N$, as these are totally different objects.
	The matrix valued function 	$R$ satisfies the following RH problem.
	\begin{rhp} \label{rhp:R} \,
		\begin{itemize} 
			\item[(a)] $R: \mathbb{C}\setminus (\gamma_{+} \cup \gamma_{-}) 
			\to \mathbb{C}^{2\times 2}$ is analytic.
			\item[(b)] $R$ has boundary values on $\gamma_+$
			and $\gamma_-$ that satisfy 
			\begin{align}
			R_{+}(z) & = R_{-}(z) \begin{pmatrix}
			1 & -e^{-2N\phi(z)} \\ 0 & 1
			\end{pmatrix}, & & \text{ for } z \in \gamma_{+}, \\
			R_{+}(z) & = R_{-}(z) \begin{pmatrix}
			1 & 0 \\ e^{-2N\phi(z)} & 1
			\end{pmatrix}, & & \text{ for } z \in \gamma_{-}.
			\end{align}
			\item[(c)] $R(z) = I + \bigO(z^{-1})$ as $z \to \infty$.
		\end{itemize}
	\end{rhp}		
	Since $\Re \phi > \epsilon > 0$ for $z \in \gamma_+ \cup \gamma_-$
	the jumps in the RH problem for $R$ are exponentially close
	to the identity matrix as $N \to \infty$. By standard estimates
	on small norm RH problems \cite{Deift}, we find
	$R(z) = I + \bigO(e^{-\epsilon N})$ as $N \to \infty$,
	and in particular $R$ and $R^{-1}$ are uniformly bounded as
	$N \to \infty$, uniformly on $\mathbb C$.
	Tracing back the transformations \eqref{eq:StoRlow} and
	\eqref{eq:TtoSlow} it then also follows that $T$ and $T^{-1}$
	are uniformly bounded as $N \to \infty$, uniformly
	on $\mathbb C$, since $\Re \phi \geq 0$ in the annular 
	region bounded by $\gamma_+$ and $\gamma_-$. This proves
	Proposition \ref{prop:TandTinvsmall} for $\alpha < \frac{1}{9}$.
	
	In case $\alpha = \frac{1}{9}$, we are not able to choose
	$\gamma_+$ and $\gamma_-$ such that \eqref{rephiongammaplusminus} 
	holds on the full contours. Instead we let $\gamma_+$ and $\gamma_-$ 
	go to  $\gamma_0$ at the 
	critical point $-\sqrt{\alpha} = - \frac{1}{3}$, and
	we can do it
	in such a way $\Re \phi > 0$
	on $(\gamma_+ \cup \gamma_-) \setminus \{-\frac{1}{3}\}$.
	Then we can proceed as in the case $\alpha < \frac{1}{9}$
	described above,
	except that we have to build a local parametrix at $-\frac{1}{3}$.
	This is done with the help of a special parametrix \cite{CK}
	that we will not describe here. We only need to know that it is
	uniformly bounded as $N \to \infty$ and then Proposition \ref{prop:TandTinvsmall} follows as before.
\end{proof}

\begin{figure}[ht]
	\begin{center}
		\begin{tikzpicture}[master,scale = 1.3,every node/.style={scale=1.3}]
		\node at (0,0) {\includegraphics[width=9.23cm]{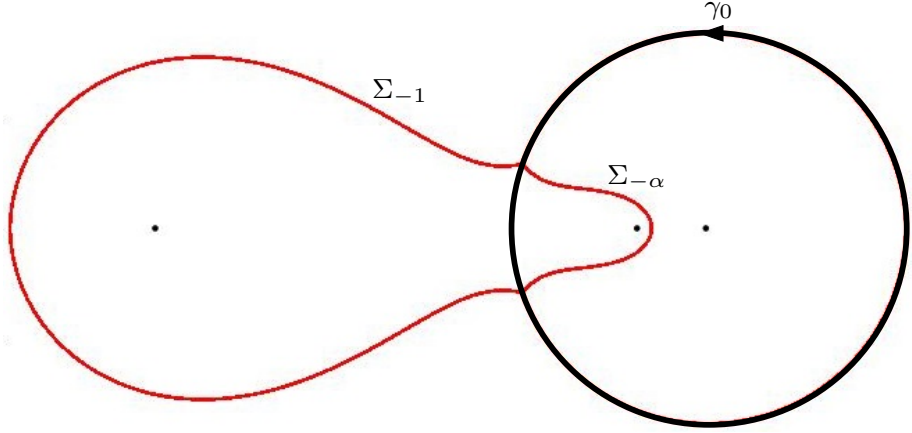}};
		\draw [line width=0.75 mm] (2.528,0.007) circle(2);
		\draw[black,arrows={-Triangle[length=0.33cm,width=0.22cm]}]
		($(2.42,2)$) --  ++(-0.0001,0);
		
		\node at (2.63,2.23) {\footnotesize $\gamma_0$};
		\node at (-.6,1.4) {\footnotesize $\Sigma_{-1}$};
		\node at (1.8,.53) {\footnotesize $\Sigma_{-\alpha}$};
		
		\end{tikzpicture}
	\end{center}
	\caption{\label{fig:valid contour C high} 
		The jump contour $\gamma_0$ for the RH problem \ref{rhp:Y} for $Y$ (black), the curves $\Sigma_{-1}$ and $\Sigma_{-\alpha}$ (red), and the points $-1, -\alpha, 0$ (black dots) in the high temperature regime.}
\end{figure}

\subsection{Proof of Proposition \ref{prop:TandTinvsmall} (b) }
\begin{proof}
	Suppose $\frac{1}{9} <  \alpha  \leq 1$ and let $Y(z)$ denote the solution 
	of the RH problem \ref{rhp:Y} with jump contour $\gamma_0$.
	See Figure \ref{fig:valid contour C high} for $\gamma_0$ together
	with the contours $\Sigma_{-1}$ and $\Sigma_{-\alpha}$
	that enclose the bounded domain where $\Re \phi < 0$
	in the high temperature regime.
	
	The first transformation $Y \to T$ is given by \eqref{Y to T transformation} 
	and $T$ satisfies the RH problem \ref{rhp:T}. 
	In the second transformation, we open up lenses $\gamma_{+}$ and 
	$\gamma_{-}$ around $\Sigma_0 \subset \gamma_0$ as in Figure \ref{fig:opening of the lenses high} such that $\Re \phi> 0$
	on $(\gamma_+ \cup \gamma_-) \setminus \{z_+(\alpha), z_-(\alpha)\}$
	and define $S$ as (it is similar to  \eqref{eq:TtoSlow})
	\begin{equation}\label{eq:TtoShigh}
	S(z) = T(z) \times \begin{cases}
	\begin{pmatrix}
	1 & 0 \\ -e^{-2N\phi(z)} & 1 
	\end{pmatrix}, & \text{for $z$ between $\Sigma_{0}$ and $\gamma_+$}, \\
	\begin{pmatrix}
	1 & 0 \\ e^{-2N\phi(z)} & 1 
	\end{pmatrix}, & \text{for $z$ between $\Sigma_{0}$ and $\gamma_-$}, \\
	I, & \text{elsewhere}.
	\end{cases}
	\end{equation}
	Then $S$ satisfies
	\begin{rhp} \label{rhp:Shigh} \,
		\begin{itemize} 
			\item[(a)] $S : \mathbb{C}\setminus (\gamma_0 \cup \gamma_{+} 
			\cup \gamma_{-}) \to \mathbb{C}^{2\times 2}$ is analytic.
			\item[(b)] $S$ has boundary values on $\gamma_0$, $\gamma_+$
			and $\gamma_-$ that satisfy 
			\begin{align}
			S_{+}(z) & = S_{-}(z) \begin{pmatrix}
			1 & 0 \\ e^{-2N\phi(z)} & 1
			\end{pmatrix}, & & \text{ for } z \in \gamma_{+} \cup \gamma_{-}, \\
			S_{+}(z) & = S_{-}(z) \begin{pmatrix}
			0 & 1 \\ 	-1 & 0
			\end{pmatrix}, & & \text{ for } z \in \Sigma_{0}, \\
			S_{+}(z) & = S_{-}(z) \begin{pmatrix}
			1 & e^{2N \phi(z)} \\ 0 & 1
			\end{pmatrix}, & & \text{ for } z \in \gamma_{0} \setminus \Sigma_0.
			\end{align}
			\item[(c)] $S(z) = I + \bigO(z^{-1})$ as $z \to \infty$.
		\end{itemize}
	\end{rhp}
	
	\begin{figure}[ht]
		\begin{center}
			\begin{tikzpicture}[master,scale = 1.3,every node/.style={scale=1.3}]
			\node at (0,0) {\includegraphics[width=9.23cm]{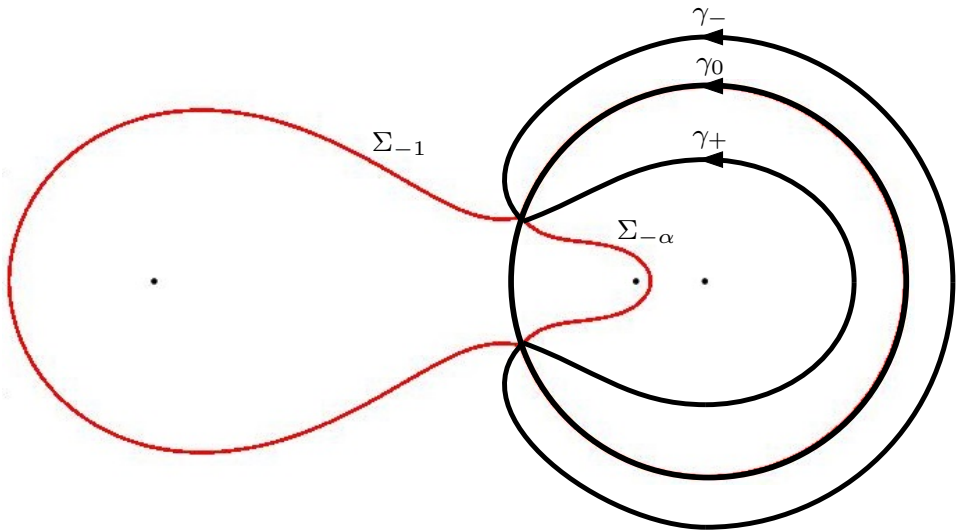}};
			\draw [line width=0.75 mm] (2.528,0.007) circle(2);
			\draw[black,arrows={-Triangle[length=0.33cm,width=0.22cm]}]
			($(2.42,2)$) --  ++(-0.0001,0);
			
			\draw[black, line width=0.65 mm] (0.65,0.62) to [out=135, in=180] (2.5,2.5);
			\draw[black, line width=0.65 mm] (2.5,2.5) to [out=0, in=90] (5,0);
			\draw[black, line width=0.65 mm] (5,0) to [out=-90, in=0] (2.5,-2.5);
			\draw[black, line width=0.65 mm] (2.5,-2.5) to [out=180, in=-135] (0.65,-0.62);
			\draw[black,arrows={-Triangle[length=0.33cm,width=0.22cm]}]
			($(2.42,2.5)$) --  ++(-0.0001,0);

			\draw[black, line width=0.65 mm] (0.65,0.62) to [out=20, in=180] (2.5,1.25);
			\draw[black, line width=0.65 mm] (2.5,1.25) to [out=0, in=90] (4,0);
			\draw[black, line width=0.65 mm] (4,0) to [out=-90, in=0] (2.5,-1.25);
			\draw[black, line width=0.65 mm] (2.5,-1.25) to [out=180, in=-20] (0.65,-0.62);
			\draw[black,arrows={-Triangle[length=0.33cm,width=0.22cm]}]
			($(2.42,1.25)$) --  ++(-0.0001,0);
			
			\node at (2.55,2.2) {\footnotesize $\gamma_0$};
			\node at (2.55,1.45) {\footnotesize $\gamma_{+}$};
			\node at (2.55,2.7) {\footnotesize $\gamma_{-}$};
			\node at (-.6,1.4) {\footnotesize $\Sigma_{-1}$};
			\node at (1.9,.5) {\footnotesize $\Sigma_{-\alpha}$};		
			\end{tikzpicture}
		\end{center}
		\caption{\label{fig:opening of the lenses high} 
			The jump contour $\gamma_0 \cup \gamma_+ \cup \gamma_-$ for the RH problem for $S$ (black) and the curves $\Sigma_{-1}$ and $\Sigma_{-\alpha}$ (red) in the high temperature regime.}
	\end{figure}
	
	The global parametrix $P^{(\infty)}$ is given by 
	\begin{equation} \label{eq:Pinfinityhigh}
	P^{(\infty)}(z) = \begin{pmatrix}
	\frac{1}{2}(a(z)+a(z)^{-1}) & \frac{1}{2i}(a(z)-a(z)^{-1}) \\[5pt]
	-\frac{1}{2i}(a(z)-a(z)^{-1}) & \frac{1}{2}(a(z)+a(z)^{-1})
	\end{pmatrix},
	\end{equation}
	where $a(z) := \big(\frac{z-z_{+}}{z-z_{-}}\big)^{1/4}$ is
	defined with a branch cut along $\Sigma_0$ and in such a way
	that $a(z) \to 1$ as $z \to \infty$. 
	
	In small disks $\mathcal{D}_{z_{+}}$ and $\mathcal{D}_{z_{-}}$
	around the endpoints of $\Sigma_0$ we construct local parametrices
	$P^{(z_+)}$ and $P^{(z_-)}$ with the aid of Airy functions. 
	This construction is standard by now and we do not give details.
	The only thing that concerns us is that the local parametrices depend
	on $N$ and they slightly grow with $N$, namely
	\begin{equation} \label{eq:PandPinvgrow} 
	P^{(z_{\pm})}(z) = \bigO(N^{\frac{1}{6}}), 
	\quad 
	P^{(z_{\pm})}(z)^{-1} = \bigO(N^{\frac{1}{6}})
	\quad \text{ as } N \to \infty, 
	\end{equation}
	uniformly for $z \in \mathcal D_{z_{\pm}}$.
	The third and final transformation $S \mapsto R$ is defined by
	\begin{equation} \label{eq:StoRhigh}
	R(z) = \begin{cases}
	S(z) P^{(\infty)}(z)^{-1}, & \text{ for } z \in \mathbb{C}\setminus  (\mathcal{D}_{z_{+}} \cup \mathcal{D}_{z_{-}}), \\
	S(z) P^{(z_{+})}(z)^{-1}, & \text{ for } z \in \mathcal{D}_{z_{+}}, \\
	S(z) P^{(z_{-})}(z)^{-1}, & \text{ for } z \in \mathcal{D}_{z_{-}}.
	\end{cases}
	\end{equation}
	
	Then $R$ is defined and analytic in $$
	\mathbb C \setminus
	\Big(\big( (\gamma_0 \cup \gamma_{+} \cup \gamma_{-})\setminus (\mathcal{D}_{z_{+}} \cup \mathcal{D}_{z_{+}})\big) \cup \partial \mathcal{D}_{z_+}
	\cup \partial \mathcal D_{z_-}\Big)
	$$ with jump matrices that
	are $I + \bigO(N^{-1})$ as $N \to \infty$.
	It follows that $R(z) = I + \bigO(N^{-1})$ uniformly for $z \in \mathbb C$, and in particular $R$ and $R^{-1}$ remain bounded as $N \to \infty$.
Observe that in the construction of the local parametrics, the disks $\mathcal{D}_{z_{\pm}}$ can be chosen arbitrarily small (but independent of $N$), and we choose them with radii smaller than $\delta$.
	Then following the transformations \eqref{eq:TtoShigh} and \eqref{eq:StoRhigh}, and taking note of \eqref{eq:PandPinvgrow}
	we find that $T$ and $T^{-1}$ are uniformly of order $N^{\frac{1}{6}}$
	as $N \to \infty$. Outside the disks $\mathcal D_{z_\pm}$ the global parametrix \eqref{eq:Pinfinityhigh} applies,
	which does not change with $N$, and then $T$ and $T^{-1}$ remain
	uniformly bounded. Part (b) of Proposition \ref{prop:TandTinvsmall} is now also proven.
\end{proof}

\section{Phase functions $\Phi_{\alpha}$ and $\Psi_{\alpha}$}

\subsection{Definitions} \label{subsec:Phase}
In the last two sections we analyzed the RH problem with
the $g$-function coming from the equilibrium measure as its main input.
The outcome of this analysis is in Corollary \ref{cor:TandTinvsmall}
which states that
$\mathcal R_N(w,z) e^{N(g(w)-g(z))}$ remains uniformly bounded
in certains regions, and actually (very roughly)
\begin{equation} \label{eq:calRNsim} 
 \mathcal R_N(w,z) \sim e^{N(g(z)-g(w))} \end{equation}
 as $N \to \infty$.

We now turn to the asymptotic analysis of the double contour integrals 
coming from \eqref{eq:kernel} and that give the probabilities for
the three types of lozenges, see also Theorem \ref{thm:doubleintegrals_for_lozenge_densities} below.

After deforming contours and splitting up integrals, we are able 
to rewrite the integrals with an integrand containing
\begin{equation} \label{eq:calRNtimesF}
\mathcal R_N(w,z) \frac{F(z;x_{1},y_{1})}{F(w;x_{2},y_{2})} 
\end{equation}\todo{C: changes here}
as the main $N$-dependent entry, where
\begin{equation} \label{eq:Fzxy} 
F(z;x,y) = \frac{(z+1)^{\lfloor \frac{x}{2} \rfloor} 
	(z+\alpha)^{\lfloor \frac{x+1}{2} \rfloor}}{z^y}, 
\end{equation}
see Propositions \ref{prop:deformationlow} and \ref{prop:deformationhigh}.
Recall that $x,y$ will be varying with $N$ as in \eqref{eq:scaled_variables}.
Then in view of \eqref{eq:calRNsim}, \eqref{eq:Fzxy} we see that
\eqref{eq:calRNtimesF} behaves roughly like
$e^{N (\Phi_{\alpha}(z) - \Phi_{\alpha}(w))}$ with a certain function 
$\Phi_{\alpha}$
that depends ons $(\xi,\eta) \in \mathcal H$, and that is
defined next, along with a companion function $\Psi_{\alpha}$. 

\begin{definition} For $(\xi,\eta) \in \mathcal H$ we define
	\begin{align} \nonumber
	\Phi_{\alpha}(z) & = \Phi_{\alpha}(z;\xi,\eta) \\ 
	& = g(z) + \frac{1+\xi}{2} \log \left((z+1)(z+\alpha)\right)
	- (1+\eta) \log z + \frac{\ell}{2} \nonumber \\
	 \label{Phidef} 
	& =  \phi(z) + \frac{\xi}{2}\log\left((z+1)(z+\alpha)\right) 
	- \eta \log z, \\ 
	\Psi_{\alpha}(z) & =  \Psi_{\alpha}(z;\xi,\eta) = -\Phi_{\alpha}(z;-\xi,-\eta)
	\\ \label{Psidef}
	& =  -\phi(z) + \frac{\xi}{2}\log\left((z+1)(z+\alpha)\right) - \eta \log z.
	\end{align}
\end{definition}
The equality leading to the third line in \eqref{Phidef} follows from \eqref{phigrelation} and \eqref{eq:Valpha}. Recall that $\phi' =  \pm Q_{\alpha}^{1/2}$ by Definition \ref{phidef}
and therefore by the definitions \eqref{Phidef} and \eqref{Psidef}
we have that both $\Phi_{\alpha}'$ and $\Psi_{\alpha}'$
satisfy the algebraic equation \eqref{eq:algebraiccurve} for $\Xi_{\alpha}$.

Thus $\Phi_{\alpha}'$ and $\Psi_{\alpha}'$ are two branches
of the algebraic function $\Xi_{\alpha}$. Taking note of the
different choice of branch cuts in the high temperature regime 
we can verify that
\begin{equation} \label{eq:relationPhiXi}
	\Phi_{\alpha}'(z) = \begin{cases}
		\Xi_{\alpha,+}(z), & |z| > \sqrt{\alpha}, \\
		\Xi_{\alpha,-}(z), & |z| < \sqrt{\alpha},
		\end{cases}, \qquad
		\Psi_{\alpha}'(z) = \begin{cases}
		\Xi_{\alpha,-}(z), & |z| > \sqrt{\alpha}, \\
		\Xi_{\alpha,+}(z), & |z| < \sqrt{\alpha},
		\end{cases}
	\end{equation}
in both regimes.
	
The two functions are defined and analytic in
$\mathbb C \setminus ((-\infty,0] \cup \Sigma_0)$ in 
case $0 < \alpha \leq \frac{1}{9}$ and in
$\mathbb C \setminus ((-\infty,0] \cup 
\{ \sqrt{\alpha} e^{it} \mid -\pi \leq t \leq \theta_{\alpha}\}$
in case $\frac{1}{9} < \alpha \leq 1$.
The behavior at the singularities and at infinity can be seen
from \eqref{eq:Nphinearpoles} and the definitions \eqref{Phidef}-\eqref{Psidef}, namely for $(\xi,\eta) \in \mathcal H^o$, 
\begin{equation} \label{eq:NPhinearpoles}
\begin{aligned} 
\Phi_{\alpha}(z) & = -(1+\eta) \log z + O(1) \text{ as } z \to 0, &
\lim_{z \to 0} \Re \Phi_{\alpha}(z) & = +\infty, \\
\Phi_{\alpha}(z) & = \frac{1}{2}(1+\xi) \log (z+\alpha) + O(1) \text{ as } z \to -\alpha, &
\lim_{z \to -\alpha} \Re \Phi_{\alpha}(z) & = - \infty, \\
\Phi_{\alpha}(z) & = \frac{1}{2}(1+\xi) \log (z+1) + O(1) \text{ as } z \to -1, &
\lim_{z \to - 1} \Re \Phi_{\alpha}(z) & = -\infty, \\
\Phi_{\alpha}(z) & = (1+\xi-\eta) \log z + O(1) \text{ as } z \to \infty, &
\lim_{z \to \infty} \Re \Phi_{\alpha}(z) & = +\infty 
\end{aligned}
\end{equation} 
and similarly
$\Re \Psi_{\alpha}(z) \to -\infty$ as $z \to 0$ or $z \to \infty$, and
$\Re \Psi_{\alpha}(z) \to + \infty$ as $z \to -1$ or $z \to -\alpha$.
For the limits it is important that $(\xi,\eta) \in \mathcal H^o$
so that $-1 < \xi, \eta-\xi < 1$.

For each $(\xi, \eta) \in \mathcal{L}_\alpha$, the saddle $s(\xi,\eta;\alpha)$ defined in Definition \ref{def:liquid} is 
a zero of either $\Phi'_{\alpha}$  and $\Psi'_{\alpha}$.

\begin{lemma} \label{lem:Ldivision} 
	Let $(\xi,\eta) \in \mathcal L_{\alpha}$ and $s = s(\xi,\eta;\alpha)$. 
	Then we have 
	\begin{enumerate}
		\item[\rm (a)]
		$\Phi_{\alpha}'(s) = 0$ and $|s| < \sqrt{\alpha}$
		if and only if $\xi < 0$ and $\eta < \frac{\xi}{2}$, 
		\item[\rm (b)] 
		$\Phi_{\alpha}'(s)=0$ and $|s| > \sqrt{\alpha}$ if and only if
		$\xi < 0$ and $\eta > \frac{\xi}{2}$,
		\item[\rm (c)] 
		$\Psi_{\alpha}'(s)=0$ and $|s| < \sqrt{\alpha}$ if and only if
		$\xi > 0$ and $\eta > \frac{\xi}{2}$,
		\item[\rm (d)] 
		$\Psi_{\alpha}'(s)=0$ and $|s| >\sqrt{\alpha}$ if and only 
		if $\xi > 0$ and $\eta < \frac{\xi}{2}$,
		\item[\rm (e)] $|s| = \sqrt{\alpha}$ if and only if $\xi = 0$
		or $\eta = \frac{\xi}{2}$.
	\end{enumerate}
\end{lemma}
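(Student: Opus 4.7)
I prove (e) first, then (b) and (c); cases (a) and (d) follow from (c) and (b) by the symmetry $\Psi_\alpha(z;\xi,\eta)=-\Phi_\alpha(z;-\xi,-\eta)$. The key ingredients are the branch identification \eqref{eq:relationPhiXi} and the diffeomorphism structure from Propositions \ref{prop:lowtemp} and \ref{prop:hightemp}.

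\paragraph{Part (e).} Multiply $\Xi_{\alpha,\pm}(s)=0$ by $s$ and separate into real and imaginary parts, using $\Im(s/(s+a))=a\Im s/|s+a|^2$ and $\Re(s/(s+a))=1-a(\Re s+a)/|s+a|^2$ for $a\in\mathbb{R}$, to obtain
\begin{align*}
\pm\Im(sQ_\alpha^{1/2}(s)) & = -\tfrac{\xi\Im s}{2}\Bigl(\tfrac{1}{|s+1|^2}+\tfrac{\alpha}{|s+\alpha|^2}\Bigr),\\
\pm\Re(sQ_\alpha^{1/2}(s)) & = \eta-\tfrac{\xi}{2}\Re h(s),\qquad h(s):=\tfrac{s}{s+1}+\tfrac{s}{s+\alpha}.
\end{align*}
From $h(s)-1=\frac{s^2-\alpha}{(s+1)(s+\alpha)}$ a direct calculation shows $\Re h(s)=1$ whenever $|s|^2=\alpha$, and the second equation then becomes $\pm\Re(sQ_\alpha^{1/2}(s))=\eta-\xi/2$ on this circle. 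Lemma \ref{lem:trajectory} gives $s^2 Q_\alpha(s)>0$ on the trajectory $\Sigma_0$ and $s^2Q_\alpha(s)<0$ on the complementary orthogonal arc $\mathcal{C}$ (present only for $\alpha>1/9$), hence $sQ_\alpha^{1/2}(s)\in\mathbb{R}$ on $\Sigma_0\cap\mathbb{C}^+$ and $sQ_\alpha^{1/2}(s)\in i\mathbb{R}$ on $\mathcal{C}\cap\mathbb{C}^+$. With $\Im s>0$, the first case forces $\xi=0$ and the second forces $\eta=\xi/2$. For the converse, Proposition \ref{prop:hightemp} shows that $\{\eta=\xi/2\}\cap\mathcal{L}_\alpha$ is mapped bijectively onto $\mathcal{C}^+$, while the inverse formulas \eqref{eq:inverseLow}, \eqref{eq:inverseHigh} (in particular $\xi=\Im(sQ_\alpha^{1/2}(s))/\Im(\cdots)$) give the correspondence $\{\xi=0\}\cap\mathcal{L}_\alpha\leftrightarrow\Sigma_0\cap\mathbb{C}^+$. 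In the low temperature regime the case $\eta=\xi/2$ is vacuous, since $D_+(t,t/2)=\alpha t^2+\tfrac{1}{4}(1-\alpha)(1-9\alpha)>0$ for $0<\alpha<1/9$.

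\paragraph{Parts (b) and (c).} By Propositions \ref{prop:lowtemp}, \ref{prop:hightemp} together with \eqref{eq:defz1z2low}, \eqref{eq:defz1z2high}, the saddle satisfies $\Xi_{\alpha,+}(s;\xi,\eta)=0$ throughout $\mathcal{L}_\alpha^+$, and $\mathcal{L}_\alpha^+\subset\{\eta>\xi/2\}$ in both regimes: this is by definition for $\alpha>1/9$, and for $\alpha<1/9$ because the closed ellipse $\overline{\mathcal{L}_\alpha^+}$ is disjoint from $\{\eta=\xi/2\}$ (same discriminant computation as in (e)) while its centre $(0,\tfrac{1-3\alpha}{2(1-\alpha)})$ lies in $\{\eta>\xi/2\}$. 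Using (e), removing $\{\xi=0\}$ splits $\mathcal{L}_\alpha^+$ into two open connected pieces contained in $R_1:=\{\xi>0,\eta>\xi/2\}$ and $R_2:=\{\xi<0,\eta>\xi/2\}$ respectively; on each, $|s|$ is continuous and never equals $\sqrt{\alpha}$, so the sign of $|s|-\sqrt{\alpha}$ is constant. To fix the signs I use the boundary correspondence from Propositions \ref{prop:lowtemp}, \ref{prop:hightemp}: near the tangency $C_1$ the saddle tends to $0$, giving $|s|<\sqrt{\alpha}$, while near $D_1$ it tends to $\infty$, giving $|s|>\sqrt{\alpha}$; the explicit coordinates from Sections 2.5 and 2.6 yield $\xi(C_1)>0$ and $\xi(D_1)<0$. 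Therefore $\mathcal{L}_\alpha^+\cap R_1\subset\{|s|<\sqrt{\alpha}\}$ and $\mathcal{L}_\alpha^+\cap R_2\subset\{|s|>\sqrt{\alpha}\}$. Finally, \eqref{eq:relationPhiXi} translates $\Xi_{\alpha,+}(s)=0$ into $\Psi'_\alpha(s)=0$ on the first piece, yielding (c), and into $\Phi'_\alpha(s)=0$ on the second, yielding (b).

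\paragraph{Parts (a), (d) and the main obstacle.} The involution $(\xi,\eta)\mapsto(-\xi,-\eta)$ swaps $\Phi_\alpha\leftrightarrow-\Psi_\alpha$, $\mathcal{L}_\alpha^+\leftrightarrow\mathcal{L}_\alpha^-$, $R_1\leftrightarrow R_3$, $R_2\leftrightarrow R_4$, and preserves $|s|$ (from $\Xi_{\alpha,\pm}(s;\xi,\eta)=-\Xi_{\alpha,\mp}(s;-\xi,-\eta)$), so (c) implies (a) and (b) implies (d). The most delicate aspect of the whole argument is the sign and branch bookkeeping in \eqref{eq:relationPhiXi}: tracking which of $\Xi_{\alpha,+}$ or $\Xi_{\alpha,-}$ is vanishing at $s$ and matching it correctly to $\Phi'_\alpha$ or $\Psi'_\alpha$, depending on whether $|s|$ is less than or greater than $\sqrt{\alpha}$. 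Once this is set up carefully, the rest reduces to the continuity-and-symmetry argument above.
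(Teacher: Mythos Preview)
Your proof is correct and takes a genuinely different route from the paper's. The paper works directly from the inverse formulas \eqref{eq:inverseLow} and \eqref{eq:inverseHigh}: it reads off that $\xi$ has the same sign as $\mp\Im\frac{(s-z_+)(s-z_-)}{(s+1)(s+\alpha)}$ (low temperature) or $\mp\Im\bigl(sQ_\alpha(s)^{1/2}\bigr)$ (high temperature), and then checks in one stroke that this imaginary part has the sign of $|s|^2-\alpha$ for $\Im s>0$; combined with \eqref{eq:relationPhiXi} this settles all five cases simultaneously. You instead establish only the equality case (e) by direct computation and replace the sign determination for (a)--(d) by a connectedness argument: on each component of $\mathcal{L}_\alpha^+\setminus\{\xi=0\}$ the continuous function $|s|-\sqrt{\alpha}$ never vanishes (by the forward direction of (e)), so its sign is fixed by the boundary limits $s\to 0$ and $s\to\infty$ near the tangency points $C_1,D_1$. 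The paper's route is shorter and more self-contained; yours trades that small sign computation for structural input already supplied by Propositions \ref{prop:lowtemp} and \ref{prop:hightemp}, and makes the symmetry reduction $(\xi,\eta)\mapsto(-\xi,-\eta)$ explicit rather than implicit.
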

\begin{proof}
	We use the explicit inverses for the map $(\xi,\eta) \mapsto s(\xi,\eta;\alpha)$ given in \eqref{eq:inverseLow} and \eqref{eq:inverseHigh}. 
	
	Let us  consider the low temperature regime. 
	From the formula 
	\eqref{eq:defz1z2low} for $\Xi_{\alpha,\pm}$ 
	and \eqref{eq:DplusDminus} 	it follows that $s$ is 
	a zero of $\Xi_{\alpha,\pm}$ if and only if $D_{\pm}<0$, 
	and we note that the regions $D_{\pm}<0$ are contained 
	in the regions $\eta > \frac{\xi}{2}$ and $\eta < \frac{\xi}{2}$, respectively.
	Using \eqref{eq:positivedeterminant} and  \eqref{eq:inverseLow} we see that, in the low temperature regime, $\xi$ has the same sign as
	\begin{equation} \label{eq:signofxi}
	\mp \Im \frac{(s-z_+)(s-z_-)}{(s+ \alpha)(s+1)},
	\end{equation}
	with a $\mp$-sign if $s=s(\xi,\eta;\alpha)$ is a  zero of $\Xi_{\alpha,\pm}$. The imaginary part in \eqref{eq:signofxi} is positive if $|s|>\sqrt \alpha$, negative if $|s|< \sqrt{\alpha}$ and zero if $|s|= \sqrt{\alpha}$. Combining this with  
	\eqref{eq:relationPhiXi} the statements of the lemma follow in the
	low temperature regime.  
	
	For the high temperature regime, we  use Proposition \ref{prop:hightemp}
	and the proof is analogous to the proof in the low temperature regime, but now \eqref{eq:signofxi} is replaced by $\mp \Im s Q_\alpha(s)^{\frac12}$,
	with the same choice of branch for the square root as in \eqref{eq:inverseHigh}, i.e., $Q_\alpha(s)^{\frac12}$ has a branch on $\mathcal C$. 
\end{proof}

\subsection{Critical level set of $\Re \Phi_{\alpha}$} 
 \label{subsec:levelRePhi} 
In what follows we focus on the case (a) of Lemma \ref{lem:Ldivision},
namely $(\xi,\eta) \in \mathcal L_{\alpha}$ with $\eta < \frac{\xi}{2} < 0$,
and its extension $\eta = \frac{\xi}{2} < 0$, 
which is the lower left part of the liquid region. The corresponding
saddle $s = s(\xi,\eta;\alpha)$ satisfies $\Phi'_{\alpha}(s)=0$
and $|s| < \sqrt{\alpha}$ if $\eta < \frac{\xi}{2}$. For $\eta = \frac{\xi}{2} < 0$
(which is only relevant in the high temperature regime) we have $|s| = \sqrt{\alpha}$ with
$\theta_{\alpha} < \arg s < \pi$, and we still have $\Phi'_{\alpha}(s) = 0$.

We are interested in the level set of $\Re \Phi_{\alpha}$ that contains $s$,
\begin{equation} \label{eq:NPhi}
 \mathcal N_{\Phi} = \{ z \in \mathbb C \mid
	\Re \Phi_{\alpha}(z) = \Re \Phi_{\alpha}(s) \}.
	\end{equation}
	We emphasize that $\Phi_{\alpha}$ has a branch cut along
	$\Sigma_0$. However $\Re \Phi_{\alpha}$ is well-defined and
	continuous, also on $\Sigma_0$.

Typical behaviors of $\mathcal N_{\Phi}$ are shown in Figures
\ref{fig: level zero curve in L Phi ext high 2},
\ref{fig: level zero curve in L Phi ext high} and
\ref{fig: level zero curve in L Phi ext low}.
The level set $\mathcal N_{\Phi}$ makes a cross locally at $s$
since it is a simple saddle. Four curves emanate from $s$
that are denoted by $\Gamma_1$, \ldots, $\Gamma_4$ in the figures.

It is important for us that three of these curves stay
inside $\Sigma_0$ (in low temperature regime) or inside
$\Sigma_0 \cup \Sigma_{-1}$ and connect $s$ with $\overline{s}$.
Only one of them (denoted by $\Gamma_4$ in the figures)
meets with either $\Sigma_0$ or $\Sigma_0 \cup \Sigma_{-1}$.
	
To be able to prove this we need information on the behavior
of the two functions $z \mapsto \log |z|$
and $z \mapsto \log \left| \frac{(z+1)(z+\alpha)}{z} \right|$
on $\Sigma_{-1} \cup \Sigma_{0}$. We start with a lemma.
\begin{lemma} \label{lem:zQreal} 
We have the following for $0 < \alpha \leq 1$,
\begin{enumerate}
\item[\rm (a)] $z^2 Q_{\alpha}(z) \in [0,\infty)$ if and only if
$z \in \Sigma_0 \cup \mathbb R \setminus \{-1,-\alpha\}$.
\item[\rm (b)] If $\alpha \leq \frac{1}{9}$ then 
$\Im \left[\frac{z^2-\alpha}{(z-z_+)(z-z_-)}\right] > 0$
for $ z \in \mathbb C^+$. 
\item[\rm (c)] If $\alpha > \frac{1}{9}$ then
\[ \frac{(z-z_+)(z-z_-)}{(z- \sqrt{\alpha})^2} \in (0,\infty) \]
if and only if $z \neq \sqrt{\alpha}$ and $z \in \left(\gamma_0 \setminus \Sigma_0 \right) \cup \mathbb R$.
\end{enumerate}
\end{lemma}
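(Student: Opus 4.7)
The plan for parts (a) and (c) rests on two symmetries: each rational function has real coefficients (so $\overline{f(z)}=f(\bar z)$) and each is invariant under the M\"obius involution $\tau:z\mapsto\alpha/z$, a consequence of $z_+z_-=\alpha$ (valid by both \eqref{eq:zpmhigh} and \eqref{eq:zpmlow}). The fixed loci of $z\mapsto\bar z$ and $z\mapsto\alpha/\bar z$ are $\mathbb R$ and $\{|z|=\sqrt{\alpha}\}$, which is why both curves must lie in the real locus. The verifications of $\tau$-invariance reduce to the identities $\alpha/z-z_\pm=-z_\pm(z-z_\mp)/z$, $\alpha/z\pm\sqrt{\alpha}=\pm\sqrt{\alpha}(z\pm\sqrt{\alpha})/z$, and $(\alpha/z+1)(\alpha/z+\alpha)=\alpha(z+1)(z+\alpha)/z^2$.

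For part (c) I would set $g(z)=(z-z_+)(z-z_-)/(z-\sqrt{\alpha})^2$; the identities above give $\tau$-invariance immediately. Since $g$ has degree $2$, every fibre is $\{z,\alpha/z\}$, so $g(z)\in\mathbb R$ iff $g(\bar z)=g(z)$ iff $\bar z\in\{z,\alpha/z\}$ iff $z\in\mathbb R\cup\{|z|=\sqrt{\alpha}\}$. For the sign, $g(x)=|x-z_+|^2/(x-\sqrt{\alpha})^2>0$ on $\mathbb R\setminus\{\sqrt{\alpha}\}$; on the circle, inserting $z=\sqrt{\alpha}e^{it}$ and using $e^{it}-e^{\pm i\theta_\alpha}=2ie^{i(t\pm\theta_\alpha)/2}\sin((t\mp\theta_\alpha)/2)$ yields
\[ g(\sqrt{\alpha}e^{it})=\frac{\sin\tfrac{t-\theta_\alpha}{2}\sin\tfrac{t+\theta_\alpha}{2}}{\sin^2(t/2)}, \]
which is positive for $\theta_\alpha<|t|<\pi$ (i.e.\ $z\in\gamma_0\setminus\Sigma_0$) and negative for $|t|<\theta_\alpha$ (i.e.\ $z\in\Sigma_0\setminus\{\sqrt{\alpha}\}$).

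Part (a) will follow the same template. In the low temperature regime $z^2Q_\alpha(z)=G(z)^2$ with $G(z)=(z-z_+)(z-z_-)/[(z+1)(z+\alpha)]$ a degree-$2$ $\tau$-invariant real rational function, so the (c)-argument applied to $G$ locates the real locus on $\mathbb R\cup\{|z|=\sqrt{\alpha}\}$, on which $G^2\geq 0$ throughout. In the high temperature regime $h(z):=z^2Q_\alpha(z)$ has degree $4$, but $\tau$-invariance yields the factorization $h=H\circ\pi$ with $\pi(z)=z+\alpha/z$ and $H(w)=(w+2\sqrt{\alpha})(w-\sigma)/(w+1+\alpha)^2$, $\sigma=z_++z_-$. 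Since $\pi^{-1}(\mathbb R)=\mathbb R\cup\{|z|=\sqrt{\alpha}\}$, it suffices to prove that $H(u)\in[0,\infty)$ forces $u\in\mathbb R$. Shifting to $v=w+1+\alpha$ and setting $a=(1-\sqrt{\alpha})^2$, $b=(1+\sqrt{\alpha})^2/4$, one finds $H=(v-a)(v-b)/v^2$; solving $\Im H=0$ splits into $v\in\mathbb R$ or $v$ on an auxiliary circle $C$ through the origin, and the parametrization $v=\frac{2ab\cos\theta}{a+b}e^{i\theta}$ of $C$ gives $H|_C=1-(a+b)^2/(4ab\cos^2\theta)$. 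The main obstacle is the exclusion of this extra circle $C$, but the strict AM--GM inequality $(a+b)^2>4ab$ (strict because $a\neq b$ for $\alpha>1/9$) yields $H|_C<0$, so $C$ contributes nothing to $[0,\infty)$. Positivity on $\mathbb R\cup\Sigma_0$ and negativity on $\gamma_0\setminus\Sigma_0$ then follow either from direct inspection or from \eqref{eq:zQzhigh}--\eqref{eq:zQzlow} already computed in Lemma~\ref{lem:trajectory}.

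Part (b) is a direct computation. With $p(z)=(z-z_+)(z-z_-)=z^2+\tfrac{1+3\alpha}{2}z+\alpha$, $q(z)=z^2-\alpha$, and $r:=q-p=-\tfrac{1+3\alpha}{2}z-2\alpha$, the identity $q\bar p-\bar q p=r\bar p-\bar r p$ expands for $z=x+iy$ to
\[ \Im\frac{q(z)}{p(z)}=\frac{y\bigl[(1+3\alpha)|z|^2+8\alpha x+\alpha(1+3\alpha)\bigr]}{2|p(z)|^2}. \]
Completing the square in $x$ rewrites the bracket as $(1+3\alpha)(x+\tfrac{4\alpha}{1+3\alpha})^2+(1+3\alpha)y^2+\tfrac{\alpha(1-\alpha)(1-9\alpha)}{1+3\alpha}$; the constant is non-negative precisely because $(1-\alpha)(1-9\alpha)\geq 0$ for $\alpha\leq 1/9$, and the $y^2$-term is strictly positive for $y>0$, so $\Im[q/p]>0$ on $\mathbb C^+$.
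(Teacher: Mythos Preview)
Your argument is correct, and it takes a genuinely different route from the paper in parts (a) and (c). The paper proves (a) by a root-counting argument: for each $x\ge 0$ it analyses the graph of $z\mapsto z^2Q_\alpha(z)$ on $\mathbb R$, invokes the circle computations \eqref{eq:zQzhigh}--\eqref{eq:zQzlow} to locate any missing roots on $\Sigma_0$, and concludes by matching the total count with the degree~$4$; part (c) then refers back to this same mechanism. You instead exploit the $\tau$-invariance structurally: since the relevant rational maps are $\tau$-invariant with real coefficients, their real loci must lie in the union of the fixed sets of $z\mapsto\bar z$ and $z\mapsto\alpha/\bar z$, namely $\mathbb R\cup\gamma_0$. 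For the degree-$2$ maps in (c) and in low-temperature (a) this is immediate; for high-temperature (a) you factor through $\pi(z)=z+\alpha/z$ and reduce to a degree-$2$ map $H$, whose spurious real branch (the circle $C$) you rule out by the AM--GM inequality $a\neq b$ for $\alpha>\tfrac19$. This is a cleaner, more conceptual packaging of why exactly $\mathbb R\cup\gamma_0$ appears, at the cost of the extra $H$-analysis in the high-temperature case. For part (b) the situation is reversed: the paper gives a one-line argument by writing the reciprocal $\frac{(z-z_+)(z-z_-)}{z^2-\alpha}=1+\frac{A}{z+\sqrt\alpha}+\frac{B}{z-\sqrt\alpha}$ with $A,B\ge 0$ (from the interlacing $z_-\le -\sqrt\alpha\le z_+<\sqrt\alpha$), each term of which has nonpositive imaginary part in $\mathbb C^+$; your brute-force expansion and completion of the square reaches the same conclusion but less transparently.
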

\begin{proof}
(a) We consider the case $0 < \alpha < 1$. Observe that 
$z^2 Q_{\alpha}(z)$ tends
to $1$ as $z \to \infty$, and there are no sign changes on the
real line. Thus $z^2 Q_{\alpha}(z) \geq 0$ for real values of $z$,
with double poles at $z=-1$ and $z=-\alpha$, and a local minimum 
at $z= \sqrt{\alpha}$. There is a  minimum at 
$z=-\sqrt{\alpha}$ in case $\alpha \geq \frac{1}{9}$,
and a local maximum at $z=-\sqrt{\alpha}$ in case $\alpha < \frac{1}{9}$. In the latter case there are local minima at
$z= z_{\pm}$. It can be verified that
\[ 0 \leq \alpha Q_{\alpha}(-\sqrt{\alpha}) <
	\alpha Q_{\alpha}(\sqrt{\alpha}) < 1. \]

From an inspection of the graph, it follows that for any $x > 
\alpha Q_{\alpha}(\sqrt{\alpha})$, $x \neq 1$, there
are four real solutions to the equation
 \begin{equation} \label{eq:z2Qzisx}  
z^2 Q_{\alpha}(z) = x. 
 \end{equation}
For $x = 1$ there are three real solutions and a solution at infinity,
while for $\alpha Q_{\alpha}(-\sqrt{\alpha})< x < \alpha Q_{\alpha}(\sqrt{\alpha})$  there are two real solutions. 
If $\alpha \leq \frac{1}{9}$, there are again four real solutions (counting multiplicities) 
for each $0 \leq x \leq \alpha Q_{\alpha}(-\sqrt{\alpha})$.

To summarize, \eqref{eq:z2Qzisx} with $x \geq 0$ admits four solutions
in $\mathbb R \cup \{\infty\}$ except in the following cases:
\begin{equation} \label{only two zeros}
	\begin{cases} 0 \leq x < \alpha Q_{\alpha}(\sqrt{\alpha}), 
	& \text{ and } \frac{1}{9}<\alpha < 1, \\
	\alpha Q_{\alpha}(-\sqrt{\alpha}) < x < \alpha Q_{\alpha}(\sqrt{\alpha}), & \text{ and } 0<\alpha  \leq \frac{1}{9}. \end{cases}
\end{equation}
and in the cases \eqref{only two zeros} there are only two
real solutions.

On the other hand, the calculations \eqref{eq:zQzhigh} and \eqref{eq:zQzlow}
in the proof of Lemma \ref{lem:trajectory} tell us that
$z^2 Q_{\alpha}(z)$ is also real and positive for $z \in \Sigma_0$. For $\frac{1}{9}\leq \alpha <1$, the function decreases from $\alpha Q_{\alpha}(\sqrt{\alpha})$ to $0$
if $z$ moves over $\Sigma_0$ from $\sqrt{\alpha}$ to either $z_+$
or $z_-$. Similarly, for $0 < \alpha \leq \frac{1}{9}$, the function decreases from $\alpha Q_{\alpha}(\sqrt{\alpha})$ to $\alpha Q_{\alpha}(-\sqrt{\alpha})$
if $z$ moves over $\Sigma_0$ from $\sqrt{\alpha}$ to $-\sqrt{\alpha}$ in  either the lower or upper half plane. It means that the equation \eqref{eq:z2Qzisx}
has two additional solutions on $\Sigma_0$ precisely for the cases specified in \eqref{only two zeros}.

Since \eqref{eq:z2Qzisx} is a polynomial equation  of
degree four (if we multiply it through by the denominator)
if $x \neq 1$ and of degree three if $x =1$,
there are four solutions for every $x$, where we include
the solution $\infty$ in case $x=1$.
 For $x \geq 0$ we
found four solutions in $\Sigma_0 \cup \left(\mathbb R \setminus \{-1,-\alpha\} \right) \cup \{\infty\}$, 
and thus
there are no other solutions in the complex plane.
This proves part (a) for $0<\alpha <1$. The proof for $\alpha = 1$ is similar and easier, and we omit it.

\medskip
(b) For $0 < \alpha < \frac{1}{9}$ we have inequalities 
$z_- < - \sqrt{\alpha} < z_+ < \sqrt{\alpha}$
 between the zeros and the poles and therefore
 \begin{equation} \label{eq:Qpflow}
 	 \frac{(z-z_+)(z-z_-)}{z^2-\alpha}
 	= 1 + \frac{A}{z+\sqrt{\alpha}} + \frac{B}{z-\sqrt{\alpha}} 
 	\end{equation}
 with $A, B > 0$. Then $\Im \frac{(z-z_+)(z-z_-)}{z^2-\alpha}
 	< 0$ for $\Im z >0 $.
 In case $\alpha = \frac{1}{9}$ we have
 \eqref{eq:Qpflow} with $A = 0$ and $B>0$ and again 
 $\Im \frac{(z-z_+)(z-z_-)}{z^2-\alpha}
 	< 0$ for $\Im z >0$. This gives (b). 

\medskip
(c) If $z = \sqrt{\alpha} e^{it}$ then (where we recall $z_{\pm}
	= \sqrt{\alpha} e^{i \theta_{\alpha}}$)
\begin{align} \label{eq:Qpfhigh}
	\frac{(z-z_+)(z-z_-)}{(z-\sqrt{\alpha})^2}
		& = \frac{(e^{it}- e^{i\theta_{\alpha}})(e^{it}-e^{-i\theta_{\alpha}})}{(e^{it}-1)^2} \\ \nonumber
		& =  \frac{\cos \theta_{\alpha} - \cos t}{1-\cos t},
\end{align}
	which is in $(0,\frac{1+\cos \theta_{\alpha}}{2}]$ for $ \theta_{\alpha} < |t| \leq \pi$.
The rational function in the left-hand side of \eqref{eq:Qpfhigh} 
is also real and positive for real $z$, $z \neq \sqrt{\alpha}$, and admits a minimum at $z = -\sqrt{\alpha}$. Then, with an argument similar to the one we used to prove part (a),
we check that these are the only  $z$ for which \eqref{eq:Qpfhigh}
is in $(0,\infty)$.   This proves part (c).
\end{proof}

\begin{lemma} \label{lem:LogsIncSigmaMinus1} 
If $z$ moves along $(\Sigma_{-1} \cup \Sigma_0) \cap \mathbb C^+$  
from left to right, then 
\begin{enumerate}
	\item[\rm (a)] $z \to \log |z|$ is strictly decreasing
		on $\Sigma_{-1} \cap \mathbb C^+$ and constant on
		$\Sigma_0 \cap \mathbb C^+$,
	\item[\rm (b)] $z \to \log \left|\frac{(z+1)(z+\alpha)}{z}\right|$ is stricly increasing.
	\end{enumerate}
\end{lemma}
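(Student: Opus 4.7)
The plan is to treat the two pieces $\Sigma_0\cap\mathbb{C}^+$ and $\Sigma_{-1}\cap\mathbb{C}^+$ separately. On the former, the parametrization $z=\sqrt{\alpha}e^{it}$ trivializes everything: $|z|=\sqrt{\alpha}$ is constant (giving the $\Sigma_0$ half of (a)), and a short direct computation gives $|(z+1)(z+\alpha)/z|=1+\alpha+2\sqrt{\alpha}\cos t$, which is strictly monotone in $\Re z=\sqrt{\alpha}\cos t$ and yields (b) on this piece with the correct direction.

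On $\Sigma_{-1}\cap\mathbb{C}^+$, since $\Re\phi\equiv 0$ there, I would parametrize by $t$ with $\phi(z(t))=it$, so that $z'(t)=i/\phi'(z(t))=i/Q_\alpha^{1/2}(z(t))$. For part (a) this gives $(z'/z)^2=-1/(z^2Q_\alpha(z))$; by Lemma~\ref{lem:zQreal}(a) the quantity $z^2Q_\alpha(z)$ lies in $[0,\infty)$ only on $\Sigma_0\cup(\mathbb{R}\setminus\{-1,-\alpha\})$, a set disjoint from $\Sigma_{-1}\cap\mathbb{C}^+$, so $(z'/z)^2\notin(-\infty,0]$ and $\frac{d}{dt}\log|z|=\Re(z'/z)$ is nowhere zero on the connected arc. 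For part (b) I would use the identity $d\log\bigl((z+1)(z+\alpha)/z\bigr)=h(z)\,dz$ with $h(z)=\frac{z^2-\alpha}{z(z+1)(z+\alpha)}$, giving $\frac{d}{dt}\log|(z+1)(z+\alpha)/z|=-\Im(h/Q_\alpha^{1/2})$. In the low-temperature regime $h/Q_\alpha^{1/2}=(z^2-\alpha)/((z-z_+)(z-z_-))$ has strictly positive imaginary part on $\mathbb{C}^+$ by Lemma~\ref{lem:zQreal}(b), so the derivative is nonzero. In the high-temperature regime $(h/Q_\alpha^{1/2})^2=(z-\sqrt{\alpha})^2/((z-z_+)(z-z_-))$ fails to lie in $(0,\infty)$ off $\gamma_0\cup\mathbb{R}$ by Lemma~\ref{lem:zQreal}(c), and substituting $z=\sqrt{\alpha}e^{it}$ on $\Sigma_0$ shows this same quantity lies in $(-\infty,0)$ there; hence $(h/Q_\alpha^{1/2})^2$ is not real anywhere on $\Sigma_{-1}\cap\mathbb{C}^+$, and strict monotonicity on the connected arc follows.

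Having established strict monotonicity, the direction is pinned down by evaluating at the endpoints of $\Sigma_{-1}\cap\mathbb{C}^+$: the leftmost endpoint $x_1\in(-\infty,-1)$ satisfies $|x_1|>1$, whereas the rightmost endpoint (either $x_2\in(-1,-\sqrt{\alpha})$ in the low-temperature regime or $z_+$ with $|z_+|=\sqrt{\alpha}$ in the high-temperature regime) has smaller modulus, so $\log|z|$ strictly decreases from left to right; an analogous endpoint comparison using the defining relation $\Re\phi(x)=0$ at the real endpoints yields the claimed strict increase of $\log|(z+1)(z+\alpha)/z|$. The main obstacle is the high-temperature case of part (b): Lemma~\ref{lem:zQreal}(c) only excludes $(h/Q_\alpha^{1/2})^2\in(0,\infty)$, so the negative-real case must be handled separately via the direct substitution on $\Sigma_0$ indicated above, combined with the fact that $\Sigma_{-1}\cap\mathbb{C}^+$ is disjoint from $\Sigma_0\cup\mathbb{R}$.
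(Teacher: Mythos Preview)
Your approach is essentially the same as the paper's: parametrize $\Sigma_{-1}\cap\mathbb C^+$ using the fact that it is a trajectory of $Q_\alpha(z)\,dz^2$ (equivalently, $\Re\phi\equiv 0$), compute $\frac{d}{dt}\log|z|$ and $\frac{d}{dt}\log\bigl|\tfrac{(z+1)(z+\alpha)}{z}\bigr|$ in terms of $Q_\alpha^{1/2}$, and invoke Lemma~\ref{lem:zQreal}. Part (a) and the $\Sigma_0$ pieces are handled correctly.

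There is, however, a genuine gap in your sign determination for part (b) on $\Sigma_{-1}$. You only show that the derivative is nowhere zero and then assert that ``an analogous endpoint comparison using $\Re\phi(x)=0$'' gives the direction. This is not substantiated. In the low-temperature regime, for instance, writing $\Re\phi(x)=\log|x|+\tfrac12\log|x+1|-\tfrac12\log|x+\alpha|+C$ and using $\Re\phi(x_1)=\Re\phi(x_2)$ one finds
\[
\log\Bigl|\tfrac{(x_2+1)(x_2+\alpha)}{x_2}\Bigr|-\log\Bigl|\tfrac{(x_1+1)(x_1+\alpha)}{x_1}\Bigr|
=\tfrac32\log\tfrac{|x_2+1|}{|x_1+1|}+\tfrac12\log\tfrac{|x_2+\alpha|}{|x_1+\alpha|},
\]
and since $|x_2+\alpha|<1-\alpha<|x_1+\alpha|$ while the first ratio is unconstrained, the sign is not evident. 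A similar difficulty arises at the endpoint $z_+$ in the high-temperature case. The paper sidesteps this entirely: it fixes the parametrization direction so that $z'(t)Q_\alpha(z(t))^{1/2}=-i\psi(t)$ with $\psi(t)>0$, and then pins down the \emph{sign} of $\Im\bigl[h(z)/Q_\alpha(z)^{1/2}\bigr]$ directly---in the low-temperature case this is precisely the content of Lemma~\ref{lem:zQreal}(b), and in the high-temperature case one observes that the imaginary part is nonzero and hence of constant sign on each component of $\mathbb C^+\setminus\gamma_0$, then checks the sign on the outer component (e.g.\ by expanding as $z\to i\infty$). You already have all the ingredients for this; it would replace the unverified endpoint claim.

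One minor point: your concern about the ``negative-real case'' in the high-temperature regime is unnecessary. You want $\Im\bigl(h/Q_\alpha^{1/2}\bigr)\neq 0$, i.e.\ $h/Q_\alpha^{1/2}\notin\mathbb R$, which is equivalent to $(h/Q_\alpha^{1/2})^2\notin[0,\infty)$. Lemma~\ref{lem:zQreal}(c) already excludes $(0,\infty)$ off $(\gamma_0\setminus\Sigma_0)\cup\mathbb R$, and the only zero is at $\sqrt\alpha$, which is not on $\Sigma_{-1}\cap\mathbb C^+$; negative reals need no separate treatment.
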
 
\begin{proof}
(a) It is clear that $\log |z|$ is constant on the circle $\Sigma_0$.

Let $z=z(t)$, $t \in [0,1]$, be a smooth parametrization of 
$\Sigma_{-1} \cap 
\overline{\mathbb C^+}$ such that $z(0) = x_1$ and $z(1) = x_2$
(in the low temperature case) or $z(1) = z_+(\alpha)$ (in the high temperature
case). Since $\Sigma_{-1}$ is a trajectory of the quadratic differential, $z'(t) Q_{\alpha}(z(t))^{1/2}$ is purely imaginary, and
with our choice of square root, and parametrization of $\Sigma_{-1}$,
we have
\begin{equation} \label{eq:zprimeonSigma-1} 
	z'(t) Q_{\alpha}(z(t))^{1/2} = - i \psi(t),  \qquad
	\text{with } \psi(t) > 0. \end{equation}
	Then with $z = z(t)$, $0 < t < 1$,
	\begin{align} \nonumber
	\frac{d}{dt} \log |z(t)|
	& = \frac{d}{dt} \Re \log (z(t)) 
	 = \Re \left[ \frac{z'(t)}{z(t)} \right] \\
	& = \Re \left[ \frac{- i\psi(t)}{z Q_{\alpha}(z)^{1/2}} \right] 
	=  \psi(t) \Im \left[ \frac{1}{z Q_{\alpha}(z)^{1/2}} \right].
	\label{eq:logzonSigma} 
	\end{align}

By part (a) of Lemma \ref{lem:zQreal},
 $z Q_{\alpha}(z)^{1/2} \not\in \mathbb R$ for $z \in \mathbb C^+ \setminus \Sigma_0$, and by our choice of square root we have
 $\Im \left[ z Q_{\alpha}(z)^{1/2} \right] > 0$
 for $z \in \mathbb C^+ \setminus \Sigma_0$ (this can be seen from example from an expansion of $z Q_{\alpha}(z)^{1/2}$ as $z \to i \infty$), and 
 in particular for $z \in \Sigma_{-1} \cap \mathbb C^+$.
 Then $\Im \left[ \frac{1}{z Q_{\alpha}(z)^{1/2}} \right] < 0$,
 and we find from \eqref{eq:logzonSigma} with $\psi(t) > 0$
 that
 $\frac{d}{dt} \log |z(t)| < 0$. This proves part (a).
 
 \medskip

(b)
Let $z(t)$, $t \in [0,1]$ be a smooth parametrization of
$\Sigma_{-1} \cap \mathbb C^+$ as in the proof of part (a).
Let $\psi(t) > 0$ be as in \eqref{eq:zprimeonSigma-1}.
Then  with $z=z(t)$,
\begin{align} \nonumber
\frac{d}{dt}\log \left|\frac{(z(t)+1)(z(t)+\alpha)}{z(t)}\right| 
	  &=\Re \left[  \left(\frac{1}{z+1} + \frac{1}{z+ \alpha}-\frac{1}{z} 	\right) z'(t)  \right] \\ \label{eq:logonSigma2}
	& \hspace{-1.1cm} = \psi(t) \Im \left[ \left(\frac{z^2-\alpha}{z(z+1)(z+ \alpha)}\right)
	\frac{1}{Q_{\alpha}(z)^{1/2}} \right].
	  \end{align}
	  
If $0 < \alpha \leq \frac{1}{9}$, then
\[ \left(\frac{z^2-\alpha}{z(z+1)(z+ \alpha)}\right)
	\frac{1}{Q_{\alpha}(z)^{1/2}}
	=  \frac{z^2-\alpha}{(z-z_+)(z-z_-)} \]
and this has positive imaginary part for 
$z \in \Sigma_{-1} \cap \mathbb C^+$ 
by part (b) of Lemma~\ref{lem:zQreal}.

If $\frac{1}{9} < \alpha \leq 1$ then
\[ \left(\frac{z^2-\alpha}{z(z+1)(z+ \alpha)}\right)
	\frac{1}{Q_{\alpha}(z)^{1/2}}
	=  \frac{z- \sqrt{\alpha}}{((z-z_+)(z-z_-))^{1/2}}. \]
By part (c) of Lemma \ref{lem:zQreal}, this cannot be real
for $z \in \mathbb C^+ \setminus \{\sqrt{\alpha} e^{it} \mid	
\theta_{\alpha} \leq |t| \leq \pi \}$, since otherwise its
square would be $>0$ and that would contradict the
statement of the lemma. It follows that the sign of
its imaginary part is piecewise 
constant on $\mathbb C^+ \setminus \gamma_{0}$ (recall that $Q_{\alpha}(z)^{1/2}$ is discontinuous along $\Sigma_{0}$). It is
in fact $> 0$ on the outer component, and this
includes $(\Sigma_{-1} \setminus \{z_+\}) \cap \mathbb C^+$.

Thus in both cases we find that \eqref{eq:logonSigma2}
is positive for $0 < t < 1$, and therefore $z \mapsto 
\log \left|\frac{(z+1)(z+\alpha)}{z}\right|$ increases
along $\Sigma_{-1} \cap \mathbb C^+$ as claimed in part (b). 

The increase along $\Sigma_0 \cap \mathbb C^+$ is immediate, since
both $z \mapsto |z+1|$ and $z \mapsto |z+\alpha|$ are strictly 
increasing if $z$ moves
along the circle $\Sigma_0$ from $-\sqrt{\alpha}$ to $\sqrt{\alpha}$,
while $z \mapsto |z|$ is constant.
\end{proof}

\begin{corollary} \label{cor:RePhi} Suppose $\eta \leq \frac{\xi}{2} < 0$. Then
	$z \mapsto \Re \Phi_{\alpha}(z)$ is strictly decreasing
	as $z$ traverses $(\Sigma_{-1} \cup \Sigma_0) \cap \mathbb C^+$
	from left to right.
\end{corollary}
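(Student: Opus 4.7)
The plan rests on two inputs already at hand: the identification of $\Sigma_{-1} \cup \Sigma_0$ as part of the zero level set $\mathcal{N}_\phi$ (Lemmas \ref{lem:Nphihigh} and \ref{lem:Nphi}), together with the monotonicity statements in Lemma \ref{lem:LogsIncSigmaMinus1}. The first step is to observe that since $\Re\phi \equiv 0$ on $\mathcal{N}_\phi$ and $\Sigma_{-1}\cup\Sigma_0 \subset \mathcal{N}_\phi$, the representation \eqref{Phidef} of $\Phi_\alpha$ reduces on these arcs to the purely real expression
\[
\Re \Phi_\alpha(z) \;=\; \frac{\xi}{2}\,\log\bigl|(z+1)(z+\alpha)\bigr| \;-\; \eta\,\log|z|.
\]

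Next, I would regroup this combination so that each piece matches precisely one of the two monotonicity statements of Lemma \ref{lem:LogsIncSigmaMinus1}:
\[
\Re \Phi_\alpha(z) \;=\; \frac{\xi}{2}\,\log\!\left|\frac{(z+1)(z+\alpha)}{z}\right| \;+\; \Bigl(\tfrac{\xi}{2}-\eta\Bigr)\,\log|z|.
\]
The hypothesis $\eta \leq \xi/2 < 0$ gives the two sign conditions $\xi/2 < 0$ and $\xi/2 - \eta \geq 0$. By Lemma \ref{lem:LogsIncSigmaMinus1}(b), the function $z\mapsto \log\bigl|\tfrac{(z+1)(z+\alpha)}{z}\bigr|$ is strictly increasing along $(\Sigma_{-1}\cup\Sigma_0)\cap\mathbb{C}^+$, so multiplying by the negative constant $\xi/2$ produces a strictly decreasing contribution. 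By Lemma \ref{lem:LogsIncSigmaMinus1}(a), the function $z\mapsto \log|z|$ is strictly decreasing on $\Sigma_{-1}\cap\mathbb{C}^+$ and constant on $\Sigma_0\cap\mathbb{C}^+$, so multiplying by the non-negative constant $\xi/2-\eta$ produces a non-increasing contribution (identically zero if equality $\eta=\xi/2$ holds). The sum is therefore strictly decreasing on each of the two arcs.

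I do not anticipate any real obstacle: the delicate analytic content has been isolated into Lemmas \ref{lem:LogsIncSigmaMinus1} and the characterization of $\mathcal{N}_\phi$, and the corollary becomes a two-line bookkeeping step once one notices that $\phi$ drops out on its own critical level set. The only minor point to address is what happens at the junction of the two arcs: in the high temperature regime $\Sigma_{-1}$ and $\Sigma_0$ meet at $z_+(\alpha)$, and since $\Re\Phi_\alpha$ is continuous there, strict decrease on each piece transfers to strict decrease across the junction; in the low temperature regime the arcs are disjoint closed curves and the statement is read component by component.
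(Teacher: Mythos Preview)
Your proof is correct and follows essentially the same approach as the paper: reduce $\Re\Phi_\alpha$ on $\Sigma_{-1}\cup\Sigma_0$ via $\Re\phi=0$, regroup into the two terms $\tfrac{\xi}{2}\log\bigl|\tfrac{(z+1)(z+\alpha)}{z}\bigr|$ and $(\tfrac{\xi}{2}-\eta)\log|z|$, and apply Lemma~\ref{lem:LogsIncSigmaMinus1} together with the sign conditions $\xi<0$ and $\tfrac{\xi}{2}-\eta\geq 0$. Your extra remarks on continuity at the junction point $z_+(\alpha)$ and on the disjointness in the low temperature regime are reasonable clarifications but not essential to the argument.
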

\begin{proof}
	Indeed, from the definition \eqref{Phidef}
	and the fact that $\Re \phi = 0$ on $\Sigma_{-1}$ and $\Sigma_0$,
	we obtain for $z \in \Sigma_{-1} \cup \Sigma_0$,
	\begin{align} \Re \Phi_{\alpha}(z) & = \frac{\xi}{2} \log |(z+1)(z+\alpha)|
	- \eta \log |z| \\
	& = \frac{\xi}{2} \log \left|\frac{(z+1)(z+\alpha)}{z} \right|	
	+ \left( \frac{\xi}{2} - \eta \right) \log|z|,
	\label{eq:RePhionSigma}
	\end{align}
	and by Lemma \ref{lem:LogsIncSigmaMinus1} the sum at the right-hand-side of \eqref{eq:RePhionSigma} 
	is strictly decreasing since 
	$\xi < 0$ and $\frac{\xi}{2} - \eta \geq  0$.
\end{proof}

\begin{figure}[t]
	\begin{center}
		\begin{tikzpicture}[master,scale = 1.3,every node/.style={scale=1.3}]
		\node at (0,0) {\includegraphics[width=9cm]{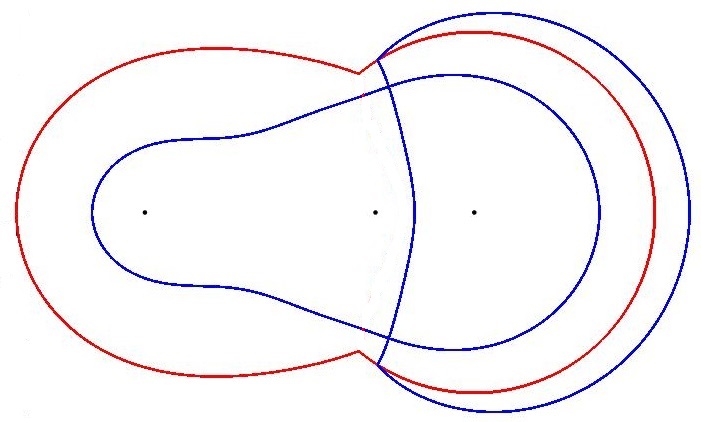}};
		\node at (1.8, -0.7) {$+$};
		\node at (3.6, -0.7) {$-$};
		\node at (-2,1.5) {$+$};
		\node at (-1.6,-0.3) {$-$};
		\node at (-0.5,0.9) {\small $\Gamma_1$};
		\node at (1.0,0.7) {\small $\Gamma_2$};
		\node at (2.3,1.1) {\small $\Gamma_3$};
		\node at (3.5,2.15) {\small $\Gamma_4$};
		
		\node at (-3.55,-0) {\small $p_1$ \tiny $\bullet$};
		\node at (-2.6,-0.2) {\tiny $-1$};
		\node at (0.43,-0.2) {\tiny $-\alpha$};
		\node at (1.05,-0) {\tiny $\bullet$ \small $p_2$};
		\node at (1.6,-0.2) {\tiny $0$};
		\node at (3,-0) {\small $p_3$ \tiny $\bullet$};
		\node at (4.58,-0) {\tiny $\bullet$ \small $p_4$};
		
		\node at (0.1,1.98) {\small $z_+$};
		\node at (0.1,-2.1) {\small $z_-$};
		\node at (0.1,1.77) {\tiny $\bullet$};
		\node at (0.1,-1.8) {\tiny $\bullet$};
		
		\node at (0.352,1.4) {\small $s$};
		\node at (0.352,-1.45) {\small $\overline{s}$};
		\node at (0.49,1.56) {\tiny $\bullet$};
		\node at (0.5,-1.635) {\tiny $\bullet$};
		\end{tikzpicture}
	\end{center}
	\caption{\label{fig: level zero curve in L Phi ext high 2} 
		The level set $\mathcal{N}_{\Phi}$ (blue) 
		in the high temperature regime (for $\alpha = 0.3$) in case $\Gamma_1$ intersects
		the real line at $p_1 < -1$. The $+$ and $-$ signs indicate the
		sign of $\Re (\Phi_{\alpha} - \Phi_{\alpha}(s))$.} \end{figure}

Due to Corollary \ref{cor:RePhi}, we see that 
the level set \eqref{eq:NPhi} has at most one point of 
intersection with $(\Sigma_{-1} \cup \Sigma_0) \cap \mathbb C^+$,
because $\Re \Phi_{\alpha}$ is strictly decreasing  there.
Therefore at least three of the $\Gamma_j$'s, say $\Gamma_1, \Gamma_2,
\Gamma_3$, do not intersect
$(\Sigma_{-1} \cup \Sigma_0) \cap \mathbb C^+$, which means that 
they have to go to the real line inside the domain enclosed
by $\Sigma_{-1} \cup \Sigma_0$ (or inside the disk bounded by $\Sigma_0$
in the low temperature regime), and then by symmetry end at 
$\overline{s}$ inside that domain.
Taking $p_j \in \Gamma_j \cap \mathbb R$ for $j=1,2,3$, we choose 
the ordering of the $\Gamma_j$'s such that  $p_1 < p_2 < p_3$.

The contours $\Gamma_1$, $\Gamma_2$, $\Gamma_3$ enclose two
bounded domains for which $\Re \Phi_{\alpha}$ is constant on the boundaries
and harmonic inside, except at the  singularities $-1$, $-\alpha$, $0$,
where $\Re \Phi_{\alpha}$ is unbounded by \eqref{eq:NPhinearpoles}.
By the maximum principle for harmonic functions, each of
the two domains has to contain at least one of the singularities.
Also $\Re (\Phi_{\alpha} -\Phi_{\alpha}(s))$
has opposite signs on the two bounded domains.
Then again by \eqref{eq:NPhinearpoles} one domain contains $0$ 
and the other domain contains
$-\alpha$, and possibly also $-1$, since at both these
points $\Re \Phi_{\alpha}$ tends to $-\infty$.
Thus 
\[ p_1 < -\alpha < p_2 < 0 < p_3 < \sqrt{\alpha}. \]

\begin{figure}[t]
	\begin{center}
		\begin{tikzpicture}[master,scale = 1.3,every node/.style={scale=1.3}]
		\node at (0,0) {\includegraphics[width=9cm]{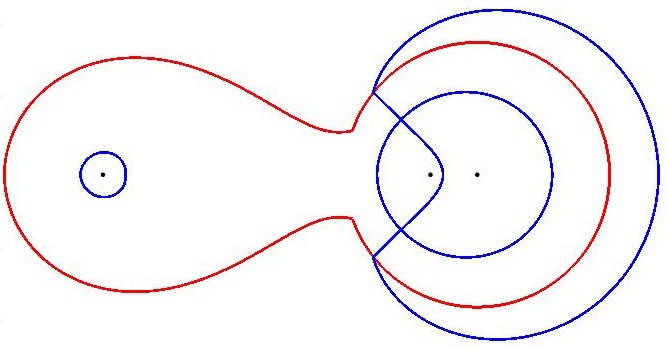}};
		\node at (1.8,1.5) {$-$};
		\node at (1,0) {$-$};
		\node at (2.3,-0.5) {$+$};
		\node at (-1,0) {$+$};
		\node at (-3.1,0.15) {$-$};
		\node at (0.4,0.3) {\small $\Gamma_1$};
		\node at (1.6,0.4) {\small $\Gamma_2$};
		\node at (3,0.72) {\small $\Gamma_3$};
		\node at (4.2,1.46) {\small $\Gamma_4$};
		\node at (-2.6,0.3) {\small $\Gamma_5$};
		\node at (-1.1,1.4) {\small $\Sigma_{-1}$};
		\node at (3.65,1.15) {\small $\Sigma_0$};
		
		\node at (-3.2,-0.2) {\tiny $-1$};
		\node at (0.4,-0.15) {\small $p_1$};
		\node at (0.55,-0) {\tiny $\bullet$};
		
		\node at (1.2,-0.2) {\tiny $-\alpha$};
		\node at (1.5,-0) {\tiny $\bullet$}; 
		\node at (1.7,-0.2) {\small $p_2$};
		\node at (2.0,-0.2) {\tiny $0$};
		\node at (3.2,-0) { \tiny $\bullet$ \small $p_3$};
		\node at (4.65,-0) {\tiny $\bullet$ \small $p_4$};
		
		\node at (0,0.8) {\small $z_+$};
		\node at (0,-0.9) {\small $z_-$};
		\node at (0.25,0.55) {\tiny $\bullet$};
		\node at (0.25,-0.65) {\tiny $\bullet$};
		
		\node at (0.9,0.9) {\small $s$};
		\node at (0.9,-1) {\small $\overline{s}$};
		\node at (0.95,0.7) {\tiny $\bullet$};
		\node at (0.95,-0.8) {\tiny $\bullet$};
		\end{tikzpicture}
	\end{center}
	\caption{\label{fig: level zero curve in L Phi ext high} The level
		set $\mathcal{N}_{\Phi}$ (blue) and the
		contours $\Sigma_{-1}\cup \Sigma_0$ in the high temperature regime (here $\alpha = \frac{1}{8}$)
		in case $-1 < p_1 < -\alpha$. The set $\mathcal{N}_{\Phi}$ divides the plane into five regions and the sign of $\Re (\Phi_{\alpha}-\Phi_{\alpha}(s))$ is indicated in each of these five regions by $+$ or $-$.}
\end{figure}

If $\Gamma_4$ would remain inside $\Sigma_{-1} \cup \Sigma_0$
as well, then it would also go to the real line, say at a point $p_4$,
and continue to $\overline{s}$ inside this domain. 
If $p_3 < p_4 < \sqrt{\alpha}$ then $\Gamma_4$ and $\Gamma_3$ 
would enclose a domain with $\Re \Phi_{\alpha}$ is constant on the boundary, and
harmonic inside, and we would have
a contradiction with the maximum principle. If $p_4 < p_1$ then
$\Gamma_4$ and $\Gamma_1$ enclose a bounded domain
within and we find a contradiction in the same way.

Thus $\Gamma_4$ comes to $(\Sigma_{-1} \cup \Sigma_0) \cap \mathbb C^+$
and continues into the outer domain of $\mathbb C \setminus \mathcal N_{\phi}$.
It cannot go to infinity because of \eqref{eq:NPhinearpoles}
and so it has to go to the real line at a point $p_4$ and
by symmetry it continues in the lower half plane where
it crosses $\Sigma_{-1} \cup \Sigma_0$ again and ends
at $\overline{s}$. 

As $\Re \Phi$ decreases along $(\Sigma_{-1}\cup \Sigma_0)
\cap \mathbb C^+$ from left to right, we find $\Re \Phi_{\alpha}(\sqrt{\alpha}) < \Re \Phi_{\alpha}(s)$.
Since $\Re \Phi_{\alpha}(z) \to +\infty$ as $z \to \infty$, the level
set $\mathcal N_{\Phi}$ intersects the real line at a point
$> \sqrt{\alpha}$. This can only be at $p_4$. 
Thus $\Gamma_4$ and $\Gamma_3$ enclose a domain where
$\Re \Phi_{\alpha} < \Re \Phi_{\alpha}(s)$ and that contains (part of) 
$\Sigma_0$ where $\Phi_{\alpha}$ has
its branch cut, and where $\Re \Phi_{\alpha}$ is not harmonic. Hence
there is no contradiction with the maximum principle.

To summarize, we have a situation as in   
Figure \ref{fig: level zero curve in L Phi ext high 2}
in case $p_1 < -1$, or as
in Figure \ref{fig: level zero curve in L Phi ext high}
if $-1 < p_1 < -\alpha$. In the latter case, there is 
also a separate part $\Gamma_5$ of $\mathcal N_{\Phi}$
that goes around $-1$.

Figures \ref{fig: level zero curve in L Phi ext high 2} and
\ref{fig: level zero curve in L Phi ext high}
are for the high temperature regime. In the low temperature
regime we have that $\Sigma_0$ is the full circle of radius $\sqrt{\alpha}$.
Then in the above discussion we can replace $\Sigma_{-1} \cup \Sigma_0$
by $\Sigma_0$. It follows that $\Gamma_1$, $\Gamma_2$, $\Gamma_3$
stay inside the disk of radius $\sqrt{\alpha}$, and so
$\Gamma_1$ does not go around $-1$. There is always a
part $\Gamma_5$ going around $-1$ in the low temperature
regime, as shown in Figure \ref{fig: level zero curve in L Phi ext low}.

\begin{figure}[t]
	\begin{center}
		\begin{tikzpicture}[master,scale = 1.3,every node/.style={scale=1.3}]
		\node at (0,0) {\includegraphics[width=9cm]{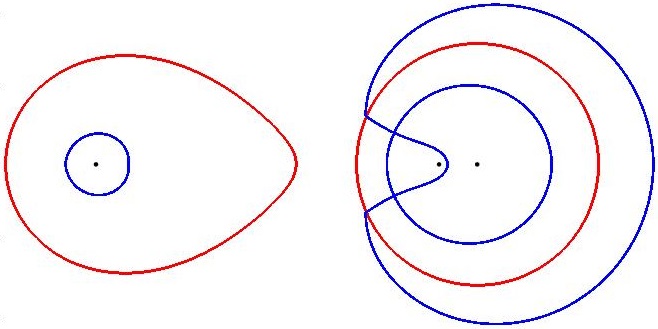}};
		\node at (0.85,0.88) {$-$};
		\node at (1,0) {$-$};
		\node at (1.25,0.5) {$+$};
		\node at (0,0.55) {$+$};
		\node at (-3,0.15) {$-$};
		\node at (.62,0.1) {\small $\Gamma_1$};
		\node at (1.74,0.3) {\small $\Gamma_2$};
		\node at (3,0.83) {\small $\Gamma_3$};
		\node at (4.2,1.49) {\small $\Gamma_4$};
		\node at (-2.55,0.35) {\small $\Gamma_5$};
		\node at (-1.4,1.38) {\small $\Sigma_{-1}$};
		\node at (3.6,1.15) {\small $\Sigma_0$};
		\end{tikzpicture}
	\end{center}
	\caption{\label{fig: level zero curve in L Phi ext low} 
		The level 	set $\mathcal{N}_{\Phi}$ (blue) and the
		contours $\Sigma_{-1}$ and $\Sigma_0$ in the low temperature regime (here $\alpha = \frac{1}{10}$).	The set $\mathcal{N}_{\Phi_{\alpha}}$ divides the plane into five regions and the sign of $\Re (\Phi_{\alpha}-\Phi_{\alpha}(s))$ is indicated in each of these five regions by $+$ or $-$.} 
\end{figure}

It is now clear that we can find contours as described next.
See also Figures
\ref{fig:gammawzlow} and \ref{fig:gammawzhigh} below.

\begin{corollary} \label{cor:contoursexist}
	Let $(\xi,\eta) \in \mathcal L_{\alpha}$ 	with $\eta \leq \frac{\xi}{2} < 0$. 
	\begin{enumerate}
		\item[\rm (a)] In the low temperature regime
		there are closed contours $\gamma_z$ and $\gamma_{w,in}$,
		$\gamma_{w,out}$ such that
		\begin{itemize}
			\item $\gamma_{w,out}$ lies  outside the circle $\gamma_0$,
			does not go around $-1$, and is such that
			\[ \Re \Phi_{\alpha}(w) > \Re \Phi_{\alpha}(s), \qquad w \in \gamma_{w,out}. \]
			\item $\gamma_{w,in}$ lies inside the circle $\gamma_0$,
			goes around $-\alpha$, and it passes through $s$ and $\overline{s}$
			in such a way that
			\[ \Re \Phi_{\alpha}(w) > \Re \Phi_{\alpha}(s), \qquad w \in \gamma_{w,in} 	\setminus \{s, \overline{s}\}. \]
			\item $\gamma_z$ lies inside the circle $\gamma_0$,
			goes around $0$, and it passes through $s$ and $\overline{s}$
			in such a way that
			\[ \Re \Phi_{\alpha}(z) < \Re \Phi_{\alpha}(s), \qquad z \in \gamma_{z} 	\setminus \{s, \overline{s}\}. \]
		\end{itemize}
		
		\item[\rm (b)]
		In the high temperature regime there exist contours $\gamma_z$ and $\gamma_w$ such that 
		\begin{itemize}
			\item $\gamma_{w}$ lies in the domain bounded
			by $\Sigma_0 \cup \Sigma_{-1}$,
			it goes  around $-1$, and it passes through $s$
			and $\overline{s}$ in such a way that
			\[ \Re \Phi_{\alpha}(w) > \Re \Phi_{\alpha}(s), \qquad w \in \gamma_{w} 	\setminus \{s, \overline{s}\}, \]
			\item $\gamma_z$ lies inside the circle $\gamma_0$,
			goes around $0$, and it passes through $s$ and $\overline{s}$
			in such a way that
			\[ \Re \Phi_{\alpha}(z) < \Re \Phi_{\alpha}(s), \qquad z \in \gamma_{z} 	\setminus \{s, \overline{s}\}. \]
		\end{itemize}
	\end{enumerate}
\end{corollary}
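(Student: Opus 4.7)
The plan is to read off the contours directly from the critical level set $\mathcal N_{\Phi}$ whose global structure was determined in Section \ref{subsec:levelRePhi}. Since $s, \overline{s}$ are simple saddles of the harmonic function $\Re\Phi_{\alpha}$ and $\mathcal N_{\Phi}$ is a finite union of smooth analytic arcs, the complement $\mathbb C\setminus\mathcal N_{\Phi}$ splits into finitely many open connected components on each of which the sign of $\Re(\Phi_{\alpha}-\Phi_{\alpha}(s))$ is constant. By the asymptotics \eqref{eq:NPhinearpoles}, the components containing $0$ and $\infty$ are ``positive'' (meaning $\Re\Phi_{\alpha}>\Re\Phi_{\alpha}(s)$ there) while the components containing $-1$ and $-\alpha$ are ``negative''; moreover, at each of $s$ and $\overline{s}$ the four arcs $\Gamma_{1},\dots,\Gamma_{4}$ divide a small neighbourhood into four sectors whose signs alternate. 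All the constructions below are then small transversal perturbations of pieces of $\mathcal N_{\Phi}$ into the appropriate adjacent components.

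The construction of $\gamma_z$ is the same in both regimes: I take the two arcs $\Gamma_{2}$ and $\Gamma_{3}$ that flank the positive component containing $0$, together with their reflections through the real axis, to form a closed loop through $s$ and $\overline{s}$ enclosing $0$, and then push each half slightly across into the adjacent negative components on the side away from $0$. The resulting smooth closed curve passes through $s$ and $\overline{s}$, still encloses $0$, lies inside $\gamma_{0}$ (because $\Gamma_{2}$, $\Gamma_{3}$ and the saddles do, by Corollary~\ref{cor:RePhi} and Lemma~\ref{lem:Ldivision}(a)), and now satisfies the strict inequality $\Re\Phi_{\alpha}(z)<\Re\Phi_{\alpha}(s)$ for all $z\in\gamma_z\setminus\{s,\overline{s}\}$. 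For the $w$-contours I apply the dual procedure inside a positive component. In the low temperature regime, $\gamma_{w,in}$ is the perturbation of $\Gamma_{1}\cup\Gamma_{2}$ outward into the two flanking positive components, which gives a closed curve through $s,\overline{s}$ inside $\gamma_{0}$ winding once around $-\alpha$ with $\Re\Phi_{\alpha}>\Re\Phi_{\alpha}(s)$ off the saddles, and $\gamma_{w,out}$ can be taken to be any smooth closed curve in the unbounded positive component that lies outside $\gamma_{0}$ and does not wind around $-1$, which sits in its own negative component bounded by the closed curve $\Gamma_{5}$. In the high temperature regime, $\gamma_w$ is obtained by leaving $s$ into the positive sector between $\Gamma_{4}$ and $\Gamma_{1}$ (which is a subset of the unbounded positive component), travelling around $-1$ while remaining in this component inside the bounded domain enclosed by $\Sigma_{0}\cup\Sigma_{-1}$, and returning to $\overline{s}$; when $p_{1}<-1$ the loop inevitably also encloses $-\alpha$, while when $-1<p_{1}<-\alpha$ the additional closed component $\Gamma_{5}$ of $\mathcal N_{\Phi}$ allows one to encircle only $-1$.

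The main obstacle in this plan is the topological bookkeeping for $\gamma_w$ in the high temperature regime: one must verify in each of the two subcases of Figures~\ref{fig: level zero curve in L Phi ext high 2} and~\ref{fig: level zero curve in L Phi ext high} that the chosen perturbation simultaneously (i) stays in the positive components, (ii) has winding number one around $-1$, and (iii) remains inside the domain bounded by $\Sigma_{0}\cup\Sigma_{-1}$. All three properties follow from the global description of $\mathcal N_{\Phi}$ in Section~\ref{subsec:levelRePhi} together with the sign information \eqref{eq:NPhinearpoles}; no additional estimates are needed and the argument is purely topological.
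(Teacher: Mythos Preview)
Your approach is essentially the same as the paper's: the paper does not give a separate proof of this corollary at all but simply writes ``It is now clear that we can find contours as described next'' after the analysis of $\mathcal N_{\Phi}$ in Section~\ref{subsec:levelRePhi}, relying on Figures~\ref{fig: level zero curve in L Phi ext high 2}--\ref{fig: level zero curve in L Phi ext low}. Your proposal spells out exactly what those figures encode, namely that the desired contours are transversal perturbations of suitable pairs of the arcs $\Gamma_i$ into the adjacent signed components, together with (in the low temperature case) a closed loop in the unbounded positive component for $\gamma_{w,out}$.

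One imprecision worth cleaning up: in the high temperature case your description of $\gamma_w$ gives only one arc from $s$ to $\overline s$ (``leaving $s$ into the positive sector between $\Gamma_4$ and $\Gamma_1$ \ldots returning to $\overline s$'') and leaves the closing arc implicit. To match both the corollary and its use in Proposition~\ref{prop:deformationhigh} you should take the second arc in the other positive sector $S_{23}$ (a perturbation of $\Gamma_2$ or $\Gamma_3$ toward $0$), so that $\gamma_w$ passes through $s$, $\overline s$ transversally; with that choice the loop necessarily encloses $-\alpha$ in \emph{both} subcases, not ``only $-1$'' when $-1<p_1<-\alpha$. Your remark about $\Gamma_5$ is still the right one --- in that subcase the left arc must detour around $\Gamma_5$ to pick up the winding about $-1$ --- but the conclusion ``encircle only $-1$'' is not what one wants and is not what Figure~\ref{fig:gammawzhigh} shows.
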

In the low temperature regime we will also use $\gamma_w = \gamma_{w,in} \cup \gamma_{w,out}$.

\section{Analysis of double contour integrals}
\label{sec:contour analysis}

\subsection{Lozenge probabilities}

In the final part of the analysis we are going to deform contours in the 
double contour integral to the ones from Corollary \ref{cor:contoursexist}, which leads
to the proof of Theorem \ref{thm:main}. We start by expressing the 
probabilities for the three types of lozenges as double contour integrals.

We use $F(z;x,y)$ as in \eqref{eq:Fzxy} and for a function
$(w,z) \mapsto H(w,z)$,
\begin{align} 
\mathcal I_N(x,y;H) = & \frac{1}{(2\pi i)^2}
\oint_{\gamma_0} \oint_{\gamma_0} R_N(w,z) \frac{(w+1)^N(w+\alpha)^N}{w^{2N}}
\frac{F(z;x,y)}{F(w;x,y)} H(w,z) dw dz \label{eq:IxyH}
\end{align}
We will use \eqref{eq:IxyH} only for functions $(w,z) \mapsto H(w,z)$ that
are products of a rational function in $w$ and a rational
function in $z$, both with poles at $-1$, $-\alpha$, and $0$ only. In addition,
the integrand in \eqref{eq:IxyH} will have singularities for $w=0$ and $z=0$
only, and the contour $\gamma_0$ can be deformed to an arbitrary closed contour around $0$,
and we can take different contours for the two integrals.

\begin{theorem} \label{thm:doubleintegrals_for_lozenge_densities}
	The following statements hold:
	\begin{align}\label{double contour formula for P1}
	\mathbb P\left(\tikz[scale=.3,baseline=(current bounding box.center)] {\draw (0,-1) \lozr; \filldraw (0,-1) circle(5pt); \draw (0,-1) node[below] {$(x,y)$}} \right) & = 
	\begin{cases} \mathcal I_{N}(x,y; H_{1,even}), & \text{ if $x$ is even}, \\[5pt]
	\mathcal I_N(x,y;H_{1,odd}), & \text{ if $x$ is odd}, \end{cases} \\
	\label{double contour formula for P2}
	\mathbb P\left(\tikz[scale=.3,baseline=(current bounding box.center)] {\draw (0,-1) \lozu; \filldraw (0,-1) circle(5pt); \draw (0,-1) node[below] {$(x,y)$}} \right) & = 
	\begin{cases} \mathcal I_N(x,y; H_{2,even}), & \text{ if $x$ is even}, \\[5pt]
	\mathcal I_{N}(x,y;H_{2,odd}), & \text{ if $x$ is odd}, \end{cases} \\
	\label{double contour formula for P3}
	\mathbb P\left(\tikz[scale=.3,baseline=(current bounding box.center)] {\draw (0,0) \lozd; \filldraw (1,0) circle(5pt); \draw (1,0) node[below] {$(x,y)$}} \right) 
	& = 1- \mathcal I_N(x,y; H_3) 
	\end{align}
	with $\mathcal I_N(x,y;H)$ as in \eqref{eq:IxyH}, and
	\begin{equation} \label{eq:Hfunctions}
	\begin{aligned} 
	H_{1,even}(w,z) & = \frac{w}{z(w+\alpha)}, \quad & H_{1,odd}(w,z) & = \frac{w}{z(w+1)}, \\
	H_{2,even}(w,z) & = \frac{\alpha}{z(w+\alpha)}, & H_{2,odd}(w,z) & = \frac{1}{z(w+1)}, \\
	H_3(w,z) & = \frac{1}{z}.
	\end{aligned}
	\end{equation}		 
\end{theorem}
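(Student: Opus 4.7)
The plan is to treat the three lozenge types separately, using Proposition \ref{proposition1.1} as the main input, and to recognize each resulting expression as a single double contour integral of the form \eqref{eq:IxyH}.

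The Type III case is the most straightforward: a Type III lozenge with dot at $(x, y)$ occurs exactly when no path passes through $(x, y + \tfrac12)$, so by the one-point marginal of the determinantal point process, $\mathbb{P}(\text{Type III at }(x,y)) = 1 - K_N(x,y,x,y)$. Specializing \eqref{eq:kernel} to $x_1 = x_2 = x$, $y_1 = y_2 = y$ kills the $\chi_{x_1 > x_2}$ term, and the remaining double integral is directly $\mathcal{I}_N(x, y; 1/z)$, giving \eqref{double contour formula for P3}.

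For Types I and II I would use the path-edge correspondence: Type II at $(x, y)$ means that the horizontal edge $(x, y+\tfrac12)\to (x+1, y+\tfrac12)$ is traversed, while Type I means that the diagonal edge $(x, y+\tfrac12)\to(x+1, y+\tfrac32)$ is. The central step is to establish the edge-probability identity
\begin{equation*}
\mathbb{P}\bigl(\text{edge }(x, y+\tfrac12)\to(x+1, y'+\tfrac12)\bigr) \;=\; T_x(y, y')\cdot K_N(x, y, x+1, y'),
\end{equation*}
which I would derive via Lindstr\"om--Gessel--Viennot: restricting the transfer matrix $T_x$ in \eqref{eq:probdensity} to the rank-one piece $T_x(y, y')\,e_y e_{y'}^{\top}$ factors out the scalar $T_x(y, y')$, and summing over ``which of the $N$ paths'' carries the edge — then applying row expansion of the LGV determinant and the Eynard--Mehta inversion of the Gram matrix $[T_{0,L}(a_i,b_j)]$ — reproduces precisely the Eynard--Mehta expression for $K_N(x, y, x+1, y')$ (with no simple-integral term since $x < x+1$). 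This is the step that collapses the naive $2\times 2$ determinant for the joint two-point event into a single value of the kernel, and it is the part of the argument requiring the most care.

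With the identity in hand the rest is algebraic manipulation. Taking $T_x(y, y) = \alpha$ or $1$ gives the Type II formula, and $T_x(y, y+1) = 1$ gives the Type I formula. Substituting \eqref{eq:kernel} and using the parity computations
\begin{equation*}
\frac{F(w;x+1,y)}{F(w;x,y)} = \begin{cases}w+\alpha,& x\text{ even}\\ w+1,& x\text{ odd}\end{cases},\qquad
\frac{F(w;x+1,y+1)}{F(w;x,y)} = \begin{cases}(w+\alpha)/w,& x\text{ even}\\ (w+1)/w,& x\text{ odd}\end{cases},
\end{equation*}
which are immediate from \eqref{eq:Fzxy}, one absorbs all extra factors together with the edge weight $T_x$ into $H(w,z)$ and recovers exactly the list \eqref{eq:Hfunctions}. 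A useful internal consistency check is the identity $H_{1,\bullet} + H_{2,\bullet} = 1/z = H_3$, which guarantees that $P_I + P_{II} + P_{III} = 1$ automatically.
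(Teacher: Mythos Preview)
Your argument is correct and takes a genuinely different route from the paper. The paper introduces the height function $h(x,y)$, proves the lozenge indicators equal specific height differences (Lemma~\ref{lem:height_to_lozenge}), and then establishes a double-integral formula for $\mathbb E[h(x,y)]=\sum_{k<y}K_N(x,k,x,k)$ by summing the geometric series $\sum_{k<y}(w/z)^k$ inside the contour integral (Lemma~\ref{lem:doubleintegralheight}); the $H$-functions then arise from taking differences of these height integrals, which conveniently cancels the pole at $w=z$. Your approach bypasses the height function entirely: you go straight to the edge-occupation identity $\mathbb P(\text{edge})=T_x(y,y')\,K_N(x,y,x{+}1,y')$, which follows from the rank-one perturbation of the LGV Gram matrix and the matrix determinant lemma (the sum over ``which path'' is hidden in the expansion of $\det(M-c\,uv^{\top})=\det M\cdot(1-c\,v^{\top}M^{-1}u)$, whose correction term is exactly the Eynard--Mehta sum defining $K_N$ with $x_1=x$, $x_2=x{+}1$). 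This is cleaner for Types I and II since it lands directly on a single kernel value without any series summation, and the parity bookkeeping with $F(w;x{+}1,\cdot)/F(w;x,\cdot)$ is identical in spirit to the paper's difference computation. The height-function route, on the other hand, is more systematic and extends naturally to global observables. Your consistency check $H_{1,\bullet}+H_{2,\bullet}=H_3$ is a nice addition not made explicit in the paper.
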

The formula \eqref{double contour formula for P3} is immediate from the
formula \eqref{eq:kernel} for the correlation kernel, since $K_N(x,y,x,y)$
is the probability to have a path at $(x,y + \tfrac{1}{2})$ which is
the same as the probability to have either a type I or type II lozenge at the
location $(x,y)$. Hence $1- K_N(x,y,x,y)$ is the probability to have a type III
lozenge at location $(x,y)$ which is \eqref{eq:IxyH} with $H(w,z) = H_3(w,z) =  \frac{1}{z}$. 
The point of Theorem \ref{thm:doubleintegrals_for_lozenge_densities} is
that there exist similar double contour integrals for the other two
probabilities.

\medskip

The proof of Theorem \ref{thm:doubleintegrals_for_lozenge_densities} 
relies on two lemmas.  
We start by defining the height function $h:\{0,\ldots,2N\}\times \mathbb Z \to \mathbb N$ in terms of the paths $\pi_j:\{0,1,\ldots,2N\} \to \mathbb{Z}+\tfrac{1}{2}$,
for $j=1,\ldots,2N$, by
$$h(x,y)= \#\{ j \mid  \pi_j(x)< y \}.$$
The graph of $h$ is a stepped surface and the paths can be thought of as level curves of this random surface.  We can recover the tiling from the height function by using simple identities which relate the positions of the different lozenges to differences of the height function. 
\begin{lemma} \label{lem:height_to_lozenge} 
	The following identities hold:
	\begin{align*}
	h(x,y+1)-h(x+1,y+1) &= 
	\begin{cases}
	1, & \text{ there is a lozenge 	\tikz[scale=.3,baseline=(current bounding box.center)] { \draw (0,0) \lozr; \filldraw circle(5pt);  \node[below] (i) at (0,-.2) {\tiny{$(x,y)$}};} }\\[-10pt]
	0, & \text{ otherwise.}
	\end{cases} \\
	h(x+1,y+1)-h(x,y) &= 
	\begin{cases}
	1, & \text{ there is a lozenge 	\tikz[scale=.3,baseline=(current bounding box.center)] { \draw (0,0) \lozu; \filldraw circle(5pt);  \node[below] (i) at (0,-.2) {\tiny{$(x,y)$}};}  }\\[-5pt]
	0, & \text{ otherwise.}
	\end{cases} \\
	h(x,y+1)-h(x,y) & = 
	\begin{cases}
	0, & \text{ there is a lozenge 	\tikz[scale=.3,baseline=(current bounding box.center)] { \draw (0,0) \lozd; \filldraw (1,0) circle(5pt);  \node[below] (i) at (1,-.2) {\tiny{$(x,y)$}};} }\\[-5pt]
	1, & \text{ otherwise.}
	\end{cases}
	\end{align*}
\end{lemma}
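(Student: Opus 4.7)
My plan is to prove each of the three identities directly from the definition of the height function together with the path-lozenge correspondence recalled just before the lemma. The basic observation is that the paths are non-intersecting and take values in $\mathbb{Z} + \tfrac12$, so for each integer $y$ one has
\[ h(x,y+1) - h(x,y) \;=\; \#\{j : \pi_j(x) = y+\tfrac12\} \;\in\; \{0,1\}, \]
with value $1$ precisely when some path passes through the midpoint $(x,y+\tfrac12)$. This fact, together with $\pi_j(x+1)-\pi_j(x)\in\{0,1\}$ (horizontal or diagonal step between consecutive columns), is the engine of the whole argument.

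For the first identity I would write
\[ h(x,y+1) - h(x+1,y+1) = \#\{j : \pi_j(x) \leq y+\tfrac12\} - \#\{j : \pi_j(x+1) \leq y+\tfrac12\} \]
and split the sum by the value of $\pi_j(x)$. Paths with $\pi_j(x)\leq y-\tfrac12$ automatically satisfy $\pi_j(x+1)\leq \pi_j(x)+1\leq y+\tfrac12$ and so cancel; paths with $\pi_j(x)\geq y+\tfrac32$ are counted in neither sum. The only possible contribution comes from the at most one path with $\pi_j(x)=y+\tfrac12$, and it contributes $1$ exactly when it takes a diagonal step up to $y+\tfrac32$, which is the description of a Type I lozenge at $(x,y)$. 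The same case analysis applied to $h(x+1,y+1)-h(x,y)$ yields the second identity, the surviving contribution now coming from a path that stays horizontal, i.e.\ a Type II lozenge at $(x,y)$.

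The third identity is the only step that needs geometric input beyond path bookkeeping, and I expect it to be the main obstacle. The key observation is that, among the three lozenge shapes, only Types I and II carry vertical unit edges, and for each such lozenge the drawn path segment meets its two vertical edges exactly at their midpoints. Consequently the midpoint $(x,y+\tfrac12)$ lies on a path if and only if the vertical segment from $(x,y)$ to $(x,y+1)$ is an edge of the tiling. Otherwise this segment lies in the interior of a single lozenge which, by inspecting the vertex lists of the three shapes, can only be the Type III lozenge with vertices $(x-1,y),(x,y+1),(x+1,y+1),(x,y)$, that is, the Type III lozenge whose dot is at $(x,y)$. Combined with the first observation of the proof this gives the third identity.
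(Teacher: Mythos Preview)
Your argument is correct and is exactly the kind of direct verification the paper has in mind: the paper's own proof is the single sentence ``The proof is straightforward,'' and your case analysis based on $\pi_j(x+1)-\pi_j(x)\in\{0,1\}$ together with the path--lozenge correspondence is precisely the intended straightforward computation. Your treatment of the third identity, via the dichotomy that the vertical segment $[(x,y),(x,y+1)]$ is either a tiling edge (hence the midpoint of a vertical side of a Type~I or Type~II lozenge, through which a path passes) or the interior diagonal of the Type~III lozenge anchored at $(x,y)$, is the right geometric observation and completes the proof cleanly.
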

\begin{proof}
	The proof is straightforward. 
\end{proof}
The next step is a double integral formula for the expectation value of the height function. 

\begin{lemma}\label{lem:doubleintegralheight}
	For $(x, y) \in \{0,1,\ldots, 2N\} \times \mathbb Z$,
	\begin{equation*}
	\mathbb E[h(x,y)] = \sum_{k < y} K_N(x,k,x,k)	=  \frac{1}{(2\pi i)^2} \oint_{\tilde{\gamma}}\oint_\gamma R_N(w,z)  \frac{(w+1)^N(w+\alpha)^N }{w^{2N}}  
	\frac{F(z;x,y)}{F(w;x,y)} \frac{dw dz}{w-z}. 
	\end{equation*} 
	where $\tilde \gamma$ is deformation of $\gamma$ such that $|z|<|w|$ whenever $z \in \tilde{\gamma}$ and $w \in \gamma$. 
\end{lemma}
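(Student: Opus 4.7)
The proof divides naturally into two parts: relating $\mathbb{E}[h(x,y)]$ to one-point correlations of the process, and then evaluating the resulting sum at the level of the double contour integral.

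For the first equality, observe that $h(x,y)=\#\{j:\pi_j(x)<y\}$ simply counts the particles of the determinantal point process located at positions $(x,k+\tfrac12)$ with $k\in\mathbb{Z}$, $k<y$. By \eqref{prob point process} applied to a single point the probability of finding a particle at such a location is $K_N(x,k,x,k)$, so taking expectations yields
\[
\mathbb{E}[h(x,y)]=\sum_{k<y}K_{N}(x,k,x,k).
\]
A short residue computation on \eqref{eq:kernel} (the $\chi$-term drops out when $x_{1}=x_{2}$) shows that $K_N(x,k,x,k)=0$ unless $0\le k\le 2N-1$, so the sum is finite and no interchange of sums and integrals is an issue.

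For the double integral, I would start from \eqref{eq:kernel} with $x_{1}=x_{2}=x$, $y_{1}=y_{2}=k$, and first deform the $z$-contour inward from $\gamma$ to $\tilde\gamma$. This is permissible because $R_N$ is a polynomial and, prior to summation, the integrand has no $(w-z)^{-1}$ factor and is jointly holomorphic in the annulus between the two contours. Once $|z|<|w|$, the finite geometric identity
\[
\sum_{k=0}^{y-1}\frac{w^{k}}{z^{k+1}}=\frac{w^{y}-z^{y}}{z^{y}(w-z)}=\frac{1}{w-z}\cdot\frac{w^{y}}{z^{y}}-\frac{1}{w-z}
\]
can be inserted into the sum. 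The first summand reproduces exactly the right-hand side of the lemma, because, from the definition \eqref{eq:Fzxy} of $F$,
\[
\frac{(z+1)^{\lfloor x/2\rfloor}(z+\alpha)^{\lfloor(x+1)/2\rfloor}}{(w+1)^{\lfloor x/2\rfloor}(w+\alpha)^{\lfloor(x+1)/2\rfloor}}\cdot\frac{w^{y}}{z^{y}}=\frac{F(z;x,y)}{F(w;x,y)}.
\]

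The last step is to dispose of the second summand, which contributes a double integral with $z$-integrand proportional to
\[
R_{N}(w,z)\,\frac{(z+1)^{\lfloor x/2\rfloor}(z+\alpha)^{\lfloor(x+1)/2\rfloor}}{w-z}.
\]
For fixed $w\in\gamma$ this is a polynomial in $z$ divided by $w-z$, so its only singularity in $z$ is at $z=w$. Since $|w|>|z|$ on the two contours this pole lies \emph{outside} $\tilde\gamma$, and Cauchy's theorem forces the inner $z$-integral to vanish. Combined with the previous step this yields the claimed identity. There is no serious obstacle; the only point requiring attention is the direction of the contour separation, since the vanishing of the spurious term depends on $z$ being pushed inward (so that $z=w$ ends up outside $\tilde\gamma$), precisely in accordance with the condition $|z|<|w|$ needed for the geometric manipulation.
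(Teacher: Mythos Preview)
Your proof is correct and follows essentially the same route as the paper's: the first equality via the determinantal structure, then insertion of the kernel formula \eqref{eq:kernel} and summation of a geometric series after separating the contours so that $|z|<|w|$. The only difference is that the paper evaluates the full infinite series $\frac{1}{z}\sum_{k<y}(w/z)^k=\frac{w^y}{z^y(w-z)}$ directly, whereas you truncate to the finite sum $\sum_{k=0}^{y-1}$ and then dispose of the leftover $-1/(w-z)$ term by Cauchy's theorem --- a minor variation that trades one extra step for not having to justify the interchange of an infinite sum with the contour integrals.
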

\begin{proof}
	By the determinantal structure of the correlations (see Proposition~\ref{proposition1.1}) we have
	$$ \mathbb E[h(x,y)]= \sum_{k < y} K_N(x,k,x,k).
	$$
	After inserting the expression \eqref{eq:kernel} for the  kernel, 
	bringing the sum inside the integrals, and evaluating the geometric series 
	$ \ds \frac{1}{z}\sum_{k < y} \frac{w^k}{z^k} = \frac{w^y}{z^y} \frac{1}{w-z}$
	for $|z|<|w|$,   	we obtain the statement. 
\end{proof}
Now we are ready for the proof of Theorem \ref{thm:doubleintegrals_for_lozenge_densities}.

\begin{proof}[Proof of Theorem \ref{thm:doubleintegrals_for_lozenge_densities}]
	Lemma \ref{lem:height_to_lozenge} implies that  
	$$\mathbb P\left( 	\tikz[scale=.3,baseline=(current bounding box.center)] { \draw (0,0) \lozr; \filldraw circle(5pt);  \node[below] (i) at (0,-.2) {\tiny{$(x,y)$}};} \right)=
	\mathbb E[h(x,y+1)]-\mathbb E[h(x+1,y+1)].$$
	We insert the double contour integral formula of Lemma \ref{lem:doubleintegralheight} and combine the two integrals by subtracting the two integrands. Since
	\begin{equation}
	\left(\frac{F(z;x,y+1)}{F(w;x,y+1)} - \frac{F(z;x+1,y+1)}{F(w;x+1,y+1)}
	\right)\frac{1}{w-z} = 
	\frac{F(z;x,y)}{F(w;x,y)} \times \begin{cases} \frac{w}{z(w+\alpha)}, & \text{ if $x$ is even},\\
	\frac{w}{z(w+1)}, & \text{ if $x$ is odd}, 
	\end{cases}
	\end{equation}
	which we can check from \eqref{eq:Fzxy} separately for $x$ even or odd, 
	the formula \eqref{double contour formula for P1} follows.
	Note also that the pole at $z=w$ disappeared when we took the difference, 
	and therefore $\tilde{\gamma}$ can be moved back to $\gamma$ in
	\eqref{double contour formula for P1}. 
	
	The proof of \eqref{double contour formula for P2} is similar,
	and \eqref{double contour formula for P3} is immediate from the 
	structure of the determinantal point process, as already noted after
	the statement of Theorem \ref{thm:doubleintegrals_for_lozenge_densities}.
\end{proof}

\subsection{Symmetries}
We use symmetries in the double integral \eqref{eq:IxyH} to be able to restrict attention
to the lower left part of the hexagon.

\begin{proposition} \label{prop:symmetries} 
	The double integral \eqref{eq:IxyH} has symmetries under the
	mappings $(x,y) \mapsto (2N-x,2N-y)$ and $(x,y) \to (x,N+x-y)$ as follows.
	\begin{enumerate} 
		\item[\rm (a)] We have
		\begin{align} \label{eq:Isymmetry1}
		\mathcal I_N(2N-x,2N-y; H) & = \mathcal I_N(x,y; \widehat{H}), 
		\end{align}
		with
		\begin{equation} \label{eq:hatH} \widehat{H}(w,z) = H(z,w) \times
		\begin{cases} 1, & \text{ if $x$ is even}, \\
		\frac{w+\alpha}{w+1} \frac{z+1}{z+\alpha}, & \text{ if $x$ is odd}.
		\end{cases} \end{equation}
		\item[\rm (b)] We have
		\begin{align} 	\label{eq:Isymmetry2}
		\mathcal I_N(x,N+x-y; H) & = \mathcal I_N(x,y; \widetilde{H})
		\end{align}
		with
		\begin{equation} \label{eq:tildeH}
		\widetilde{H}(w,z) = \frac{\alpha}{wz} H\left(\frac{\alpha}{w},\frac{\alpha}{z}\right) \times
		\begin{cases} 1, & \text{ if $x$ is even}, \\
		\frac{w+\alpha}{w+1} \frac{z+1}{z+\alpha}, & \text{ if $x$ is odd}.
		\end{cases} \end{equation}
	\end{enumerate}	
\end{proposition}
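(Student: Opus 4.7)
The plan is to reduce both symmetries to manipulations of the double contour integral \eqref{eq:IxyH}, powered by two structural facts: first, the transposition symmetry $R_N(w,z) = R_N(z,w)$, which is immediate from \eqref{eq:CDkernel}; and second, an \emph{inversion identity}
\begin{equation*}
R_N(\alpha/w,\alpha/z) = \frac{\alpha^{N-1}}{(wz)^{N-1}}\,R_N(w,z),
\end{equation*}
which is the main technical input for part (b).

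For part (a), the first step is to establish the pointwise identity
\begin{equation*}
F(z;2N-x,2N-y) = \frac{(z+1)^N(z+\alpha)^N}{z^{2N}}\cdot\frac{\tau(z)}{F(z;x,y)},\qquad
\tau(z) = \begin{cases} 1 & \text{if $x$ is even}, \\ \frac{z+\alpha}{z+1} & \text{if $x$ is odd},\end{cases}
\end{equation*}
by splitting on the parity of $x$ and using $\lfloor\tfrac{x}{2}\rfloor + \lfloor\tfrac{x+1}{2}\rfloor = x$. Substituting this into $\mathcal I_N(2N-x,2N-y;H)$ cancels the weight $\frac{(w+1)^N(w+\alpha)^N}{w^{2N}}$ attached to $w$ and attaches the analogous weight in the $z$-variable; after renaming $w\leftrightarrow z$ and invoking $R_N(z,w) = R_N(w,z)$, the integral takes the form $\mathcal I_N(x,y;\widehat H)$ with $\widehat H(w,z) = \tfrac{\tau(w)}{\tau(z)}H(z,w)$, matching \eqref{eq:hatH}.

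For part (b), I would apply the joint inversion $w\mapsto \alpha/w$, $z\mapsto \alpha/z$ to the double integral. Since $\alpha/(\sqrt{\alpha}e^{it}) = \sqrt{\alpha}e^{-it}$, the circle $\gamma_0$ is preserved setwise but traversed in the opposite sense; the two orientation reversals combine with the Jacobians $-\alpha/w^2\,dw$ and $-\alpha/z^2\,dz$ to produce a net factor of $\alpha^2/(w^2z^2)$. A short computation gives $\frac{(w+1)^N(w+\alpha)^N}{w^{2N}}\mapsto \frac{(w+1)^N(w+\alpha)^N}{\alpha^N}$ and, parity by parity,
\begin{equation*}
\frac{F(\alpha/z;x,N+x-y)}{F(\alpha/w;x,N+x-y)} = \frac{z^N}{w^N}\cdot\frac{F(z;x,y)}{F(w;x,y)}\cdot\frac{\epsilon(z)}{\epsilon(w)},
\end{equation*}
where $\epsilon(z)=1$ for $x$ even and $\epsilon(z)=\frac{z+1}{z+\alpha}$ for $x$ odd. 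Inserting the inversion identity for $R_N(\alpha/w,\alpha/z)$ collapses all surplus powers of $w$, $z$, and $\alpha$ to the single prefactor $\alpha/(wz)$, producing precisely the $\widetilde H$ in \eqref{eq:tildeH}.

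The main obstacle is the inversion identity for $R_N$. To prove it, I would exploit the fact that $R_N$ is the reproducing kernel on polynomials of degree less than $N$ with the bilinear pairing $\langle p,q\rangle = \frac{1}{2\pi i}\oint p(z)q(z)\rho(z)\,dz$, where $\rho(z)=\frac{(z+1)^N(z+\alpha)^N}{z^{2N}}$. A direct change of variable $z\mapsto \alpha/z$ shows that the involution $J\colon p(z)\mapsto z^{N-1}p(\alpha/z)$ satisfies $\langle Jp,Jq\rangle = \alpha^{N-1}\langle p,q\rangle$. Plugging $Jq$ into the reproducing identity for arbitrary $q$ of degree less than $N$, and then changing variables $z\mapsto\alpha/z$ and $w\mapsto\alpha/w$, exhibits $\frac{(wz)^{N-1}}{\alpha^{N-1}}R_N(\alpha/w,\alpha/z)$ as another reproducing kernel for the same pair (space, pairing); uniqueness of the reproducing kernel then delivers the identity.
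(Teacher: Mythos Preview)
Your proof is correct and tracks the paper's argument closely: the same $F$-identity for part~(a), the same swap $w\leftrightarrow z$ together with $R_N(w,z)=R_N(z,w)$, and the same joint inversion $(w,z)\mapsto(\alpha/w,\alpha/z)$ for part~(b). The one genuine difference is how the inversion identity $R_N(\alpha/w,\alpha/z)=\alpha^{N-1}(wz)^{1-N}R_N(w,z)$ is obtained. The paper derives it from a symmetry of the Riemann--Hilbert problem: one checks that
\[
\begin{pmatrix}\alpha^{N/2}&0\\0&-\alpha^{-N/2}\end{pmatrix}Y(0)^{-1}\,Y\!\left(\frac{\alpha}{z}\right)\begin{pmatrix}z^{N}\alpha^{-N/2}&0\\0&-z^{-N}\alpha^{N/2}\end{pmatrix}
\]
solves the same RH problem as $Y$, hence equals $Y$ by uniqueness, and then reads off the $R_N$ identity from \eqref{RnzwY}. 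Your route via the involution $J\colon p(z)\mapsto z^{N-1}p(\alpha/z)$, the scaling $\langle Jp,Jq\rangle=\alpha^{N-1}\langle p,q\rangle$, and uniqueness of the reproducing kernel is more self-contained: it uses only the non-degeneracy of the bilinear form on polynomials of degree $<N$ (Proposition~\ref{prop:prop51}) and bypasses the RH apparatus entirely. The paper's approach, in exchange, yields the stronger statement \eqref{eq:symmetry} about $Y$ itself, which is of independent interest in the RH framework.
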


\begin{proof} (a)
	From \eqref{eq:Fzxy} we deduce
	\[ F(z;2N-x,2N-y) = \frac{(z+1)^N (z+\alpha)^N}{z^{2N}} F(z;x,y)^{-1}
	\times \begin{cases} 1 & \text{ if $x$ is even}, \\
	\frac{z+\alpha}{z+1} & \text{ if $x$ is odd}.
	\end{cases} \]
	We insert this in the double integral  \eqref{eq:IxyH} with $(2N-x,2N-y)$ instead of
	$(x,y)$, and we interchange variables $(w,z) \mapsto (z,w)$. Since $R_N(w,z)$ is a symmetric
	expression in the two variables, the identity \eqref{eq:Isymmetry1} with
	$\widehat{H}$ given by \eqref{eq:hatH} follows.	
	
	(b) We now apply the change of variables $w \mapsto \frac{\alpha}{w}$, $z \mapsto \frac{\alpha}{z}$
	to the integral \eqref{eq:IxyH} with $(x,N+x-y)$ instead of $(x,y)$.
	Then $R_N(w,z)$ transforms as in \eqref{eq:RNsymmetry} which we will prove in a separate
	lemma below. The other factors in the integrand of \eqref{eq:IxyH} transform as
	\begin{align*}
	\frac{(w+1)^N (w+\alpha)^N}{w^{2N}} & \mapsto \alpha^{-N} (w+1)^N (w+\alpha)^N \\
	H(w,z) dwdz & \mapsto H \left( \frac{\alpha}{w}, \frac{\alpha}{z} \right)
	\frac{\alpha^2}{w^2 z^2} dwdz \\
	F(z; x, N+x-y) & \mapsto \alpha^{-N - \lfloor \frac{x}{2} \rfloor + y} 
	z^N F(z;x,y) \times \begin{cases} 1, & \text{if $x$ is even} \\
	\frac{z+1}{z+\alpha}, & \text{if $x$ is odd}. \end{cases}
	\end{align*}
	and similarly for $F(w;x,N+x-y)$. Combining all the factors we arrive at
	\eqref{eq:Isymmetry2} with $\widetilde{H}$ as in \eqref{eq:tildeH}. Finally, each transformation reverses the orientation of the respective contour. We change the orientation of each  contour back to the original one at the cost of a minus sign and since we do to this two times the minus signs  cancel against each other. 
\end{proof}

In the proof of part (b) of Proposition \ref{prop:symmetries} we needed an identity for $R_N$
that we prove in a separate lemma. It is related to a symmetry in the Riemann-Hilbert problem \ref{rhp:Y}.
\begin{lemma} 
	\begin{enumerate}
		\item[\rm (a)] Let $\gamma = \gamma_{0}$ be the circle centered at $0$ 
		of radius $\sqrt{\alpha}$. Then the following symmetry holds
		\begin{equation}\label{eq:symmetry} Y(z) = 
		\begin{pmatrix} \alpha^{\frac{N}{2}} & 0 \\ 0 & -\alpha^{-\frac{N}{2}} \end{pmatrix} 
		Y(0)^{-1} Y\left(\frac{\alpha}{z}\right)
		\begin{pmatrix} z^{N} \alpha^{-\frac{N}{2}} & 0 \\ 0 & - z^{-N} \alpha^{\frac{N}{2}}
		\end{pmatrix}.
		\end{equation}
		\item[\rm (b)] The Christoffel-Darboux kernel $R_N$ satisfies 
		\begin{equation} \label{eq:RNsymmetry} 
		R_{N}\left(\frac{\alpha}{w},\frac{\alpha}{z}\right)= 
		\frac{\alpha^{N-1}}{w^{N-1} z^{N-1}} R_{N}(w,z), 
		\qquad w,z \in \mathbb C \setminus \{0\}. \end{equation}
	\end{enumerate}
\end{lemma}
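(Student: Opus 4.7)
The plan is to prove (a) by the standard uniqueness argument for the Riemann--Hilbert problem \ref{rhp:Y}, and then derive (b) from (a) by a short algebraic manipulation of the Christoffel--Darboux formula \eqref{RnzwY}. For (a), I would define
\[
\tilde Y(z) := A\, Y(0)^{-1}\, Y(\alpha/z)\, D(z),\qquad
A = \begin{pmatrix}\alpha^{N/2}&0\\0&-\alpha^{-N/2}\end{pmatrix},\quad
D(z)=\begin{pmatrix}z^N\alpha^{-N/2}&0\\0&-z^{-N}\alpha^{N/2}\end{pmatrix},
\]
and check that $\tilde Y$ satisfies the three conditions of RH problem \ref{rhp:Y}. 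Since $\det Y\equiv 1$, $Y(0)^{-1}$ exists, so $\tilde Y$ is well-defined. The involution $z\mapsto\alpha/z$ fixes $\gamma_0$ setwise (because $|\alpha/z|=\sqrt\alpha\iff |z|=\sqrt\alpha$) while interchanging its two sides, so $\tilde Y$ is manifestly analytic on $\mathbb C\setminus(\gamma_0\cup\{0\})$. To see the jump matches, I would use the weight identity $w(\alpha/z)=\frac{z^{2N}}{\alpha^N}w(z)$, together with the fact that $Y(\alpha/z)_\pm=Y_\mp(\alpha/z)$, to compute
\[
\tilde Y_-^{-1}\tilde Y_+ = D(z)^{-1}\begin{pmatrix}1&-\tfrac{z^{2N}}{\alpha^N}w(z)\\0&1\end{pmatrix}D(z)=\begin{pmatrix}1&w(z)\\0&1\end{pmatrix},
\]
which is exactly the jump of $Y$. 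The asymptotics at infinity follow from $Y(\alpha/z)=Y(0)+O(1/z)$ as $z\to\infty$, giving $\tilde Y(z)=(I+O(1/z))AD(z)=(I+O(1/z))\mathrm{diag}(z^N,z^{-N})$.

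The main obstacle is the apparent singularity at $z=0$, where $D(z)$ carries a factor $z^{-N}$ while $Y(\alpha/z)$ blows up because of the $Y(w)\sim\mathrm{diag}(w^N,w^{-N})$ behavior at $w=\infty$. The resolution is that these cancel entry by entry. Using \eqref{eq:Yz} and orthogonality \eqref{eq:orthogonality1}, one has $(Y(w))_{11}=p_N(w)$ and $(Y(w))_{21}=-\kappa_{N-1}^{-1}p_{N-1}(w)$ polynomials, while $(Y(w))_{12}=O(w^{-N-1})$ and $(Y(w))_{22}=w^{-N}+O(w^{-N-1})$ as $w\to\infty$. It follows that $z^N p_N(\alpha/z)$ and $z^N p_{N-1}(\alpha/z)$ are polynomials in $z$, while $z^{-N}(Y(\alpha/z))_{12}$ and $z^{-N}(Y(\alpha/z))_{22}$ are analytic at $0$; hence $Y(\alpha/z)D(z)$ is analytic at $z=0$. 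By uniqueness of the RH problem one concludes $\tilde Y=Y$, which is (a).

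For (b), the plan is to invert (a) as $Y(\alpha/z)=Y(0)A^{-1}Y(z)D(z)^{-1}$ and substitute into \eqref{RnzwY} evaluated at $(\alpha/w,\alpha/z)$. Using $\alpha/z-\alpha/w=\alpha(w-z)/(wz)$, the constant matrices $AY(0)^{-1}$ and $Y(0)A^{-1}$ cancel, leaving
\[
R_N(\alpha/w,\alpha/z)=\frac{wz}{\alpha(w-z)}\begin{pmatrix}0&1\end{pmatrix}D(w)Y^{-1}(w)Y(z)D(z)^{-1}\begin{pmatrix}1\\0\end{pmatrix}.
\]
Using the diagonal structure of $D$ one has $\begin{pmatrix}0&1\end{pmatrix}D(w)=\begin{pmatrix}0&-w^{-N}\alpha^{N/2}\end{pmatrix}$ and $D(z)^{-1}\begin{pmatrix}1\\0\end{pmatrix}=\begin{pmatrix}z^{-N}\alpha^{N/2}\\0\end{pmatrix}$, and combining all scalar factors (including the minus sign from $(z-w)/(w-z)$) recovers $\frac{\alpha^{N-1}}{w^{N-1}z^{N-1}}R_N(w,z)$, as desired. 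This last step is mechanical once (a) is in place, so the entire difficulty is concentrated in the removable-singularity check at $z=0$ described above.
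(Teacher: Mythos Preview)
Your proposal is correct and follows exactly the approach indicated in the paper: verify that the right-hand side of \eqref{eq:symmetry} satisfies the Riemann--Hilbert problem \ref{rhp:Y} (including analyticity at $z=0$) and invoke uniqueness for (a), then substitute into \eqref{RnzwY} for (b). The paper leaves these as ``straightforward calculations,'' and you have filled in precisely those details---in particular the removable-singularity check at $z=0$, which is the only point requiring real care.
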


\begin{proof}
	Part (a) follows since the right-hand side of \eqref{eq:symmetry} satisfies
	the conditions of the RH problem \ref{rhp:Y}, as can be check by straightforward
	calculations, and the uniqueness of the solution of the RH problem.
	
	Part (b) follows after inserting \eqref{eq:symmetry} into \eqref{RnzwY},
	again with simple calculations.
\end{proof}

There are corresponding symmetries for the location of the saddle point.

\begin{proposition} \label{prop:saddlesymmetry}
	Let $(\xi,\eta) \in \mathcal L_{\alpha}$. Then
	also $(-\xi,-\eta) \in \mathcal L_{\alpha}$,
	$(\xi,\xi-\eta) \in \mathcal L_{\alpha}$ and
	\begin{align} \label{eq:saddlesymmetry1}
	s(-\xi,-\eta;\alpha) & = s(\xi,\eta;\alpha) \\
	s(\xi,\xi-\eta;\alpha) & =  \label{eq:saddlesymmetry2}
	\alpha \left( \overline{s(\xi,\eta;\alpha)} \right)^{-1}
	\end{align}	
\end{proposition}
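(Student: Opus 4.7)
The plan for part \eqref{eq:saddlesymmetry1} is to observe that the definitions \eqref{Phidef}--\eqref{Psidef} immediately give the identity $\Phi_{\alpha}(z;-\xi,-\eta)=-\Psi_{\alpha}(z;\xi,\eta)$. Differentiating, the zero set of $\Phi_{\alpha}'(\cdot;-\xi,-\eta)$ coincides with the zero set of $\Psi_{\alpha}'(\cdot;\xi,\eta)$. Via \eqref{eq:relationPhiXi}, swapping $(\xi,\eta)\mapsto(-\xi,-\eta)$ merely interchanges the roles of the two branches $\Xi_{\alpha,\pm}$ inside and outside the circle $|z|=\sqrt{\alpha}$, leaving the full zero set of $\Xi_{\alpha}$ invariant. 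Consequently the unique saddle in $\mathbb{C}^+$ furnished by Proposition \ref{prop:saddle} is the same for both parameter choices, which simultaneously gives $(-\xi,-\eta)\in\mathcal{L}_{\alpha}$ and \eqref{eq:saddlesymmetry1}.

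For part \eqref{eq:saddlesymmetry2}, the key analytic input will be the functional equation
\begin{equation}\label{eq:Qfunctional_plan}
Q_{\alpha}(\alpha/z)=\frac{z^{4}}{\alpha^{2}}\,Q_{\alpha}(z),
\end{equation}
which I will verify directly from Definition \ref{def:Qa} in both regimes, using $z_+(\alpha)z_-(\alpha)=\alpha$ (valid throughout $0<\alpha\le 1$). Setting $\tilde{s}=\alpha/\overline{s}$ with $s=s(\xi,\eta;\alpha)$, the fact that $\arg(\tilde s)=\arg(s)$ places $\tilde s\in\mathbb{C}^+$, and the relation $|\tilde s|\cdot|s|=\alpha$ shows that $\tilde s$ lies on the opposite side of $\gamma_{0}$ from $s$ (and coincides with $\sqrt{\alpha}$-circle points exactly when $s$ does).

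Next I will establish, by elementary manipulation of partial fractions, the algebraic identity
\[
\frac{\xi}{2}\!\left(\frac{1}{\tilde s+1}+\frac{1}{\tilde s+\alpha}\right)-\frac{\xi-\eta}{\tilde s}
= -\frac{\overline{s}^{\,2}}{\alpha}\!\left[\frac{\xi}{2}\!\left(\frac{1}{\overline{s}+1}+\frac{1}{\overline{s}+\alpha}\right)-\frac{\eta}{\overline{s}}\right].
\]
Squaring, using the saddle condition for $s$ together with $\overline{Q_{\alpha}(s)}=Q_{\alpha}(\overline{s})$ (real coefficients) and the functional equation \eqref{eq:Qfunctional_plan}, one obtains the saddle condition for $\tilde s$ at the parameter $(\xi,\xi-\eta)$:
\[
\left(\frac{\xi}{2}\!\left(\frac{1}{\tilde s+1}+\frac{1}{\tilde s+\alpha}\right)-\frac{\xi-\eta}{\tilde s}\right)^{\!2}=Q_{\alpha}(\tilde s).
\]
Therefore $\tilde s$ is a zero of $\Xi_{\alpha}(\cdot;\xi,\xi-\eta)$ in $\mathbb{C}^+$, which by Definition \ref{def:liquid} places $(\xi,\xi-\eta)\in\mathcal{L}_{\alpha}$ and identifies it with $s(\xi,\xi-\eta;\alpha)$ by Proposition \ref{prop:saddle}.

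The only mildly delicate step is the bookkeeping in verifying \eqref{eq:Qfunctional_plan} in the high-temperature regime, where one must track the factor $(z+\sqrt{\alpha})^{2}$ in \eqref{eq:Qalphahigh} together with $(\alpha-z_\pm z)=-z_\pm(z-z_\mp)$; the low-temperature case follows from the same trick applied to \eqref{eq:Qalphalow}. Conceptually these two identities are the saddle-point shadows of the two involutions $(x,y)\mapsto(2N-x,2N-y)$ and $(x,y)\mapsto(x,N+x-y)$ appearing in Proposition \ref{prop:symmetries}, and they could alternatively be derived by tracking critical points of the integrand through the changes of variables used there; however, the direct algebraic route sketched above is shorter and avoids invoking the double contour integrals.
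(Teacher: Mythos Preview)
Your proposal is correct and follows essentially the same route as the paper. For \eqref{eq:saddlesymmetry1} both arguments rest on the identity $\Phi_{\alpha}(z;-\xi,-\eta)=-\Psi_{\alpha}(z;\xi,\eta)$; for \eqref{eq:saddlesymmetry2} both use the functional equation $Q_{\alpha}(\alpha/z)=\tfrac{z^{4}}{\alpha^{2}}Q_{\alpha}(z)$ together with the partial-fraction identity you state. The only cosmetic difference is that the paper works at the level of the branch $\Phi_{\alpha}'$ and tracks the sign of $Q_{\alpha}^{1/2}$ under $z\mapsto\alpha/z$, whereas you square and work directly with the algebraic saddle equation \eqref{eq:saddlepointeq}, which sidesteps the sign bookkeeping; uniqueness from Proposition~\ref{prop:saddle} then finishes the argument in either version.
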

\begin{proof}
	From \eqref{Psidef}, we have
	\[ \Psi_{\alpha}(z; \xi,\eta) = - \Phi_{\alpha}(z;-\xi,-\eta)
	\]
	and this implies \eqref{eq:saddlesymmetry1}.
	
	It can be readily verified from \eqref{eq:Qalphahigh} and
	\eqref{eq:Qalphalow} that $\frac{\alpha^2}{z^4} Q_{\alpha} \left(\frac{\alpha}{z} \right)	= 	Q_{\alpha} (z)$.
	Noting that $\phi'(z) = \pm Q_{\alpha}(z)^{1/2}$ by
	\eqref{phidefhigh} and \eqref{phideflow} and keeping track
	of the signs of the square roots, we obtain from this
	\[ - \frac{\alpha}{z^2} \phi'\left(\frac{\alpha}{z} \right)	
	= \phi'(z) \]
	Also, a straightforward computation shows that
	\[ - \frac{\alpha}{z^2} \left[ \frac{\xi}{2} \left( \frac{1}{z+1} + \frac{1}{z+\alpha}\right)
	- \frac{\eta}{z}  \right]_{z \mapsto \frac{\alpha}{z}} 	=  
	\frac{\xi}{2} \left( \frac{1}{z+1} + \frac{1}{z+\alpha}\right)
	- \frac{\xi-\eta}{z}.		
	\]
	From \eqref{Phidef} and \eqref{Psidef} and the last two equalities, we then find
	\[ -\frac{\alpha}{z^2} \Phi_{\alpha}'\left(\frac{\alpha}{z}; \xi,\eta\right) = 	\Phi_{\alpha}'(z; \xi, \xi-\eta) \]
	and similarly for $\Psi_{\alpha}$. This gives \eqref{eq:saddlesymmetry2}, since by definition 
	$s(\xi,\xi-\eta;\alpha)$ is the saddle that is in the
	upper half plane, and therefore the complex conjugation appears
	in \eqref{eq:saddlesymmetry2}.
\end{proof}

\subsection{Preliminaries to the asymptotic analysis}

Theorem \ref{thm:main} will follow from Theorem \ref{thm:doubleintegrals_for_lozenge_densities} 
and the following result.

\begin{proposition} \label{prop:doubleintegrallimit}
	Let $0 < \alpha \leq 1$.
	Suppose $x, y \in \mathbb N$ vary with $N$ such that
	\eqref{eq:scaled_variables} holds with $(\xi,\eta) \in \mathcal L_{\alpha}$.
	Let $(w,z) \mapsto H(w,z)$ satisfy the conditions stated after the definition \eqref{eq:IxyH}.
	Then $\mathcal I_N(x,y; H)$ from  \eqref{eq:IxyH} has the limit
	\begin{equation} \label{eq:Hintegral} 
	\lim_{N \to \infty} \mathcal I_N(x,y;H) = \frac{1}{2\pi i} \int_{\overline{s}}^s H(z,z) dz 
	\end{equation}
	where $s = s(\xi,\eta;\alpha)$ and the integration path from $\overline{s}$ to $s$ in \eqref{eq:Hintegral}
	is in $\mathbb C \setminus (-\infty,0]$.
\end{proposition}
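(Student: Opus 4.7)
The plan is to first use the symmetries in Propositions \ref{prop:symmetries} and \ref{prop:saddlesymmetry} to reduce to the lower-left sub-regime $\eta\le\xi/2<0$ of $\mathcal L_\alpha$; there Lemma \ref{lem:Ldivision}(a),(e) places the saddle $s=s(\xi,\eta;\alpha)$ in the closed disk $|s|\le\sqrt{\alpha}$ with $\Phi_\alpha'(s)=0$, and Corollary \ref{cor:contoursexist} supplies the contours $\gamma_z,\gamma_{w,\mathrm{in}},\gamma_{w,\mathrm{out}}$ (low temperature) or $\gamma_z,\gamma_w$ (high temperature) with the required sign properties of $\Re\Phi_\alpha$. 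The next step is to rewrite the integrand: using the RH transformation \eqref{Y to T transformation}, the identity $V_\alpha=2g+\ell-2\phi$, and the Christoffel--Darboux formula, one finds
\[
R_N(w,z)\,e^{-NV(w)}\,\frac{F(z;x,y)}{F(w;x,y)} \;=\; \frac{(T^{-1}(w)T(z))_{21}}{z-w}\,e^{N(\Phi_\alpha(z)-\Phi_\alpha(w)+2\phi(w))+O(1)},
\]
with $(T^{-1}(w)T(z))_{21}$ uniformly bounded by Proposition \ref{prop:TandTinvsmall} away from $z_\pm(\alpha)$. On $\gamma_0$ the factor $e^{2N\phi(w)}$ is unimodular since $\Re\phi=0$ on $\Sigma_0$, and on contours inside $\gamma_0$ and outside $\Sigma_{-\alpha}$ one has $\Re\phi\le 0$, which only improves the bound.

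I would next deform the $z$-contour to $\gamma_z$ (free of residues). The crucial observation for the $w$-deformation is that the high-order zeros $(w+1)^N$ and $(w+\alpha)^N$ of the weight cancel the at-most-first-order poles of $H$ at $w=-1,-\alpha$, so the only singularity of the $w$-integrand inside $\gamma_0$ is the pole at $w=0$. Hence $[\gamma_0]$ is homologous to $[\gamma_{w,\mathrm{out}}]$ (both enclose $\{0,-\alpha\}$), and writing $\gamma_{w,\mathrm{out}}=\gamma_{w,\mathrm{in}}+(\gamma_{w,\mathrm{out}}-\gamma_{w,\mathrm{in}})$, the piece $\oint_{\gamma_{w,\mathrm{in}}}$ vanishes by Cauchy's theorem (the integrand is analytic inside $\gamma_{w,\mathrm{in}}$), while $\oint_{\gamma_{w,\mathrm{out}}-\gamma_{w,\mathrm{in}}}=2\pi i\,\mathrm{Res}_{w=0}$. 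The high-temperature case is analogous with $\gamma_w$ around $-1$, using the analytic-continuation bound of Corollary \ref{cor:TandTinvsmall}(c) to justify deforming into the region bounded by $\Sigma_{-1}\cup(\gamma_0\setminus\Sigma_0)$. Therefore
\[
\mathcal I_N(x,y;H) \;=\; \frac{1}{2\pi i}\oint_{\gamma_z}\mathrm{Res}_{w=0}\Big[R_N(w,z)\,e^{-NV(w)}\,\tfrac{F(z;x,y)}{F(w;x,y)}H(w,z)\Big]\,dz.
\]

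The limit follows from an asymptotic analysis of this single $z$-integral. Expanding the residue via the Christoffel--Darboux sum $R_N=\sum_{n<N}p_n(w)p_n(z)/\kappa_n$ and inserting the strong asymptotics of $p_n$ from the RH analysis of Section \ref{sec:RHP}, the residue is of the form $e^{N\psi(z)+O(1)}$ so that the total $z$-exponent is $N\Phi_\alpha(z)+O(1)$, whose saddles in $\mathbb C^+$ are precisely $s$ and (by conjugation at) $\overline s$. A classical Laplace-method computation along $\gamma_z$ (constructed in Corollary \ref{cor:contoursexist} precisely to realize steepest descent through these saddles), combined with the identification of the semiclassical residue at $w=0$ with the diagonal value $H(z,z)$ at the saddle via the reproducing property of $R_N\cdot\mathrm{weight}$, collapses the integral to $\frac{1}{2\pi i}\int_{\overline s}^s H(z,z)\,dz$, with the contour in $\mathbb C\setminus(-\infty,0]$ because $\gamma_z$ avoids the negative real axis. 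The main obstacle is precisely this last step: matching the Laplace-method asymptotics with the closed-form single integral requires detailed tracking of the $O(1)$ prefactors coming from both the residue expansion and the Gaussian integration; in the high-temperature regime one must additionally handle the $O(N^{1/6})$ bound degradation near $z_\pm(\alpha)$ via $\delta$-neighborhoods and local Airy parametrices.
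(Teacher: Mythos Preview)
Your reduction to $\eta\le\xi/2<0$ via the symmetries is correct and matches the paper. After that, however, the argument goes off track.

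The displayed identity $\mathcal I_N(x,y;H)=\frac{1}{2\pi i}\oint_{\gamma_z}\Res_{w=0}[\cdots]\,dz$ is trivially true (the $w$-integrand has its only pole at $w=0$), but it is not useful: that pole has order $2N-y\sim N(1-\eta)$, so the residue is a polynomial of huge degree in $z$ with no tractable closed form. Your final paragraph then asserts that this residue behaves like $e^{N\psi(z)+O(1)}$ and that a Laplace computation ``collapses'' to $\frac{1}{2\pi i}\int_{\overline s}^s H(z,z)\,dz$; neither claim is substantiated, and the appeal to ``the reproducing property of $R_N\cdot\mathrm{weight}$'' does not produce the diagonal value $H(z,z)$ from a residue at $w=0$. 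There is also a mismatch in your exponent bookkeeping: your formula gives $w$-dependence $e^{N(-\Phi_\alpha(w)+2\phi(w))}=e^{-N\Psi_\alpha(w)}$, whereas the contours of Corollary~\ref{cor:contoursexist} are tailored to $\Re\Phi_\alpha$, not $\Re\Psi_\alpha$.

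The missing idea is this: the limiting single integral does \emph{not} arise from a saddle-point evaluation at all. The paper introduces the auxiliary kernel $\mathcal R_N(w,z)=(1\ 0)\,Y^{-1}(w)Y(z)\,(1\ 0)^T$ (the $(1,1)$-entry, not the $(2,1)$-entry you use) and the Sokhotskii--Plemelj relation
\[
R_N(w,z)\,\frac{(w+1)^N(w+\alpha)^N}{w^{2N}}\,(w-z)=\mathcal R_{N,+}(w,z)-\mathcal R_{N,-}(w,z),\qquad w\in\gamma_0,
\]
which splits $\mathcal I_N$ into two double integrals with an extra factor $\frac{1}{w-z}$. Deforming the $+$-piece inward to $\gamma_{w,\mathrm{in}}$ (or $\gamma_w$) one crosses the \emph{simple} pole at $w=z$ for those $z\in\gamma_z$ lying outside $\gamma_{w,\mathrm{in}}$; since $\mathcal R_N(z,z)=1$ \emph{exactly}, this residue contributes precisely $\frac{1}{2\pi i}\int_{\overline s}^{s}H(z,z)\,dz$ (Propositions~\ref{prop:deformationlow} and \ref{prop:deformationhigh}). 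The remaining double integrals are then shown to tend to $0$ using Corollary~\ref{cor:TandTinvsmall} and the sign properties of $\Re\Phi_\alpha$ on the contours --- the saddle analysis is used only for this vanishing, not to produce the answer. In the high-temperature case the $-$-piece is killed by moving the $w$-contour to infinity, using $\widetilde{\mathcal R}_N(w,z)=\bigO(w^{-N})$. These are the steps your proposal is missing.
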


The integrals \eqref{eq:Hintegral} are easy to calculate if $H$ is one of the functions from \eqref{eq:Hfunctions}.
For $H = H_{1,even}$, we obtain for example
\begin{align*} 
\frac{1}{2\pi i} \int_{\overline{s}}^s H_{1,even}(z,z)  dz 
& = \frac{1}{2\pi i} \int_{\overline{s}}^s \frac{dz}{z+\alpha}  = \frac{1}{2\pi i} \left[ \log(s+\alpha) - \log(\overline{s} + \alpha) \right] \\
& = \frac{1}{\pi} \arg (s+\alpha). 
\end{align*} 
Clearly, $\arg(s+\alpha)$ is equal to the angle $\psi_1$
in the triangle $T_{\alpha}$ of Figure \ref{fig:triangles}. 	 
Thus \eqref{prob paths up main thm} with $x$ even follows from
\eqref{double contour formula for P1} and 
Proposition \ref{prop:doubleintegrallimit}. The other
limits in Theorem \ref{thm:main} follow in a similar fashion.
Therefore we have reduced the proof of Theorem \ref{thm:main} to the
proof of Proposition \ref{prop:doubleintegrallimit}.

The symmetries from Proposition \ref{prop:symmetries} allow us 
to restrict our attention to $(\xi, \eta) \in \mathcal L_{\alpha}$ with $\eta \leq \frac{\xi}{2} \leq 0$.

Indeed, suppose that we can prove Proposition \ref{prop:doubleintegrallimit} for certain $(\xi,\eta) \in \mathcal L_{\alpha}$. 
Let $(x,y)$ vary with $N$ such that  \eqref{eq:scaled_variables}
hold but with limits  $(\xi,\xi-\eta) \in \mathcal L_{\alpha}$. 
Suppose $H$ satisfies the
conditions of Proposition \ref{prop:doubleintegrallimit}.
Then by \eqref{eq:Isymmetry2}
\begin{align} \lim_{N \to \infty} 
\mathcal I_N(x,y; H) & = \lim_{N \to \infty} \mathcal I_N(x,N+x-y; \widetilde{H}) \\
& =  \frac{1}{2\pi i} \int_{\overline{s}}^s
\widetilde{H}(z,z) dz, \qquad s = s(\xi,\eta;\alpha),
\end{align}
since  also $\widetilde{H}$
satisfies the conditions of Proposition \ref{prop:doubleintegrallimit},
and by assumption Proposition \ref{prop:doubleintegrallimit} 
holds for $(\xi,\eta)$.
Using \eqref{eq:tildeH} and after changing variables $\frac{\alpha}{z} \mapsto z$, we find 
\begin{align} 
\lim_{N \to \infty} \mathcal I_N(x,y; H) & =
\frac{1}{2\pi i} \int_{\overline{s}}^s
\frac{\alpha}{z^2} H\left(\frac{\alpha}{z}, \frac{\alpha}{z} \right) dz \\
& = \frac{1}{2\pi i} \int_{\alpha s^{-1}}
^{\alpha (\overline{s})^{-1} } H(z,z) dz,
\qquad s = s(\xi,\eta;\alpha). 
\end{align}
We finally use \eqref{eq:saddlesymmetry2} and
we find \eqref{eq:Hintegral} with $s = s(\xi,\xi-\eta;\alpha)$. 
Thus Proposition \ref{prop:doubleintegrallimit} holds for
$(\xi,\xi-\eta)$ if  it holds for $(\xi,\eta)$.

Similarly, but now using \eqref{eq:Isymmetry1}--\eqref{eq:hatH} and  \eqref{eq:saddlesymmetry1}, 
we find that  Proposition \ref{prop:doubleintegrallimit} holds for $(-\xi,-\eta)$ 
if it holds for $(\xi,\eta)$, and by combining the two arguments, 
it also holds for $(-\xi, -\xi+\eta)$.

Thus in order to prove Proposition \ref{prop:doubleintegrallimit}
it suffices to do it for $(\xi,\eta) \in \mathcal L_{\alpha}$
with $\eta \leq \frac{\xi}{2} \leq 0$. We focus on the case
$\eta \leq \frac{\xi}{2} < 0$ and give full arguments there. The case
$\xi = 0$ is special since it means that the saddle $s(\xi,\eta;\alpha)$
is on the branch cut $\Sigma_0$. It can be handled as a limiting case
with the help of additional contour deformations.

\subsection{Contour deformations}

\subsubsection{Contour deformation in the low temperature regime}

We start the analysis of the double integral \eqref{eq:IxyH} with a contour deformation. 
There are several ways to deform the contours, and the ones we are going to present 
will be useful for the lower left part of the liquid region, that is for 
$(\xi,\eta) \in \mathcal L_{\alpha}$ with $\eta \leq \xi/2 < 0$ as in Corollary \ref{cor:contoursexist}.
The deformations will be different for the low and high temperature regimes.

\begin{figure}
	\begin{center} 
		\begin{tikzpicture}[master,scale = 1.3,every node/.style={scale=1.3}]
		\node at (0,0) {\includegraphics[width=9cm]{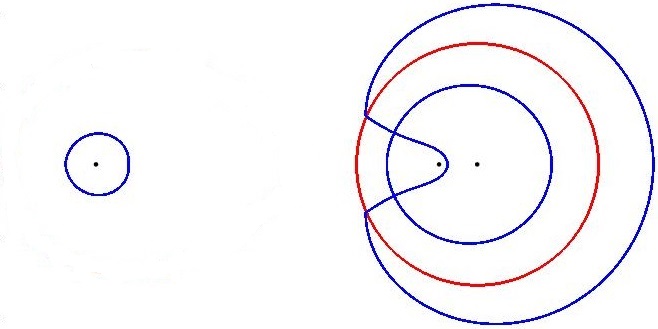}};

		\draw [line width=0.65 mm] (2.35,-0.01) circle(2.5);
		\draw[black,arrows={-Triangle[length=0.3cm,width=0.2cm]}]
		($(2.1,2.5)$) --  ++(-0.0001,0);
		
		\draw[line width=0.65 mm,->-=0.85] (0.9,0.425) to [out=200,in=90] (0.6,0) to [out=-90,in=160] (0.9,-0.425) to [out=-20,in=-90] (2.1,0) to [out=90,in=20] (0.9,0.425);
		
		\draw[line width=0.65 mm,->-=0.85,green] (0.9,0.425) to [out=-70,in=90] (1.1,0) to [out=-90,in=70] (0.9,-0.425) to [out=-110,in=180] (2.1,-1.35) to [out=0,in=-90] (3.3,0) to [out=90,in=0] (2.1,1.35) to [out=180,in=110] (0.9,0.425);
		
		\node at (2.5,1.4) {\footnotesize $\gamma_z$};
		\node at (0.2,2) {\footnotesize $\gamma_{w,out}$};
		\node at (2.5,0) {\footnotesize $\gamma_{w,in}$};
		
		\end{tikzpicture}
	\end{center}
	\caption{Contours $\gamma_z$ (green), $\gamma_{w,out}$ (black), and $\gamma_{w,in}$ (black)
		in the low temperature regime. The contours satisfy the
		conditions of Corollary \ref{cor:contoursexist} (a) and Proposition \ref{prop:deformationlow}. \label{fig:gammawzlow}} 
\end{figure}

\begin{proposition}
	\label{prop:deformationlow} Let $0 < \alpha \leq \frac{1}{9}$ and
	$(\xi,\eta) \in \mathcal L_{\alpha}$ with $\eta < \frac{\xi}{2} <0$.
	Let $\gamma_z$, $\gamma_{w,in}$ and $\gamma_{w,out}$ 
	be closed contours as in Corollary \ref{cor:contoursexist} (a), 
	(see also Figure \ref{fig:gammawzlow}).
	Then \eqref{eq:IxyH} is equal to
	\begin{multline} \label{eq:deformationlow}
	\mathcal I_N(x,y;H) 	= \frac{1}{2\pi i} \int_{\overline{s}}^s H(z,z) dz + \frac{1}{(2\pi i)^2} \oint_{\gamma_z} dz
	\oint_{\gamma_{w,in}} \frac{dw}{w-z}
	\mathcal R_{N}(w,z) \frac{F(z; x,y)}{F(w;x,y)} H(w,z) \\
	- \frac{1}{(2\pi i)^2} \oint_{\gamma_z} dz
	\oint_{\gamma_{w,out}} \frac{dw}{w-z}
	\mathcal R_{N}(w,z) \frac{F(z; x,y)}{F(w;x,y)} H(w,z)
	\end{multline}
	where $\mathcal R_N$ is given by \eqref{eq:calRN} and
	$F$ is given by \eqref{eq:Fzxy}.
\end{proposition}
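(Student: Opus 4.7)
The key algebraic input is an identity relating the two kernels $R_N$ and $\mathcal R_N$ on the contour $\gamma_0$. From the representations \eqref{RnzwY} and \eqref{eq:calRN} in terms of $Y$, together with the jump relation
\[
Y_+(w) = Y_-(w)\begin{pmatrix} 1 & e^{-NV_{\alpha}(w)} \\ 0 & 1 \end{pmatrix}, \qquad w \in \gamma_0,
\]
of Riemann--Hilbert problem \ref{rhp:Y}, a direct computation gives
\[
\mathcal R_{N,+}(w,z) - \mathcal R_{N,-}(w,z) = -(z-w)\,R_N(w,z)\,e^{-NV_{\alpha}(w)}, \qquad w \in \gamma_0,
\]
so that the weight-times-kernel combination appearing in \eqref{eq:IxyH} equals $\tfrac{1}{w-z}(\mathcal R_{N,+}-\mathcal R_{N,-})$ on $\gamma_0$.

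I would first deform the outer $z$-contour from $\gamma_0$ to $\gamma_z$. The $z$-integrand has its only $z$-singularity at $z=0$, and both contours encircle $0$ once, so this is free. Inserting the above jump identity into the inner $w$-integral then splits it as $\oint_{\gamma_0^+} - \oint_{\gamma_0^-}$, where $\gamma_0^{\pm}$ are contours infinitesimally close to $\gamma_0$ on the inner and outer sides, and $\mathcal R_N(w,z)$ extends analytically across $\gamma_0^+$ into $\{|w|<\sqrt{\alpha}\}$ and across $\gamma_0^-$ into $\{|w|>\sqrt{\alpha}\}$ with decay at infinity.

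Next I would shrink $\gamma_0^+$ inward to $\gamma_{w,in}$ and expand $\gamma_0^-$ outward to $\gamma_{w,out}$, while tracking residues. The possible $w$-poles of $\tfrac{\mathcal R_N(w,z) H(w,z)}{(w-z) F(w;x,y)}$ are $\{z,-\alpha,-1,0\}$. The factor $w^{y}$ inside $1/F(w;x,y)$ neutralizes the apparent pole at $w=0$; the pole at $w=-\alpha$ is enclosed by both $\gamma_0^+$ and $\gamma_{w,in}$, so is not crossed; and the pole at $w=-1$ lies outside both $\gamma_0^-$ and $\gamma_{w,out}$ by hypothesis, so is not crossed either. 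The only genuine residue contribution comes from $w=z$ during the inward deformation of $\gamma_0^+$, and it is picked up for exactly those $z \in \gamma_z$ that lie outside $\gamma_{w,in}$.

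Since $\gamma_z$ and $\gamma_{w,in}$ share exactly the two points $s, \overline s$ and enclose the disjoint regions around $0$ and $-\alpha$ respectively, exactly one of the two subarcs of $\gamma_z$ joining $\overline s$ to $s$ lies outside $\gamma_{w,in}$, namely the arc passing on the side of $0$. The residue at $w=z$ equals $\mathcal R_N(z,z)\,H(z,z) = H(z,z)$, using $\mathcal R_N(z,z) = \begin{pmatrix} 1 & 0 \end{pmatrix} Y^{-1}(z) Y(z) \begin{pmatrix} 1 \\ 0 \end{pmatrix} = 1$. Integrating this over the relevant arc, which can be deformed to any path from $\overline s$ to $s$ in $\mathbb C \setminus (-\infty,0]$ since $H(z,z)$ is rational with poles only at $0$, produces $\frac{1}{2\pi i}\int_{\overline s}^{s} H(z,z)\,dz$, while the deformed double integrals over $\gamma_z \times \gamma_{w,in}$ and $\gamma_z \times \gamma_{w,out}$ supply the remaining two terms in \eqref{eq:deformationlow}. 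The main technical subtlety is the pinching at $z = s, \overline s$ where $\gamma_z$ meets $\gamma_{w,in}$; I would handle it by an $\varepsilon$-perturbation of $\gamma_{w,in}$ to make the two contours disjoint, performing the residue bookkeeping on each arc of $\gamma_z$ unambiguously, and then passing to the limit to identify the correct arc of integration.
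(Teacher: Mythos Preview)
Your proposal is correct and follows essentially the same route as the paper: the paper also uses the jump identity
\[
\mathcal R_{N,+}(w,z) - \mathcal R_{N,-}(w,z) = (w-z)\,R_N(w,z)\,\frac{(w+1)^N(w+\alpha)^N}{w^{2N}}
\]
on $\gamma_0$ (derived there via Sokhotskii--Plemelj for the Cauchy integral \eqref{eq:calRN}, which is equivalent to your computation from the jump of $Y$), splits the $w$-integral into a $+$ and a $-$ piece, deforms inward to $\gamma_{w,in}$ and outward to $\gamma_{w,out}$, and picks up the residue at $w=z$ along the arc of $\gamma_z$ exterior to $\gamma_{w,in}$, using $\mathcal R_N(z,z)=1$. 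One small inaccuracy: $H(z,z)$ can also have poles at $-\alpha$ or $-1$ (e.g.\ $H_{1,even}(z,z)=\tfrac{1}{z+\alpha}$), not only at $0$; however, since the relevant arc of $\gamma_z$ lies on the $0$-side of $s,\overline s$ and thus avoids $(-\infty,0]$, your conclusion about the integration path is unaffected.
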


\begin{proof}
	In \eqref{eq:IxyH} we use $\gamma_z$ for the
	integral with respect to the $z$ variable, and $\gamma_0$
	(initially) for the $w$ variable. By the conditions in
	Corollary \ref{cor:contoursexist} (a), the contour $\gamma_z$
	lies inside $\gamma_0$.
	
	By Sokhotskii-Plemelj formula and \eqref{eq:calRN} 
	we have for $w \in \gamma_0$, 	
	\[  
	R_N(w,z) \frac{(w+1)^N (w+\alpha)^N}{w^{2N}} (w-z) \\
	= \mathcal R_{N,+}(w,z) - \mathcal R_{N,-}(w,z)  
	\] 
	where the $\pm$ boundary values are with respect to the $w$ variable. 
	This we substitute into the double integral \eqref{eq:IxyH} 
	to obtain 	the difference of two double integrals,
	\begin{multline} \frac{1}{(2\pi i)^2} \oint_{\gamma_z} dz
	\oint_{\gamma_0} \frac{dw}{w-z}
	\mathcal R_{N,+}(w,z) \frac{F(z;x,y)}{F(w;x,y)} H(w,z) \\
	- \frac{1}{(2\pi i)^2} \oint_{\gamma_z} dz
	\oint_{\gamma_0} \frac{dw}{w-z}
	\mathcal R_{N,-}(w,z) \frac{F(z;x,y)}{F(w;x,y)} H(w,z).
	\end{multline}
	
	We deform $\gamma_0$ inwards to $\gamma_{w,in}$ 
	in the first double integral 
	and outwards to $\gamma_{w,out}$ in the second double integral.
	(Recall that $+$-side refers to the interior of $\gamma_0$
	and $-$-side to its exterior.)
	
	We do not encounter any singularites of the integrand if we do
	the deformation into the exterior domain, since by assumption
	$\gamma_{w,out}$ does not go around $-1$. Thus by Cauchy's
	theorem  we obtain the last term in \eqref{eq:deformationlow}.
	
	In the deformation of the first integral we pick up residue
	contributions for those $z \in \gamma_z$ that are in
	the exterior of $\gamma_{w,in}$. This is due to the pole
	at $w=z$ that we encounter when deforming $\gamma_0$ into
	$\gamma_{w,in}$. Since $\mathcal R_N(z,z) = 1$,
	the contribution of the poles leads to the first term
	in \eqref{eq:deformationlow}. The remaining double 
	integral is the second term
	in \eqref{eq:deformationlow}.
\end{proof}

\subsubsection{Contour deformation in the high temperature regime}

In the second proposition (relevant for the high temperature case)
we modify the definition \eqref{eq:calRN}. We use a large circle $\gamma_{\rho}$
centered at the origin  of radius $\rho > 10$ and define  
\begin{equation} \label{eq:calRNhigh} 
\widetilde{\mathcal R}_{N}(w,z) =  \frac{1}{2\pi i} \oint_{\gamma_{\rho}} 
R_N(s,z) \frac{(s+1)^N(s+\alpha)^N}{s^{2N}} \frac{s-z}{s-w} ds. 
\end{equation}
Note that \eqref{eq:calRNhigh} coincides with \eqref{eq:calRN} for $w$
inside $\gamma_0$, and it is the analytic continuation (in the $w$ variable) 
of \eqref{eq:calRN} with $|w|<\alpha$  to the disk $|w| < \rho$. 
Because of \eqref{eq:calRNinT} and the jump \eqref{eq:Tjump2} of $T$, we have
\begin{equation} \label{eq:tildeRNinT}	
\widetilde{\mathcal R}_N(w,z)
= \begin{cases} \begin{pmatrix} 1 & 0 \end{pmatrix} T^{-1}(w) T(z) \begin{pmatrix} 1 \\ 0 \end{pmatrix} e^{N(g(z)-g(w))}, & \quad |w| < \sqrt{\alpha}, \\
\begin{pmatrix} 1 & - e^{2N \phi(z)}  \end{pmatrix} T^{-1}(w) T(z) \begin{pmatrix} 1 \\ 0 \end{pmatrix} e^{N(g(z)-g(w))}, & \quad \sqrt{\alpha} < |w| < \rho, \end{cases}
\end{equation}

\begin{figure}[t]
	\begin{center}
		\begin{tikzpicture}
		[master,scale = 1.3,every node/.style={scale=1.3}]
		\node at (0,0) {\includegraphics[width=9cm]{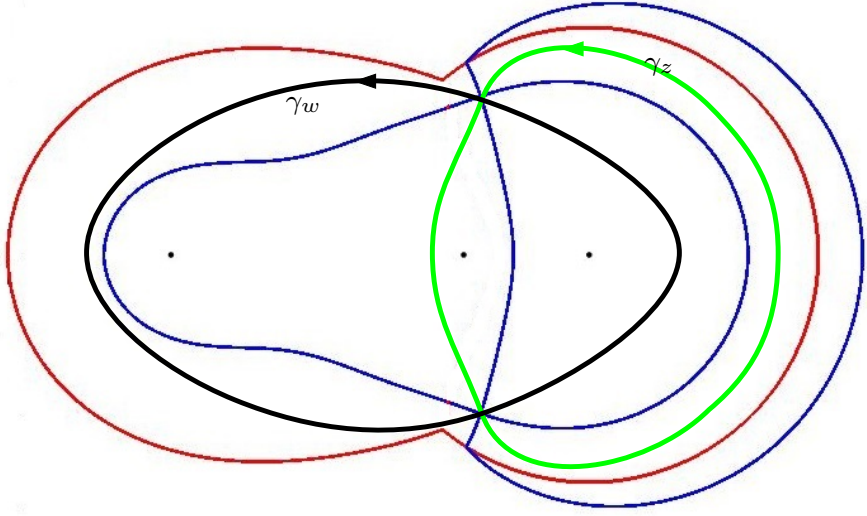}};

		\draw[green, line width=0.65 mm] (0.5,1.56) [out=-110,in=90] to (0,0) [out=-90,in=110] to (0.5,-1.64) [out=-70,in=-135] to (2.8,-1.6) [out=40,in=-90] to (3.5,0) [out=90,in=-45] to (2.8,1.5) [out=135,in=70] to (0.5,1.56);
		\draw[green,arrows={-Triangle[length=0.3cm,width=0.2cm]}]
		($(1.3,2.08)$) --  ++(-0.0001,0);
		\node at (2.3,1.9) {\footnotesize $\gamma_z$};
		\node at (-1.3,1.5) {\footnotesize $\gamma_w$};
		
		\draw [black, line width=0.65 mm] plot [smooth cycle, tension=0.7] coordinates {(0.5,1.56) (2.5,0) (0.5,-1.64) (-2,-1.5) (-3.5,0) (-2,1.5) };
		\draw[black,arrows={-Triangle[length=0.3cm,width=0.2cm]}]
		($(-0.8,1.75)$) --  ++(-0.0001,0);

		\end{tikzpicture}
	\end{center}
	\caption{\label{fig:gammawzhigh} 
		The  contours $\gamma_z$ (green) and $\gamma_w$ (black) in the high
		temperature regime. The contours satisfy the conditions 
		of Corollary \ref{cor:contoursexist} (b) and
		Proposition~\ref{prop:deformationhigh}.} 
\end{figure}

\begin{proposition} \label{prop:deformationhigh}
	Let $\frac{1}{9} < \alpha < 1$ and
	$(\xi,\eta) \in \mathcal L_{\alpha}$ with $\eta \leq \frac{\xi}{2} <0$.
	Suppose $\gamma_z$ and $\gamma_{w}$ 
	are closed contours as in Corollary \ref{cor:contoursexist} (b), 
	(see also Figure \ref{fig:gammawzhigh}).
	Let $(x,y)$ be coordinates inside the hexagon.
	Then the double contour integral \eqref{eq:IxyH} is equal to
	\begin{equation} \label{eq:deformationhigh}
	\mathcal I_N(x,y;H) =
	\frac{1}{2\pi i} \int_{\overline{s}}^s H(z,z) dz 	+ \frac{1}{(2\pi i)^2} \oint_{\gamma_z} dz
	\oint_{\gamma_{w}} \frac{dw}{w-z}
	\widetilde{\mathcal R}_{N}(w,z) \frac{F(z; x,y)}{F(w;x,y)} H(w,z),
	\end{equation}
	where $\widetilde{\mathcal R}_N$ is given by \eqref{eq:calRNhigh} and
	$F$ is given by \eqref{eq:Fzxy}.
\end{proposition}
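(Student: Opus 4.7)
The proof will follow the same general strategy as that of Proposition \ref{prop:deformationlow}, but adapted to the high-temperature setting in which the single contour $\gamma_w$ encircles $-1$ in addition to the poles $-\alpha$ and $0$ already enclosed by $\gamma_0$, and where the analytic extension $\widetilde{\mathcal R}_N$ takes the place of the inner boundary value $\mathcal R_{N,+}$.

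I would first deform the $z$-contour freely from $\gamma_0$ to $\gamma_z$, since the $z$-integrand has its only singularity at $z=0$. Then, from the defining integral \eqref{eq:calRNhigh} of $\widetilde{\mathcal R}_N$ and a residue computation at $s=w$ when $\gamma_\rho$ is contracted to $\gamma_0$ (valid for $\sqrt\alpha<|w|<\rho$), one obtains, on $\gamma_0$,
\[
R_N(w,z)\frac{(w+1)^N(w+\alpha)^N}{w^{2N}}(w-z)=\widetilde{\mathcal R}_N(w,z)-\mathcal R_{N,-}(w,z),
\]
where $\mathcal R_{N,-}$ denotes the boundary value of $\mathcal R_N$ from the exterior of $\gamma_0$. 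Dividing by $w-z$ and substituting splits $\mathcal I_N(x,y;H)=\mathcal J^{(1)}-\mathcal J^{(2)}$, with $\widetilde{\mathcal R}_N$ and $\mathcal R_{N,-}$ in the respective numerators of the two double integrals over $\gamma_z\times\gamma_0$.

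For $\mathcal J^{(2)}$, the continuation of $\mathcal R_N$ to $|w|>\sqrt\alpha$ is analytic and decays like $O(w^{-N})$, while the remaining $w$-factors decay sufficiently since $(\xi,\eta)$ is interior to $\mathcal H$; I would push the $w$-contour to infinity and collect the single residue at $w=-1$. For $\mathcal J^{(1)}$, because $\widetilde{\mathcal R}_N$ is analytic throughout $|w|<\rho$, the $w$-contour can be deformed freely from $\gamma_0$ to $\gamma_w$. The key geometric input is that the poles $-\alpha$ and $0$ lie in the same component of $\mathbb C\setminus(\Sigma_0\cup\Sigma_{-1})$ as $-1$; consequently $\gamma_w$ can be taken with winding $+1$ around each of $-1,-\alpha,0$, so $\gamma_w$ and $\gamma_0$ agree in their windings around $-\alpha$ and $0$. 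The deformation therefore picks up residues only at $w=-1$ and at $w=z$ for those $z\in\gamma_z$ lying inside $\gamma_0$ but outside $\gamma_w$, namely $z$ on the subarc of $\gamma_z$ from $\bar s$ to $s$ that is separated from the origin-enclosing domain by $\gamma_w$. Using $\widetilde{\mathcal R}_N(z,z)=1$, these $w=z$ residues integrate to $\frac{1}{2\pi i}\int_{\bar s}^s H(z,z)\,dz$.

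The main technical point is the cancellation of the two $w=-1$ residues. Their difference is governed by $\mathcal R_{N,-}-\widetilde{\mathcal R}_N=-R_N(w,z)\frac{(w+1)^N(w+\alpha)^N}{w^{2N}}(w-z)$, which carries a zero of order $N$ at $w=-1$. Combined with the other $w$-factors $\frac{1}{F(w;x,y)}H(w,z)$, which have a pole of order at most $\lfloor x/2\rfloor+1$ at $w=-1$, the difference of integrands is holomorphic at $w=-1$ provided $N-\lfloor x/2\rfloor-1\ge 0$; this holds for $(\xi,\eta)$ in the interior of $\mathcal H$ and $N$ large, because then $x<2N-1$. Hence the two $w=-1$ residues match and cancel in $\mathcal J^{(1)}-\mathcal J^{(2)}$, leaving exactly the double integral over $\gamma_z\times\gamma_w$ with integrand $\widetilde{\mathcal R}_N(w,z)/(w-z)\cdot F(z;x,y)/F(w;x,y)\cdot H(w,z)$ together with the single integral $\frac{1}{2\pi i}\int_{\bar s}^s H(z,z)\,dz$, which is precisely \eqref{eq:deformationhigh}.
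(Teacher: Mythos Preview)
Your argument is correct, but it takes a more circuitous route than the paper's. The paper does \emph{not} split on $\gamma_0$; instead it chooses the large circle $\gamma_\rho$ as the $w$-contour from the start and applies the Sokhotskii--Plemelj jump of $\widetilde{\mathcal R}_N$ across $\gamma_\rho$. The ``minus'' integral then has no singularities in $|w|>\rho$ and decays like $O(w^{-2})$ (using $-N<y-x<N$ for interior points), so it vanishes identically. The ``plus'' integral is deformed from $\gamma_\rho$ directly to $\gamma_w$; since $\gamma_w$ already encloses $-1$ and $-\alpha$, the only poles crossed are the $w=z$ poles, yielding the single integral immediately. No residue at $w=-1$ ever appears.

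Your approach works too: starting from $\gamma_0$ forces you to cross $w=-1$ both when pushing $\mathcal J^{(2)}$ to infinity and when deforming $\mathcal J^{(1)}$ to $\gamma_w$, and you correctly observe that the two residues differ by a term carrying the factor $(w+1)^N$, which kills the pole of order $\lfloor x/2\rfloor+1\le N$ coming from $F(w;x,y)^{-1}H(w,z)$. (Small correction: the bound you need is $x\le 2N-1$, not $x<2N-1$; this holds for any $(x,y)$ strictly inside the hexagon, not just asymptotically.) You should also note that the winding of $\gamma_w$ around $0$ is irrelevant since the $w$-integrand has no pole there (as $y\ge 0$ and none of the $H$'s in \eqref{eq:Hfunctions} has a $w$-pole at $0$); what matters is that $\gamma_w$ necessarily encloses $-\alpha$, which follows from the structure of the $\pm$ sectors of $\mathcal N_\Phi$ at $s$. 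The paper's choice of $\gamma_\rho$ simply sidesteps this entire bookkeeping.
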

\begin{proof}
	As in the proof of Proposition \ref{prop:deformationlow} we have	
	(but now we use \eqref{eq:calRNhigh})
	\[  
	R_N(w,z) \frac{(w+1)^N (w+\alpha)^N}{w^{2N}} (w-z) \\
	= \widetilde{\mathcal R}_{N,+}(w,z) - 
	\widetilde{\mathcal R}_{N,-}(w,z)  
	\] 
	with $w \in \gamma_\rho$, and the $\pm$ boundary values are
	for $w \in \gamma_\rho$.
	
	We choose $\gamma_\rho$ for the contour in the $w$ integral
	in \eqref{eq:IxyH} and $\gamma_z$ for the $z$ integral.
	Then the double contour integral is a difference of two double integrals
	\begin{multline} \label{eq:integraldifferencehigh} 
	\frac{1}{(2\pi i)^2} \oint_{\gamma_z} dz
	\oint_{\gamma_\rho} \frac{dw}{w-z}
	\widetilde{\mathcal R}_{N,+}(w,z) \frac{F(z;x,y)}{F(w;x,y)} H(w,z) \\
	- \frac{1}{(2\pi i)^2} \oint_{\gamma_z} dz 
	\oint_{\gamma_\rho} \frac{dw}{w-z}
	\widetilde{\mathcal R}_{N,-}(w,z) \frac{F(z;x,y)}{F(w;x,y)} H(w,z)
	\end{multline}
	with $\gamma_z$ inside $\gamma_\rho$.
	
	The integrand in the second double integral has no singularities for $|w| > \rho$, since
	the poles are at $w=z$, $w=-1$, $w=-\alpha$, and they are
	all inside. For $|w| > \rho$ we have $\widetilde{\mathcal R}_N(w,z) = \mathcal{R}(w,z)$.
	From the asymptotic behavior in the RH problem \ref{rhp:Y} for $Y$ we get
	\[ \begin{pmatrix} 1 & 0 \end{pmatrix} Y^{-1}(w)
	= \begin{pmatrix} 1 & 0 \end{pmatrix} \begin{pmatrix}
	w^{-N} & 0 \\ 0 & w^N \end{pmatrix} \left(I + \bigO(w^{-1})\right) = 
	\bigO\left(w^{-N}\right) \]
	as $w \to \infty$, and thus by \eqref{eq:calRN}  
	\[ \widetilde{\mathcal R}_N(w,z) = \bigO\left(w^{-N}\right) \quad \text{ as }
	w \to \infty. \]
	Also 	by the definition  of $F$, see \eqref{eq:Fzxy}, we have 
	$\left(F(w;x_2,y_2) \right)^{-1} = \bigO(w^{y_2 - x_2})$ as $w \to \infty$.
	By combining with \eqref{eq:Hfunctions}, we see that the full integrand  in \eqref{eq:integraldifferencehigh} is 
	therefore $O\left(w^{-N+y_2-x_2 -1}\right)$ as $w \to \infty$.
	Since $(x, y)$ is a point inside the hexagon, we have
	inequalities $-N < y_2 - x_2 < N$. Thus, since we are dealing
	with integers, the integrand is $O(w^{-2})$ as $w \to \infty$.
	Therefore the second double integral in
	\eqref{eq:integraldifferencehigh} vanishes identically.
	
	In the first double integral we deform $\gamma_\rho$ to $\gamma_w$
	as in the statement of the proposition. We pick up a
	residue contribution at the pole $w=z$ for those $z \in \gamma_z$ that
	lie in the exterior of $\gamma_w$. This gives the
	first term in \eqref{eq:deformationhigh}. The remaining
	double integral is the second term in \eqref{eq:deformationhigh}.
\end{proof}

\subsection{Proof of Proposition \ref{prop:doubleintegrallimit}}

We are now ready for the proof of Proposition \ref{prop:doubleintegrallimit}
which, as already noted leads to the proof of Theorem \ref{thm:main}.
We also noted that it suffices to prove the proposition for
$(\xi, \eta) \in \mathcal L_{\alpha}$ with  $\eta \leq \frac{\xi}{2} \leq 0$.

We first assume $\xi < 0$ and later deal with the modifications
that are necessary for $\xi = 0$.

We write $x = x_N = (1 + \xi_N) N$, $y = y_N = (1+\eta_N) N$, and we
are in the situation where
\[ (\xi_N, \eta_N) \to (\xi,\eta) \in \mathcal L_{\alpha} \]
with $\eta \leq \frac{\xi}{2} < 0$. For $N$ large enough,
we then also have $(\xi_N, \eta_N) \in \mathcal L_{\alpha}$
with $\frac{\xi_N}{2} < 0$. We may also assume that $\eta_N \leq \frac{\xi_N}{2} < 0$,
because of symmetries as in Proposition \ref{prop:symmetries} (b) 
and Proposition \ref{prop:saddlesymmetry}.

Then also $\Phi_N(z) := \Phi_{\alpha}(z;\xi_N,\eta_N) $
and the saddle  $s_N := s(\xi_N,\eta_N;\alpha)$ vary with $N$,
but in a controlled way. As $N \to \infty$ they tend
to their limiting values $\Phi_{\alpha}(z;\xi,\eta)$
and $s := s(\xi,\eta;\alpha)$. 

In particular
\begin{equation} \label{eq:HNtendstoH} 
\frac{1}{2\pi i} \int_{\overline{s}_N}^{s_N} H(z,z) dz
\to \frac{1}{2\pi i} \int_{\overline{s}}^{s} H(z,z) dz
\end{equation}
as $N \to \infty$.

\subsubsection{Low temperature regime with $\eta < \frac{\xi}{2} < 0$}
Let $\gamma_{z}^{(N)}$ and $\gamma_{w,in}^{(N)}$, $\gamma_{w,out}^{(N)}$
be contours as in Corollary \ref{cor:contoursexist} (a) and  
Proposition~\ref{prop:deformationlow}  but corresponding to the parameters 
$(\xi_N,\eta_N)$ and $s = s_N$.
Then by \eqref{eq:deformationlow} 
\begin{multline} \label{eq:deformationlow2}
\mathcal I_N(x_N,y_N;H) =	\frac{1}{2\pi i} \int_{\overline{s}_N}^{s_N} H(z,z) dz + \frac{1}{(2\pi i)^2} \oint_{\gamma_z^{(N)}} dz
\oint_{\gamma_{w,in}^{(N)}} \frac{dw}{w-z}
\mathcal R_N(w,z) \frac{F(z;x_N,y_N)}{F(w;x_N,y_N)} H(w,z) \\
- \frac{1}{(2\pi i)^2} \oint_{\gamma_z^{(N)}} dz
\oint_{\gamma_{w,out}^{(N)}} \frac{dw}{w-z}
\mathcal R_N(w,z) \frac{F(z;x_N,y_N)}{F(w;x_N,y_N)} H(w,z) 
\end{multline}
and in view of \eqref{eq:HNtendstoH} it is enough to show that
the two double integrals in \eqref{eq:deformationlow2} tend
to $0$ as $N \to \infty$.

By Corollary \ref{cor:TandTinvsmall} (a) there exists a constant 
$C_1 > 0$ such that
\begin{equation} \label{eq:RNbound} 
\left| \mathcal R_N(w,z) \right| 
\leq C_1 \left| e^{N(g(z) - g(w))} \right|. \end{equation}
Also by definitions \eqref{Phidef} and \eqref{eq:Fzxy}
\[ e^{N g(z)} F(z;x_N,y_N) e^{N \frac{\ell}{2}} =
e^{N \Phi_N(z)} \times
\begin{cases} 1, & \text{ if $x_N$ is even}, \\
\left(\frac{z+\alpha}{z+1} \right)^{1/2}, & 
\text{ if $x_N$ is odd}. \end{cases} \]
The contours stay away from $-\alpha$ and $-1$, therefore the extra factor 
in case $x_N$ is odd remains bounded and  bounded away from $0$.
Combining this with \eqref{eq:RNbound} we obtain for some constant $C_2>0$,
\begin{equation} \label{eq:RNtimesFbound} 
\left| \mathcal R_N(w,z) \frac{F(z;x_N,y_N)}{F(w;x_N,y_N)} \right| 
\leq C_2 \left| e^{N(\Phi_N(z)- \Phi_N(w))} \right|,
\end{equation}
for $w \in \gamma_w^{(N)} := 
\gamma_{w,out}^{(N)} \cup \gamma_{w,in}^{(N)}$, and $z \in \gamma_z^{(N)}$.

By Corollary \ref{cor:contoursexist} (a) the contours are in regions
where $\Re \Phi_N(z) < \Re \Phi_N(s_N) < \Re \Phi_N(w)$,
except for $\{ w, z \} \subset \{s_N, \overline{s}_N\}$, when there is equality. 
We can actually estimate (since the saddles are simple, and locally near
the saddles we can follow steepest/ascent paths)
\begin{equation} \label{eq:PhinearsNbound}
\begin{aligned} 
\Re \left( \Phi_N(w) - \Phi_N(s_N) \right)
\geq  C_3 |w-s_N|^2, & \quad \text{ for } w \in \gamma_w^{(N)} \cap \mathbb C^+, \\
\Re \left( \Phi_N(z) - \Phi_N(s_N) \right)
\leq -  C_3 |z-s_N|^2, & \quad \text{ for } z \in \gamma_z^{(N)} \cap \mathbb C^+,
\end{aligned}
\end{equation} 
with a constant $C_3 > 0$ that is independent of $N$. By symmetry of
the contours in the real axis, there are similar estimates for
$w$ and $z$ in the lower half plane.
Then it follows from \eqref{eq:RNtimesFbound}
that the second double integral in \eqref{eq:deformationlow2} is exponentially small as $N \to \infty$
since $\gamma_{w,out}^{(N)}$ stays away from the saddle $s_N$.

The first double integral in \eqref{eq:deformationlow2} is not exponentially small, 
since the contours intersect at
the saddles $s_N$ and $\overline{s}_N$. The dominant contribution
comes from both $w$ and $z$ close to the saddle points. 
For a small enough $\delta > 0$, we may assume that $\gamma_{w,in}^{(N)}
\cap D_{\delta}(s_N)$ and $\gamma_z^{(N)} \cap D_{\delta}(s_N)$ are straight line segments
that meet at right angles. Then there are parametrizations 
with $-\delta < x < \delta$ and $-\delta < y < \delta$ such
that $|z-s_N| = |x|$, $|w-s_N| = |y|$ and $|w-z| = \sqrt{x^2+y^2}$
for $z, w$ on the contours in the $\delta$-neighborhood of $s_N$.

From  estimates \eqref{eq:RNtimesFbound} and \eqref{eq:PhinearsNbound} we then 
easily get for some $C_4 > 0$,
\begin{multline}
\left|
\frac{1}{(2\pi i)^2} \oint_{\gamma_z^{(N)} \cap D_{\delta}(s_N)} dz
\oint_{\gamma_{w,in}^{(N)} \cap D_{\delta}(s_N)} \frac{dw}{w-z}
\mathcal R_N(w,z) \frac{F(z;x_N,y_N)}{F(w;x_N,y_N)} H(w,z) \right| \\
\leq C_4 \iint_{|x|^{2}+|y|^{2}\leq \delta^{2}}
e^{-2C_3 N (x^2+y^2)} \frac{dxdy}{\sqrt{x^2 + y^2}}  =  2\pi C_4 \int_0^{\delta}  e^{-2C_3 N r^2} dr
\end{multline}
which tends to zero as $N \to \infty$. The same estimates hold
for $w$ and $z$ near $\overline{s}_N$, or for $w$ near $s_N$
and $z$ near $\overline{s}_N$ or vice versa, and it follows that the
first double integral in \eqref{eq:deformationlow2} tends to zero as $N \to \infty$.

Thus both double integrals tend to zero as $N \to \infty$.
Because of \eqref{eq:HNtendstoH} we then conclude that \eqref{eq:Hintegral} holds.

%
%
%

\subsubsection{High temperature regime with $\eta \leq \frac{\xi}{2} < 0$ }

The proof in the high temperature regime is similar.
We again use $N$ dependent contours $\gamma_w^{(N)}$ and $\gamma_z^{(N)}$
satisfying the conditions of Corolarry \ref{cor:contoursexist} (b) and Proposition \ref{prop:deformationhigh}.
Due to \eqref{eq:deformationhigh} and \eqref{eq:HNtendstoH} we have to show that 
\begin{equation} \label{eq:deformationhigh2} 
\frac{1}{(2\pi i)^2} 
\oint_{\gamma_z^{(N)}} dz \oint_{\gamma_w^{(N)}}
\frac{dw}{w-z} \widetilde{\mathcal R}_N(w,z) \frac{F(z;x_N,y_N)}{F(w;x_N,y_N)}
H(w,z) \end{equation}
tends to $0$ as $N \to \infty$.

We recall that $w \mapsto \widetilde{\mathcal R}_N(w,z)$
is the analytic continuation of $w \mapsto \mathcal R_N(w,z)$
from the disk $|w| < \sqrt{\alpha}$ into the large disk $|w| < \rho$.
It then follows from Corollary \ref{cor:TandTinvsmall} (b) and (c) that
\begin{equation} \label{eq:calRNbound} 
\widetilde{\mathcal R}_N(w,z) \leq	
C_1 \left| e^{N(g(z) - g(w))} \right| 
\end{equation}
whenever $w$ is in the domain bounded
by $\Sigma_0 \cup \Sigma_{-1}$ and $z \in \mathbb C$ with $w, z$ bounded
away from the branch points $z_{\pm}$.
This is the estimate that is analogous to \eqref{eq:RNbound} in the
low temperature regime.

By Corollary \ref{cor:contoursexist} (b) the contour $\gamma_w^{(N)}$ is 
inside $\Sigma_0 \cup  \Sigma_{-1}$, and we can apply \eqref{eq:calRNbound}
in the estimation of \eqref{eq:deformationhigh2}. The rest of the
proof is the same as in the low temperature regime with $\xi < 0$.

\subsubsection{Case $\xi=0$ and $\eta < 0$}

\begin{figure}[t]
	\begin{center}
		\hspace{-1cm}\begin{tikzpicture}
		[master,scale = 1.3,every node/.style={scale=1.3}]
		\node at (0,0) {\includegraphics[width=5cm]{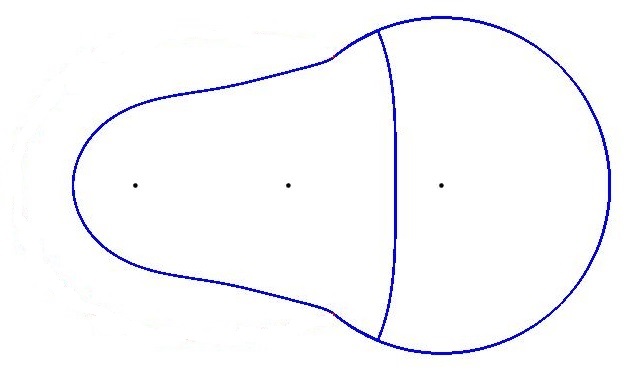}};
		\node at (-1.45,-0.2) {\tiny $-1$};
		\node at (-1.4,-0.02) {\tiny $\bullet$};
		\node at (-0.15,-0.2) {\tiny $-\alpha$};
		\node at (-0.16,-0.02) {\tiny $\bullet$};
		\node at (1,-0.2) {\tiny $0$};
		\node at (1,-0.02) {\tiny $\bullet$};
		\node at (0.47,1.05) {\tiny $s$};
		\node at (0.53,1.22) {\tiny $\bullet$};
		\node at (0.47,-1.07) {\tiny $\overline{s}$};
		\node at (0.53,-1.25) {\tiny $\bullet$};
		\node at (0.17,0.77) {\tiny $z_{+}$};
		\node at (0.165,0.985) {\tiny $\bullet$};
		\node at (0.17,-0.85) {\tiny $z_{-}$};
		\node at (0.165,-1.03) {\tiny $\bullet$};
		
		\node at (-0.8,1.5) {$+$};
		\node at (1.4,0.3) {$+$};
		\node at (-0.8,0.3) {$-$};
		\end{tikzpicture}\begin{tikzpicture}
		[slave,scale = 1.3,every node/.style={scale=1.3}]
		\node at (0,0) {\includegraphics[width=4.5cm]{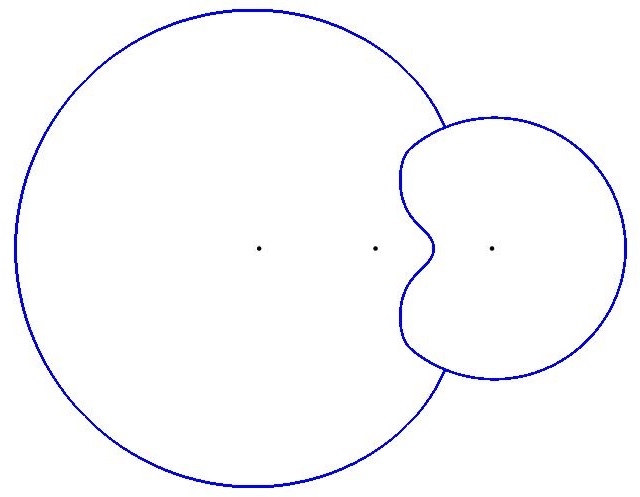}};
		\node at (-0.4,-0.2) {\tiny $-1$};
		\node at (-0.4,-0.01) {\tiny $\bullet$};
		\node at (0.4,-0.2) {\tiny $-\alpha$};
		\node at (0.4,-0.01) {\tiny $\bullet$};
		\node at (1.25,-0.2) {\tiny $0$};
		\node at (1.25,-0.01) {\tiny $\bullet$};
		\node at (0.88,0.7) {\tiny $s$};
		\node at (0.88,0.83) {\tiny $\bullet$};
		\node at (0.88,-0.72) {\tiny $\overline{s}$};
		\node at (0.88,-0.85) {\tiny $\bullet$};
		\node at (0.45,0.635) {\tiny $z_{+}$};
		\node at (0.6,0.65) {\tiny $\bullet$};
		\node at (0.49,-0.76) {\tiny $z_{-}$};
		\node at (0.6,-0.69) {\tiny $\bullet$};
		
		\node at (-0.7,0.3) {$+$};
		\node at (1.5,1.3) {$-$};
		\node at (1.4,0.3) {$-$};
		
		\draw[dashed,black,line width=0.65 mm] (-2.5,-1.8) to [out=90, in=-90] (-2.5,1.8);
		\end{tikzpicture}
	\end{center}
	\caption{\label{fig:xi=0 and eta<0} 
		The  sets $\mathcal{N}_{\Phi}$ (left) and $\mathcal{N}_{\Psi}$ (right) in the high
		temperature regime for $\xi = 0$ and $\eta < 0$. The signs of $\Re (\Phi_{\alpha} - \Phi_{\alpha}(s))$ (left) and $\Re (\Psi_{\alpha} - \Psi_{\alpha}(s))$ (right) are indicated with $\pm$.} 
\end{figure}

For $\xi = 0$, the saddle is on the branch cut $\Sigma_0$ for the functions
$\phi$ and $\Phi_{\alpha}$. We need additional deformation of contours to handle
this case. For definiteness we focus on the high temperature
regime, but the low temperature regime can be done similarly.

Note that $\Phi_{\alpha}(z) = \phi(z) - \eta \log z$ since $\xi = 0$, see \eqref{Phidef}. Since $\Re \phi(z) = 0$ for $z \in \Sigma_{0}$, and since $s \in \Sigma_{0}$, we have
$\Re \Phi_{\alpha}(s) = - \eta \log \sqrt{\alpha}$, and furthermore the set $\mathcal{N}_{\Phi}$ (defined in \eqref{eq:NPhi}) is such that
\begin{align*}
\Sigma_{0} \subset \mathcal{N}_{\Phi},
\end{align*}
see Figure \ref{fig:xi=0 and eta<0}, left. To deal with this case we also need information about the set $\mathcal{N}_{\Psi} = \{z \in \mathbb{C} | \Re \Psi_{\alpha}(z) = \Psi_{\alpha}(s)\}$, see Figure \ref{fig:xi=0 and eta<0}, right. For $\xi = 0$, we also have $\Sigma_{0} \subset \mathcal{N}_{\Psi}$. 

We treat the case $(0,\eta) \in \mathcal L_{\alpha}$ with $\eta < 0$ as a limit
of the case $(\xi,\eta)$ with $\eta < \frac{\xi}{2} < 0$  that we considered before.
In this limit the contours from 
Corollary \ref{cor:contoursexist} (b) can be chosen in such a way that they
tend to  contours $\gamma_z$ and $\gamma_w$  that partly overlap with $\Sigma_0$,
such that the following hold (see Figure \ref{fig:contours for xi=0 and eta<0} together with Figure \ref{fig:xi=0 and eta<0}, left)
\begin{itemize}
	\item $\gamma_w$ contains the subarcs 
	\[ \gamma_w \cap \Sigma_0 : \quad |w| = \sqrt{\alpha}, \,  \arg s \leq |\arg w| \leq \arg z_+(\alpha) \]
	of $\Sigma_0$ and lies otherwise  inside the (open) domain bounded
	by $\Sigma_0 \cup \Sigma_{-1}$, it goes around $-1$, and 
	\begin{equation} \label{eq:Phiwforxi0}
	\begin{aligned} 
	\Re \Phi_{\alpha}(w) > \Re \Phi_{\alpha}(s), & \quad  w \in \gamma_w \setminus \Sigma_0, \\
	\Re \Phi_{\alpha,+}(w) = \Re \Phi_{\alpha}(s), & \quad  w \in \gamma_w \cap \Sigma_0,
	\end{aligned} \end{equation}
	\item $\gamma_z$ contains the subarc
	\[ \gamma_z \cap \Sigma_0 : \quad |z| = \sqrt{\alpha}, \, -\arg s \leq \arg z \leq \arg s \]
	of $\Sigma_0$ and lies otherwise inside the domain bounded by $\Sigma_0 \cup \Sigma_{-1}$,
	it goes around $0$, and
	\begin{equation} \label{eq:Phizforxi0}
	\begin{aligned} 
	\Re \Phi_{\alpha}(z) < \Re \Phi_{\alpha}(s), & \quad  z \in \gamma_z \setminus \Sigma_0, \\
	\Re \Phi_{\alpha,+}(z) = \Re \Phi_{\alpha}(s), & \quad  z \in \gamma_z \cap \Sigma_0.
	\end{aligned} \end{equation}
\end{itemize}

We want to estimate the double integral in \eqref{eq:deformationhigh} with
$x = x_N = (1+o(1)) N $ and $y = y_N = (1+\eta +o(1)) N$ as $N \to \infty$. To avoid the
use of $N$ dependent contours as in the proofs above (which can be handled but would
obscure the exposition) we assume $x_N = N + O(1)$ and $y_N = (1 + \eta) N + O(1)$ as
$N \to \infty$. 
Then by combining \eqref{eq:Fzxy}, \eqref{Phidef} with
\eqref{eq:tildeRNinT} we find that $\widetilde{R}_N(w,z) \frac{F(z;x_N,y_N)}{F(w;x_N,y_N)}$
(which is the main part of the integrand in \eqref{eq:deformationhigh})
is equal to 
\begin{equation} \label{eq:PhizminPhiwforxi0}
e^{N (\Phi_{\alpha}(z) - \Phi_{\alpha}(w))} \times \begin{cases} 
\begin{pmatrix} 1 & 0 \end{pmatrix} T^{-1}(w) T(z) \begin{pmatrix} 1 \\ 0 \end{pmatrix},
& \quad w \in \gamma_w, |w| < \sqrt{\alpha}, \\
\begin{pmatrix} 1 & -e^{2N \phi(w)} \end{pmatrix}	
T^{-1}(w) T(z) \begin{pmatrix} 1 \\ 0 \end{pmatrix}, &
\quad w \in \gamma_w, |w| > \sqrt{\alpha}
\end{cases}
\end{equation}
times a factor that remains bounded as $N \to \infty$. 
In \eqref{eq:PhizminPhiwforxi0} we take $+$ boundary values for $\Phi_{\alpha}$
and $T$ whenever $w$ and/or $z$ are on $\Sigma_0$.

\begin{figure}[t]
	\begin{center}
		\begin{tikzpicture}
		[master,scale = 1.3,every node/.style={scale=1.3}]
		\node at (0,0) {\includegraphics[width=9cm]{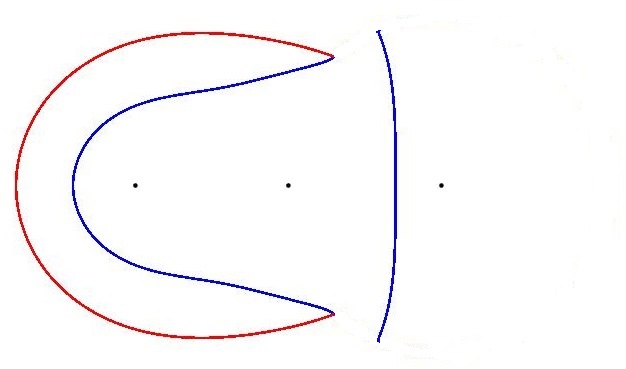}};

		\draw[green, line width=0.75 mm,->-=0.85] (0.92,2.2)
		[out=-110,in=90] to (0,0)
		[out=-90,in=110] to (0.92,-2.25)
		[out=-20,in=-135] to (3.6,-1.7)
		[out=48,in=-90] to (4.27,0)
		[out=90,in=-48] to (3.6,1.7)
		[out=135,in=20] to (0.92,2.2);
		
		\draw[black, line width=0.75 mm,->-=0.85] (0.92,2.2)
		[out=-160,in=40] to (0.28,1.825)
		[out=175,in=90] to (-3.7,0)
		[out=-90,in=180] to (0.28,-1.85)
		[out=-40,in=160] to (0.92,-2.25)
		[out=30,in=-30] to (0.92,2.2);
		
			\node at (-2.55,-0.2) {\tiny $-1$};
		\node at (-2.55,-0.02) {\tiny $\bullet$};
		\node at (-0.35,-0.2) {\tiny $-\alpha$};
		\node at (-0.35,-0.02) {\tiny $\bullet$};
		\node at (1.85,-0.2) {\tiny $0$};
		\node at (1.85,-0.02) {\tiny $\bullet$};
		\node at (0.9,2.35) {\tiny $s$};
		\node at (0.9,2.17) {$\bullet$};
		\node at (0.9,-2.45) {\tiny $\overline{s}$};
		\node at (0.9,-2.25) { $\bullet$};
		\node at (0.25,1.6) {\tiny $z_{+}$};
		\node at (0.25,1.8) { $\bullet$};
		\node at (0.25,-1.65) {\tiny $z_{-}$};
		\node at (0.25,-1.85) {$\bullet$};
		
		\end{tikzpicture}
	\end{center}
	\caption{\label{fig:contours for xi=0 and eta<0} 
		The  contours $\gamma_{z}$ (green) and $\gamma_{w}$ (black) for $\xi = 0$ and $\eta < 0$ in the high temperature regime. They are drawn on top of $\mathcal{N}_{\Phi} \cup \Gamma_{-1}$} 
\end{figure}

Because of \eqref{eq:Phiwforxi0} and \eqref{eq:Phizforxi0} we see that
\eqref{eq:PhizminPhiwforxi0} becomes exponentially small as $N \to \infty$
unless $w \in \gamma_w \cap \Sigma_0$ and $z \in \gamma_z \cap \Sigma_0$.
Here we also use that $\Re \phi(w) < 0$ for $w \in \gamma_w$, $|w| > \sqrt{\alpha}$,
and that $T$ and $T^{-1}$ remain bounded as $N \to \infty$ if we stay away from 
the branch points, see  Proposition \ref{prop:TandTinvsmall} (b).

On $\gamma_z \cap \Sigma_0$ we use the identity
\begin{equation} \label{eq:Tjumpsplit} 
T_+(z) \begin{pmatrix} 1 \\ 0 \end{pmatrix} 
= e^{-2N \phi_+(z)} T_+(z) \begin{pmatrix} 0 \\ 1 \end{pmatrix}
- T_-(z) \begin{pmatrix} 0 \\ 1 \end{pmatrix}, \quad z \in \Sigma_0, \end{equation}
which follows from the jump \eqref{eq:Tjump1} of $T$ across $\Sigma_0$.
Using \eqref{eq:Tjumpsplit} in \eqref{eq:PhizminPhiwforxi0} we split the integral
over $\gamma_z \cap \Sigma_0$ into a sum of two integrals, and deform both
integrals away from $\Sigma_0$. 

The integral with the first term of the right-hand side of \eqref{eq:Tjumpsplit} is deformed to the interior,
that is to a contour from $\overline{s}$ to $s$ lying inside the disk $|z| = \sqrt{\alpha}$.
The dominant part of the integrand is $e^{N(\Phi_{\alpha}(z) - 2 \phi(z))}$
and $\Re \Phi_{\alpha}(z) > \Re \Phi_{\alpha}(s)$  and $\Re \phi(z) > 0$ for $z$ on the deformed contour.
Fortunately, $\Re (\Phi_{\alpha}(z) - 2 \phi(z)) < \Re \Phi_{\alpha}(s)$, and this can be seen as follows.
By  \eqref{Phidef} and \eqref{Psidef} we have
$\Phi_{\alpha} - 2 \phi = \Psi_{\alpha}$. Since $\xi = 0$ we also find from \eqref{Phidef}
and \eqref{Psidef} that $\Phi_{\alpha} + \Psi_{\alpha} = - 2\eta \log z$.
Thus indeed
\begin{equation*}
\Re \Psi_{\alpha}(z) = - \Re \Phi_{\alpha}(z) - 2 \eta \log |z| < - \Re \Phi_{\alpha}(s) - 2 \eta \log |z| < \Re \Phi_{\alpha}(s) = -\eta \log \sqrt{\alpha} 
\end{equation*}
for $z$ on the deformed contour, since $\Re \Phi_{\alpha}(z) > \Re \Phi_{\alpha}(s)$\and $|z| < \sqrt{\alpha} < 1$
there. We also use $\eta < 0$. Thus the deformed integral coming from the first
term of \eqref{eq:Tjumpsplit} becomes small as $N \to \infty$.

The integral with the second term is moved outwards, again to a contour from $\overline{s}$
to $s$ but now lying in $|z| > \sqrt{\alpha}$. Since $\Phi_{\alpha,+} = \Psi_{\alpha,-}$
the deformed integral has the exponentially varying factor $e^{N \Psi_{\alpha}}$.
The contour can be taken such that $\Re \Psi_{\alpha}(z) < 0$ on the contour (see Figure \ref{fig:xi=0 and eta<0}, right),
and again the contribution becomes small as $N \to \infty$.

The integral (in the $w$-variable) over  $\gamma_w \cap \Sigma_0$ can be dealt with analytic continuation only. We note that by \eqref{eq:Tjump1} 
\[ \begin{pmatrix} 1 & 0 \end{pmatrix} T_{+}^{-1}(w)
= \begin{pmatrix} e^{-2N \phi_-(w)} & -1 \end{pmatrix} T_{-1}^{-1}(w) \]
which remains bounded if we analytically continue it  to the exterior of $\Sigma_0$.
We deform $\gamma_w \cap \Sigma_0$ to a contour from $s$ to $z_+(\alpha)$ lying in
the exterior of $\gamma_0$ together with its mirror image in the real, which is
a contour from $z_-(\alpha)$ to $\overline{s}$. 
Since $\Phi_{\alpha,+}(w) = \Psi_{\alpha,-}(w)$ on $\Sigma_0$, the main term in
the analytic continuation of \eqref{eq:PhizminPhiwforxi0} across $\gamma_w \cap \Sigma_0$
becomes $e^{-N \Psi_{\alpha}(w)}$. We are
able to deform contours such that $\Re \Psi_{\alpha}(w) > 0$ on the deformed
contour (from Figure \ref{fig:xi=0 and eta<0}, right), where we also take note of the local behavior near the saddle points 
$s$ and $\overline{s}$. The result is that the integral over the deformed contour
becomes small as $N \to \infty$.

What remains are local contributions near the saddles $s$ and $\overline{s}$ and
also near the branch points $z_{\pm}(\alpha)$, since we cannot move $\gamma_w$
away from the branch points.
The contributions from the saddles can be estimated as was done in detail for
the low temperature regime with $\eta < \frac{\xi}{2} < 0$. The contributions 
from the branch points are estimated similarly, but we have to note that 
$T^{-1}(w) = \bigO(N^{1/6})$ for $w$ close to the branch points, see 
Proposition \ref{prop:TandTinvsmall} (b). This slight increase however
still leads to a decay in the estimate and the conclusion is that  all
contributions vanish as $N \to \infty$.

\subsubsection{Case $\xi = \eta  =0$}

For $\xi = \eta =0$ we are at the center of the hexagon. The center belongs
to the liquid region only in the high temperature regime, and so this is
what we assume.  For $\xi = \eta =0$ the saddle coalesces with the branch point 
and the analysis requires additional deformation of contours. Note that by \eqref{Phidef}
we have
\[ \Phi_{\alpha}(z) = \phi(z) \qquad \text{ for } \xi = \eta = 0, \]
and $\Re \Phi_{\alpha}(s) = 0$ where $s = s(0,0;\alpha) = z_+(\alpha)$.

We approach this case as a limit of $(\xi, \eta) \in \mathcal L_{\alpha}$ with
$\eta \leq \frac{\xi}{2} < 0$. In this limit the contours from Corollary 
\ref{cor:contoursexist} (b) tend to contours $\gamma_w$ and $\gamma_z$ that
we may take as follows
\begin{itemize}
	\item $\gamma_w$ contains $\Sigma_{-1}$ and its analytic continuation (which is
	a critical orthogonal  trajectory, see Figure \ref{fig: crit traj alpha 03}) 
	such that
	\begin{align} 
	\Re \Phi_{\alpha}(w) > 0,  & \quad w \in \gamma_w \setminus \Sigma_{-1}. \\
	\Re \Phi_{\alpha}(w) = 0, & \quad w \in \Sigma_{-1}. 
	\end{align}
	\item $\gamma_z = \gamma_0$ and
	\begin{align} 
	\Re \Phi_{\alpha}(z) < 0,  & \quad z \in \gamma_z \setminus \Sigma_{0}. \\
	\Re \Phi_{\alpha}(z) = 0, & \quad z \in \Sigma_0.
	\end{align} 
\end{itemize}

The integrand of the double integral in \eqref{eq:deformationhigh} 
behaves like \eqref{eq:PhizminPhiwforxi0} as $N \to \infty$. With the above
choice of contours the integrand is exponentially small unless $w \in \Sigma_{-1}$
and $z \in \Sigma_0$. The case $z \in \Sigma_0$ is handled using the identity
\eqref{eq:Tjumpsplit} that we also used in the case $\xi = 0$, $\eta < 0$. It
allows us to split the integral into two integrals, deform one of them outwards
and the other one inwards, and both deformed integrals have exponentially decaying
integrands.

For  $w \in \Sigma_{-1}$ we use the second line of \eqref{eq:PhizminPhiwforxi0}
which tells us that the main $w$-dependent part is 
\[ e^{-N \Phi_{\alpha}(w)} \begin{pmatrix} 1 & - e^{2N \phi(w)} \end{pmatrix} T^{-1}(w) \]
which naturally splits into a sum (recall also $\Phi_{\alpha} = \phi$)
\begin{equation} 
e^{-N \phi(w)} \begin{pmatrix} 1 & 0 \end{pmatrix} T^{-1}(w) 
- e^{N \phi(w)} \begin{pmatrix} 0 & 1 \end{pmatrix} T^{-1}(w)
\end{equation}
and a corresponding splitting and deformation of the $w$-integral.
Namely the integral with the first term is deformed from $\Sigma_{-1}$ to a contour
from $z_+(\alpha)$ to $z_-(\alpha)$ lying outside $\Sigma_{-1}$ (where $\Re \phi > 0$) 
and the integral with the second term is deformed inwards (where $\Re \phi < 0$).

Then there is exponentially decay on the deformed contours as $N \to \infty$,
except for $w$ and $z$ near the branch points $z_{\pm}(\alpha)$. $T$ and $T^{-1}$
have moderate growth there, both of $\bigO(N^{1/6})$. They combine
to give an increase in $T^{-1}(w) T(z)$ of $\bigO(N^{1/3})$. Local
estimates still lead to a decay in the integrals, as required.

This completes the proof of Proposition \ref{prop:doubleintegrallimit} in all cases.

\subsection{Proof of Theorem \ref{thm:microsopiclimit}}

\begin{proof}
	
	With the coordinates in \eqref{eq:microsopic_variables}  (and the fact that $N \xi_N$ is assumed to be even) we can rewrite the kernel $K_{N}$ in \eqref{eq:kernel} as 
	\begin{align} \label{eq:microscopickernel}
K_N(x_1,y_1,x_2,y_2)= -\frac{\chi_{u_1>v_2}}{2 \pi i} \oint_\gamma H_K(z,z;u_1,v_1,u_2,v_2) dz  
+ \mathcal I_N(N \xi_N,N \eta_N; H_{K})
	\end{align}
	where $\mathcal I_N$ is as in \eqref{eq:IxyH} with
	$$ H_K(w,z;u_1,v_1,u_2,v_2)= \frac{(z+1)^{\lfloor \frac{u_1}{2}\rfloor}(z+\alpha)^{\lfloor \frac{u_1+1}{2}\rfloor }}{(w+1)^{\lfloor \frac{u_2}{2} \rfloor}(w+\alpha)^{\lfloor \frac{u_2+1}{2}\rfloor}} \frac{w^{v_2}}{z^{v_1+1}}.$$
	
	The first integral in \eqref{eq:microscopickernel} is independent of $N$. The asymptotic behavior of $\mathcal I_N(N \xi_N,N \eta_N; H_{K})$ as $N\to \infty$ is already  computed in 
	Proposition \ref{prop:doubleintegrallimit}. The first integral and the limit from Proposition \ref{prop:doubleintegrallimit} can be combined naturally into one single integral, which is the right-hand side of \eqref{eq:HKintegral}. This finishes the proof.\end{proof}
	\appendix

\section{Proof of Proposition \ref{proposition1.1}}
\begin{proof}[Proof of Proposition \ref{proposition1.1}]
	This is a special case of Theorem 4.7 in \cite{DK}.
	To identify the formula in \cite{DK} with \eqref{eq:kernel},
	we first of all note that $p=1$ and $K_N$ is a scalar kernel. 
	We have to identify $(m,x,m',y)$  and $(N,M,L)$ in \cite{DK} 
	with $(x_1,y_1,x_2,y_2)$ and $(N,N,2N)$ in the setting of our paper.
	
	Furthermore, for $0 \leq i < j \leq 2N$, the notation
	$A_{i,j}(z)$ in \cite{DK} stands for $A_{i,j}(z) = \ds \prod_{m=i}^{j-1} a_m(z)$
	where $a_m(z) = z+\alpha$ if $m$ is even, 
	and $a_m(z) = z+1$ if $m$ is odd. This gives
	\[ A_{x_2, x_1}(z) =  (z+1)^{\lfloor \frac{x_1}{2} \rfloor -\lfloor \frac{x_2}{2} \rfloor} (z+\alpha)^{\lfloor \frac{x_1+1}{2} \rfloor -\lfloor 
		\frac{x_2+1}{2}\rfloor}
	\] 
	which appears in the single integral in \eqref{eq:kernel},
	and similarly
	\begin{align*}
	A_{x_2,2N}(w) & = 
	(w+1)^{N -\lfloor \frac{x_2}{2} \rfloor} (w+\alpha)^{N - \lfloor \frac{x_2+1}{2}\rfloor}  \\
	A_{0,x_1}(z) & =  (z+1)^{\lfloor \frac{x_1}{2} \rfloor} (z+\alpha)^{\lfloor \frac{x_1+1}{2} \rfloor}
	\end{align*}
	which is part of the double integral in \eqref{eq:kernel}.
	
	Finally, according to \cite[Theorem 4.7]{DK}, $R_N$ is the reproducing kernel for polynomials of degree $\leq N-1$ 	with weight 
	$ \ds \frac{A_{0,L}(z)}{z^{M+N}} = \frac{(z+1)^N(z+\alpha)^N}{z^{2N}}$ on
	$\gamma$, as $M=N$ and $L=2N$.
	It means that $R_N(w,z)$ is a bivariate polynomial of degree $\leq N-1$
	in both variables that is uniquely characterized by the property  that
	\begin{equation} \label{eq:RNreproducing} 
	\frac{1}{2 \pi i} \oint_{\gamma} R_N(w,z) \frac{(z+1)^N(z+\alpha)^N}{z^{2N}}
	q(z) dz = q(w) \end{equation}
	for every polynomial $q$ of degree $\leq N-1$, see Lemma 4.6 in \cite{DK}.
	Since all orthogonal polynomials $p_n$ of degrees $n \leq 2N$ exist (we prove
	this in Proposition \ref{prop:prop51}),	the sum in \eqref{eq:CDkernel} is well-defined, and 
	by orthogonality using \eqref{eq:kappan} it defines a kernel 
	with the required reproducing property \eqref{eq:RNreproducing}.
	
	The expression in the second line of \eqref{eq:CDkernel} is 
	known as the Christoffel-Darboux formula,
	and it continues to hold for non-Hermitian orthogonality on a contour, with
	the same proof as for usual orthogonal polynomials on the real line.
\end{proof}

\end{document}